\documentclass[11pt]{article}
\PassOptionsToPackage{dvipsnames}{xcolor}

\usepackage{fullpage}
\usepackage{comment}
\usepackage{amsthm}
\usepackage{tikz}
\usetikzlibrary{positioning,calc,arrows.meta}
\usepackage{pgfplots}
\pgfplotsset{compat=1.18}
\usetikzlibrary{patterns}
\usepgfplotslibrary{fillbetween}
\usetikzlibrary{intersections}
\usepackage{pgfplots}

\usepackage{csquotes}

\usepackage[dvipsnames]{xcolor}
\usepackage{color-edits}
\addauthor[Mengfan]{mf}{blue}    % mf for Mengfan
\usepackage{bigints}
\usepackage{amsmath,amssymb}
\usepackage{xcolor}
\usepackage{tcolorbox}

\usepackage{thm-restate}

\usepackage[numbers,square]{natbib}
\usepackage[colorlinks=true,linkcolor=blue!70!black,citecolor=blue!70!black,urlcolor=black,breaklinks=true]{hyperref}

\usepackage{comment}
\usepackage{url}            % simple URL typesetting
\usepackage{booktabs}       % professional-quality tables
\usepackage{amsfonts}       % blackboard math symbols
\usepackage{nicefrac}       % compact symbols for 1/2, etc.
\usepackage{microtype}      % microtypography
\usepackage{dsfont}

\usepackage{mathtools, bm}

\usepackage{algorithm}
\usepackage{algpseudocode}

\usepackage{subcaption}
\usepackage{multirow}
\usepackage{float}
\usepackage{xspace}

\usepackage{enumitem}

\usepackage{xfrac}

\usepackage{cleveref}
\crefname{enumi}{part}{parts}
\crefname{equation}{eq.}{eqs.}

\theoremstyle{plain}
\newtheorem{theorem}{Theorem}[section]
\newtheorem{lemma}[theorem]{Lemma}

\newtheorem{proposition}[theorem]{Proposition}
\newtheorem{maintheorem}{Theorem}
\crefname{maintheorem}{Theorem}{Theorems}
\Crefname{maintheorem}{Theorem}{Theorems}

\theoremstyle{plain}
\newtheorem{definition}[theorem]{Definition}
\newtheorem{example}{Example}[section]

\allowdisplaybreaks

\newcommand{\xhdr}[1]{\vspace{2mm}\noindent{\bf {#1}\ }}
\renewcommand{\paragraph}[1]{\xhdr{#1}}

\usepackage{etoolbox}
\preto\part{\setcounter{section}{0}}

\newcommand{\AgentCount}{n}
\newcommand{\GoodCount}{m}
\newcommand{\AgentSet}{\mathcal{N}}
\newcommand{\GoodSet}{M}
\newcommand{\AgentIndex}{i}
\newcommand{\OtherAgentIndex}{j}
\newcommand{\SummationAgentIndex}{k}
\newcommand{\GoodIndex}{g}

\newcommand{\ColorIndex}{c}
\newcommand{\SecondColorIndex}{{c'}}
\newcommand{\LargestCountColor}{{\bar c}}
\newcommand{\SmallestCountColor}{{\underline c}}
\newcommand{\OverloadedColor}{{\bar c}}
\newcommand{\MinimumLoadColor}{{\underline c}}
\newcommand{\SlotIndex}{r}
\newcommand{\Bundle}{A}
\newcommand{\FractionalBundle}{\FractionalAllocation}
\newcommand{\portionName}{portion\xspace}
\newcommand{\portionsName}{portions\xspace}

\newcommand{\amountName}{amount\xspace}
\newcommand{\amountsName}{amounts\xspace}

\newcommand{\DesignatedGoods}{{M'}}

\newcommand{\Statement}{E}

\newcommand{\Allocation}{A}
\newcommand{\IntegralAllocationSet}{\mathcal A}
\newcommand{\FractionalAllocationSet}{\mathcal X}
\newcommand{\ProbabilityDistributionSet}{\Delta}
\newcommand{\FractionalAllocation}{x}
\newcommand{\TruthfulMarginal}{x}
\newcommand{\truthfulFractionalAllocationName}{top-set competitive fractional allocation\xspace}

\newcommand{\truthfulFractionMechName}{top-set competitive fractional mechanism\xspace}
\newcommand{\TruthfulFractionMechName}{Top-set competitive fractional mechanism\xspace}
\newcommand{\nonTopCountName}{non-top count\xspace}

\newcommand{\TruthfulMarginalHigh}{x^{\mathrm H}}
\newcommand{\highGoodFractionalSuballocationName}{high-good fractional suballocation\xspace}
\newcommand{\HighGoodFractionalSuballocationName}{High-good fractional suballocation\xspace}
\newcommand{\highGoodDummyGraphName}{high-good-dummy graph\xspace}
\newcommand{\HighGoodDummyGraphName}{High-good-dummy graph\xspace}
\newcommand{\highGoodDummyGraph}{G}
\newcommand{\TruthfulMarginalHighColor}{x^{\mathrm H,\ColorIndex}}
\newcommand{\highGoodIntegralSuballocationName}{high-good integral suballocation\xspace}
\newcommand{\HighGoodIntegralSuballocationName}{High-good integral suballocation\xspace}
\newcommand{\ColorCarrier}{\tilde x^{\mathrm L,\ColorIndex}}
\newcommand{\lowGoodFractionalSuballocationName}{low-good fractional suballocation\xspace}
\newcommand{\lowGoodFractionalSuballocationsName}{low-good fractional suballocations\xspace}
\newcommand{\lowGoodFractionalBundleName}{low-good fractional bundle\xspace}
\newcommand{\lowGoodFractionalBundlesName}{low-good fractional bundles\xspace}

\newcommand{\Residual}{r}
\newcommand{\CompletionAmount}{r}
\newcommand{\residualFractionalSuballocationName}{residual fractional suballocation\xspace}
\newcommand{\ResidualFractionalSuballocationName}{Residual fractional suballocation\xspace}
\newcommand{\residualFractionalSuballocationsName}{residual fractional suballocations\xspace}
\newcommand{\assembledFractionalAllocationName}{assembled fractional allocation\xspace}
\newcommand{\assembledFractionalAllocationsName}{assembled fractional allocations\xspace}
\newcommand{\AssembledFractionalAllocationName}{Assembled fractional allocation\xspace}

\newcommand{\UnusedCapacity}{\delta}
\newcommand{\ValuationSet}{\mathcal V_{\mathrm{add}}}
\newcommand{\ValuationProfile}{v}
\newcommand{\AgentValuationVector}{\ValuationProfile}
\newcommand{\AgentValuation}{v}

\newcommand{\ReportedValuation}{\widehat v}

\newcommand{\Mechanism}{\mathcal M}
\newcommand{\FractionalMechanism}{\Mechanism_{\mathrm{frac}}}
\newcommand{\RandomizedMechanism}{\Mechanism_{\mathrm{rand}}}

\newcommand{\DegreeBound}{K}
\newcommand{\Multigraph}{G}
\newcommand{\AuxMultigraph}{G'}
\newcommand{\LeftPart}{\mathcal L}
\newcommand{\RightPart}{\mathcal R}
\newcommand{\EdgeMultiset}{\mathcal E}
\newcommand{\Degree}{\deg}
\newcommand{\LeftVertex}{u}
\newcommand{\PROP}{\mathsf{PROP}}
\newcommand{\MMS}{\mathsf{MMS}}
\newcommand{\TPS}{\mathsf{TPS}}

\newcommand{\Expectation}{\mathbb E}
\newcommand{\Probability}{\Pr}
\newcommand{\Indicator}{\mathds{1}}

\newcommand{\Support}{\operatorname{supp}}
\newcommand{\ColorAverage}{\operatorname{avg}}
\newcommand{\ArgMin}{\operatorname*{arg\,min}}

\newcommand{\RealNumbers}{\mathbb R}
\newcommand{\NonnegativeReals}{\mathbb R_{\geq 0}}

\newcommand{\AsymptoticO}{O}

\newcommand{\deq}{\triangleq}
\newcommand{\lrceiling}[1]{\left\lceil#1\right\rceil}

\newcommand{\UniversalFactorValue}{\frac17}
\newcommand{\UniversalFactorValueSlash}{1/7}

\newcommand{\HarmonicNumber}{H}

\newcommand{\TPSCandidate}{t}
\newcommand{\TPSGapFunction}{f}
\newcommand{\TopSet}{T}

\newcommand{\OmissionCount}{t}

\newcommand{\HighSet}{H}
\newcommand{\LowSet}{L}
\newcommand{\ReservedSet}{R}
\newcommand{\DeficientAgents}{\mathcal D}
\newcommand{\DummyAgents}{\mathcal D_{\mathrm{dum}}}

\newcommand{\GridSize}{K}
\newcommand{\ColorSet}{[\GridSize]}
\newcommand{\Coloring}{\mathcal C}
\newcommand{\ReservationBalancedColoring}{{\mathcal C_{\mathrm{res}}}}
\newcommand{\reservationBalancedName}{reservation-balanced\xspace}
\newcommand{\LoadBalancedColoring}{{\mathcal C_{\mathrm{load}}}}
\newcommand{\loadBalancedName}{load-balanced\xspace}
\newcommand{\DummyVertex}{\delta}

\newcommand{\DummyGoodSet}{D}

\newcommand{\LocalScaledMissing}{\widetilde q}
\newcommand{\LowGoodBundle}{x^{\mathrm L}}
\newcommand{\UnheldLowGoodBundle}{\bar x^{\mathrm L}}
\newcommand{\RightVertex}{r}

\newcommand{\Proxy}{\ell}
\newcommand{\MatchingWeight}{\theta}
\newcommand{\SlotCount}{L}
\newcommand{\LocalSlotCount}{s}

\newcommand{\KeptUnits}{Y}

\newcommand{\CommonSet}{S}
\newcommand{\commonSetName}{popular set\xspace}
\newcommand{\ReservationCountSlack}{\sigma}
\newcommand{\DeficitExcess}{s}
\newcommand{\MissingExcess}{u}

\newcommand{\FairFactor}{\alpha}

\providecommand{\tightlist}{%
  \setlength{\itemsep}{0pt}\setlength{\parskip}{0pt}}

\newcommand{\biaoshuai}[1]{}

\title{Truthful-in-Expectation Mechanism with Constant
Maximin-Share Guarantee}
\author{%
Mengfan Ma\thanks{Central China Normal University.
  Email: mengfanma1@gmail.com}\and
Biaoshuai Tao\thanks{Shanghai Jiao Tong University.
  Email: bstao@sjtu.edu.cn}
\and
Fangxiao Wang\thanks{The Hong Kong Polytechnic University.
  Email: fangxiao.wang@connect.polyu.hk}}
\date{}

\begin{document}

\maketitle
\begin{abstract}
We study the truthful and fair allocation of indivisible goods to $\AgentCount$ strategic
agents with additive valuations.  Babaioff, Feige, and Manaker Morag [FOCS 2026] gave
a randomized mechanism that uses only the agents' rankings of the goods, is
truthful in expectation (TIE), and guarantees every agent
$1/(\HarmonicNumber_{\AgentCount-1}+2)=\Theta(1/\log\AgentCount)$ of her
maximin share (MMS) in every realized allocation, where
$\HarmonicNumber_{\AgentCount-1}$ is the $(\AgentCount-1)$th harmonic
number; this is nearly the best possible with rankings alone.  They
conjectured that cardinal information allows TIE mechanisms to achieve a
constant ex-post MMS guarantee.  We confirm this conjecture: our TIE
mechanism guarantees every agent at least $\UniversalFactorValue$ of her MMS
in every realized allocation; moreover, the mechanism is ex-ante envy-free
and can be implemented in polynomial time.

Our mechanism has two key technical ingredients, both of which may be of
independent interest.  The first is a truthful fractional allocation rule
specifying each agent's probability of receiving each good: it favors each
agent on her top $\AgentCount-1$ goods and reduces her probability of
receiving a good for each other agent who also ranks it among her top
$\AgentCount-1$ goods.  The second is the balanced
edge coloring: we decompose these probabilities into equally likely
matchings from agents to high-value goods, those that alone meet an agent's
guarantee, and balance these matchings in a fine-grained way without
changing any marginal probability, so that every agent who receives no
high-value good can obtain sufficient value from the remaining goods
without over-allocating any good.

\end{abstract}

\section{Introduction}
\label{sec:introduction}
Fair division is the problem of allocating resources fairly among agents with
heterogeneous preferences. Its central questions concern which fairness
guarantees are achievable and how to compute allocations satisfying them.
Research on these questions has led to algorithms for practical allocation
problems. Examples include assigning course seats to students
\citep{budish2011combinatorial} and dividing inherited goods among heirs
\citep{goldmanProcaccia2014spliddit}. Fair-division principles also inform
research in machine learning \citep{chaudhuryEtAl2022federated}, cloud
computing \citep{wangLiLiang2014cloud}, recommender systems
\citep{patroEtAl2020fairrec}, and participatory budgeting
\citep{fainGoelMunagala2016core}.

\biaoshuai{I recommend starting from introduce the notion of proportionality. Its concept is much more natural. In fact, the non-existence of proportional allocation due to indivisibility motivates the MMS notion as a relaxation.}
A natural fairness notion is \emph{proportionality}: each of the $n$
agents should receive at least a $1/n$ fraction of her value for all items.
With indivisible items, however, a proportional allocation need not exist.
For example, if a single good is valued by two agents, one of them
necessarily receives nothing. This motivates relaxations of proportionality
that are suited to indivisible items.
Budish~\citep{budish2011combinatorial} introduced the maximin share (MMS) as a
fairness benchmark for indivisible items, and it has become the most
widely studied such relaxation of proportionality. To determine her MMS, an agent
imagines partitioning the items into $n$ bundles, where $n$ is the number of
agents, and receiving the bundle she values least. She chooses the partition
that makes this worst-case value as large as possible. Yet an allocation giving
every agent her full MMS need not exist, even with additive valuations, for
either goods \citep{kurokawaProcacciaWang2018fair} or chores
\citep{azizEtAl2017chores}. This has motivated approximate guarantees. For additive goods, Procaccia and
Wang~\citep{kurokawaProcacciaWang2018fair}
% \citep{procacciaWang2014fair\citep{kurokawaProcacciaWang2018fair}} 
established the existence of $2/3$-MMS
allocations%
% (see also~\citep{kurokawaProcacciaWang2018fair})
, and
Amanatidis et~al.~\citep{amanatidisEtAl2017approximation} gave a
polynomial-time algorithm achieving a $(2/3-\varepsilon)$-MMS guarantee for
every fixed $\varepsilon>0$.
% Replaced: For additive goods, Amanatidis et~al.~\citep{amanatidisEtAl2017approximation} gave
% a polynomial-time $2/3$-MMS algorithm, and Ghodsi et~al.~
\biaoshuai{We should include~\citep{kurokawaProcacciaWang2018fair}. I would say ``For additive goods, Procaccia and Wang~\citep{kurokawaProcacciaWang2018fair} established the existence of $2/3$-MMS allocations. Amanatidis et al.~\citep{amanatidisEtAl2017approximation} gave a polynomial-time algorithm achieving a $(2/3-\varepsilon)$-MMS guarantee for every fixed $\varepsilon>0$.''}
Ghodsi et~al.~
\citep{ghodsiEtAl2018improvement} proved that $3/4$-MMS allocations always
exist. Garg and Taki~\citep{gargTaki2021improved} established the existence of
$(3/4+1/(12n))$-MMS allocations, while Akrami and Garg~
\citep{akramiGarg2024breaking} obtained an $n$-independent factor of
$3/4+3/3836$. Subsequent work raised the general guarantee to $10/13$
\citep{heidariEtAl2026improved}; a recent preprint reports a $7/9$ guarantee
and a polynomial-time $(7/9-\varepsilon)$-approximation scheme
\citep{huangZhou2025fptas}.
\biaoshuai{We should also introduce TPS here (who proposed it, its relationship to MMS, and its advantages such as poly-time verifiability.). It is central to this paper. Currently, the word TPS just suddenly appears in Section 1.1. On the other hand, I would also keep this brief, so that truthfulness can come sooner.}
Babaioff, Ezra, and Feige~\citep{babaioffEzraFeige2022bobw}
introduced the \emph{truncated proportional share} (TPS), another relaxation
of proportionality for additive valuations. An agent's TPS is her proportional
share after each good's value is capped at the TPS itself. It is at least her
MMS and at most her proportional share, so every $\alpha$-TPS guarantee
implies the corresponding $\alpha$-MMS guarantee. Moreover, while computing
the MMS is NP-hard, the TPS can be computed in polynomial time, so whether an
allocation meets a TPS guarantee can be verified efficiently.

Agents' valuations are often private: an allocation rule sees what they
report, not what they truly value. Since reports affect the outcome, an agent
may benefit from misrepresenting her preferences. Consider a company assigning
scarce GPU servers to its teams. One team has a slower but workable
alternative, yet claims to have none to gain priority over a team that truly
lacks one. The manager cannot verify how much each team values the servers, so
the allocation may appear fair according to the reports while rewarding
exaggeration. This makes truthfulness an important concern: an allocation rule
should give agents a reason to report honestly. Researchers have studied this
question under different fairness goals and information models, using both
deterministic and randomized mechanisms
\citep{amanatidisBirmpasMarkakis2016truthful,buTao2025truthful}.

\biaoshuai{I would write a paragraph here for the limitation of deterministic mechanisms. An impossibility result of $(1/m)$-MMS was proved in \citep{abcm2017truthful}. The limitations of deterministic mechanisms motivate the TIE mechanisms. I can write a paragraph here if both of you agree to do so.}
Deterministic truthful mechanisms, however, are severely limited.
For two agents with additive valuations, Amanatidis
et~al.~\citep{abcm2017truthful} characterized the deterministic truthful
mechanisms and showed that none of them can guarantee both agents more than
a $1/\lfloor m/2\rfloor$ fraction of their MMS, where $m\geq2$ is the number
of goods; this bound is tight. Thus, even with two agents, the MMS guarantee
of any deterministic truthful mechanism vanishes as the number of goods grows.
These limitations motivate randomized mechanisms.

Among randomized approaches, an important incentive criterion is
\emph{truthfulness-in-expectation} (TIE). It appears in early mechanism-design
work by Archer and Tardos~\citep{archerTardos2001truthful} and was later
applied to fair division by Mossel and Tamuz~\citep{mosselTamuz2010truthful}.\biaoshuai{We should also cite \citep{babaioffEzraFeige2022bobw} here. This seems to be the first paper that studies TIE for indivisible settings, and the acronym TIE was first used in this paper. The question of TIE mechanisms was also raised in this paper.}
For indivisible goods, Babaioff, Ezra, and
Feige~\citep{babaioffEzraFeige2022bobw} were the first to study TIE
mechanisms, and they raised the question of which ex-post fairness guarantees
such mechanisms can achieve.
A TIE mechanism uses agents' reports to select a distribution over allocations. For
any reports by the other agents, reporting her true valuation gives an agent at
least as much expected value as any misreport. This incentive guarantee is
evaluated before an allocation is drawn from the distribution: a lie may help in a particular outcome,
but cannot improve her expected value. For agents who maximize expected value,
TIE provides a reason to report honestly while allowing the allocation to be
randomized.
This leads to the following research question:
\begin{quote}
    \emph{How much fairness can a TIE mechanism guarantee
    in every realized allocation?}
\end{quote}

For additive goods, Bu and Tao~\citep{buTao2025truthful} gave a mechanism
guaranteeing each agent $1/n$ of her MMS ex post. Babaioff et~al.~
\citep{bfmm2026tie} improved this to
$1/(H_{n-1}+2)=\Theta(1/\log n)$ using only agents' rankings of the goods,
where $H_k=\sum_{j=1}^{k}1/j$ is the $k$th harmonic number. This is close to
the limit of ordinal information: Amanatidis et~al.~
\citep{amanatidisBirmpasMarkakis2016truthful} proved that no ordinal algorithm,
even without a truthfulness requirement, can guarantee more than $1/H_n$ of
MMS. Babaioff et~al.~\citep{bfmm2026tie} improved the guarantee to
$\Omega(1/\log\log n)$ by using limited information about goods' actual values,
but this mechanism is only \emph{almost} TIE: reporting truthfully
guarantees an agent at least a $(1-\varepsilon(n))$ fraction of the expected
value she could obtain by any misreport, where $\varepsilon(n)=n^{-\log n}$.
% Replaced: but relaxed exact TIE by a negligible amount.
\biaoshuai{I would avoid the description ``exact'' here. TIE is understood to be exact by default. I would explicitly use ``almost truthfulness'' to describe the $\Omega(1/\log\log n)$ result, and explain exactly what it means.}
A constant guarantee must
therefore use information beyond rankings; the open question is whether this
can be done while preserving TIE for an arbitrary number of agents.
This is also the question we study in this paper.

\begin{table}[t]
\centering
\footnotesize
\setlength{\tabcolsep}{3pt}
\begin{tabular}{@{}llll@{}}
\toprule
Result & Incentives & Ex-post fairness & Ex-ante fairness \\
\midrule
\citep{buTao2025truthful}
& TIE & $1/\AgentCount$-MMS
& envy-free \\
\citep{bfmm2026tie}
& TIE & $1/(\HarmonicNumber_{\AgentCount-1}+2)$-TPS
& proportional \\
\addlinespace
This work
& TIE & $\UniversalFactorValueSlash$-TPS
& envy-free \\
\bottomrule
\end{tabular}
\caption{Guarantees of truthful-in-expectation mechanisms for allocating
indivisible goods among $\AgentCount\geq2$ agents with
nonnegative additive valuations.  Truncated proportional share (TPS)
guarantees imply the corresponding MMS guarantees, since the TPS is at least
the MMS.}
\label{tab:guarantee-comparison}
\end{table}

\subsection{Our contribution and techniques}
\label{subsec:contribution}

We obtain the first constant ex-post MMS guarantee for TIE
mechanisms with an arbitrary number of agents; moreover, our mechanism is
ex-ante envy-free and can be implemented in polynomial time.  Our main
result is the following.

\begin{maintheorem}
\label{thm:main}
Consider the problem of allocating $\GoodCount\geq1$ indivisible goods to
$\AgentCount\geq1$ agents with nonnegative additive valuations.  There
exists a distributional \biaoshuai{randomized? ``distributional'' sounds like an AI word.} mechanism that is truthful-in-expectation (TIE),
ex-ante envy-free, and $\UniversalFactorValue$-MMS \biaoshuai{TPS?} ex-post.  
Moreover, for rational valuations encoded in binary, this mechanism outputs, in polynomial time, an explicit representation of the distribution, which is
supported on at most $\AgentCount\cdot\GoodCount$ allocations.\biaoshuai{Does the size $nm$ of the support also apply to the irrational valuations? If so, I would separate this with the time complexity.}
\end{maintheorem}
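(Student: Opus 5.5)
The plan has two stages, following the two ingredients announced in the abstract. First we design a truthful fractional mechanism; then we decompose its output into a distribution over integral allocations. The decomposition must preserve every marginal probability and give every agent at least $\UniversalFactorValue$ of her TPS in every realized allocation. The TPS guarantee suffices because the TPS is at least the MMS.

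\emph{Stage 1: the top-set competitive fractional mechanism.} Since TIE only concerns expected value, and expected value is linear in the marginals $\TruthfulMarginal_{\AgentIndex\GoodIndex}$, it suffices to build a fractional rule $v\mapsto \TruthfulMarginal(v)$ that is truthful for additive utilities. The rule is as follows.
\begin{itemize}
\item Each agent $\AgentIndex$ has a top set $\TopSet_\AgentIndex$ of her $\AgentCount-1$ most valued goods, with ties broken consistently.
\item For $\GoodIndex\in\TopSet_\AgentIndex$, her share of $\GoodIndex$ depends only on the number $k_\GoodIndex$ of agents whose top sets contain $\GoodIndex$. The share should be a decreasing function of $k_\GoodIndex$, for instance $1/k_\GoodIndex$ or a dampened variant.
\item The goods outside the top sets, together with leftover fractions, are split equally, or according to a fixed rule independent of the agent's own report.
\end{itemize}
Truthfulness should follow from a monotonicity argument. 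Agent $\AgentIndex$'s report affects only which $\AgentCount-1$ goods she ``claims,'' and her share of a claimed good is determined by the others' reports. Claiming her truly top goods therefore maximizes her value, by an exchange argument comparing any claimed set with $\TopSet_\AgentIndex$. We must check that her report does not change her share of goods she does not claim. Ex-ante envy-freeness should follow from symmetry. If $\AgentIndex$ envies $\OtherAgentIndex$ in expectation, then $\AgentIndex$ could mimic $\OtherAgentIndex$'s report and obtain at least $\OtherAgentIndex$'s bundle. This contradicts truthfulness, provided the rule is anonymous in the sense that swapping reports swaps bundles.

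\emph{Stage 2: balanced decomposition.} We split goods into high goods, those worth at least $\UniversalFactorValue\cdot\TPS_\AgentIndex$ to agent $\AgentIndex$, and low goods. The high-good part of $\TruthfulMarginal$ is a fractional bipartite $b$-matching. Padding with dummy goods and scaling by a common denominator $\GridSize$ gives a $\GridSize$-regular bipartite multigraph. By K\"onig's edge-coloring theorem this decomposes into $\GridSize$ matchings, which we take to be equally likely, so all marginals are preserved. An agent matched to a real high good in a color class is satisfied. In each color class, the agents matched to dummies must then receive low goods worth at least $\UniversalFactorValue\cdot\TPS_\AgentIndex$ each, without over-allocating any good.

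The key step is to choose the edge coloring so that the deficient agents are spread evenly across colors. We then assign low goods within each color by a fractional low-good suballocation derived from the remaining marginals and round it with a bag-filling argument. This works because low goods are individually small relative to the target, so a greedy fill loses at most one good's value. The TPS bound we need is that the truncated values of goods not taken by high matches sum to at least a constant multiple of $\AgentCount\cdot\TPS_\AgentIndex$. This holds because at most $\AgentCount-1$ goods are removed by high matches in any color. Counting constants, including the loss from truncation, the loss from removed high goods, and the loss in bag filling, should yield $\UniversalFactorValue$.

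Polynomial time and the support bound of $\AgentCount\GoodCount$ come from replacing the exact $\GridSize$-coloring by a Birkhoff--von Neumann-style decomposition of the combined matrix. That decomposition has at most $\AgentCount\GoodCount$ terms, since each step zeroes an entry.

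I expect the main obstacle to be the balancing step. We must keep the marginals fixed while ensuring that each color class has enough unclaimed low-good mass for all agents who miss a high good. The approach I would try first is to recolor alternating paths between over-loaded and under-loaded colors, using a potential function, until the load is within one of the average in every color.
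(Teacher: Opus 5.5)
\paragraph{Stage~1.} Your truthfulness argument has a gap, and the instance you suggest is not truthful. It is not enough that her share of a claimed good is determined by the others' reports. The exchange argument needs something stronger: putting $g$ into her reported top set must raise her marginal on $g$ by the \emph{same} constant for every good, whatever the others report.

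A claimant share such as $1/k_g$ violates this. Take $n=3$, let agents $2$ and $3$ both claim $\{a,b\}$, and let agent $1$ have $v_1(a)>v_1(b)>v_1(c)$ with $2v_1(c)>v_1(b)$.
\begin{itemize}
\item Truthfully claiming $\{a,b\}$ gives her $\frac13\bigl(v_1(a)+v_1(b)+v_1(c)\bigr)$, since the unclaimed $c$ is split equally.
\item Claiming $\{a,c\}$ gives her $\frac13v_1(a)+v_1(c)$, which is strictly more.
\end{itemize}

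The paper's rule is calibrated to avoid exactly this. It sets $x_{ig}=\frac1n+\frac1n\cdot[g\in T_i]-\frac{1}{n(n-1)}\cdot\#\{j\neq i:g\in T_j\}$. The penalty for the others' claims is charged to claimants and non-claimants alike, so claiming a good adds exactly $\frac1n$. These specific numbers are also what Stage~2 relies on:
\begin{itemize}
\item all entries are multiples of $1/(n(n-1))$, so only $n(n-1)$ colors are needed (a $1/k_g$ rule would force a denominator like $\mathrm{lcm}(1,\dots,n)$);
\item off-top marginals are at most $\frac1n$;
\item a deficient agent has $H_i\subseteq T_i$ and $n(1-x_i(H_i))\le n-|H_i|$.
\end{itemize}
Your mimicking argument for envy-freeness is also incomplete, because $i$'s misreport changes $j$'s bundle too. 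The paper instead computes directly that $v_i(x_i)-v_i(x_j)=\bigl(v_i(T_i)-v_i(T_j)\bigr)/(n-1)\ge0$.

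\paragraph{Stage~2.} The high-good-dummy graph (after capping each nondeficient agent's high-good marginal at one) and alternating-path recoloring are the right tools. The value argument for agents matched to dummies, however, does not work, for two reasons.

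First, such an agent cannot take whatever is left in her color. Averaged over colors she may receive only $x_{ig}$ of each low good, which is roughly $x_{ig}/(1-x_i(H_i))$ per dummy color. The bound therefore has to come from the structure of $x$, as in the paper's bundle $\min\{\max\{x_{ig}/(1-x_i(H_i)),\,nx_{ig}\},1\}$. Your count ``at most $n-1$ goods are removed'' guarantees only $\mathsf{TPS}_i$ of truncated value, not a constant multiple of $n\cdot\mathsf{TPS}_i$.

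Second, and more seriously, goods reserved for others in her dummy color can be low goods she holds in full, each worth nearly $\frac17\mathsf{TPS}_i$. Take a deficient agent with $|H_i|=n-1$, whose low goods may be worth only $\mathsf{TPS}_i$ in total. A coloring that concentrates the others' reservations in her dummy colors can block all of that value. Spreading dummy agents evenly does nothing to prevent this.

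The missing idea is a second balancing, which the paper carries out in these steps:
\begin{itemize}
\item It fixes a popular set $S$ of size $\max_{i\in\mathcal D}|H_i|$ with minimum total non-top count.
\item It bounds her blocked amount in color $c$ by $|R^c\setminus S|$ plus her bundle's amount in $S$.
\item It balances $|R^c\setminus S|$ across colors to within one of its coloring-independent mean, whose ceiling is at most $4(n-|S|)-2$. This gives a single color-independent scaling factor $\ell_i\le\frac12$ that secures $\frac27\mathsf{TPS}_i$ in every dummy color.
\item Only then does it make each color's sum of $\ell_i$ over its dummy agents at most one. It does so with grouped path exchanges that preserve reservation balance, using $\operatorname{avg}+\max_i\ell_i<\frac{34}{35}$.
\item An explicit residual completion makes the colors average exactly to $x$.
\end{itemize}

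\paragraph{Support size.} Replacing the coloring by a Birkhoff--von Neumann decomposition would lose the per-color ex-post guarantee. The paper instead mixes the per-color faithful implementations and then reduces the mixture to a \emph{subset} of its support with at most $nm$ allocations and the same marginals.
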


The factor $\UniversalFactorValue$ depends on neither the number of agents
nor the number of goods.  As \Cref{tab:guarantee-comparison} shows, for
general numbers of agents, Bu and Tao~\citep{buTao2025truthful} guarantee
$1/\AgentCount$ of the MMS and Babaioff et~al.\ \citep{bfmm2026tie}
guarantee $1/(\HarmonicNumber_{\AgentCount-1}+2)$ of the TPS; for two
agents, Babaioff et~al.\ \citep[Theorem~3]{bfmm2026tie} guarantee $2/3$ of
the TPS.

Our main technical contribution has two parts.  First, we show that the
\truthfulFractionMechName, a truthful and envy-free instance of a family of
fractional mechanisms of Freeman et~al.\
\citep{freemanWitkowskiVaughanPennock2024equivalence}, has the structure
needed for a constant ex-post guarantee
(\Cref{subsubsec:intro-fractional-mechanism}).  Second, we construct a
\emph{balanced edge coloring}, obtained by two recolorings, that achieves
this guarantee (\Cref{subsubsec:intro-balanced-edge-coloring}).  We first
give an overview of our mechanism.

\subsubsection{Overview of our mechanism}
\label{subsubsec:intro-overview}

For additive valuations, an agent's expected value under a distribution
over allocations depends only on the probability with which she receives
each good.  A \emph{fractional allocation} records these probabilities, one
for each agent and good (\Cref{sec:preliminaries}), and a
\emph{fractional mechanism} maps the reports to a fractional allocation
(\Cref{def:fractional-mechanism}).  Hence, a distributional mechanism, which
outputs an explicit distribution over allocations
(\Cref{def:distributional-mechanism}), is TIE if and only if the fractional
mechanism that outputs its probabilities is truthful
(\Cref{lem:tie-fractional-equivalence}).  As in
Bu and Tao~\citep[Section~2.1]{buTao2025truthful} and
Babaioff et~al.\ \citep[Section~1.2.1]{bfmm2026tie}, we design the
mechanism in two steps.
\begin{enumerate}
\item A truthful fractional mechanism maps the reports to a fractional
  allocation $\TruthfulMarginal$.
\item A distribution over integral allocations \emph{implements}
  $\TruthfulMarginal$ exactly, that is, gives each agent each good with
  the probability that $\TruthfulMarginal$ prescribes.  Every allocation in
  its support gives each agent at least $\UniversalFactorValue$ of her
  \emph{truncated proportional share} (TPS) of Babaioff et~al.\
  \citep{babaioffEzraFeige2022bobw}: her proportional share computed after
  capping each good's value at the share itself (\Cref{def:TPS}).  The TPS
  is at least the MMS (\Cref{lem:tps-dominates-mms}).
\end{enumerate}
Call a good \emph{high} for an agent if it is worth at least
$\UniversalFactorValue$ of her TPS, and \emph{low} otherwise
(\Cref{def:high-low-goods}).  The first step alone determines every
agent's expected value, so the second step may use the reported values
without affecting truthfulness.  This matters because our fractional
mechanism depends only on how each agent orders the goods, and no mechanism
whose output depends only on these orders can guarantee more than
$1/\HarmonicNumber_\AgentCount$ of the MMS
\citep[Theorem~4.3]{amanatidisBirmpasMarkakis2016truthful}.  Our
implementation also uses the reported values: to decide which goods are
high for each agent and how to arrange the colors below.

Following the structure of Babaioff et~al.\
\citep[Section~1.2.3]{bfmm2026tie}, we carry out the second step in two
stages.
\begin{enumerate}
\item \emph{Decomposition.}  We write $\TruthfulMarginal$ as the uniform
  average of $\GridSize=\AgentCount\cdot(\AgentCount-1)$
  \assembledFractionalAllocationsName $\FractionalAllocation^\ColorIndex$,
  one for each color $\ColorIndex$ of the coloring described below
  (\Cref{def:assembled-fractional-allocation}).  Each
  $\FractionalAllocation^\ColorIndex$ is a feasible fractional allocation,
  which allocates every good fully.  In it, each agent either receives one
  of her high goods in full, or receives a fractional bundle, her part of
  $\FractionalAllocation^\ColorIndex$, that is worth at least $\frac27$ of
  her TPS and in which every good she holds only partly is a low good
  (\Cref{lem:assembled-fractional-allocation}).
\item \emph{Faithful implementation.}  We implement each
  $\FractionalAllocation^\ColorIndex$ faithfully, as in
  Babaioff et~al.\ \citep[Lemma~10]{babaioffEzraFeige2022bobw} and
  Babaioff et~al.\ \citep[Lemma~4.11]{bfmm2026tie}
  (\Cref{lem:faithful-rounding}): every supported allocation keeps the goods
  an agent holds in full and costs her at most the value of one good she
  holds only partly.  Hence every supported allocation gives each
  agent at least $\UniversalFactorValue$ of her TPS, and the uniform mixture
  over the colors implements $\TruthfulMarginal$ exactly
  (\Cref{subsec:assembly}).
\end{enumerate}
Both parts of our contribution serve the first stage: the
\truthfulFractionMechName makes the decomposition possible, and the
balanced edge coloring constructs it.

\subsubsection{Our fractional allocation mechanism}
\label{subsubsec:intro-fractional-mechanism}

The fractional mechanism that gives every agent $1/\AgentCount$ of every
good is truthful.  Bu and Tao~\citep{buTao2025truthful} implement it with
the $1/\AgentCount$ guarantee above, and this is the best possible:
Babaioff et~al.\ \citep{bfmm2026tie} show that, with at least
$2\cdot\AgentCount-1$ goods, some instances admit no implementation of it
that gives every agent more than $1/\AgentCount$ of her MMS in every
allocation.  This bound holds for every implementation, including one that
uses the agents' reported values, not only their orders of the goods.  The
fractional mechanism must therefore respond to the reports to achieve a
better MMS guarantee.  Babaioff et~al.\ \citep{bfmm2026tie} do so by
averaging $\AgentCount$ allocations.  In each of them, the agents take
their most valuable remaining good one at a time, following a different
cyclic shift of a fixed order, and the last agent also takes all goods
that remain.  This fractional mechanism outputs proportional fractional
allocations and underlies their
$1/(\HarmonicNumber_{\AgentCount-1}+2)$ guarantee above.

\paragraph{Top-set competitive fractional mechanism.}
We instead use the following instance of the family of Freeman et~al.\
\citep{freemanWitkowskiVaughanPennock2024equivalence}.  After padding with public zero-valued goods if there are fewer goods than
agents, each agent's \emph{top set} $\TopSet_\AgentIndex$ consists of her
$\AgentCount-1$ most valuable reported goods, with ties broken by a fixed
public order.  The mechanism first gives each agent a base probability of
$1/\AgentCount$ for each good.  If the good is in her top set, it adds a
bonus probability of $1/\AgentCount$.  For each other agent who competes
for the good, that is, who also has it in her top set, it subtracts a
penalty probability of $1/(\AgentCount\cdot(\AgentCount-1))$.  Hence agent
$\AgentIndex$ receives good $\GoodIndex$ with probability
(\Cref{def:truthful-fractional-allocation})
\begin{align*}
\TruthfulMarginal_{\AgentIndex\GoodIndex}
&=
\underbrace{\frac1\AgentCount
\vphantom{\sum_{\OtherAgentIndex\in\AgentSet\setminus\{\AgentIndex\}}}}_{\text{base probability}}
+\underbrace{\frac1\AgentCount\cdot\Indicator[\GoodIndex\in\TopSet_\AgentIndex]
\vphantom{\sum_{\OtherAgentIndex\in\AgentSet\setminus\{\AgentIndex\}}}}_{\text{bonus probability}}
-\underbrace{\frac1{\AgentCount\cdot(\AgentCount-1)}\cdot
\sum_{\OtherAgentIndex\in\AgentSet\setminus\{\AgentIndex\}}
\Indicator[\GoodIndex\in\TopSet_\OtherAgentIndex]}_{\text{penalty probability}}.
\end{align*}
With the other reports fixed, an agent's report changes her probabilities
only through the bonus probability, which truthful reporting places on her
$\AgentCount-1$ most valuable goods; hence the mechanism is truthful
(\Cref{lem:marginal-truthfulness}).  For any two agents $\AgentIndex$ and
$\OtherAgentIndex$, the base probabilities and the penalty probabilities
imposed by all other agents cancel, giving
$\AgentValuation_\AgentIndex(\TruthfulMarginal_\AgentIndex)
-\AgentValuation_\AgentIndex(\TruthfulMarginal_\OtherAgentIndex)
=(\AgentValuation_\AgentIndex(\TopSet_\AgentIndex)
-\AgentValuation_\AgentIndex(\TopSet_\OtherAgentIndex))/(\AgentCount-1)\geq0$,
so the mechanism is also ex-ante envy-free (\Cref{prop:ex-ante-envy-free}).

Two features of this mechanism make a constant ex-post guarantee possible.
First, every probability is an integer multiple of
$1/(\AgentCount\cdot(\AgentCount-1))$, so the probabilities on high goods
decompose into $\GridSize$ matchings
(\Cref{subsec:high-good-implementation}).  Second, an agent whom her high
goods cannot serve is compensated by her low goods.  Write
$\HighSet_\AgentIndex$ for agent $\AgentIndex$'s set of high goods, and call
her \emph{deficient} if her total probability
$\TruthfulMarginal_\AgentIndex(\HighSet_\AgentIndex)$ on high goods is
below one (\Cref{def:deficient-agents}).  All high goods of a deficient
agent lie in her top set, so her missing high-good marginal
$1-\TruthfulMarginal_\AgentIndex(\HighSet_\AgentIndex)$ is at most
$(\AgentCount-|\HighSet_\AgentIndex|)/\AgentCount$
(\Cref{lem:deficient-structure}).  Her low goods, however, are worth at
least $\AgentCount-|\HighSet_\AgentIndex|$ times her TPS.  Every unit of
missing probability is therefore backed by low goods worth at least
$\AgentCount$ times her TPS.  She receives at most $1/\AgentCount$ of
each good outside her top set (\cref{eq:marginal-cases}), so her
probabilities of her low goods can be scaled up substantially \biaoshuai{I don't get this}.  Scaling them
up and capping each \portionName at one gives her
\lowGoodFractionalBundleName $\LowGoodBundle_\AgentIndex$
(\Cref{def:truncated-low-good-bundle}), which is worth a constant fraction
of her TPS (\Cref{lem:carrier-value}).  She receives this bundle in exactly
the colors where she receives no high good.

\subsubsection{Balanced edge coloring}
\label{subsubsec:intro-balanced-edge-coloring}

We first reserve high goods.  As in Babaioff et~al.\
\citep{bfmm2026tie}, we separate goods at a TPS threshold.
The \highGoodFractionalSuballocationName $\TruthfulMarginalHigh$, a
fractional allocation that may leave part of a good unassigned, retains
each agent's probabilities on her high goods up to a total of one, taking
the goods of her top set first; a deficient agent keeps all of them
(\Cref{def:high-good-fractional-suballocation}).  We encode
$\TruthfulMarginalHigh$ in the \highGoodDummyGraphName
$\highGoodDummyGraph$, a bipartite multigraph with agents on one side and
goods and one dummy vertex per deficient agent on the other; being matched
to her dummy stands for receiving no high good
(\Cref{def:high-good-dummy-graph}).  Agent $\AgentIndex$ and good
$\GoodIndex$ are joined by
$\GridSize\cdot\TruthfulMarginalHigh_{\AgentIndex\GoodIndex}$ parallel
edges, and a deficient agent is joined to her dummy by the remaining
$\GridSize\cdot(1-\TruthfulMarginal_\AgentIndex(\HighSet_\AgentIndex))$.
Every agent has degree exactly $\GridSize$ and no vertex has larger
degree, so $\highGoodDummyGraph$ has a
proper $\GridSize$-edge-coloring, in which each edge receives one of
$\GridSize$ colors and edges sharing a vertex receive different colors
(\Cref{lem:bipartite-decomposition}).  Bu and
Tao~\citep{buTao2025truthful} also use such
colorings, to implement the $1/\AgentCount$ rule.  Each color is a matching
in which every agent receives either a whole high good, which is then
\emph{reserved} for her in that color, or her dummy, and averaging over the
colors recovers $\TruthfulMarginalHigh$
(\Cref{def:high-good-integral-suballocation}).  A deficient agent who
receives her dummy in a color receives instead her
\lowGoodFractionalBundleName, multiplied by a scaling factor
$\Proxy_\AgentIndex\leq1/2$ chosen below, on the goods not reserved in
that color
(\Cref{def:scaling-factor,def:scaled-low-good-suballocation,lem:scaling-factor-bounds}).

An arbitrary coloring can fail in two ways.  A low good of one agent may be
reserved for another in the same color; the agent is then \emph{blocked}
at that good, and a color that blocks her at many goods leaves her too
little value.  And if many deficient agents receive their dummies in the
same color, a good may be over-allocated: their scaled
\lowGoodFractionalBundlesName may together demand more than one unit of it.
Both failures depend only on how the matchings are grouped into colors,
not on the multigraph.  Our key technique is therefore to recolor
$\highGoodDummyGraph$ by exchanging two colors along \emph{alternating
paths}, that is, by swapping the two colors on the edges of a path whose
edges alternate between them (\Cref{lem:alternating-path-exchange}).  Such
an exchange changes only the colors of edges, so it preserves properness,
every high-good marginal, and the number of colors in which each deficient
agent receives her dummy.  We recolor twice, as follows.

\paragraph{Reservation balancing.}
In \Cref{subsec:common-set-losses}, we fix one \emph{\commonSetName} for
all agents and colors: among sets of goods as large as the largest
high-good set of a deficient agent, one whose goods are omitted from the
fewest top sets in total (\Cref{def:common-set}) \biaoshuai{I don't get this part}.  A deficient agent's
\emph{blocked \amountName} in a color, the total \portionName of her
\lowGoodFractionalBundleName on goods reserved in that color, is then at
most the number of reserved goods outside the \commonSetName, which
depends only on the color, plus her total \portionName inside the
\commonSetName, which depends only on her (\Cref{lem:blocked-low-weight}).
The average of the color term over colors does not depend on the coloring.
We exchange colors along alternating paths until the color terms of any
two colors differ by at most one; we call such a coloring
\emph{\reservationBalancedName} (\Cref{lem:reservation-balancing}).  Each
deficient agent then has a lower bound on her \emph{unblocked value}, her
value for the \portionsName of her \lowGoodFractionalBundleName on
unreserved goods, that holds in every color and does not depend on the
coloring (\Cref{lem:unreserved-value-bound}).  One scaling factor per agent
therefore secures $\frac27$ of her TPS in every color where she receives
her dummy (\Cref{prop:scaled-value}).

\paragraph{Load balancing.}
In \Cref{subsec:proxy-definition}, we balance the dummy loads.  The
\emph{dummy load} of a color is the sum of the scaling factors of the
agents who receive their dummies in it, and a dummy load of at most one
assigns at most one unit of every good in that color
(\Cref{def:scaled-low-good-suballocation}).  The average dummy load does
not depend on the coloring, and it stays below one even after adding the
largest scaling factor (\Cref{lem:uniform-scaling-bound}).  We repeatedly
move load from a color of maximum load to a color of minimum load by
alternating-path exchanges.  We group the paths so that each exchange
preserves reservation balance and moves at most one dummy; the receiving
color then stays below one.  The process ends in polynomial time with
every dummy load at most one; we call such a coloring
\emph{\loadBalancedName} (\Cref{prop:load-balancing}).

Both recolorings act across colors, not across agents, so neither changes
$\TruthfulMarginal$.  The resulting \loadBalancedName coloring is our
balanced edge coloring.  Every color holds a feasible fractional
suballocation: it assigns at most one unit of each good and gives every
deficient agent who receives her dummy at least $\frac27$ of her TPS.
Averaged over the colors, these suballocations use no more of any good than
$\TruthfulMarginal$ does (\Cref{prop:capacity-certificate}).  What remains
of $\TruthfulMarginal$ on average is the
\emph{\residualFractionalSuballocationName}, and what remains of each good
in each color is its \emph{unassigned \amountName}
(\Cref{def:residual-fractional-suballocation}).  Dividing the unassigned
\amountName of each good in each color among the agents in proportion to
the \residualFractionalSuballocationName
(\Cref{def:color-residual-fractional-suballocation}) gives the
\assembledFractionalAllocationsName $\FractionalAllocation^\ColorIndex$ of
the first stage (\Cref{def:assembled-fractional-allocation}).  This is an
explicit solution to the completion problem of Babaioff et~al.\
\citep[Claim~4.8 and Lemma~4.9]{bfmm2026tie}: filling the colors so that
they average exactly to $\TruthfulMarginal$.

\subsection{Related work}
\label{subsec:related-work}

\xhdr{Truthful mechanisms.}
For divisible resources, Chen et~al.\ \citep{chenEtAl2013cake}
give truthful and envy-free cake-cutting mechanisms under
piecewise-uniform valuations. Bei et~al.\ \citep{beiEtAl2020disposal}
obtain truthful, envy-free complete allocations for two agents in both
cake cutting and divisible chore division under the same preference
restriction, while Francis~\citep{francis2022chores} obtains
truthfulness and proportionality for any number of agents with
piecewise-uniform chore costs. 
\biaoshuai{I would also mention my EC'22 paper for the impossibility of deterministic truthful proportional mechanisms.}
In contrast, no deterministic mechanism is both truthful and
proportional, even for two agents with piecewise-constant
valuations~\citep{buSongTao2022cake}.
Randomized cake-cutting mechanisms can
also combine TIE with envy-freeness in every realization
\citep{chenEtAl2013cake,mosselTamuz2010truthful}.

For indivisible goods, Amanatidis et~al.\
\citep{amanatidisBirmpasMarkakis2016truthful,abcm2017truthful}
study truthful MMS approximation and characterize deterministic
truthful mechanisms for two additive agents when all goods must be
allocated. Babaioff and Manaker Morag~\citep{babaioffManakerMorag2025characterizations}
study characterizations and fairness for arbitrary numbers of agents
under non-bossiness and neutrality. Restricted valuation domains,
including binary additive and matroid-rank valuations, admit truthful
mechanisms with fairness and efficiency guarantees
\citep{halpernEtAl2020binary,babaioffEzraFeige2021dichotomous}.

For indivisible chores, Aziz, Li, and Wu~\citep{azizLiWu2024chores}
give an ordinal TIE mechanism for general additive costs whose expected maximum
MMS cost ratio is $O(\sqrt{\log n})$; this is an expected approximation
guarantee, not a bound on every realization.
Sun and Chen~\citep{sunChen2025randomized} obtain group-strategyproofness
in expectation together with ex-ante envy-freeness and ex-post EF1
when each item's cost is either zero or a publicly known item-specific
cost. For common positive bi-valued costs, Li et~al.\
\citep{liEtAl2025truthfulChores} obtain TIE, ex-ante envy-freeness
and Pareto optimality, and ex-post EF1, which also implies an
ex-post $(2-1/n)$-MMS guarantee. These two results combine
truthful reporting with fairness in every realization under different
restrictions on the cost domain.

The closest results to ours concern TIE under arbitrary additive
goods valuations. Bu and Tao~\citep{buTao2025truthful} achieve
ex-post EF1 for two agents and, for arbitrary $n$, ex-ante envy-freeness
together with ex-post PROP1 and $1/n$-MMS.
Babaioff et~al.\ \citep{bfmm2026tie} obtain an ordinal TIE mechanism
with ex-ante proportionality and ex-post $1/(H_{n-1}+2)$-TPS,
where $H_k$ denotes the $k$th harmonic number. They also obtain
$2/3$-TPS under TIE for two agents and study improved guarantees
under almost TIE for general $n$.
Our result combines a constant ex-post TPS guarantee with TIE
and ex-ante envy-freeness for arbitrary numbers of agents.

\xhdr{Share-based fairness.}
Exact MMS allocations need not exist even under additive valuations,
for either goods~\citep{procacciaWang2014fair,kurokawaProcacciaWang2018fair}
or chores~\citep{azizEtAl2017chores}. Consequently, exact MMS cannot
be guaranteed in every realization on these instances, motivating
approximate share guarantees.
For goods, the initial $2/3$ guarantee~\citep{procacciaWang2014fair}
was improved through a series of works
\citep{ghodsiEtAl2018improvement,gargTaki2021improved,akramiGarg2024breaking,heidariEtAl2026improved}.
The revised preprint of Huang and Zhou~\citep{huangZhou2025fptas}
establishes the current best general guarantee of $7/9$ and a
$(7/9-\varepsilon)$ FPTAS.
For chores, the initial cost factor of $2$~\citep{azizEtAl2017chores}
was improved by subsequent work
\citep{barmanKrishnamurthy2020approximation,huangLu2021chores}
to the current general bound of $13/11$, with a
$(13/11+\varepsilon)$ FPTAS~\citep{huangSegalHalevi2023scheduling}.

The truncated proportional share (TPS) is particularly relevant to our
result: it upper-bounds MMS for additive goods and is computable in
polynomial time. Babaioff et~al.\ \citep{babaioffEzraFeige2022bobw}
establish the tight TPS approximation ratio $n/(2n-1)$ and a
polynomial-time algorithm attaining it. Thus, a TPS guarantee also
implies the same MMS guarantee, while using a stronger benchmark.
For unequal entitlements, the anyprice share (APS)
\citep{babaioffEzraFeige2024aps} and more general feasible shares
\citep{babaioffFeige2025shares} extend the study of share-based
fairness beyond equal claims.

\xhdr{Randomized mechanisms for fair division.}
The best-of-both-worlds literature studies which guarantees in every
realized allocation can coexist with fairness in expectation.
Aziz et~al.\ \citep{azizEtAl2024bobw} obtain ex-ante envy-freeness
and ex-post EF1 for additive goods and, through an adaptation,
additive chores. For goods, Babaioff et~al.\
\citep{babaioffEzraFeige2022bobw} combine ex-ante proportionality
with ex-post $1/2$-TPS. These results establish simultaneous fairness
guarantees without imposing incentive compatibility.
Further work considers goods with unequal entitlements
\citep{azizGangulyMicha2023entitlements,hoeferEtAl2024entitlements}.
For additive mixed-sign utilities, where an item can be a good for one
agent and a chore for another, Aziz et~al.\ \citep{azizEtAl2026mixed}
establish ex-ante envy-freeness and ex-post EF1.

The implementation of these distributions also connects to the
utility-controlled decomposition of fractional assignments developed by
Budish et~al.\ \citep{budishCheKojimaMilgrom2013random}.
Our fractional rule is an instance of the truthful, envy-free
construction from competitive scoring rules of Freeman et~al.\
\citep{freemanWitkowskiVaughanPennock2024equivalence}.
For additive valuations, preserving item-assignment marginals preserves
expected utilities, allowing the distribution over integral allocations
to be adjusted while retaining the fractional rule's incentive and
ex-ante fairness properties. This approach already appears in earlier
TIE mechanisms~\citep{buTao2025truthful,bfmm2026tie}.
Our analysis uses this framework to ensure that every allocation in
the support satisfies a constant TPS guarantee.

\section{Preliminaries}
\label{sec:preliminaries}
We study the problem of allocating a finite set $\GoodSet$ of
$\GoodCount\deq\lvert\GoodSet\rvert$ indivisible goods to a set
$\AgentSet\deq\{1,\ldots,\AgentCount\}$ of $\AgentCount\geq1$ strategic agents whose
preferences over the goods are private.  The preferences of each agent
$\AgentIndex\in\AgentSet$ are represented by a nonnegative additive valuation
$\AgentValuation_\AgentIndex:2^\GoodSet\to\NonnegativeReals$, that is,
$\AgentValuation_\AgentIndex(\Bundle)=\sum_{\GoodIndex\in\Bundle}\AgentValuation_\AgentIndex(\GoodIndex)$
for every subset $\Bundle\subseteq\GoodSet$, where we write
$\AgentValuation_\AgentIndex(\GoodIndex)\deq\AgentValuation_\AgentIndex(\{\GoodIndex\})$ for simplicity;
in particular, $\AgentValuation_\AgentIndex(\emptyset)=0$.
The set of all nonnegative additive valuations on $\GoodSet$
is denoted by $\ValuationSet$.  A \emph{valuation profile} is a vector
$\ValuationProfile\deq(\AgentValuation_\AgentIndex)_{\AgentIndex\in\AgentSet}
\in\ValuationSet^{\AgentCount}$, containing one valuation for each agent.
For a statement $\Statement$, the
indicator $\Indicator[\Statement]$ equals one when $\Statement$ is true and zero
otherwise.
Fix a public order on $\GoodSet$ for breaking ties throughout the paper.

An \emph{allocation} is an ordered tuple
$\Allocation\deq(\Allocation_\AgentIndex)_{\AgentIndex\in\AgentSet}$ of
pairwise disjoint subsets of $\GoodSet$ satisfying
$\bigcup_{\AgentIndex\in\AgentSet}\Allocation_\AgentIndex=\GoodSet$,
where $\Allocation_\AgentIndex$ is the set of goods assigned to agent
$\AgentIndex$.
Her value for the allocation is
$\AgentValuation_\AgentIndex(\Allocation)\deq
\AgentValuation_\AgentIndex(\Allocation_\AgentIndex)$.
We also call such allocations \emph{integral allocations} and denote their
set by $\IntegralAllocationSet_{\GoodSet,\AgentSet}$.

For an agent $\AgentIndex\in\AgentSet$, a \emph{fractional bundle} is a vector
$\FractionalBundle_\AgentIndex\deq
(\FractionalBundle_{\AgentIndex\GoodIndex})_{\GoodIndex\in\GoodSet}
\in\RealNumbers^\GoodSet$,
where $\FractionalBundle_{\AgentIndex\GoodIndex}$ is the
\emph{\portionName} of good $\GoodIndex$ assigned to agent $\AgentIndex$.
It is \emph{feasible} if
$\FractionalBundle_{\AgentIndex\GoodIndex}\in[0,1]$ for every good
$\GoodIndex\in\GoodSet$.
For a set of goods $\DesignatedGoods\subseteq\GoodSet$, we write
$\FractionalBundle_\AgentIndex(\DesignatedGoods)\deq
\sum_{\GoodIndex\in\DesignatedGoods}\FractionalBundle_{\AgentIndex\GoodIndex}$
for her total \emph{\amountName} of those goods.
Her value for the fractional bundle is
$\AgentValuation_\AgentIndex(\FractionalBundle_\AgentIndex)\deq
\sum_{\GoodIndex\in\GoodSet}\FractionalBundle_{\AgentIndex\GoodIndex}\cdot
\AgentValuation_\AgentIndex(\GoodIndex)$.

A \emph{fractional allocation} is a matrix
$\FractionalAllocation\deq
(\FractionalAllocation_{\AgentIndex\GoodIndex})_{\AgentIndex\in\AgentSet,\GoodIndex\in\GoodSet}
\in\RealNumbers^{\AgentSet\times\GoodSet}$,
whose row $\FractionalAllocation_\AgentIndex$ is agent $\AgentIndex$'s
fractional bundle.
It is \emph{feasible} if
$\FractionalAllocation_{\AgentIndex\GoodIndex}\in[0,1]$ for all
$\AgentIndex\in\AgentSet$ and $\GoodIndex\in\GoodSet$, and
$\sum_{\AgentIndex\in\AgentSet}\FractionalAllocation_{\AgentIndex\GoodIndex}=1$
for every good $\GoodIndex\in\GoodSet$.
We denote the set of all feasible fractional allocations by
$\FractionalAllocationSet_{\GoodSet,\AgentSet}$.
An integral allocation corresponds to a feasible fractional allocation
whose entries all belong to $\{0,1\}$.
For each agent $\AgentIndex$, her value for the allocation is
$\AgentValuation_\AgentIndex(\FractionalAllocation)
\deq\AgentValuation_\AgentIndex(\FractionalAllocation_\AgentIndex)$.

A \emph{fractional suballocation} is represented by the same type of
matrix, but allows goods to remain partially or entirely unallocated.
It is \emph{feasible} if
$\FractionalAllocation_{\AgentIndex\GoodIndex}\in[0,1]$ for all
$\AgentIndex\in\AgentSet$ and $\GoodIndex\in\GoodSet$, and
$\sum_{\AgentIndex\in\AgentSet}\FractionalAllocation_{\AgentIndex\GoodIndex}\leq1$
for every good $\GoodIndex\in\GoodSet$.
A feasible fractional suballocation with all entries in $\{0,1\}$ is an
\emph{integral suballocation}; equivalently, it is an ordered tuple of
pairwise disjoint subsets
$(\Allocation_\AgentIndex)_{\AgentIndex\in\AgentSet}$ with
$\bigcup_{\AgentIndex\in\AgentSet}\Allocation_\AgentIndex\subseteq\GoodSet$.

We use \emph{\portionName} for the part of a single good in a
fractional bundle and \emph{\amountName} for a sum of such \portionsName.
In a randomized mechanism (\Cref{def:randomized-mechanism}), each
\portionName equals the marginal probability that the agent receives the
good.

\subsection{Mechanisms and truthfulness}
\label{subsec:mechanisms}
We consider settings in which each agent's valuation is \emph{private
information}, known only to her.  Each agent reports a valuation from
$\ValuationSet$, and the mechanism allocates the goods based on these
reports.  We seek to design
\emph{dominant-strategy incentive-compatible} (DSIC) mechanisms, in which
reporting her true valuation is a best strategy for every agent,
irrespective of the other agents' reports.  
As usual, we assume that each agent reports
truthfully when doing so is a dominant strategy.

\begin{definition}[Deterministic mechanism]
\label{def:deterministic-mechanism}
A \emph{deterministic mechanism} is a function
$\Mechanism:\ValuationSet^{\AgentCount}\to
\IntegralAllocationSet_{\GoodSet,\AgentSet}$ that maps each reported
valuation profile $\ValuationProfile\in\ValuationSet^{\AgentCount}$ to an
integral allocation $\Mechanism(\ValuationProfile)$.
\end{definition}

We also consider mechanisms whose outputs are fractional allocations.

\begin{definition}[Fractional mechanism]
\label{def:fractional-mechanism}
A \emph{fractional mechanism} is a function
$\FractionalMechanism:\ValuationSet^{\AgentCount}\to
\FractionalAllocationSet_{\GoodSet,\AgentSet}$ that maps each reported
valuation profile $\ValuationProfile\in\ValuationSet^{\AgentCount}$ to a
feasible fractional allocation $\FractionalMechanism(\ValuationProfile)$.
\end{definition}

For mechanisms using randomization, we distinguish sampling an allocation
from explicitly computing its distribution.
Denote by
$\ProbabilityDistributionSet(\IntegralAllocationSet_{\GoodSet,\AgentSet})$
the set of all probability distributions over integral allocations.

\begin{definition}[Randomized (distributional) mechanism]
\label{def:mechanism}
\label{def:randomized-mechanism}
\label{def:distributional-mechanism}
A \emph{randomized mechanism} for $\AgentCount$ agents with
valuations from class $\ValuationSet$ is a function
$\RandomizedMechanism:\ValuationSet^{\AgentCount}\to
\ProbabilityDistributionSet(\IntegralAllocationSet_{\GoodSet,\AgentSet})$,
which maps each reported valuation profile
$\ValuationProfile\in\ValuationSet^{\AgentCount}$ to a distribution
$\RandomizedMechanism(\ValuationProfile)$ over integral allocations.
The \emph{support} $\Support(\RandomizedMechanism(\ValuationProfile))$ is
the set of integral allocations assigned positive probability by this
distribution.
A randomized mechanism is typically implemented by mapping a
valuation profile to a sample from the (implicit) underlying distribution.
We use the term \emph{distributional mechanism} to refer to a randomized
mechanism which outputs an explicit representation of its distribution over integral
allocations.
\end{definition}

Note that a fractional allocation
specifies only the marginal probability that each agent receives each good,
rather than a probability distribution over integral allocations.
Every randomized or distributional mechanism $\RandomizedMechanism$ induces a
fractional mechanism that, at each reported valuation profile
$\ValuationProfile\in\ValuationSet^{\AgentCount}$, outputs the fractional
allocation whose entry for agent $\AgentIndex\in\AgentSet$ and good
$\GoodIndex\in\GoodSet$ is the marginal probability
$\Probability_{\Allocation\sim\RandomizedMechanism(\ValuationProfile)}
[\GoodIndex\in\Allocation_\AgentIndex]$.
The mechanism $\RandomizedMechanism$ \emph{implements} a fractional mechanism
$\FractionalMechanism$ if the marginals induced by $\RandomizedMechanism$
equal the corresponding entries of $\FractionalMechanism(\ValuationProfile)$
for every agent $\AgentIndex\in\AgentSet$ and good $\GoodIndex\in\GoodSet$,
at every reported profile
$\ValuationProfile\in\ValuationSet^{\AgentCount}$.

Agents are risk-neutral and seek to maximize their expected value.
Thus, at a reported valuation profile
$\ValuationProfile\in\ValuationSet^{\AgentCount}$, the value that agent
$\AgentIndex\in\AgentSet$ with valuation
$\AgentValuation_\AgentIndex\in\ValuationSet$ assigns to the output of a
randomized or distributional mechanism
$\RandomizedMechanism$ is
$\AgentValuation_\AgentIndex(\RandomizedMechanism(\ValuationProfile))\deq
\Expectation_{\Allocation\sim\RandomizedMechanism(\ValuationProfile)}
[\AgentValuation_\AgentIndex(\Allocation_\AgentIndex)]$.
All output values are evaluated using the agent's true valuation, even when
her report differs from it.

\begin{definition}[Truthfulness]
\label{def:tie}
\label{def:truthfulness}
Consider a mechanism $\Mechanism$, deterministic, fractional, randomized,
or distributional, for $\AgentCount$ agents with valuations from class
$\ValuationSet$. We say that $\Mechanism$ is \emph{truthful} (in dominant
strategies) if, for every agent
$\AgentIndex\in\AgentSet$, every true valuation
$\AgentValuation_\AgentIndex\in\ValuationSet$, every alternative report
$\ReportedValuation_\AgentIndex\in\ValuationSet$, and every profile of
reports by the other agents
$\ValuationProfile_{-\AgentIndex}\in\ValuationSet^{\AgentCount-1}$, it holds that
\begin{align*}
\AgentValuation_\AgentIndex\bigl(\Mechanism(\AgentValuation_\AgentIndex,\ValuationProfile_{-\AgentIndex})\bigr)
&\geq
\AgentValuation_\AgentIndex\bigl(\Mechanism(\ReportedValuation_\AgentIndex,\ValuationProfile_{-\AgentIndex})\bigr).
\end{align*}
When $\Mechanism$ is randomized or distributional, the values in
this inequality are interpreted in expectation over the mechanism's
randomness, and the mechanism is said to be \emph{truthful in expectation}
(TIE) instead.
\end{definition}

By the definitions above, we immediately obtain the following lemma.

\begin{lemma}
\label{lem:tie-fractional-equivalence}
When agents have additive valuations over goods, a randomized or
distributional mechanism $\RandomizedMechanism$ is TIE if and only if its
induced fractional mechanism $\FractionalMechanism$ is truthful.
\end{lemma}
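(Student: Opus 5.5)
The plan is to show that, at every reported profile, the expected value an agent receives from $\RandomizedMechanism$ coincides with the value she assigns to the fractional allocation $\FractionalMechanism(\ValuationProfile)$. Then the two truthfulness inequalities in \Cref{def:truthfulness} are literally the same inequality. The whole argument is linearity of expectation combined with additivity.

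First, fix a profile $\ValuationProfile\in\ValuationSet^{\AgentCount}$ and an agent $\AgentIndex$ with true valuation $\AgentValuation_\AgentIndex$; this valuation may differ from her report, but the computation does not depend on the report. For any integral allocation $\Allocation$, additivity gives
$\AgentValuation_\AgentIndex(\Allocation_\AgentIndex)=\sum_{\GoodIndex\in\GoodSet}\AgentValuation_\AgentIndex(\GoodIndex)\cdot\Indicator[\GoodIndex\in\Allocation_\AgentIndex]$.
Taking expectation over $\Allocation\sim\RandomizedMechanism(\ValuationProfile)$ and using linearity over the finite sum, we get
\begin{align*}
\AgentValuation_\AgentIndex\bigl(\RandomizedMechanism(\ValuationProfile)\bigr)
=\sum_{\GoodIndex\in\GoodSet}\AgentValuation_\AgentIndex(\GoodIndex)\cdot
\Probability_{\Allocation\sim\RandomizedMechanism(\ValuationProfile)}[\GoodIndex\in\Allocation_\AgentIndex]
=\sum_{\GoodIndex\in\GoodSet}\AgentValuation_\AgentIndex(\GoodIndex)\cdot
\FractionalMechanism(\ValuationProfile)_{\AgentIndex\GoodIndex}
=\AgentValuation_\AgentIndex\bigl(\FractionalMechanism(\ValuationProfile)\bigr).
\end{align*}
The middle equality is the definition of the induced fractional mechanism. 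The last equality is the definition of an agent's value for a fractional allocation. As a side check, $\FractionalMechanism(\ValuationProfile)$ is indeed feasible: each entry is a probability, and the entries for a fixed good sum to one because every integral allocation assigns each good to exactly one agent.

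Second, apply this identity at the two profiles $(\AgentValuation_\AgentIndex,\ValuationProfile_{-\AgentIndex})$ and $(\ReportedValuation_\AgentIndex,\ValuationProfile_{-\AgentIndex})$, always evaluating with the true valuation $\AgentValuation_\AgentIndex$. The TIE inequality for $\RandomizedMechanism$ then becomes exactly the truthfulness inequality for $\FractionalMechanism$, and this holds for every $\AgentIndex$, $\AgentValuation_\AgentIndex$, $\ReportedValuation_\AgentIndex$ and $\ValuationProfile_{-\AgentIndex}$. Both directions of the equivalence follow at once.

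There is no real obstacle. The only point needing care is that values are always evaluated with the agent's true valuation, never her report, and the identity above holds for any fixed valuation, so this is automatic.
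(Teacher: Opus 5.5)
Your proof is correct and is exactly the argument the paper leaves implicit when it says the lemma follows immediately from the definitions. By additivity and linearity of expectation, each agent's expected value under $\RandomizedMechanism(\ValuationProfile)$ equals her value for the induced fractional allocation, evaluated with her true valuation, so the two truthfulness inequalities coincide.
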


\subsection{Fairness}
\label{subsec:fairness}
We consider proportionality and envy-freeness as fairness properties of
allocations, and use the maximin share (MMS) and truncated proportional
share (TPS) as benchmarks for approximate fairness guarantees.
A share specifies a target value using only an agent's own valuation, the
goods, and the number of agents; it does not depend on the other agents'
valuations.  An allocation can then be required to give each agent her
share, or a specified fraction of it.

\begin{definition}[Proportionality]
\label{def:proportionality}
For a valuation profile $\ValuationProfile\in\ValuationSet^{\AgentCount}$,
an integral allocation
$\Allocation\in\IntegralAllocationSet_{\GoodSet,\AgentSet}$ is
\emph{proportional} if
$\AgentValuation_\AgentIndex(\Allocation_\AgentIndex)
\geq\AgentValuation_\AgentIndex(\GoodSet)/\AgentCount$
for every agent $\AgentIndex\in\AgentSet$.
The quantity
$\PROP_\AgentCount(\AgentValuation_\AgentIndex,\GoodSet)
\deq\AgentValuation_\AgentIndex(\GoodSet)/\AgentCount$
is called agent $\AgentIndex$'s \emph{proportional share}.
The same condition defines proportionality for a feasible fractional
allocation $\FractionalAllocation$, using
$\AgentValuation_\AgentIndex(\FractionalAllocation_\AgentIndex)$ in place of
$\AgentValuation_\AgentIndex(\Allocation_\AgentIndex)$.
\end{definition}

The maximin share, introduced by
Budish~\citep{budish2011combinatorial}, asks what the agent could
guarantee by partitioning the goods herself and receiving a least-valued
bundle.

\begin{definition}[Maximin share]
\label{def:MMS}
The \emph{maximin share} (MMS) of agent $\AgentIndex$ with valuation
$\AgentValuation_\AgentIndex:2^\GoodSet\to\NonnegativeReals$, when allocating
the set of goods $\GoodSet$ among $\AgentCount$ agents, is
\begin{align*}
\MMS_\AgentCount(\AgentValuation_\AgentIndex,\GoodSet)
&\deq
\max_{\Allocation\in\IntegralAllocationSet_{\GoodSet,\AgentSet}}
\min_{\OtherAgentIndex\in\AgentSet}
\AgentValuation_\AgentIndex(\Allocation_\OtherAgentIndex).
\end{align*}
\end{definition}

While the proportional share and MMS are well-defined beyond additive valuations, the
truncated proportional share (TPS), introduced by
Babaioff et~al.\ \citep{babaioffEzraFeige2022bobw}, is defined for
additive valuations.

\begin{definition}[Truncated proportional share]
\label{def:TPS}
The \emph{truncated proportional share} (TPS) of agent $\AgentIndex$ with additive valuation
$\AgentValuation_\AgentIndex:2^\GoodSet\to\NonnegativeReals$, when
allocating the set of goods $\GoodSet$ among $\AgentCount$ agents, is
\begin{align*}
\TPS_\AgentCount(\AgentValuation_\AgentIndex,\GoodSet)
&\deq
\max\left\{
\TPSCandidate\geq0:
\frac{1}{\AgentCount}\cdot
\sum_{\GoodIndex\in\GoodSet}
\min\{\AgentValuation_\AgentIndex(\GoodIndex),\TPSCandidate\}
=\TPSCandidate
\right\}.
\end{align*}
\end{definition}

When the context is clear, we also write $\MMS_\AgentIndex$ and
$\TPS_\AgentIndex$ for agent $\AgentIndex$'s MMS and TPS,
respectively, for simplicity.
Substituting $\TPSCandidate=\TPS_\AgentIndex$ into the
equality in \Cref{def:TPS} and multiplying by the number of agents
$\AgentCount$ gives
\begin{align}
\sum_{\GoodIndex\in \GoodSet}
\min\{\AgentValuation_\AgentIndex(\GoodIndex),\TPS_\AgentIndex\}
=\AgentCount\cdot\TPS_\AgentIndex.
\label{eq:tps-tight}
\end{align}

Proportionality (\Cref{def:proportionality}) and guarantees based on MMS or
TPS (\Cref{def:MMS,def:TPS}) require each agent to receive a value that
meets a threshold determined by her own valuation.
Envy-freeness compares an agent's value for her own bundle with her value
for the other agents' bundles.

\begin{definition}[Envy-freeness]
\label{def:envy-freeness}
When allocating the set of goods $\GoodSet$ among $\AgentCount$ agents
with valuation profile $\ValuationProfile\in\ValuationSet^{\AgentCount}$,
an integral allocation
$\Allocation=(\Allocation_\AgentIndex)_{\AgentIndex\in\AgentSet}$ is
\emph{envy-free} if, for all agents
$\AgentIndex,\OtherAgentIndex\in\AgentSet$,
\begin{align*}
\AgentValuation_\AgentIndex(\Allocation_\AgentIndex)
\geq\AgentValuation_\AgentIndex(\Allocation_\OtherAgentIndex).
\end{align*}
\end{definition}

\begin{lemma}
\label{lem:tps-dominates-mms}
When allocating the set of goods $\GoodSet$ among $\AgentCount$ agents
with valuation profile $\ValuationProfile\in\ValuationSet^{\AgentCount}$,
the MMS, TPS, and proportional share of every agent
$\AgentIndex\in\AgentSet$ satisfy
\begin{align*}
\MMS_\AgentCount(\AgentValuation_\AgentIndex,\GoodSet)
\leq\TPS_\AgentCount(\AgentValuation_\AgentIndex,\GoodSet)
\leq\PROP_\AgentCount(\AgentValuation_\AgentIndex,\GoodSet).
\end{align*}
Moreover, if an integral allocation
$\Allocation=(\Allocation_\AgentIndex)_{\AgentIndex\in\AgentSet}$ is envy-free
at this profile, then it is proportional: every agent
$\AgentIndex\in\AgentSet$ receives value
$\AgentValuation_\AgentIndex(\Allocation_\AgentIndex)
\geq\PROP_\AgentCount(\AgentValuation_\AgentIndex,\GoodSet)$, and hence at least her
TPS and MMS.
\end{lemma}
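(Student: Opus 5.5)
We prove the three inequalities separately, using only \Cref{def:MMS,def:TPS,def:proportionality} and the identity \eqref{eq:tps-tight}. Fix an agent $\AgentIndex$ and write $t\deq\TPS_\AgentIndex$.

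\emph{TPS at most proportional share.} Since $\min\{\AgentValuation_\AgentIndex(\GoodIndex),t\}\leq\AgentValuation_\AgentIndex(\GoodIndex)$ for every good, \eqref{eq:tps-tight} gives $\AgentCount t\leq\AgentValuation_\AgentIndex(\GoodSet)$. Hence $t\leq\PROP_\AgentCount(\AgentValuation_\AgentIndex,\GoodSet)$. Before this, I would check that the maximum in \Cref{def:TPS} is attained, so that \eqref{eq:tps-tight} is legitimate. Consider
\[
\TPSCandidate\mapsto\frac1\AgentCount\sum_{\GoodIndex}\min\{\AgentValuation_\AgentIndex(\GoodIndex),\TPSCandidate\}-\TPSCandidate .
\]
This map is continuous and concave, equals $0$ at $\TPSCandidate=0$, and tends to $-\infty$. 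Its zero set is therefore nonempty, closed and bounded, so the maximum exists.

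\emph{MMS at most TPS.} Let $\mu\deq\MMS_\AgentIndex$ and take a partition $(\Allocation_1,\dots,\Allocation_\AgentCount)$ attaining it, so each bundle has value at least $\mu$. I first claim that each bundle $\Allocation_\OtherAgentIndex$ also satisfies $\sum_{\GoodIndex\in\Allocation_\OtherAgentIndex}\min\{\AgentValuation_\AgentIndex(\GoodIndex),\mu\}\geq\mu$. If the bundle contains a good of value at least $\mu$, that single term already equals $\mu$. Otherwise no truncation occurs, and the sum equals $\AgentValuation_\AgentIndex(\Allocation_\OtherAgentIndex)\geq\mu$. Summing over the $\AgentCount$ bundles gives
\[
f(\mu)\deq\frac1\AgentCount\sum_{\GoodIndex\in\GoodSet}\min\{\AgentValuation_\AgentIndex(\GoodIndex),\mu\}\geq\mu .
\]
The function $f(\TPSCandidate)-\TPSCandidate$ is continuous and eventually negative, so the intermediate value theorem yields some $\TPSCandidate\geq\mu$ with $f(\TPSCandidate)=\TPSCandidate$. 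Since $t$ is the largest such point, $t\geq\mu$.

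\emph{Envy-freeness implies proportionality.} Suppose $\Allocation$ is envy-free. Summing $\AgentValuation_\AgentIndex(\Allocation_\AgentIndex)\geq\AgentValuation_\AgentIndex(\Allocation_\OtherAgentIndex)$ over all $\OtherAgentIndex\in\AgentSet$, and using additivity together with the fact that the bundles partition $\GoodSet$, gives $\AgentCount\cdot\AgentValuation_\AgentIndex(\Allocation_\AgentIndex)\geq\AgentValuation_\AgentIndex(\GoodSet)$. This is proportionality. Combined with the two inequalities above, the agent's value is at least her TPS and hence at least her MMS.

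The only delicate point is the bundle-wise truncation claim in the MMS-at-most-TPS step, together with the fixed-point (intermediate value) argument that turns $f(\mu)\geq\mu$ into $t\geq\mu$. Everything else is direct.
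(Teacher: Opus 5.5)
Your proposal is correct and follows essentially the same route as the paper's proof. Bundle-wise truncation of an MMS partition gives $\frac{1}{n}\sum_{g}\min\{v_i(g),\mu\}\ge\mu$, and the intermediate value theorem then yields $\TPS_i\ge\MMS_i$. The identity \cref{eq:tps-tight} gives $\TPS_i\le\PROP_n(v_i,M)$, and summing the envy-freeness inequalities gives proportionality. Your explicit check that the maximum in \Cref{def:TPS} is attained, and your case split in place of the paper's one-line bound $\sum_{g\in A_j}\min\{v_i(g),\mu\}\ge\min\{v_i(A_j),\mu\}$, are minor refinements of the same argument.
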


\begin{proof}
\begingroup
\setlength{\emergencystretch}{3em}
Fix an agent $\AgentIndex\in\AgentSet$ and let
$(\Allocation_\OtherAgentIndex)_{\OtherAgentIndex\in\AgentSet}$ be an
MMS partition of the goods $\GoodSet$ for her valuation
$\AgentValuation_\AgentIndex$, so
$\min_{\OtherAgentIndex\in\AgentSet}
\AgentValuation_\AgentIndex(\Allocation_\OtherAgentIndex)
=\MMS_\AgentIndex$.
Summing over goods in bundle $\Allocation_\OtherAgentIndex$ gives:
$\sum_{\GoodIndex\in\Allocation_\OtherAgentIndex}
\min\{\AgentValuation_\AgentIndex(\GoodIndex),\MMS_\AgentIndex\}
\geq\min\{\AgentValuation_\AgentIndex(\Allocation_\OtherAgentIndex),
\MMS_\AgentIndex\}=\MMS_\AgentIndex$
for every $\OtherAgentIndex\in\AgentSet$.
Summing over $\AgentCount$ bundles gives:
$\frac{1}{\AgentCount}\cdot\sum_{\GoodIndex\in\GoodSet}
\min\{\AgentValuation_\AgentIndex(\GoodIndex),\MMS_\AgentIndex\}
\geq\MMS_\AgentIndex$, and hence
$\frac{\AgentValuation_\AgentIndex(\GoodSet)}{\AgentCount}
\geq\MMS_\AgentIndex$.
For $\TPSCandidate\geq0$, define the function
$\TPSGapFunction(\TPSCandidate)\deq
\frac{1}{\AgentCount}\cdot\sum_{\GoodIndex\in\GoodSet}
\min\{\AgentValuation_\AgentIndex(\GoodIndex),\TPSCandidate\}-\TPSCandidate$.
Then we have $\TPSGapFunction(\MMS_\AgentIndex)\geq0$,
whereas $\TPSGapFunction(\TPSCandidate)
\leq\AgentValuation_\AgentIndex(\GoodSet)/\AgentCount-\TPSCandidate<0$
whenever $\TPSCandidate>\AgentValuation_\AgentIndex(\GoodSet)/\AgentCount$.
Since $\TPSGapFunction$ is continuous, the intermediate value theorem
guarantees a zero at or above $\MMS_\AgentIndex$.
By \Cref{def:TPS}, $\TPS_\AgentIndex$ is the largest nonnegative zero
of $\TPSGapFunction$, so $\TPS_\AgentIndex\geq\MMS_\AgentIndex$.
By \cref{eq:tps-tight}, we also have
$\TPS_\AgentIndex
=\frac{1}{\AgentCount}\cdot\sum_{\GoodIndex\in\GoodSet}
\min\{\AgentValuation_\AgentIndex(\GoodIndex),\TPS_\AgentIndex\}
\leq\frac{1}{\AgentCount}\cdot\sum_{\GoodIndex\in\GoodSet}
\AgentValuation_\AgentIndex(\GoodIndex)
=\PROP_\AgentCount(\AgentValuation_\AgentIndex,\GoodSet)$.
\par\endgroup

Finally, let $\Allocation$ be any complete envy-free integral allocation.
By \Cref{def:envy-freeness}, we have
$\AgentValuation_\AgentIndex(\Allocation_\AgentIndex)
\geq\AgentValuation_\AgentIndex(\Allocation_\OtherAgentIndex)$
for every agent $\OtherAgentIndex\in\AgentSet$.
Then, summing over agents $\OtherAgentIndex\in\AgentSet$ and using
additivity gives
$\AgentCount\cdot\AgentValuation_\AgentIndex(\Allocation_\AgentIndex)
\geq\sum_{\OtherAgentIndex\in\AgentSet}
\AgentValuation_\AgentIndex(\Allocation_\OtherAgentIndex)
=\AgentValuation_\AgentIndex(\GoodSet)$.
Thus $\Allocation$ is proportional by \Cref{def:proportionality}, and
every agent receives at least her TPS and MMS.
\end{proof}

The proportional share and TPS can be computed in polynomial time,
whereas exact MMS computation is NP-hard.
We therefore work with TPS: it is efficiently computable, and a guarantee
of any fraction of TPS implies the same fraction of MMS by
\Cref{lem:tps-dominates-mms}.

% \paragraph{Fair randomized mechanisms.}
Our mechanism uses randomization to combine truthfulness with ex-post
fairness.  We require every integral allocation in the support of its
output distribution to satisfy an MMS or TPS approximation guarantee.
In the spirit of the best-of-both-worlds approach, we also seek fairness
in expectation: our mechanism is ex-ante envy-free and, consequently,
ex-ante proportional.

For $\FairFactor>0$ and a valuation profile
$\ValuationProfile\in\ValuationSet^{\AgentCount}$, an integral
allocation $\Allocation\in\IntegralAllocationSet_{\GoodSet,\AgentSet}$
is \emph{$\FairFactor$-MMS} if every agent
$\AgentIndex\in\AgentSet$ receives at least $\FairFactor$ times her
MMS, that is,
$\AgentValuation_\AgentIndex(\Allocation_\AgentIndex)
\geq\FairFactor\cdot\MMS_\AgentIndex$.
We define $\FairFactor$-TPS allocations analogously.

\paragraph{Ex-post.}
An ex-post fairness guarantee must hold for every integral allocation in
the support of the randomized mechanism.
In particular, a randomized or distributional mechanism
$\RandomizedMechanism$ is \emph{$\FairFactor$-MMS ex-post} if, for every
valuation profile $\ValuationProfile\in\ValuationSet^{\AgentCount}$,
every integral allocation
$\Allocation\in\Support(\RandomizedMechanism(\ValuationProfile))$ is
$\FairFactor$-MMS.
We define $\FairFactor$-TPS ex-post analogously.
We use $\FairFactor$-MMS and $\FairFactor$-TPS only in the ex-post sense,
and may omit ``ex-post'' when it is clear from context.

\paragraph{Ex-ante.}
An ex-ante fairness guarantee is evaluated in expectation over the
mechanism's randomness.
In this paper, we require \emph{ex-ante envy-freeness}: for every valuation profile
$\ValuationProfile\in\ValuationSet^{\AgentCount}$ and all agents
$\AgentIndex,\OtherAgentIndex\in\AgentSet$,
\begin{align*}
\Expectation_{\Allocation\sim\RandomizedMechanism(\ValuationProfile)}
\bigl[\AgentValuation_\AgentIndex(\Allocation_\AgentIndex)\bigr]
&\geq
\Expectation_{\Allocation\sim\RandomizedMechanism(\ValuationProfile)}
\bigl[\AgentValuation_\AgentIndex(\Allocation_\OtherAgentIndex)\bigr].
\end{align*}
A randomized or distributional mechanism $\RandomizedMechanism$ is
\emph{ex-ante proportional} if, for every valuation profile
$\ValuationProfile\in\ValuationSet^{\AgentCount}$ and every agent
$\AgentIndex\in\AgentSet$,
\begin{align*}
\Expectation_{\Allocation\sim\RandomizedMechanism(\ValuationProfile)}
\bigl[\AgentValuation_\AgentIndex(\Allocation_\AgentIndex)\bigr]
&\geq\PROP_\AgentCount(\AgentValuation_\AgentIndex,\GoodSet)
=\frac{\AgentValuation_\AgentIndex(\GoodSet)}{\AgentCount}.
\end{align*}
This is the ex-ante fairness guarantee used in
\citep{babaioffEzraFeige2022bobw,bfmm2026tie}.
Ex-ante envy-freeness implies ex-ante proportionality: summing the
envy-freeness inequalities over $\OtherAgentIndex\in\AgentSet$ and using
additivity and completeness gives the required lower bound.

\subsection{Technical lemmas}
\label{subsec:technical-lemmas}
% Preserve the former subsection labels as aliases for this subsection.
\label{subsec:faithful-implementation}
\label{subsec:bipartite-multigraphs}
We conclude this section by introducing two tools that are useful in our
construction and analysis: faithful implementation and bipartite
edge-coloring.

\paragraph{Faithful implementation.}
A distribution over integral allocations \emph{implements} a fractional
allocation $\FractionalAllocation$ if each agent $\AgentIndex$ receives each
good $\GoodIndex$ with probability $\FractionalAllocation_{\AgentIndex\GoodIndex}$.
The implementation is \emph{faithful} if, in addition, every supported
allocation gives each agent at least her fractional value minus the value of
one good of which she holds a \portionName strictly between zero and one.  Every feasible
fractional allocation admits a faithful implementation.  This faithful implementation
lemma appears in
Babaioff et~al.\ \citep[Lemma~10]{babaioffEzraFeige2022bobw} and
Babaioff et~al.\ \citep[Lemma~4.11]{bfmm2026tie}; we include a proof for completeness in
\Cref{app:proof-faithful-rounding}.

\begin{restatable}[Faithful implementation]{lemma}{faithfulrounding}
\label{lem:faithful-rounding}

Let $\FractionalAllocation$ be any feasible fractional allocation. For every agent $\AgentIndex\in \AgentSet$,
let $\FractionalAllocation_\AgentIndex\deq(\FractionalAllocation_{\AgentIndex\GoodIndex})_{\GoodIndex\in \GoodSet}$ denote her fractional bundle. There is a distribution
over integral allocations with exact marginal $\FractionalAllocation$ such that, for
every supported allocation
$\Allocation=(\Allocation_\AgentIndex)_{\AgentIndex\in\AgentSet}$
and every agent $\AgentIndex\in \AgentSet$,

\begin{align*}
\AgentValuation_\AgentIndex(\Allocation_\AgentIndex)
\ge
\AgentValuation_\AgentIndex(\FractionalAllocation_\AgentIndex)
-
\max\{\AgentValuation_\AgentIndex(\GoodIndex):0<\FractionalAllocation_{\AgentIndex\GoodIndex}<1\}.
\end{align*}

Every coordinate with $\FractionalAllocation_{\AgentIndex\GoodIndex}=1$ is preserved in every supported
allocation.
\end{restatable}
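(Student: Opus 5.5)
We will prove the faithful implementation lemma with the standard bipartite-graph rounding. The twist is that the usual $b$-matching decomposition is tailored so that, on each agent's fractional goods, we keep a large prefix of her value ordering. The approach is the Budish--Che--Kojima--Milgrom bihierarchy decomposition, or equivalently a dependent-rounding/Birkhoff argument.

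First, we handle the integral part. Every coordinate with $\FractionalAllocation_{\AgentIndex\GoodIndex}=1$ forces $\FractionalAllocation_{\AgentIndex'\GoodIndex}=0$ for $\AgentIndex'\neq\AgentIndex$ by feasibility, so we can assign such goods deterministically and remove them. Similarly, zero coordinates are never used. This leaves only goods whose positive \portionsName are all strictly fractional.

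For each agent $\AgentIndex$, we sort her fractionally held goods in decreasing order of $\AgentValuation_\AgentIndex$ (ties by the public order), $\GoodIndex_1,\GoodIndex_2,\ldots$, with \portionsName $p_1,p_2,\ldots$. Let $s=\sum_k p_k$. We create $\lceil s\rceil$ agent-slots, in the style of Shapley--Scarf/``eating'' slots. Slot $r$ receives the \portionsName of the goods lying in the interval $[r-1,r)$ of the cumulative sum, so a good straddling a boundary is split between two consecutive slots. This gives a fractional matrix between slots and goods with every slot row sum at most $1$ (equal to $1$ except possibly the last slot) and every good column sum equal to $1$. By the Birkhoff--von Neumann theorem for such doubly substochastic matrices, or by integrality of the bipartite matching polytope, this matrix is a convex combination of matchings that saturate every good and cover every full slot. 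We compute it in polynomial time, for example by repeated Birkhoff peeling or by the edge-coloring of \Cref{lem:bipartite-decomposition} after scaling to rational multiples. Marginals are exact by construction, since summing the slot entries of agent $\AgentIndex$ on $\GoodIndex$ gives $\FractionalAllocation_{\AgentIndex\GoodIndex}$.

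The value bound is the main step. In any supported matching, each full slot $r\le\lfloor s\rfloor$ of agent $\AgentIndex$ receives some good from slot $r$'s interval. Every good of slot $r$ is worth at least every good of slot $r+1$ because of the sorted order. Hence the good received in slot $r$ is worth at least the average value of slot $r+1$'s fractional content, namely $\sum_{\GoodIndex}(\text{slot }r{+}1\text{ share})\cdot\AgentValuation_\AgentIndex(\GoodIndex)$. Summing over $r$, the agent's realized fractional-good value dominates the fractional value of slots $2,\ldots,\lceil s\rceil$. The fractional value of slot $1$ is at most $\AgentValuation_\AgentIndex(\GoodIndex_1)=\max\{\AgentValuation_\AgentIndex(\GoodIndex):0<\FractionalAllocation_{\AgentIndex\GoodIndex}<1\}$. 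One subtlety is that the last slot may be non-full and thus is possibly unmatched. We handle it by noting that it is shifted onto slot $\lceil s\rceil-1$, so the telescoping still loses only slot $1$. Adding back the integral goods yields the stated inequality.

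The expected obstacle is ensuring that a single supported allocation meets the bound simultaneously for all agents. This is exactly why we require each matching to cover every full slot of every agent, which the decomposition guarantees because those rows are tight. We will check that this tightness holds in every extreme point of the face.
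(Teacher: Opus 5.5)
Your proof is correct and follows the paper's argument. You strip the coordinates equal to one, pack each agent's strictly fractional \portionsName into unit slots in nonincreasing value order, decompose the slot--good matrix into matchings, and compare the good realized in slot $r$ with the fractional value of slot $r+1$, so only slot $1$ is lost. The only difference is that the paper pads with zero-valued auxiliary goods to make the matrix doubly stochastic and then decomposes it into perfect matchings, whereas you decompose the substochastic matrix directly. That is equally valid: in any convex combination of matchings equal to the matrix, every matching with positive weight must cover every row and column whose sum is one, which settles the step you left as ``we will check''.
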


\paragraph{Bipartite multigraphs and edge-coloring.}
The construction also uses bipartite multigraphs.  A \emph{bipartite
multigraph} $\Multigraph=(\LeftPart,\RightPart,\EdgeMultiset)$ consists of two
finite disjoint sets of vertices, the \emph{left part} $\LeftPart$ and the
\emph{right part} $\RightPart$, together with a finite multiset
$\EdgeMultiset$ of \emph{edges}, each of which is a pair in
$\LeftPart\times\RightPart$.  Repeated copies of the same pair are
\emph{parallel} edges.  The \emph{degree} $\Degree(\LeftVertex)$ of a left
vertex $\LeftVertex\in\LeftPart$ is the number of edges of $\EdgeMultiset$ with
left endpoint $\LeftVertex$, counted with multiplicity, and the degree
$\Degree(\RightVertex)$ of a right vertex $\RightVertex\in\RightPart$ is
defined symmetrically.

A \emph{matching} is a sub-multiset of $\EdgeMultiset$ in which every vertex has
degree at most one, and it \emph{matches} a vertex whose degree in it equals
one.  A \emph{$\DegreeBound$-edge-coloring} assigns to each edge one of the
$\DegreeBound$ \emph{colors} $\ColorSet\deq\{1,\ldots,\DegreeBound\}$ so that
edges incident to the same vertex have different colors.  The edges of one
color therefore form a matching, which we also refer to by its color
$\ColorIndex\in\ColorSet$.

\begin{lemma}
\label{lem:bipartite-decomposition}
Given a bipartite multigraph $\Multigraph$ of maximum degree $\DegreeBound$,
there exists a $\DegreeBound$-edge-coloring of $\Multigraph$, and such a
coloring can be found in polynomial time.  Moreover, in any such coloring, a vertex is matched in every color if and only if its degree is
$\DegreeBound$.
\end{lemma}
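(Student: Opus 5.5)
The statement is the classical bipartite edge-coloring theorem (K\H{o}nig's theorem) for multigraphs, together with a counting remark. I will prove existence constructively by induction on the number of edges, using alternating-path recoloring, since the same alternating-path exchange is reused later in the paper.

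Start with all edges uncolored and color them one at a time while maintaining a proper partial coloring with colors in $\ColorSet$. To color an edge $e=(\LeftVertex,\RightVertex)$, note that each endpoint has fewer than $\DegreeBound$ colored incident edges, since $e$ itself is uncolored and the maximum degree is $\DegreeBound$. So there is a color $\alpha$ missing at $\LeftVertex$ and a color $\beta$ missing at $\RightVertex$.

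If $\alpha$ is also missing at $\RightVertex$, color $e$ with $\alpha$. Otherwise, consider the maximal path starting at $\RightVertex$ whose edges alternate between colors $\alpha$ and $\beta$, beginning with the $\alpha$-edge at $\RightVertex$. Parallel edges cause no trouble, because each vertex has at most one edge of each color.

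This path cannot reach $\LeftVertex$, and this is the key step. By bipartiteness, the path arrives at left vertices along $\alpha$-edges, since it starts at a right vertex with an $\alpha$-edge. But $\alpha$ is missing at $\LeftVertex$, so the path cannot end there by arriving via $\alpha$. Swapping $\alpha$ and $\beta$ on the path therefore keeps the coloring proper. The interior vertices keep the same set of colors. The far endpoint was missing the color it now gains, by maximality of the path. After the swap, $\alpha$ is missing at both $\LeftVertex$ and $\RightVertex$, so we color $e$ with $\alpha$.

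Each step takes time polynomial in the number of vertices plus edges. There are $|\EdgeMultiset|$ steps, so the algorithm runs in polynomial time. If multiplicities are encoded in binary, one should instead note that in our application the multiplicities are at most $\DegreeBound=\AgentCount(\AgentCount-1)$, so listing the edges explicitly is still polynomial. I expect this encoding detail to be the only point needing care.

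For the "moreover" part, fix a vertex and any $\DegreeBound$-edge-coloring. Its incident edges receive pairwise distinct colors from a set of size $\DegreeBound$. So the number of colors in which the vertex is matched equals its degree. It is matched in all $\DegreeBound$ colors if and only if its degree equals $\DegreeBound$.
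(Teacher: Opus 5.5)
Your proof is correct, but it takes a different route from the paper for the existence part. You use K\H{o}nig's original argument: color the edges one at a time and, when the free color $\alpha$ at $u$ is present at $r$, swap the $\alpha/\beta$ alternating path starting at $r$. That path cannot contain $u$ because left vertices on it are entered along $\alpha$-edges. (The properness check at $r$ itself, which gains $\beta$, is immediate from the choice of $\beta$; you only spelled out the far endpoint.)

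The paper instead pads $G$ with auxiliary vertices and edges to a $K$-regular bipartite multigraph. It then extracts a perfect matching via Hall's theorem, removes it, and iterates $K$ times. The colors are these perfect matchings, and polynomiality comes from standard augmenting-path matching algorithms.

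Your route is fully self-contained: it needs neither Hall's theorem nor the padding step, whose feasibility the paper takes for granted. It also exercises exactly the alternating-path exchange that the paper later isolates in \Cref{lem:alternating-path-exchange} and uses in both balancing procedures. The paper's route is shorter to state given standard matching theory, and it exhibits each color directly as a perfect matching.

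Your encoding remark applies equally to the paper's proof, which also works with the explicitly listed parallel edges. Your argument for the ``moreover'' part is the same counting argument the paper gives.
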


\begin{proof}
Add auxiliary vertices to $\Multigraph$ so that the two sides have the same
cardinality, and add auxiliary edges, allowing parallel edges, until the
resulting bipartite multigraph $\AuxMultigraph$ is $\DegreeBound$-regular.
Hall's theorem guarantees a perfect matching.  Remove one perfect matching from $\AuxMultigraph$ and
iterate; the $\DegreeBound$ perfect matchings obtained are the colors.
Standard augmenting-path matching algorithms make the procedure
polynomial-time.  Deleting the auxiliary vertices and edges yields the claimed
coloring of $\Multigraph$.

For the equivalence, fix any such coloring and a vertex
$\LeftVertex$ in $\Multigraph$.  
By definition,
each color appears on at most one edge incident to $\LeftVertex$.
If $\LeftVertex$ is matched in every
color, then $\LeftVertex$'s degree $\Degree(\LeftVertex)\geq\DegreeBound$, and thus $\Degree(\LeftVertex)=\DegreeBound$ by our assumption on the maximum degree.  
Conversely, if
$\Degree(\LeftVertex)=\DegreeBound$, then its $\DegreeBound$ edges receive
distinct colors, so $\LeftVertex$ is matched in every color.
\end{proof}

Fix a bipartite multigraph $\Multigraph$ with a proper edge-coloring,
and let $\ColorIndex$ and $\SecondColorIndex$ be two distinct colors.  A path
or cycle in $\Multigraph$ is \emph{alternating} for these colors if its edges
are colored $\ColorIndex$ and $\SecondColorIndex$ alternately.
Here, a cycle may consist of two parallel edges.

\begin{lemma}
\label{lem:alternating-path-exchange}
Let $\Multigraph$ be a bipartite
multigraph of maximum degree $\DegreeBound$ in which every left vertex has degree $\DegreeBound$.
Consider a proper $\DegreeBound$-edge-coloring of $\Multigraph$.
Fix two distinct
colors $\ColorIndex$ and $\SecondColorIndex$, and consider the subgraph $\Multigraph(\ColorIndex, \SecondColorIndex)$
consisting of edges of color either $\ColorIndex$ or $\SecondColorIndex$ and their incident vertices.
\begin{enumerate}[label=\textup{(\roman*)}]
\item Every connected component in $\Multigraph(\ColorIndex, \SecondColorIndex)$ is an alternating cycle or path. 
The two endpoints of a path component are right vertices and are incident to edges of different colors.
\item Exchanging the two colors along all edges of a path component preserves the properness of the $\DegreeBound$-edge-coloring of the original multigraph $\Multigraph$.
\end{enumerate}
\end{lemma}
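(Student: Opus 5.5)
The plan is a degree count in $\Multigraph(\ColorIndex,\SecondColorIndex)$. By properness, every vertex has at most one edge of color $\ColorIndex$ and at most one of color $\SecondColorIndex$, so every vertex of $\Multigraph(\ColorIndex,\SecondColorIndex)$ has degree one or two. Any finite multigraph with all degrees at most two has only paths and cycles as connected components. Two parallel edges form a cycle of length two, which the paper's convention allows. Along any path or cycle, two consecutive edges share a vertex and so have different colors. Since only two colors are present, the edges alternate between $\ColorIndex$ and $\SecondColorIndex$.

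Next we locate the endpoints of a path component. Every left vertex has degree $\DegreeBound$ in $\Multigraph$, so by \Cref{lem:bipartite-decomposition} it is matched in every color. In particular, it has exactly one $\ColorIndex$-edge and one $\SecondColorIndex$-edge, hence degree two in $\Multigraph(\ColorIndex,\SecondColorIndex)$. An endpoint of a path has degree one, so both endpoints are right vertices.

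The graph is bipartite, so a path whose endpoints are both right vertices has even length. Its edges alternate in color, so the first and last edges carry different colors. This proves part (i). The step needing the most care is the degenerate component consisting of a single vertex with no edges; this cannot occur, because the subgraph consists of edges and their incident vertices.

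For part (ii), swap $\ColorIndex$ and $\SecondColorIndex$ on the edges of a path component $P$. Edges of other colors are untouched, so any conflict would have to involve colors $\ColorIndex$ or $\SecondColorIndex$ at a vertex $w$ of $P$. Every edge of color $\ColorIndex$ or $\SecondColorIndex$ incident to $w$ lies in the same component, so in $P$ itself.

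\textbf{Internal vertices.} An internal vertex of $P$ has one edge of each color, and both get swapped. It still has one edge of each color.

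\textbf{Endpoints.} An endpoint has exactly one edge of $P$ and no other edge of either color, since otherwise that edge would belong to the component. After the swap it still has only one edge among these two colors.

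Hence no vertex ever has two incident edges of the same color. All other colors are unchanged, so the coloring remains a proper $\DegreeBound$-edge-coloring of $\Multigraph$.
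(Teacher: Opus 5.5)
Your proof is correct and follows essentially the same route as the paper: properness together with \Cref{lem:bipartite-decomposition} gives every left vertex degree two and every right vertex degree one or two, bipartiteness forces a path between two right endpoints to have even length, and your endpoint argument for (ii) is the paper's. The only cosmetic difference is that you first bound all degrees by two directly from properness and then specialize to left vertices, whereas the paper starts from the left-vertex degree.
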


\begin{proof}
Consider a proper $\DegreeBound$-edge-coloring of the original multigraph $\Multigraph$ and fix two distinct colors $\ColorIndex$ and $\SecondColorIndex$.
Every left vertex has degree $\DegreeBound$ in the
original multigraph $\Multigraph$, so \Cref{lem:bipartite-decomposition}
implies that it is matched in both selected colors.  Thus every left
vertex has degree two in $\Multigraph(\ColorIndex,\SecondColorIndex)$,
and every right vertex of this subgraph has degree one or two.
Therefore, each connected
component of $\Multigraph(\ColorIndex,\SecondColorIndex)$ is a path or a cycle.  
The edge colors in each component alternate because adjacent edges have different colors.
The two endpoints of a path component have degree one and must therefore be right vertices. Thus, the length of the path is even, which implies that the edges incident to the two endpoints have different colors.

For (ii), exchange the two colors along an entire path component.
Every internal vertex still has one incident edge of color $\ColorIndex$ and one of color $\SecondColorIndex$.
Consider the endpoint that was incident to an edge of color $\ColorIndex$
before the exchange.  Since it had degree one in
$\Multigraph(\ColorIndex,\SecondColorIndex)$, it had no incident edge of
color $\SecondColorIndex$ before the exchange.  Recoloring its edge from
$\ColorIndex$ to $\SecondColorIndex$ therefore creates no conflict.
The same argument applies to the other endpoint.  Edges of all other
colors are unchanged, so the resulting coloring remains proper.
\end{proof}

\section{The Top-Set Competitive Fractional Mechanism}
\label{sec:truthful-rule}
We seek a distributional mechanism for agents with additive valuations
that is truthful in expectation and gives every agent a constant fraction
of her TPS, and hence her MMS, in every realized allocation, as stated in
\Cref{thm:main}. Recall from \Cref{lem:tie-fractional-equivalence} that,
when agents have additive valuations, TIE distributional mechanisms induce truthful
fractional mechanisms; conversely, every distributional implementation of
a truthful fractional mechanism is TIE.

This equivalence allows us to design the mechanism in
two steps: first construct a truthful fractional mechanism, and then
implement it by a distribution over integral allocations with the required
ex-post guarantee. This section carries out the first step by proposing the
\emph{\truthfulFractionMechName}:
For every agent $\AgentIndex\in\AgentSet$, her \emph{top set}
$\TopSet_\AgentIndex$ consists of her $\AgentCount-1$ most valuable
reported goods, with ties broken by the public order.
The mechanism first assigns each agent $\AgentIndex\in\AgentSet$ a base
marginal probability of $1/\AgentCount$ for each good $\GoodIndex\in\GoodSet$.
If $\GoodIndex\in\TopSet_\AgentIndex$, the mechanism adds a bonus of
$1/\AgentCount$ to agent $\AgentIndex$'s marginal probability of receiving
good $\GoodIndex$.
Moreover, for each other agent
$\OtherAgentIndex\in\AgentSet\setminus\{\AgentIndex\}$ who competes for good
$\GoodIndex$, i.e., $\GoodIndex\in\TopSet_\OtherAgentIndex$, the mechanism
subtracts a penalty of $1/(\AgentCount\cdot(\AgentCount-1))$ from agent
$\AgentIndex$'s marginal probability of receiving good $\GoodIndex$.
Henceforth assume $\GoodCount\geq\AgentCount\geq2$.\footnote{For
$\AgentCount=1$, assigning every good to the unique agent proves
\Cref{thm:main}.
When $\GoodCount<\AgentCount$, the mechanism publicly adds fixed zero-valued
goods until at least $\AgentCount$ goods are present.  The values of these
artificial goods are not reportable.  Padding changes neither MMS nor TPS.
The artificial goods are discarded from every supported allocation before
the final support reduction in \Cref{subsec:assembly}.
Until that final step, we use the padded ground set and relabel it and its
cardinality as $\GoodSet$ and $\GoodCount$, respectively.}
Formally, the mechanism is defined as follows.

\begin{definition}[\TruthfulFractionMechName]
\label{def:truthful-fractional-allocation}
For an agent set $\AgentSet$ of $\AgentCount\geq2$ agents and a good set
$\GoodSet$ of at least $\AgentCount$ goods, the
\emph{\truthfulFractionMechName} takes as input a reported valuation
profile $\ValuationProfile$ and outputs a fractional allocation
$\TruthfulMarginal$ defined as follows:
For every agent $\AgentIndex\in\AgentSet$ and good $\GoodIndex\in\GoodSet$,
the marginal probability $\TruthfulMarginal_{\AgentIndex\GoodIndex}$ that
agent $\AgentIndex$ receives good $\GoodIndex$ is
\begin{align*}
\TruthfulMarginal_{\AgentIndex\GoodIndex}
&\deq
\underbrace{\frac1\AgentCount
\vphantom{\sum_{\OtherAgentIndex\in\AgentSet\setminus\{\AgentIndex\}}}}_{\text{base probability}}
+\underbrace{\frac1\AgentCount\cdot\Indicator[\GoodIndex\in\TopSet_\AgentIndex]
\vphantom{\sum_{\OtherAgentIndex\in\AgentSet\setminus\{\AgentIndex\}}}}_{\text{bonus probability}}
-\underbrace{\frac1{\AgentCount\cdot(\AgentCount-1)}\cdot
\sum_{\OtherAgentIndex\in\AgentSet\setminus\{\AgentIndex\}}
\Indicator[\GoodIndex\in\TopSet_\OtherAgentIndex]}_{\text{penalty probability}}.
\end{align*}
We refer to $\TruthfulMarginal$ as the \truthfulFractionalAllocationName.
\end{definition}

For every good $\GoodIndex\in \GoodSet$, define its \emph{\nonTopCountName}
$\OmissionCount_\GoodIndex\deq \AgentCount-\sum_{\OtherAgentIndex\in \AgentSet}\Indicator[\GoodIndex\in \TopSet_\OtherAgentIndex]$,
the number of agents who do not include good $\GoodIndex$ in their top sets.
The marginal probability can equivalently be written as
\begin{align}
\TruthfulMarginal_{\AgentIndex\GoodIndex}
=
\begin{cases}
\displaystyle
\frac1\AgentCount+\frac{\OmissionCount_\GoodIndex}{\AgentCount\cdot(\AgentCount-1)},&\GoodIndex\in \TopSet_\AgentIndex,\\[2mm]
\displaystyle
\frac{\OmissionCount_\GoodIndex-1}{\AgentCount\cdot(\AgentCount-1)},&\GoodIndex\notin \TopSet_\AgentIndex.
\end{cases}
\label{eq:marginal-cases}
\end{align}

The next lemma verifies that the output fractional allocation
$\TruthfulMarginal$ is feasible.

\begin{lemma}
\label{lem:marginal-feasibility}
For every good $\GoodIndex\in\GoodSet$, we have
$\TruthfulMarginal_{\AgentIndex\GoodIndex}\geq0$ for every agent
$\AgentIndex\in\AgentSet$ and
$\sum_{\AgentIndex\in\AgentSet}\TruthfulMarginal_{\AgentIndex\GoodIndex}=1$.
For every agent $\AgentIndex\in\AgentSet$, we have
$\sum_{\GoodIndex\in\GoodSet}\TruthfulMarginal_{\AgentIndex\GoodIndex}=\GoodCount/\AgentCount$.
\end{lemma}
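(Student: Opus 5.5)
The plan is to check each claim of \Cref{lem:marginal-feasibility} directly from the case form \cref{eq:marginal-cases}, using two counting facts: every top set has exactly $\AgentCount-1$ goods (this is where $\GoodCount\geq\AgentCount$ is needed), and the \nonTopCountName $\OmissionCount_\GoodIndex$ lies in $\{0,\ldots,\AgentCount\}$.

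\textbf{Nonnegativity.} Fix a good $\GoodIndex$. If $\GoodIndex\in\TopSet_\AgentIndex$, then $\TruthfulMarginal_{\AgentIndex\GoodIndex}\geq1/\AgentCount>0$, since $\OmissionCount_\GoodIndex\geq0$. If $\GoodIndex\notin\TopSet_\AgentIndex$, then agent $\AgentIndex$ herself is one of the agents omitting $\GoodIndex$, so $\OmissionCount_\GoodIndex\geq1$. Hence $\TruthfulMarginal_{\AgentIndex\GoodIndex}=(\OmissionCount_\GoodIndex-1)/(\AgentCount(\AgentCount-1))\geq0$. This observation is the only step needing any care.

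\textbf{Column sums.} Exactly $\AgentCount-\OmissionCount_\GoodIndex$ agents have $\GoodIndex$ in their top sets, and $\OmissionCount_\GoodIndex$ agents do not. Summing \cref{eq:marginal-cases} over agents gives
\begin{align*}
\sum_{\AgentIndex\in\AgentSet}\TruthfulMarginal_{\AgentIndex\GoodIndex}
=(\AgentCount-\OmissionCount_\GoodIndex)\Bigl(\frac1\AgentCount+\frac{\OmissionCount_\GoodIndex}{\AgentCount(\AgentCount-1)}\Bigr)
+\OmissionCount_\GoodIndex\cdot\frac{\OmissionCount_\GoodIndex-1}{\AgentCount(\AgentCount-1)} .
\end{align*}
Multiply through by $\AgentCount(\AgentCount-1)$ and write $t=\OmissionCount_\GoodIndex$. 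The right-hand side becomes
\[
(\AgentCount-t)(\AgentCount-1+t)+t(t-1)=\AgentCount^2-\AgentCount,
\]
because the terms linear and quadratic in $t$ cancel. So the column sum equals $1$. (Equivalently, in the original form, the total bonus $(\AgentCount-t)/\AgentCount$ equals the total penalty $(\AgentCount-t)(\AgentCount-1)/(\AgentCount(\AgentCount-1))$, and the base probabilities sum to $1$.)

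Feasibility also needs $\TruthfulMarginal_{\AgentIndex\GoodIndex}\leq1$. This follows from nonnegativity together with the column sum being $1$.

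\textbf{Row sums.} Fix agent $\AgentIndex$ and sum the original formula over goods:
\begin{enumerate}[label=\textup{(\roman*)}]
\item the base probabilities contribute $\GoodCount/\AgentCount$;
\item the bonus contributes $|\TopSet_\AgentIndex|/\AgentCount=(\AgentCount-1)/\AgentCount$;
\item the penalty contributes $\sum_{\OtherAgentIndex\neq\AgentIndex}|\TopSet_\OtherAgentIndex|/(\AgentCount(\AgentCount-1))=(\AgentCount-1)^2/(\AgentCount(\AgentCount-1))=(\AgentCount-1)/\AgentCount$.
\end{enumerate}
The bonus and penalty cancel, leaving $\GoodCount/\AgentCount$.
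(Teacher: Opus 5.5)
Your proposal is correct and follows the paper's proof. Nonnegativity is read off from \cref{eq:marginal-cases}, and you usefully spell out that $\OmissionCount_\GoodIndex\geq1$ whenever $\GoodIndex\notin\TopSet_\AgentIndex$, which the paper simply calls immediate. The row sum is the same base/bonus/penalty count. For the column sum, the paper interchanges the double sum in the original formula so that the total bonus cancels the total penalty; this is exactly your parenthetical remark, and your main computation via the polynomial identity in $t$ is an equivalent algebraic check.
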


\begin{proof}
Nonnegativity of $\TruthfulMarginal$ follows immediately from \cref{eq:marginal-cases}.  For each good
$\GoodIndex\in \GoodSet$, summing the marginal formula in
\Cref{def:truthful-fractional-allocation} over agents $\AgentIndex\in \AgentSet$ yields
\begin{align*}
\sum_{\AgentIndex\in\AgentSet}\TruthfulMarginal_{\AgentIndex\GoodIndex}
&=\frac1\AgentCount\cdot\left(
\AgentCount+\sum_{\AgentIndex\in\AgentSet}\Indicator[\GoodIndex\in\TopSet_\AgentIndex]
-\frac1{\AgentCount-1}\cdot
\sum_{\AgentIndex\in\AgentSet}\sum_{\OtherAgentIndex\in\AgentSet\setminus\{\AgentIndex\}}
\Indicator[\GoodIndex\in\TopSet_\OtherAgentIndex]
\right)\\
&\overset{(a)}{=}\frac1\AgentCount\cdot\left(
\AgentCount+\sum_{\OtherAgentIndex\in\AgentSet}\Indicator[\GoodIndex\in\TopSet_\OtherAgentIndex]
-\sum_{\OtherAgentIndex\in\AgentSet}\Indicator[\GoodIndex\in\TopSet_\OtherAgentIndex]
\right)\\
&=1,
\end{align*}
where equality (a) holds because, for each fixed agent
$\OtherAgentIndex\in\AgentSet$, the indicator
$\Indicator[\GoodIndex\in\TopSet_\OtherAgentIndex]$ is independent of the
summation index $\AgentIndex$, so
$\sum_{\AgentIndex\in\AgentSet\setminus\{\OtherAgentIndex\}}
\Indicator[\GoodIndex\in\TopSet_\OtherAgentIndex]
=(\AgentCount-1)\cdot\Indicator[\GoodIndex\in\TopSet_\OtherAgentIndex]$.
For each agent $\AgentIndex\in\AgentSet$, summing
the marginal formula in \Cref{def:truthful-fractional-allocation} over goods yields
\begin{align*}
\sum_{\GoodIndex\in\GoodSet}\TruthfulMarginal_{\AgentIndex\GoodIndex}
&=\frac1\AgentCount\cdot\left(
\GoodCount+\sum_{\GoodIndex\in\GoodSet}\Indicator[\GoodIndex\in\TopSet_\AgentIndex]
-\frac1{\AgentCount-1}\cdot
\sum_{\OtherAgentIndex\in\AgentSet\setminus\{\AgentIndex\}}
\sum_{\GoodIndex\in\GoodSet}\Indicator[\GoodIndex\in\TopSet_\OtherAgentIndex]
\right)\\
&\overset{(a)}{=}\frac1\AgentCount\cdot\left(
\GoodCount+(\AgentCount-1)-\frac1{\AgentCount-1}\cdot(\AgentCount-1)^2
\right)\\
&=\frac{\GoodCount}{\AgentCount},
\end{align*}
where equality (a) uses
$\sum_{\GoodIndex\in\GoodSet}\Indicator[\GoodIndex\in\TopSet_\OtherAgentIndex]
=|\TopSet_\OtherAgentIndex|=\AgentCount-1$
for every agent $\OtherAgentIndex\in\AgentSet$, together with the fact
that there are $\AgentCount-1$ agents other than $\AgentIndex$.
\end{proof}

With the other agents' reports fixed, agent $\AgentIndex$'s report
determines her top set $\TopSet_\AgentIndex$ and affects her marginal
probability of receiving each good $\GoodIndex$ only through the bonus
probability $\frac1\AgentCount\cdot\Indicator[\GoodIndex\in\TopSet_\AgentIndex]$.
Truthful reporting assigns this bonus probability to her
$\AgentCount-1$ highest-valued goods and therefore maximizes her expected
value, so every randomized or distributional mechanism implementing the
\truthfulFractionMechName is truthful in expectation.
The following lemma formalizes this argument.

\begin{lemma}%[Exact truthfulness of the marginal]
\label{lem:marginal-truthfulness}
The \truthfulFractionMechName in
\Cref{def:truthful-fractional-allocation} is truthful.
Every randomized or distributional implementation of this mechanism is
truthful in expectation.
\end{lemma}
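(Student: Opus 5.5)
Fix an agent $\AgentIndex$, her true valuation $\AgentValuation_\AgentIndex$, the other agents' reports $\ValuationProfile_{-\AgentIndex}$, and an alternative report $\ReportedValuation_\AgentIndex$. The plan is to write agent $\AgentIndex$'s expected value under the \truthfulFractionMechName as a sum of a term that does not depend on her report and a term that depends on her report only through her top set. Then we show that truthful reporting maximizes the second term.

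First, the top sets $\TopSet_\OtherAgentIndex$ for $\OtherAgentIndex\neq\AgentIndex$ are computed from $\ValuationProfile_{-\AgentIndex}$ and the public tie-breaking order alone, so they are the same under both reports. By \Cref{def:truthful-fractional-allocation}, the base and penalty terms of $\TruthfulMarginal_{\AgentIndex\GoodIndex}$ are therefore identical under both reports. Let $\TopSet_\AgentIndex$ and $\widehat\TopSet_\AgentIndex$ denote her top sets under the truthful and the alternative report, respectively. Evaluating with the true valuation, the difference in her value is
\begin{align*}
\AgentValuation_\AgentIndex\bigl(\TruthfulMarginal_\AgentIndex(\AgentValuation_\AgentIndex,\ValuationProfile_{-\AgentIndex})\bigr)
-\AgentValuation_\AgentIndex\bigl(\TruthfulMarginal_\AgentIndex(\ReportedValuation_\AgentIndex,\ValuationProfile_{-\AgentIndex})\bigr)
=\frac1\AgentCount\bigl(\AgentValuation_\AgentIndex(\TopSet_\AgentIndex)-\AgentValuation_\AgentIndex(\widehat\TopSet_\AgentIndex)\bigr).
\end{align*}

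Second, we show that this difference is nonnegative. Both sets have exactly $\AgentCount-1$ goods; this uses the standing assumption $\GoodCount\geq\AgentCount$, with padding goods if needed. The set $\TopSet_\AgentIndex$ consists of $\AgentCount-1$ goods of largest true value, with ties broken by the public order. Hence $\AgentValuation_\AgentIndex(\TopSet_\AgentIndex)\geq\AgentValuation_\AgentIndex(S)$ for every $S\subseteq\GoodSet$ with $|S|=\AgentCount-1$. To see this, sort both sets by true value; the $k$th largest value in $\TopSet_\AgentIndex$ is at least the $k$th largest in $S$. Taking $S=\widehat\TopSet_\AgentIndex$ proves that the fractional mechanism is truthful in the sense of \Cref{def:truthfulness}.

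Third, consider any randomized or distributional implementation. Its induced fractional mechanism is exactly the \truthfulFractionMechName. By \Cref{lem:tie-fractional-equivalence}, it is therefore TIE.

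There is no real obstacle here. The only point needing care is that artificial padding goods are public and not reportable. They can enter an agent's top set only through the fixed ordering, and their value is zero, so they do not affect the exchange argument. The optimality of the top-$(\AgentCount-1)$ set also holds regardless of how ties are broken.
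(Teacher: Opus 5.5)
Your proof is correct and matches the paper's. The paper writes agent $i$'s value as $\frac1n v_i(M)-\frac{1}{n(n-1)}\sum_{j\neq i}v_i(T_j)+\frac1n v_i(T_i)$ and notes that only the last term depends on her report, which is equivalent to your direct difference $\frac1n\bigl(v_i(T_i)-v_i(\widehat T_i)\bigr)$; it then concludes, as you do, from the optimality of the top-$(n-1)$ set and \Cref{lem:tie-fractional-equivalence}.
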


\begin{proof}
Fix an agent $\AgentIndex\in\AgentSet$ with true valuation
$\AgentValuation_\AgentIndex$, and fix the other agents' reports.
Only agent $\AgentIndex$'s bonus depends on her report.
For any report she submits, her value for the resulting fractional
allocation $\TruthfulMarginal$ satisfies
\begin{align}
\AgentValuation_\AgentIndex(\TruthfulMarginal)
&=\sum_{\GoodIndex\in\GoodSet}\frac1\AgentCount\cdot\left(
1+\Indicator[\GoodIndex\in\TopSet_\AgentIndex]
-\frac1{\AgentCount-1}\cdot
\sum_{\OtherAgentIndex\in\AgentSet\setminus\{\AgentIndex\}}
\Indicator[\GoodIndex\in\TopSet_\OtherAgentIndex]
\right)\cdot\AgentValuation_\AgentIndex(\GoodIndex)\notag\\
&=\frac1\AgentCount\cdot\AgentValuation_\AgentIndex(\GoodSet)
-\frac1{\AgentCount\cdot(\AgentCount-1)}\cdot
\sum_{\OtherAgentIndex\in\AgentSet\setminus\{\AgentIndex\}}
\AgentValuation_\AgentIndex(\TopSet_\OtherAgentIndex)
+\frac1\AgentCount\cdot\AgentValuation_\AgentIndex(\TopSet_\AgentIndex),
\label{eq:truthful-value-expansion}
\end{align}
where the first equality follows by substituting the marginal formula in
\Cref{def:truthful-fractional-allocation},
and the second follows from additivity and interchanging the finite sums.
The first two terms on the right-hand side of
\cref{eq:truthful-value-expansion} are independent of agent
$\AgentIndex$'s report, so she can influence her value only through the term
$\AgentValuation_\AgentIndex(\TopSet_\AgentIndex)/\AgentCount$.
Every report selects a top set $\TopSet_\AgentIndex$ of
$\AgentCount-1$ goods.  Reporting her true valuation
$\AgentValuation_\AgentIndex$ selects her $\AgentCount-1$ highest-valued
goods, with ties broken by the public order, and therefore maximizes
$\AgentValuation_\AgentIndex(\TopSet_\AgentIndex)$, thus maximizing her value
$\AgentValuation_\AgentIndex(\TruthfulMarginal)$ for the mechanism's allocation.
Thus the truthfulness inequality in \Cref{def:truthfulness} holds.
The claim for randomized or distributional implementations follows from
\Cref{lem:tie-fractional-equivalence}.
\end{proof}

The next lemma establishes ex-ante envy-freeness for every randomized
implementation of the \truthfulFractionMechName.
Two agents' marginal probabilities differ only on goods belonging to
exactly one of their top sets, each giving the same probability advantage
to the agent whose top set contains it.  Since the top sets have equal
size and each agent values the goods exclusive to her own top set at
least as highly as those exclusive to the other's, she weakly prefers
her own bundle in expectation.

\begin{lemma}
\label{prop:ex-ante-envy-free}
When agents have additive valuations, every randomized mechanism that
implements the \truthfulFractionMechName in
\Cref{def:truthful-fractional-allocation} is ex-ante envy-free.
\end{lemma}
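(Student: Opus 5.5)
The plan is to reduce the statement to a comparison of two fractional bundles and then compute the difference explicitly from \Cref{def:truthful-fractional-allocation}. Fix a reported (true) profile $\ValuationProfile$ and a randomized mechanism $\RandomizedMechanism$ implementing the \truthfulFractionMechName. By additivity and linearity of expectation, $\Expectation_{\Allocation\sim\RandomizedMechanism(\ValuationProfile)}[\AgentValuation_\AgentIndex(\Allocation_\OtherAgentIndex)]=\sum_{\GoodIndex}\Probability[\GoodIndex\in\Allocation_\OtherAgentIndex]\cdot\AgentValuation_\AgentIndex(\GoodIndex)=\AgentValuation_\AgentIndex(\TruthfulMarginal_\OtherAgentIndex)$, since the mechanism implements $\TruthfulMarginal$. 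The same holds for $\Allocation_\AgentIndex$. So it suffices to show $\AgentValuation_\AgentIndex(\TruthfulMarginal_\AgentIndex)\ge\AgentValuation_\AgentIndex(\TruthfulMarginal_\OtherAgentIndex)$ for all agents $\AgentIndex\neq\OtherAgentIndex$; the case $\AgentIndex=\OtherAgentIndex$ is trivial.

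Next I compute $\TruthfulMarginal_{\AgentIndex\GoodIndex}-\TruthfulMarginal_{\OtherAgentIndex\GoodIndex}$ good by good. The base terms cancel. The penalty for $\AgentIndex$ sums over agents other than $\AgentIndex$, and the penalty for $\OtherAgentIndex$ sums over agents other than $\OtherAgentIndex$. These two sums share every index outside $\{\AgentIndex,\OtherAgentIndex\}$, so their difference is $\frac{1}{\AgentCount(\AgentCount-1)}\bigl(\Indicator[\GoodIndex\in\TopSet_\OtherAgentIndex]-\Indicator[\GoodIndex\in\TopSet_\AgentIndex]\bigr)$ with the appropriate sign. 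Combining this with the bonus difference $\frac1\AgentCount(\Indicator[\GoodIndex\in\TopSet_\AgentIndex]-\Indicator[\GoodIndex\in\TopSet_\OtherAgentIndex])$ gives
\begin{align*}
\TruthfulMarginal_{\AgentIndex\GoodIndex}-\TruthfulMarginal_{\OtherAgentIndex\GoodIndex}
=\Bigl(\frac1\AgentCount+\frac1{\AgentCount(\AgentCount-1)}\Bigr)\bigl(\Indicator[\GoodIndex\in\TopSet_\AgentIndex]-\Indicator[\GoodIndex\in\TopSet_\OtherAgentIndex]\bigr)
=\frac{\Indicator[\GoodIndex\in\TopSet_\AgentIndex]-\Indicator[\GoodIndex\in\TopSet_\OtherAgentIndex]}{\AgentCount-1}.
\end{align*}
Alternatively, \cref{eq:marginal-cases} yields this directly by checking the four cases of membership. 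Multiplying by $\AgentValuation_\AgentIndex(\GoodIndex)$ and summing gives $\AgentValuation_\AgentIndex(\TruthfulMarginal_\AgentIndex)-\AgentValuation_\AgentIndex(\TruthfulMarginal_\OtherAgentIndex)=\bigl(\AgentValuation_\AgentIndex(\TopSet_\AgentIndex)-\AgentValuation_\AgentIndex(\TopSet_\OtherAgentIndex)\bigr)/(\AgentCount-1)$.

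Finally, $\TopSet_\AgentIndex$ consists of $\AgentIndex$'s $\AgentCount-1$ most valuable goods under her (true) report, and $\TopSet_\OtherAgentIndex$ is some set of the same size $\AgentCount-1$. Hence $\AgentValuation_\AgentIndex(\TopSet_\AgentIndex)\ge\AgentValuation_\AgentIndex(\TopSet_\OtherAgentIndex)$, because a top-$(\AgentCount-1)$ set maximizes total value among sets of that size, whatever the tie-breaking. This gives nonnegativity.

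The only delicate step is the bookkeeping in the penalty cancellation, namely getting the sign and coefficient of the cross terms $\Indicator[\GoodIndex\in\TopSet_\OtherAgentIndex]$ and $\Indicator[\GoodIndex\in\TopSet_\AgentIndex]$ right. I would double-check it against the case formula \cref{eq:marginal-cases}.
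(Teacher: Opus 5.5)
Your proposal is correct and follows essentially the same route as the paper. You pass to the fractional bundles via additivity and linearity of expectation, cancel the base probabilities and the penalties from agents other than $i$ and $j$, and use $\frac1n+\frac1{n(n-1)}=\frac1{n-1}$ to obtain $(v_i(T_i)-v_i(T_j))/(n-1)$, which is nonnegative because $T_i$ maximizes $v_i$ among sets of size $n-1$.
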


\begin{proof}
Fix a reported valuation profile $\ValuationProfile$, and let
$\TruthfulMarginal$ be the fractional allocation returned by the
\truthfulFractionMechName at this profile.
Fix distinct agents $\AgentIndex,\OtherAgentIndex\in\AgentSet$.
We compare their fractional bundles $\TruthfulMarginal_\AgentIndex$ and
$\TruthfulMarginal_\OtherAgentIndex$, evaluating both using agent
$\AgentIndex$'s valuation $\AgentValuation_\AgentIndex$ in this profile.
Using $\SummationAgentIndex$ as the summation index over agents, we expand
the difference directly using \Cref{def:truthful-fractional-allocation} to obtain
\begin{align*}
\AgentValuation_\AgentIndex(\TruthfulMarginal_\AgentIndex)
-\AgentValuation_\AgentIndex(\TruthfulMarginal_\OtherAgentIndex)
&=\sum_{\GoodIndex\in\GoodSet}
\left(\TruthfulMarginal_{\AgentIndex\GoodIndex}-\TruthfulMarginal_{\OtherAgentIndex\GoodIndex}\right)
\cdot\AgentValuation_\AgentIndex(\GoodIndex)\\
&\overset{(a)}{=}\frac1\AgentCount\cdot\sum_{\GoodIndex\in\GoodSet}\left[
\Indicator[\GoodIndex\in\TopSet_\AgentIndex]
-\Indicator[\GoodIndex\in\TopSet_\OtherAgentIndex]
-\frac{
\sum_{\SummationAgentIndex\ne\AgentIndex}
\Indicator[\GoodIndex\in\TopSet_\SummationAgentIndex]
-
\sum_{\SummationAgentIndex\ne\OtherAgentIndex}
\Indicator[\GoodIndex\in\TopSet_\SummationAgentIndex]
}{\AgentCount-1}
\right]\cdot\AgentValuation_\AgentIndex(\GoodIndex)\\
&\overset{(b)}{=}\frac1\AgentCount\cdot\sum_{\GoodIndex\in\GoodSet}
\left(
\Indicator[\GoodIndex\in\TopSet_\AgentIndex]
-\Indicator[\GoodIndex\in\TopSet_\OtherAgentIndex]
-\frac1{\AgentCount-1}\cdot\left(
\Indicator[\GoodIndex\in\TopSet_\OtherAgentIndex]
-\Indicator[\GoodIndex\in\TopSet_\AgentIndex]
\right)\right)\cdot\AgentValuation_\AgentIndex(\GoodIndex)\\
&\overset{(c)}{=}\frac1{\AgentCount-1}\cdot\sum_{\GoodIndex\in\GoodSet}
\left(\Indicator[\GoodIndex\in\TopSet_\AgentIndex]
-\Indicator[\GoodIndex\in\TopSet_\OtherAgentIndex]\right)
\cdot\AgentValuation_\AgentIndex(\GoodIndex)\\
&\overset{(d)}{=}\frac{\AgentValuation_\AgentIndex(\TopSet_\AgentIndex)
-\AgentValuation_\AgentIndex(\TopSet_\OtherAgentIndex)}{\AgentCount-1}\\
&\overset{(e)}{\geq}0,
\end{align*}
where equality (a) substitutes the marginal formula in
\Cref{def:truthful-fractional-allocation} for both agents' marginal
probabilities, cancels the equal base probabilities, and groups the bonus
and penalty terms.
In equality (b), the penalties contributed by every agent other than
$\AgentIndex$ and $\OtherAgentIndex$ cancel.  The remaining penalties are the one imposed by agent
$\OtherAgentIndex$ on agent $\AgentIndex$ and the one imposed by agent
$\AgentIndex$ on agent $\OtherAgentIndex$.
Equality (c) collects like terms and simplifies the coefficient.
Equality (d) uses additivity.  Inequality (e) holds because both top sets
$\TopSet_\AgentIndex$ and $\TopSet_\OtherAgentIndex$ contain
$\AgentCount-1$ goods, while $\TopSet_\AgentIndex$ maximizes agent
$\AgentIndex$'s value among all sets of that size.
Hence the \truthfulFractionalAllocationName $\TruthfulMarginal$ is
envy-free.  By additivity and the definition of ex-ante envy-freeness in
\Cref{subsec:fairness}, every randomized
mechanism implementing the \truthfulFractionMechName is ex-ante envy-free.
\end{proof}

Therefore, by \Cref{lem:marginal-truthfulness,prop:ex-ante-envy-free}, any
distributional mechanism implementing the \truthfulFractionMechName is
TIE and ex-ante envy-free\@. The remaining sections
(\Cref{sec:anchor-carrier,sec:capacity,sec:implementation}) construct such a
distributional mechanism for which every
supported allocation is $\UniversalFactorValue$-TPS for the reported profile.

\section{Reserving High Goods}
\label{sec:anchor-carrier}
\label{subsec:high-goods}
Given the \truthfulFractionalAllocationName $\TruthfulMarginal$, we focus
on goods worth at least $\UniversalFactorValue$ of an agent's TPS.
If an agent receives such a good in every supported allocation, she is
guaranteed at least $\UniversalFactorValue$ of her TPS ex post.
We first select a fractional suballocation of $\TruthfulMarginal$ on such
goods by retaining each agent's marginal probabilities up to one.  Then, in
\Cref{subsec:high-good-implementation}, we implement this fractional
suballocation through edge-coloring of a bipartite multigraph, obtaining
integral suballocations whose uniform average equals the selected
fractional suballocation.

\begin{definition}[High and low goods]
\label{def:high-low-goods}
Given an agent set $\AgentSet$, a good set $\GoodSet$, and a valuation profile $\AgentValuationVector$,
let $\TPS_\AgentIndex$ be agent $\AgentIndex$'s TPS value.
Define agent $\AgentIndex$'s set of \textit{high goods} by
$\HighSet_\AgentIndex\deq\{\GoodIndex\in \GoodSet:\AgentValuation_\AgentIndex(\GoodIndex)\geq \UniversalFactorValue\cdot\TPS_\AgentIndex\}$.
Agent $\AgentIndex$'s set of \textit{low goods} is
$\LowSet_\AgentIndex\deq \GoodSet\setminus \HighSet_\AgentIndex$.
\end{definition}

We distinguish agents with enough total marginal probability on their high-good set (at least one) from those with less than one.

\begin{definition}[Deficient and nondeficient agents]
\label{def:deficient-agents}
For the \truthfulFractionalAllocationName $\TruthfulMarginal$ in \Cref{def:truthful-fractional-allocation},
we say an agent $\AgentIndex\in \AgentSet$ is \textit{deficient} if $\TruthfulMarginal_\AgentIndex(\HighSet_\AgentIndex)<1$; otherwise, she is \textit{nondeficient}.
Let
$\DeficientAgents\deq\{\AgentIndex\in \AgentSet:\TruthfulMarginal_\AgentIndex(\HighSet_\AgentIndex)<1\}$ be the set of deficient agents.
\end{definition}

Intuitively, a nondeficient agent is easy to satisfy: her marginal probability
on her high goods is at least one. A deficient agent is not, because that probability
is less than one.  Thus, in the
\truthfulFractionMechName, she receives none of her high goods with probability at least
$1-\TruthfulMarginal_\AgentIndex(\HighSet_\AgentIndex)$.
We next define the \highGoodFractionalSuballocationName to retain all
high-good marginals for deficient agents and a total marginal probability of
one for nondeficient agents.

\begin{definition}[\HighGoodFractionalSuballocationName]
\label{def:high-good-fractional-suballocation}
For the \truthfulFractionalAllocationName $\TruthfulMarginal$ in
\Cref{def:truthful-fractional-allocation}, define the
\emph{\highGoodFractionalSuballocationName} $\TruthfulMarginalHigh$ as follows:

\begin{itemize}
\tightlist
\item
  For a deficient agent $\AgentIndex\in\DeficientAgents$, set
  $\TruthfulMarginalHigh_{\AgentIndex\GoodIndex}\deq\TruthfulMarginal_{\AgentIndex\GoodIndex}$
  for every high good $\GoodIndex\in\HighSet_\AgentIndex$ and
  $\TruthfulMarginalHigh_{\AgentIndex\GoodIndex}\deq0$ for every low good
  $\GoodIndex\in\LowSet_\AgentIndex$. 
\item
  For a nondeficient agent $\AgentIndex\in\AgentSet\setminus\DeficientAgents$,
  order $\HighSet_\AgentIndex$ with the goods of
  $\TopSet_\AgentIndex\cap\HighSet_\AgentIndex$ before those of
  $\HighSet_\AgentIndex\setminus\TopSet_\AgentIndex$, breaking ties within
  each group by the public order.  Processing the goods in this order, assign
  to each the smaller of its marginal
  $\TruthfulMarginal_{\AgentIndex\GoodIndex}$ and the \amountName still needed for
  her coordinates to sum to one; assign zero to every later good and to every
  good of $\LowSet_\AgentIndex$.  Her coordinates sum to exactly one, because
  $\TruthfulMarginal_\AgentIndex(\HighSet_\AgentIndex)\geq1$.
\end{itemize}
\end{definition}
By construction, we have  $\TruthfulMarginalHigh\leq\TruthfulMarginal$
coordinatewise. Thus $\TruthfulMarginalHigh$ is a feasible fractional
suballocation supported on the agents' high goods.
In the \highGoodFractionalSuballocationName $\TruthfulMarginalHigh$,
a deficient agent keeps all of her marginal probability on her high goods, which is $\TruthfulMarginal_\AgentIndex(\HighSet_\AgentIndex)<1$ by \Cref{def:deficient-agents}; 
a nondeficient agent keeps just enough of hers to sum to
exactly one, taking the goods of her top set first.
See \Cref{ex:running-construction-high-good-suballocation} for an illustration.

\begin{example}
\label{ex:running-construction}
\label{ex:running-construction-high-good-suballocation}
There are $\AgentCount=3$ agents and
$\GoodCount=16$ goods.  Let
$\AgentSet=\{\AgentIndex_1,\AgentIndex_2,\AgentIndex_3\}$ be the set of agents
and let $\GoodSet=\{\GoodIndex_1,\ldots,\GoodIndex_{16}\}$ be
the set of goods.  In this running example, break ties by increasing good index.
For every agent $\AgentIndex\in\AgentSet$, let
$\AgentValuation_\AgentIndex$ be her additive value function.  The valuation profile is as follows.
The entries for goods $\GoodIndex_5,\ldots,\GoodIndex_{16}$ are identical:

\begin{align*}
{\renewcommand{\arraystretch}{1.4}%
\begin{array}{c|ccccc}
\AgentValuation_\AgentIndex(\GoodIndex)
&\GoodIndex_1&\GoodIndex_2&\GoodIndex_3&\GoodIndex_4&\GoodIndex_5,\ldots,\GoodIndex_{16}\\ \hline
\AgentIndex_1&4&3&2&0&0\\
\AgentIndex_2&16&15&1&1&1\\
\AgentIndex_3&1&2&2&16&2
\end{array}}
\end{align*}
By \Cref{def:TPS}, the agents' TPS values are
$\TPS_{\AgentIndex_1}=2$,
$\TPS_{\AgentIndex_2}=14$, and
$\TPS_{\AgentIndex_3}=29/2$.
Their high-good thresholds are $2/7$, $2$, and $29/14$, respectively.
Their top sets $\TopSet_\AgentIndex$ and their high-good sets
$\HighSet_\AgentIndex$ are

\begin{align*}
\TopSet_{\AgentIndex_1}&=\{\GoodIndex_1,\GoodIndex_2\},&
\TopSet_{\AgentIndex_2}&=\{\GoodIndex_1,\GoodIndex_2\},&
\TopSet_{\AgentIndex_3}&=\{\GoodIndex_2,\GoodIndex_4\},&
\\ 
\HighSet_{\AgentIndex_1}&=\{\GoodIndex_1,\GoodIndex_2,\GoodIndex_3\},&
\HighSet_{\AgentIndex_2}&=\{\GoodIndex_1,\GoodIndex_2\},&
\HighSet_{\AgentIndex_3}&=\{\GoodIndex_4\}.
\end{align*}

By \cref{eq:marginal-cases}, the total high-good marginals in the
\truthfulFractionalAllocationName $\TruthfulMarginal$ are
$\TruthfulMarginal_{\AgentIndex_1}(\HighSet_{\AgentIndex_1})=7/6$,
$\TruthfulMarginal_{\AgentIndex_2}(\HighSet_{\AgentIndex_2})=5/6$, and
$\TruthfulMarginal_{\AgentIndex_3}(\HighSet_{\AgentIndex_3})=2/3$.
Thus agent $\AgentIndex_1$ is nondeficient, while agents $\AgentIndex_2$ and
$\AgentIndex_3$ are deficient.
The following matrices illustrate the \truthfulFractionalAllocationName $\TruthfulMarginal$ and the \highGoodFractionalSuballocationName $\TruthfulMarginalHigh$.

\begin{align*}
{\renewcommand{\arraystretch}{1.4}%
\begin{array}{c|ccccc}
\TruthfulMarginal_{\AgentIndex\GoodIndex}
&\GoodIndex_1&\GoodIndex_2&\GoodIndex_3&\GoodIndex_4&\GoodIndex_5,\ldots,\GoodIndex_{16}\\ \hline
\AgentIndex_1&\frac12&\frac13&\frac13&\frac16&\frac13\\
\AgentIndex_2&\frac12&\frac13&\frac13&\frac16&\frac13\\
\AgentIndex_3&0&\frac13&\frac13&\frac23&\frac13
\end{array}}
\quad\Longrightarrow\quad
{\renewcommand{\arraystretch}{1.4}%
\begin{array}{c|ccccc}
\TruthfulMarginalHigh_{\AgentIndex\GoodIndex}
&\GoodIndex_1&\GoodIndex_2&\GoodIndex_3&\GoodIndex_4&\GoodIndex_5,\ldots,\GoodIndex_{16}\\ \hline
\AgentIndex_1&\frac12&\frac13&\frac16&0&0\\
\AgentIndex_2&\frac12&\frac13&0&0&0\\
\AgentIndex_3&0&0&0&\frac23&0
\end{array}}.
\end{align*}
In the \truthfulFractionalAllocationName $\TruthfulMarginal$, every good's
entries sum to one and every agent's entries sum to
$\GoodCount/\AgentCount=16/3$.
\end{example}

\subsection{Implementing \texorpdfstring{$\TruthfulMarginalHigh$}{TruthfulMarginalHigh} by integral suballocations}
\label{subsec:high-good-implementation}
Define the \emph{grid size} by
$\GridSize\deq\AgentCount\cdot(\AgentCount-1)$, where $\AgentCount$ is the
number of agents.  Every coordinate of the
\highGoodFractionalSuballocationName $\TruthfulMarginalHigh$ is an integer
multiple of $1/\GridSize$.
We encode $\TruthfulMarginalHigh$ by the following bipartite multigraph.

\begin{definition}[\HighGoodDummyGraphName]
\label{def:high-good-dummy-graph}
For the \highGoodFractionalSuballocationName $\TruthfulMarginalHigh$ in
\Cref{def:high-good-fractional-suballocation} and $\GridSize=\AgentCount\cdot(\AgentCount-1)$, the \emph{\highGoodDummyGraphName} $\highGoodDummyGraph$ is a bipartite multigraph whose left vertex set is the
agent set $\AgentSet$. 
The right vertices consist of goods in $\GoodSet$ together with one dummy vertex
$\DummyVertex_\AgentIndex$ for each deficient agent
$\AgentIndex\in\DeficientAgents$, where $\DeficientAgents$ is the set of
deficient agents in \Cref{def:deficient-agents}.  The edges are as follows:
\begin{itemize}
\tightlist
\item For every agent $\AgentIndex\in\AgentSet$ and good
  $\GoodIndex\in\GoodSet$, connect the agent to the good by
  $\GridSize\cdot\TruthfulMarginalHigh_{\AgentIndex\GoodIndex}$ parallel
  edges.
\item For every deficient agent $\AgentIndex\in\DeficientAgents$, connect
  her to her dummy vertex $\DummyVertex_\AgentIndex$ by
  $\GridSize\cdot(1-\TruthfulMarginal_\AgentIndex(\HighSet_\AgentIndex))$
  parallel edges, where $\HighSet_\AgentIndex$ is her high-good set in \Cref{def:high-low-goods}.
\end{itemize}
\end{definition}

In the \highGoodDummyGraphName $\highGoodDummyGraph$, every agent has degree  exactly $\GridSize$ and every good or dummy has degree at most $\GridSize$. 
Indeed, 
by \Cref{def:high-good-fractional-suballocation}, every nondeficient agent
has degree $\GridSize\cdot\sum_{\GoodIndex\in\GoodSet}\TruthfulMarginalHigh_{\AgentIndex\GoodIndex}=\GridSize$, and every deficient agent
$\AgentIndex\in\DeficientAgents$ also has degree $\GridSize$:
deficient agent $\AgentIndex$ has
$\GridSize\cdot\TruthfulMarginal_\AgentIndex(\HighSet_\AgentIndex)$ edges
to goods and $\GridSize\cdot(1-\TruthfulMarginal_\AgentIndex(\HighSet_\AgentIndex))$ edges to her dummy.
Every good $\GoodIndex$ has degree
$\GridSize\cdot\sum_{\AgentIndex\in\AgentSet}\TruthfulMarginalHigh_{\AgentIndex\GoodIndex}
\leq\GridSize\cdot\sum_{\AgentIndex\in\AgentSet}\TruthfulMarginal_{\AgentIndex\GoodIndex}=\GridSize$.  
Every dummy $\DummyVertex_\AgentIndex$ has degree
$\GridSize\cdot(1-\TruthfulMarginal_\AgentIndex(\HighSet_\AgentIndex))\leq\GridSize$.
Hence $\highGoodDummyGraph$ has maximum degree $\GridSize$.
By \Cref{lem:bipartite-decomposition}, the
\highGoodDummyGraphName $\highGoodDummyGraph$ admits a proper
$\GridSize$-edge-coloring.  Fix an arbitrary such coloring $\Coloring$,
with colors indexed by $\ColorIndex\in\ColorSet$.  Each color is a matching
and contains exactly one edge incident to every agent.  Thus, in each color,
an agent receives either one whole high good or her dummy, and no good is
assigned to more than one agent.  An edge to a dummy indicates that its
agent receives no high good in that color.
If an agent receives a high good in a color, we say that high good is \emph{reserved} for her in that color.
For every color $\ColorIndex\in\ColorSet$, let
$\ReservedSet^\ColorIndex\subseteq \GoodSet$ be the set of goods reserved in
color $\ColorIndex$.
Once a coloring is
fixed, the \emph{dummy agents} of a color $\ColorIndex$ are the deficient
agents who receive their dummy in color $\ColorIndex$, denoted by
$\DummyAgents^\ColorIndex\subseteq\DeficientAgents$.
See \Cref{eg:high-good-edge-coloring} and \Cref{fig:high-good-edge-coloring} for an illustration.

\begin{example}[Continued]
\label{eg:high-good-edge-coloring}
The grid size is $\GridSize=\AgentCount\cdot(\AgentCount-1)=6$.  Nondeficient agent
$\AgentIndex_1$ is connected to goods $\GoodIndex_1$, $\GoodIndex_2$, and
$\GoodIndex_3$ by three, two, and one edges, respectively.  Deficient agent
$\AgentIndex_2$ is connected to goods $\GoodIndex_1$ and $\GoodIndex_2$ by three
and two edges, respectively, and to her dummy $\DummyVertex_{\AgentIndex_2}$
by one edge.  Deficient agent $\AgentIndex_3$ is connected to good
$\GoodIndex_4$ by four edges and to her dummy
$\DummyVertex_{\AgentIndex_3}$ by two edges.
\Cref{fig:high-good-edge-coloring} shows this multigraph and all six color
classes of one proper edge-coloring.
\end{example}

\begin{figure}[t]
\centering
\begin{tikzpicture}[
  vertex/.style={
    circle,
    draw,
    minimum size=5.5mm,
    inner sep=0pt,
    font=\scriptsize
  },
  agent/.style={vertex,fill=gray!10},
  goodvertex/.style={vertex,fill=gray!10},
  dummy/.style={vertex,densely dashed,fill=gray!5},
  goodedge/.style={draw=gray!70,semithick},
  dummyedge/.style={draw=gray!70,semithick,densely dashed},
  decompositionarrow/.style={-{Latex[length=2mm]},thick},
  paneltitle/.style={font=\small}
]

% The positive-degree portion of the high-good multigraph.
\begin{scope}
  \node[paneltitle] at (1.6,3.5) {high-good-dummy graph};

  \coordinate (a1pos) at (0,2.4);
  \coordinate (a2pos) at (0,1.2);
  \coordinate (a3pos) at (0,0);
  \coordinate (g1pos) at (3.2,3.0);
  \coordinate (g2pos) at (3.2,2.2);
  \coordinate (g3pos) at (3.2,1.4);
  \coordinate (g4pos) at (3.2,0.6);
  \coordinate (d2pos) at (3.2,-0.2);
  \coordinate (d3pos) at (3.2,-1.0);

  \draw[goodedge] ([yshift=-2pt]a1pos) -- ([yshift=-2pt]g1pos);
  \draw[goodedge] (a1pos) -- (g1pos);
  \draw[goodedge] ([yshift=2pt]a1pos) -- ([yshift=2pt]g1pos);

  \draw[goodedge] ([yshift=-1pt]a1pos) -- ([yshift=-1pt]g2pos);
  \draw[goodedge] ([yshift=1pt]a1pos) -- ([yshift=1pt]g2pos);

  \draw[goodedge] (a1pos) -- (g3pos);

  \draw[goodedge] ([yshift=-2pt]a2pos) -- ([yshift=-2pt]g1pos);
  \draw[goodedge] (a2pos) -- (g1pos);
  \draw[goodedge] ([yshift=2pt]a2pos) -- ([yshift=2pt]g1pos);

  \draw[goodedge] ([yshift=-1pt]a2pos) -- ([yshift=-1pt]g2pos);
  \draw[goodedge] ([yshift=1pt]a2pos) -- ([yshift=1pt]g2pos);

  \draw[dummyedge] (a2pos) -- (d2pos);

  \draw[goodedge] ([yshift=-3pt]a3pos) -- ([yshift=-3pt]g4pos);
  \draw[goodedge] ([yshift=-1pt]a3pos) -- ([yshift=-1pt]g4pos);
  \draw[goodedge] ([yshift=1pt]a3pos) -- ([yshift=1pt]g4pos);
  \draw[goodedge] ([yshift=3pt]a3pos) -- ([yshift=3pt]g4pos);

  \draw[dummyedge] ([yshift=-1pt]a3pos) -- ([yshift=-1pt]d3pos);
  \draw[dummyedge] ([yshift=1pt]a3pos) -- ([yshift=1pt]d3pos);

  \node[agent] (a1) at (a1pos) {$\AgentIndex_1$};
  \node[agent] (a2) at (a2pos) {$\AgentIndex_2$};
  \node[agent] (a3) at (a3pos) {$\AgentIndex_3$};

  \node[goodvertex] (g1) at (g1pos) {$\GoodIndex_1$};
  \node[goodvertex] (g2) at (g2pos) {$\GoodIndex_2$};
  \node[goodvertex] (g3) at (g3pos) {$\GoodIndex_3$};
  \node[goodvertex] (g4) at (g4pos) {$\GoodIndex_4$};
  \node[dummy] (d2) at (d2pos) {$\DummyVertex_{\AgentIndex_2}$};
  \node[dummy] (d3) at (d3pos) {$\DummyVertex_{\AgentIndex_3}$};

  \node[font=\scriptsize] at (1.6,-1.45)
    {every agent has degree $\GridSize=6$};
\end{scope}

\draw[decompositionarrow] (3.85,0.8) -- (4.85,0.8);

% One proper six-coloring, written as one color class per line.
\begin{scope}[xshift=5.15cm]
  \node[paneltitle] at (2.45,3.5) {one proper edge-coloring};
  \node[anchor=north west,inner sep=0pt] at (0,3.15) {%
    {\renewcommand{\arraystretch}{1.25}%
    \begin{tabular}{c|ccc}
      color & $\AgentIndex_1$ & $\AgentIndex_2$ & $\AgentIndex_3$ \\ \hline
      $1$ & $\GoodIndex_1$ & $\GoodIndex_2$ & $\GoodIndex_4$ \\
      $2$ & $\GoodIndex_1$ & $\GoodIndex_2$ & $\GoodIndex_4$ \\
      $3$ & $\GoodIndex_1$ & $\DummyVertex_{\AgentIndex_2}$ & $\DummyVertex_{\AgentIndex_3}$ \\
      $4$ & $\GoodIndex_2$ & $\GoodIndex_1$ & $\DummyVertex_{\AgentIndex_3}$ \\
      $5$ & $\GoodIndex_2$ & $\GoodIndex_1$ & $\GoodIndex_4$ \\
      $6$ & $\GoodIndex_3$ & $\GoodIndex_1$ & $\GoodIndex_4$
    \end{tabular}}};
  \node[font=\scriptsize,align=center] at (2.45,-1.15)
    {goods in the same color are distinct};
\end{scope}
\end{tikzpicture}
\caption{The high-good multigraph of the running example and one proper
$\GridSize=6$ edge-coloring.  Every parallel edge on the left is drawn
separately.  
Each color on the right contains one edge incident to every
agent and assigns distinct goods.
The twelve isolated vertices for goods $\GoodIndex_5,\ldots,\GoodIndex_{16}$
are omitted.}
\label{fig:high-good-edge-coloring}
\end{figure}

By our construction,
across all $\GridSize$ colors, every deficient agent $\AgentIndex\in\DeficientAgents$ receives
her dummy in exactly $\GridSize\cdot(1-\TruthfulMarginal_\AgentIndex(\HighSet_\AgentIndex))$ colors.

Reading off the high goods reserved in each color gives an integral
suballocation; averaged over the colors, these suballocations recover the
\highGoodFractionalSuballocationName $\TruthfulMarginalHigh$.

\begin{definition}[\HighGoodIntegralSuballocationName]
\label{def:high-good-integral-suballocation}
For a proper $\GridSize$-edge-coloring of the \highGoodDummyGraphName
$\highGoodDummyGraph$ and each color $\ColorIndex\in\ColorSet$, define the
\emph{\highGoodIntegralSuballocationName} $\TruthfulMarginalHighColor$ of
color $\ColorIndex$ by setting, for every agent $\AgentIndex\in\AgentSet$
and good $\GoodIndex\in\GoodSet$,
\begin{align*}
  \TruthfulMarginalHighColor_{\AgentIndex\GoodIndex}
  &\deq\Indicator[\text{color }\ColorIndex\text{ contains edge }
(\AgentIndex,\GoodIndex)\text{ in }\highGoodDummyGraph].
\end{align*}
\end{definition}

We write $\ColorAverage_\ColorIndex$ for the uniform average over the
$\GridSize$ colors $\ColorIndex\in\ColorSet$, taken coordinatewise for allocations.
If a color is drawn uniformly at random from $\ColorSet$, agent $\AgentIndex$ then receives
good $\GoodIndex$ as a reserved high good with probability exactly
$\TruthfulMarginalHigh_{\AgentIndex\GoodIndex}$; that is,
$\ColorAverage_\ColorIndex \TruthfulMarginalHighColor=\TruthfulMarginalHigh$.

Thus, for every color $\ColorIndex$, in the \highGoodIntegralSuballocationName $\TruthfulMarginalHighColor$, every
nondeficient agent receives one of her high goods and so already has value at
least $\UniversalFactorValue\cdot\TPS_\AgentIndex$.
However, in some colors, a deficient agent may instead be matched to her dummy
and receive no good (as illustrated by colors 3 and 4 in
\Cref{fig:high-good-edge-coloring}).
Meanwhile, a color may leave some agent's low goods unallocated.
For example, in \Cref{eg:high-good-edge-coloring}, goods
$\GoodIndex_5,\ldots,\GoodIndex_{16}$ are unallocated in every color, and good
$\GoodIndex_3$ is unallocated in colors 1 through 5.
For every agent $\AgentIndex\in\AgentSet$ and good $\GoodIndex\in\GoodSet$,
the unused \portionName of the \truthfulFractionalAllocationName
$\TruthfulMarginal$ is
$\TruthfulMarginal_{\AgentIndex\GoodIndex}-\TruthfulMarginalHigh_{\AgentIndex\GoodIndex}$.
The next step is therefore to allocate the unallocated low goods
to deficient agents in precisely the colors in which they receive
their dummies.

\section{Allocating Low Goods to Deficient Agents}
\label{sec:capacity}
This section constructs a scaled \lowGoodFractionalSuballocationName for each color.
\Cref{prop:load-balancing} establishes feasibility in every color, and
\Cref{prop:capacity-certificate} bounds the average allocation by the
unused truthful marginal.  Each deficient agent who receives her dummy
obtains at least twice her required value, by \Cref{prop:scaled-value}.

These three guarantees allow us to complete the
colors to the truthful marginal and then implement each color faithfully in \Cref{sec:implementation}.  Feasibility
leaves a nonnegative unassigned \amountName of every good, and the average-marginal bound
leaves a nonnegative \residualFractionalSuballocationName to assign.  The factor two in the value
guarantee accounts for the value lost in the faithful implementation.

We first construct a \lowGoodFractionalBundleName and bound its value in
\Cref{subsec:truncated-low-good}.  \Cref{subsec:common-set-losses} introduces
the \commonSetName, balances the reserved-good counts, and bounds the blocked
\amountName and the unblocked value in every color.
\Cref{subsec:proxy-definition} chooses the scaling factors, balances the
dummy loads, and establishes feasibility and the average-marginal
guarantee.
The reservation estimates, numerical bounds, and full proofs of both
balancing procedures are given in \Cref{app:proxy-inequalities}.

Throughout, the agent set $\AgentSet$ has size $\AgentCount\geq2$, and the
good set is $\GoodSet$.  There are
$\GridSize=\AgentCount\cdot(\AgentCount-1)$ colors, indexed by $\ColorSet$.
The approximation factor is $\UniversalFactorValue$ throughout, as in the
definition of high and low goods in \Cref{def:high-low-goods}.
If the deficient-agent set $\DeficientAgents$ is empty,
every agent receives a reserved high good in every color; define every
scaled \lowGoodFractionalSuballocationName to be zero and proceed to
\Cref{sec:implementation}.  Henceforth, assume $\DeficientAgents$ is
nonempty.

A deficient agent has too little high-good marginal to be served by
reservation alone, so we record how her high goods sit inside her top set and
how much marginal she is missing.  Both facts are used throughout this
section.

\begin{lemma}
\label{lem:deficient-structure}
Every deficient agent $\AgentIndex\in\DeficientAgents$ satisfies the
following.
\begin{enumerate}[label=\textup{(\roman*)}]
\item\label{lem:deficient-structure-containment}
$\HighSet_\AgentIndex\subseteq\TopSet_\AgentIndex$ and
$|\HighSet_\AgentIndex|<\AgentCount$.
\item\label{lem:deficient-structure-identity}
Her missing high-good marginal satisfies
\begin{align}
\AgentCount\cdot(1-\TruthfulMarginal_\AgentIndex(\HighSet_\AgentIndex))
=
\AgentCount-|\HighSet_\AgentIndex|-\frac1{\AgentCount-1}\sum_{\GoodIndex\in \HighSet_\AgentIndex}\OmissionCount_\GoodIndex.
\label{eq:missing-high-marginal}
\end{align}
\end{enumerate}
\end{lemma}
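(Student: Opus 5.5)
For part (i), I argue by contradiction: suppose a deficient agent $\AgentIndex$ has a high good $\GoodIndex\in\HighSet_\AgentIndex\setminus\TopSet_\AgentIndex$. Since $\TopSet_\AgentIndex$ consists of her $\AgentCount-1$ most valuable goods (ties broken publicly), every good of $\TopSet_\AgentIndex$ is worth at least $\AgentValuation_\AgentIndex(\GoodIndex)\geq\UniversalFactorValue\cdot\TPS_\AgentIndex$. Hence $\TopSet_\AgentIndex\cup\{\GoodIndex\}\subseteq\HighSet_\AgentIndex$, so $|\HighSet_\AgentIndex|\geq\AgentCount$.

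I then bound $\TruthfulMarginal_\AgentIndex(\HighSet_\AgentIndex)$ from below using \cref{eq:marginal-cases}. Each good of $\TopSet_\AgentIndex$ contributes $\frac1\AgentCount+\frac{\OmissionCount_\GoodIndex}{\AgentCount(\AgentCount-1)}$. Because agent $\AgentIndex$ herself includes the good, $\OmissionCount_\GoodIndex\leq\AgentCount-1$, but it can be as small as $0$. A good outside $\TopSet_\AgentIndex$ is omitted by $\AgentIndex$, so $\OmissionCount_\GoodIndex\geq1$ and it contributes $\frac{\OmissionCount_\GoodIndex-1}{\AgentCount(\AgentCount-1)}\geq0$. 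This crude bound gives only $(\AgentCount-1)/\AgentCount<1$, so I need a counting argument.

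Summing over all goods, the total number of omissions is $\sum_{\GoodIndex\in\GoodSet}\OmissionCount_\GoodIndex=\AgentCount\GoodCount-\AgentCount(\AgentCount-1)$. I cannot use this directly. Instead I use that $\TruthfulMarginal_\AgentIndex(\HighSet_\AgentIndex)$ is at least her marginal on the $\AgentCount$ goods $\TopSet_\AgentIndex\cup\{\GoodIndex\}$, with the extra good contributing $\frac{\OmissionCount_\GoodIndex-1}{\AgentCount(\AgentCount-1)}$. The key point is that the other $\AgentCount-1$ agents together have $(\AgentCount-1)^2$ top-set slots. Among the $\AgentCount$ goods in $S\deq\TopSet_\AgentIndex\cup\{\GoodIndex\}$, the number of inclusions by other agents is $\sum_{h\in S}(\AgentCount-1-(\OmissionCount_h-\Indicator[h\notin\TopSet_\AgentIndex]))$. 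Each other agent can include at most $\AgentCount-1$ goods of $S$, which gives $\sum_{h\in S}\OmissionCount_h\geq \AgentCount(\AgentCount-1)-(\AgentCount-1)^2+1=\AgentCount$. Plugging in,
\[
\TruthfulMarginal_\AgentIndex(S)=\frac{\AgentCount-1}{\AgentCount}+\frac{\sum_{h\in S}\OmissionCount_h-1}{\AgentCount(\AgentCount-1)}\geq\frac{\AgentCount-1}{\AgentCount}+\frac{\AgentCount-1}{\AgentCount(\AgentCount-1)}=1,
\]
which contradicts deficiency. This slot-counting step is where the real content lies, and I would double-check the $+1$ (agent $\AgentIndex$ omits the extra good) carefully. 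Once $\HighSet_\AgentIndex\subseteq\TopSet_\AgentIndex$ holds, $|\HighSet_\AgentIndex|\leq\AgentCount-1<\AgentCount$ is immediate.

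For part (ii), since $\HighSet_\AgentIndex\subseteq\TopSet_\AgentIndex$, every high good uses the first case of \cref{eq:marginal-cases}. Summing gives $\TruthfulMarginal_\AgentIndex(\HighSet_\AgentIndex)=\frac{|\HighSet_\AgentIndex|}{\AgentCount}+\frac{1}{\AgentCount(\AgentCount-1)}\sum_{\GoodIndex\in\HighSet_\AgentIndex}\OmissionCount_\GoodIndex$. Multiplying $1-\TruthfulMarginal_\AgentIndex(\HighSet_\AgentIndex)$ by $\AgentCount$ yields \cref{eq:missing-high-marginal} directly.
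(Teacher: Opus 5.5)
Your proof is correct and follows essentially the same route as the paper. You assume a high good outside $T_i$, observe that $T_i$ together with that good is an $n$-element subset of $H_i$, and use that the other $n-1$ agents have only $(n-1)^2$ top-set incidences to force its marginal to be at least $1$; part (ii) is the same summation of the top-good case of the marginal formula. The only cosmetic difference is that you count through the non-top counts $t_h$ rather than directly from the bonus/penalty formula. Your $+1$ (agent $i$ omitting the extra good) correctly cancels the $-1$ in that good's marginal.
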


\begin{proof}
Suppose, for a contradiction, that a high good lies outside $\TopSet_\AgentIndex$. Then every
good in $\TopSet_\AgentIndex$ is high. Let $\Bundle$ consist of $\TopSet_\AgentIndex$ and one additional
high good. Thus $|\Bundle|=\AgentCount$ and $|\Bundle\cap \TopSet_\AgentIndex|=\AgentCount-1$. Since the other
$\AgentCount-1$ agents report only $(\AgentCount-1)^2$ top incidences in total,
the marginal formula in \Cref{def:truthful-fractional-allocation} gives
$\TruthfulMarginal_\AgentIndex(\Bundle)\geq\frac1\AgentCount\left(\AgentCount+\AgentCount-1-\frac{(\AgentCount-1)^2}{\AgentCount-1}\right)=1$,
contradicting deficiency.  Hence
$\HighSet_\AgentIndex\subseteq\TopSet_\AgentIndex$, and therefore
$|\HighSet_\AgentIndex|\leq|\TopSet_\AgentIndex|=\AgentCount-1<\AgentCount$,
which proves \Cref{lem:deficient-structure-containment}.
For \Cref{lem:deficient-structure-identity}, every high good of a deficient
agent lies in her top set, so the first case of \cref{eq:marginal-cases}
applies throughout $\HighSet_\AgentIndex$.  Summing that case over
$\HighSet_\AgentIndex$ gives \Cref{lem:deficient-structure-identity}.
\end{proof}

\subsection{The \lowGoodFractionalBundlesName}
\label{subsec:truncated-low-good}

Fix a deficient agent $\AgentIndex\in\DeficientAgents$.
By \cref{eq:missing-high-marginal},
$0<\AgentCount\cdot(1-\TruthfulMarginal_\AgentIndex(\HighSet_\AgentIndex))\leq\AgentCount-|\HighSet_\AgentIndex|$,
and $\AgentCount-|\HighSet_\AgentIndex|$ is a positive integer by
\Cref{lem:deficient-structure-containment} of \Cref{lem:deficient-structure}.  Her dummy occurs in the
fraction $(1-\TruthfulMarginal_\AgentIndex(\HighSet_\AgentIndex))$ of the colors, and she
is given low goods only there.

It remains to fix her \portionName of each low good $\GoodIndex\in\LowSet_\AgentIndex$
in those colors, and two candidates suggest themselves.  The first
concentrates her entire marginal probability for the good into her dummy
colors, giving her
$\TruthfulMarginal_{\AgentIndex\GoodIndex}/
(1-\TruthfulMarginal_\AgentIndex(\HighSet_\AgentIndex))$ of it; this is the
most she can hold there without the average over all colors exceeding that
marginal probability.  The second rescales her marginal probability by
$\AgentCount$, giving her
$\AgentCount\cdot\TruthfulMarginal_{\AgentIndex\GoodIndex}$.  By the second
case of \cref{eq:marginal-cases}, the truthful rule gives her at most a
$1/\AgentCount$ \portionName of every good she does not rank top, so this rescaling
turns her \portionsName of those goods into whole goods at most.  Her low-good set
is worth at least $(\AgentCount-|\HighSet_\AgentIndex|)\cdot\TPS_\AgentIndex$,
so a bundle holding most of $\LowSet_\AgentIndex$ retains most of that value;
\Cref{lem:carrier-value} makes this precise.  The first candidate is the
larger of the two exactly when her missing probability
$1-\TruthfulMarginal_\AgentIndex(\HighSet_\AgentIndex)$ is below
$1/\AgentCount$, equivalently when
$\AgentCount\cdot(1-\TruthfulMarginal_\AgentIndex(\HighSet_\AgentIndex))<1$.
Taking the larger candidate and capping it at one whole good gives the
following bundle.

\begin{definition}
\label{def:truncated-low-good-bundle}
For every deficient agent $\AgentIndex\in\DeficientAgents$, define her
\emph{\lowGoodFractionalBundleName} $\LowGoodBundle_\AgentIndex$, for every
good $\GoodIndex\in\GoodSet$, by
\begin{align*}
\LowGoodBundle_{\AgentIndex\GoodIndex}
&\deq
\begin{cases}
\min\left\{\max\left\{
\dfrac{\TruthfulMarginal_{\AgentIndex\GoodIndex}}
{1-\TruthfulMarginal_\AgentIndex(\HighSet_\AgentIndex)},
\dfrac{\TruthfulMarginal_{\AgentIndex\GoodIndex}}{1/\AgentCount}
\right\},1\right\},
&\GoodIndex\in\LowSet_\AgentIndex,\\
0,&\GoodIndex\in\HighSet_\AgentIndex.
\end{cases}
\end{align*}
For every nondeficient agent $\AgentIndex\in\AgentSet\setminus\DeficientAgents$,
define $\LowGoodBundle_{\AgentIndex\GoodIndex}\deq0$ for every good
$\GoodIndex\in\GoodSet$.
We write $\LowGoodBundle$ for the resulting \lowGoodFractionalBundlesName.
The \emph{unallocated low-good \amountName} of a deficient agent
$\AgentIndex\in\DeficientAgents$, denoted by
$\UnheldLowGoodBundle_\AgentIndex(\LowSet_\AgentIndex)$, is the total
\amountName of low goods that her \lowGoodFractionalBundleName omits:
\begin{align*}
\UnheldLowGoodBundle_\AgentIndex(\LowSet_\AgentIndex)
&\deq\sum_{\GoodIndex\in\LowSet_\AgentIndex}
\left(1-\LowGoodBundle_{\AgentIndex\GoodIndex}\right).
\end{align*}
\end{definition}

Every coordinate lies in $[0,1]$, so $\LowGoodBundle_\AgentIndex$ is a
feasible fractional bundle.
The fractional suballocation
$\LowGoodBundle\deq(\LowGoodBundle_\AgentIndex)_{\AgentIndex\in\AgentSet}$
need not be feasible, since the agents' \portionsName of the same good can
sum to more than one.
Capped at one, the second candidate equals
$\AgentCount\cdot\min\{\TruthfulMarginal_{\AgentIndex\GoodIndex},1/\AgentCount\}$:
it truncates her marginal probability at $1/\AgentCount$ and rescales it by
$\AgentCount$; the first concentrates her whole marginal probability on the
good into the colors where her dummy occurs.  The two agree at
$1-\TruthfulMarginal_\AgentIndex(\HighSet_\AgentIndex)=1/\AgentCount$.
What must stay within
$\TruthfulMarginal-\TruthfulMarginalHigh$ is the color average of the
allocations built from $\LowGoodBundle$, not $\LowGoodBundle$ itself;
\Cref{subsec:capacity-proof} verifies this once the scaling factors are fixed.
See \Cref{eg:truncated-low-good} for an illustration.

\begin{example}[Continued]
\label{eg:truncated-low-good}
The deficient agents are $\AgentIndex_2$ and $\AgentIndex_3$.  Their
high-good sets have $|\HighSet_{\AgentIndex_2}|=2$ and
$|\HighSet_{\AgentIndex_3}|=1$ goods, respectively, and their scaled missing marginals are
$\AgentCount\cdot(1-\TruthfulMarginal_{\AgentIndex_2}(\HighSet_{\AgentIndex_2}))=1/2$ and
$\AgentCount\cdot(1-\TruthfulMarginal_{\AgentIndex_3}(\HighSet_{\AgentIndex_3}))=1$.  The tables show the \truthfulFractionalAllocationName $\TruthfulMarginal$, the high-good suballocation
$\TruthfulMarginalHigh$, and the agents'
\lowGoodFractionalBundlesName $\LowGoodBundle_\AgentIndex$.
As before, the last column gives the entries for each good
$\GoodIndex_5,\ldots,\GoodIndex_{16}$:
\begin{align*}
{\renewcommand{\arraystretch}{1.4}\setlength{\arraycolsep}{2.5pt}%
\begin{array}{c|ccccc}
\TruthfulMarginal_{\AgentIndex\GoodIndex}
&\GoodIndex_1&\GoodIndex_2&\GoodIndex_3&\GoodIndex_4&\GoodIndex_5,\ldots,\GoodIndex_{16}\\ \hline
\AgentIndex_1&\frac12&\frac13&\frac13&\frac16&\frac13\\
\AgentIndex_2&\frac12&\frac13&\frac13&\frac16&\frac13\\
\AgentIndex_3&0&\frac13&\frac13&\frac23&\frac13
\end{array}
\;\Longrightarrow\;
\begin{array}{c|ccccc}
\TruthfulMarginalHigh_{\AgentIndex\GoodIndex}
&\GoodIndex_1&\GoodIndex_2&\GoodIndex_3&\GoodIndex_4&\GoodIndex_5,\ldots,\GoodIndex_{16}\\ \hline
\AgentIndex_1&\frac12&\frac13&\frac16&0&0\\
\AgentIndex_2&\frac12&\frac13&0&0&0\\
\AgentIndex_3&0&0&0&\frac23&0
\end{array}
\;\Longrightarrow\;
\begin{array}{c|ccccc}
\LowGoodBundle_{\AgentIndex\GoodIndex}
&\GoodIndex_1&\GoodIndex_2&\GoodIndex_3&\GoodIndex_4&\GoodIndex_5,\ldots,\GoodIndex_{16}\\ \hline
\AgentIndex_1&0&0&0&0&0\\
\AgentIndex_2&0&0&1&1&1\\
\AgentIndex_3&0&1&1&0&1
\end{array}}.
\end{align*}
Agent $\AgentIndex_3$ has $\AgentCount\cdot(1-\TruthfulMarginal_{\AgentIndex_3}(\HighSet_{\AgentIndex_3}))=1$, so her two
candidates coincide and her bundle is
$\min\{3\TruthfulMarginal_{\AgentIndex_3\GoodIndex},1\}$.
Agent $\AgentIndex_2$ has $\AgentCount\cdot(1-\TruthfulMarginal_{\AgentIndex_2}(\HighSet_{\AgentIndex_2}))=1/2$, so her
first candidate is the larger one; at $\GoodIndex_4$ it increases her \portionName from
$1/2$ to $1$, and she takes that good outright in every color where her dummy
occurs.
Agent $\AgentIndex_2$ holds every one of her low goods in full, so her
unallocated low-good \amountName is
$\UnheldLowGoodBundle_{\AgentIndex_2}(\LowSet_{\AgentIndex_2})=0$, whereas agent $\AgentIndex_3$ omits only
good $\GoodIndex_1$ and has $\UnheldLowGoodBundle_{\AgentIndex_3}(\LowSet_{\AgentIndex_3})=1$.
Agent $\AgentIndex_3$ receives her dummy in colors $3$ and
$4$ of \Cref{fig:high-good-edge-coloring}.  Good $\GoodIndex_2$ is
unreserved in color $3$ and reserved in color $4$, so the same \portionName is
available in the first color and blocked in the second.
\end{example}

\begin{restatable}{lemma}{carriervalue}
\label{lem:carrier-value}
Every deficient agent $\AgentIndex\in\DeficientAgents$ satisfies
\begin{align*}
\AgentValuation_\AgentIndex(\LowGoodBundle_\AgentIndex)
&\geq
(\AgentCount-|\HighSet_\AgentIndex|-
\UniversalFactorValue\cdot\UnheldLowGoodBundle_\AgentIndex(\LowSet_\AgentIndex))\cdot\TPS_\AgentIndex.
\end{align*}
\end{restatable}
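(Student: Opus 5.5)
The plan is to compare the bundle $\LowGoodBundle_\AgentIndex$ with the full low-good set $\LowSet_\AgentIndex$. First I bound $\AgentValuation_\AgentIndex(\LowSet_\AgentIndex)$ from below using the TPS identity \cref{eq:tps-tight}. Then I charge the goods the bundle omits against the low-good threshold of \Cref{def:high-low-goods}. No property of the \truthfulFractionalAllocationName beyond $0\le\LowGoodBundle_{\AgentIndex\GoodIndex}\le1$ should be needed.

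Step one is the lower bound on the low-good set. Fix a deficient agent $\AgentIndex$. Split the sum in \cref{eq:tps-tight} into high and low goods:
\begin{align*}
\AgentCount\cdot\TPS_\AgentIndex
=\sum_{\GoodIndex\in\HighSet_\AgentIndex}\min\{\AgentValuation_\AgentIndex(\GoodIndex),\TPS_\AgentIndex\}
+\sum_{\GoodIndex\in\LowSet_\AgentIndex}\min\{\AgentValuation_\AgentIndex(\GoodIndex),\TPS_\AgentIndex\}.
\end{align*}
Each high-good term is at most $\TPS_\AgentIndex$, so the first sum is at most $|\HighSet_\AgentIndex|\cdot\TPS_\AgentIndex$. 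Every low good has $\AgentValuation_\AgentIndex(\GoodIndex)<\UniversalFactorValue\cdot\TPS_\AgentIndex\le\TPS_\AgentIndex$, so each low-good minimum equals $\AgentValuation_\AgentIndex(\GoodIndex)$. Together these give $\AgentValuation_\AgentIndex(\LowSet_\AgentIndex)\ge(\AgentCount-|\HighSet_\AgentIndex|)\cdot\TPS_\AgentIndex$. The case $\TPS_\AgentIndex=0$ makes the claimed inequality read $\AgentValuation_\AgentIndex(\LowGoodBundle_\AgentIndex)\ge0$, which is trivial, so it can be set aside.

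Step two accounts for the omitted portions. By \Cref{def:truncated-low-good-bundle}, $\LowGoodBundle_\AgentIndex$ vanishes on $\HighSet_\AgentIndex$. Hence
\begin{align*}
\AgentValuation_\AgentIndex(\LowGoodBundle_\AgentIndex)
=\AgentValuation_\AgentIndex(\LowSet_\AgentIndex)-\sum_{\GoodIndex\in\LowSet_\AgentIndex}(1-\LowGoodBundle_{\AgentIndex\GoodIndex})\cdot\AgentValuation_\AgentIndex(\GoodIndex).
\end{align*}
Each weight $1-\LowGoodBundle_{\AgentIndex\GoodIndex}$ is nonnegative because the bundle is capped at one. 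Each value in the sum is below $\UniversalFactorValue\cdot\TPS_\AgentIndex$. So the subtracted sum is at most $\UniversalFactorValue\cdot\TPS_\AgentIndex\cdot\UnheldLowGoodBundle_\AgentIndex(\LowSet_\AgentIndex)$. Combining this with step one yields the stated inequality.

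I expect no real obstacle. The only points needing care are that the low-good threshold $\UniversalFactorValue\cdot\TPS_\AgentIndex$ lies below $\TPS_\AgentIndex$, so truncation in \cref{eq:tps-tight} never bites on low goods, and that the cap $\LowGoodBundle_{\AgentIndex\GoodIndex}\le1$ keeps every omitted weight nonnegative.
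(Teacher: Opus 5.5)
Your proposal is correct and follows essentially the same route as the paper. Both lower-bound $v_i(L_i)$ by $(n-|H_i|)\cdot\mathsf{TPS}_i$ via the identity \cref{eq:tps-tight}, capping the high-good terms at $\mathsf{TPS}_i$, and then subtract the omitted low-good portions, each charged less than $\frac17\cdot\mathsf{TPS}_i$. The paper needs only $\min\{v_i(g),\mathsf{TPS}_i\}\le v_i(g)$ on low goods rather than equality, and your separate treatment of $\mathsf{TPS}_i=0$ is harmless but unnecessary.
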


\begin{proof}
Fix a deficient agent $\AgentIndex$.  For the value $\AgentValuation_\AgentIndex(\LowSet_\AgentIndex)$ of her low
goods, we have
\begin{align}
\AgentValuation_\AgentIndex(\LowSet_\AgentIndex)
&\overset{(a)}{\geq}\sum_{\GoodIndex\in\LowSet_\AgentIndex}
\min\{\AgentValuation_\AgentIndex(\GoodIndex),\TPS_\AgentIndex\}
\notag\\
&\overset{(b)}{=}\sum_{\GoodIndex\in\GoodSet}
\min\{\AgentValuation_\AgentIndex(\GoodIndex),\TPS_\AgentIndex\}
-\sum_{\GoodIndex\in\HighSet_\AgentIndex}
\min\{\AgentValuation_\AgentIndex(\GoodIndex),\TPS_\AgentIndex\}
\notag\\
&\overset{(c)}{=}\AgentCount\cdot\TPS_\AgentIndex
-\sum_{\GoodIndex\in\HighSet_\AgentIndex}
\min\{\AgentValuation_\AgentIndex(\GoodIndex),\TPS_\AgentIndex\}
\notag\\
&\overset{(d)}{\geq}(\AgentCount-|\HighSet_\AgentIndex|)
\cdot\TPS_\AgentIndex,
\label{eq:low-good-baseline}
\end{align}
where inequality (a) holds because
$\min\{\AgentValuation_\AgentIndex(\GoodIndex),\TPS_\AgentIndex\}
\leq\AgentValuation_\AgentIndex(\GoodIndex)$, equality (b) splits $\GoodSet$
into $\LowSet_\AgentIndex$ and $\HighSet_\AgentIndex$, equality (c) is
\cref{eq:tps-tight}, and inequality (d) holds because
$\min\{\AgentValuation_\AgentIndex(\GoodIndex),\TPS_\AgentIndex\}
\leq\TPS_\AgentIndex$.

Then, for her value of her low-good bundle $\LowGoodBundle_\AgentIndex$, we have
\begin{align*}
\AgentValuation_\AgentIndex(\LowGoodBundle_\AgentIndex)
&\overset{(a)}{=}\sum_{\GoodIndex\in\LowSet_\AgentIndex}
\LowGoodBundle_{\AgentIndex\GoodIndex}\cdot
\AgentValuation_\AgentIndex(\GoodIndex)\\
&\overset{(b)}{=}\AgentValuation_\AgentIndex(\LowSet_\AgentIndex)
-\sum_{\GoodIndex\in\LowSet_\AgentIndex}
(1-\LowGoodBundle_{\AgentIndex\GoodIndex})\cdot
\AgentValuation_\AgentIndex(\GoodIndex)\\
&\overset{(c)}{\geq}\AgentValuation_\AgentIndex(\LowSet_\AgentIndex)
-\UniversalFactorValue\cdot\TPS_\AgentIndex\cdot
\sum_{\GoodIndex\in\LowSet_\AgentIndex}
(1-\LowGoodBundle_{\AgentIndex\GoodIndex})\\
&\overset{(d)}{=}\AgentValuation_\AgentIndex(\LowSet_\AgentIndex)
-\UniversalFactorValue\cdot\UnheldLowGoodBundle_\AgentIndex(\LowSet_\AgentIndex)\cdot
\TPS_\AgentIndex\\
&\overset{(e)}{\geq}(\AgentCount-|\HighSet_\AgentIndex|-
\UniversalFactorValue\cdot\UnheldLowGoodBundle_\AgentIndex(\LowSet_\AgentIndex))\cdot
\TPS_\AgentIndex,
\end{align*}
where equality (a) holds because $\LowGoodBundle_\AgentIndex$ is zero on
$\HighSet_\AgentIndex$, equality (b) adds and subtracts
$\AgentValuation_\AgentIndex(\LowSet_\AgentIndex)$, inequality (c) holds because
$1-\LowGoodBundle_{\AgentIndex\GoodIndex}\geq0$ and
$\AgentValuation_\AgentIndex(\GoodIndex)<
\UniversalFactorValue\cdot\TPS_\AgentIndex$ for every low good $\GoodIndex\in\LowSet_\AgentIndex$,
equality (d) follows from the definition of her unallocated low-good
\amountName $\UnheldLowGoodBundle_\AgentIndex(\LowSet_\AgentIndex)$ in
\Cref{def:truncated-low-good-bundle}, and inequality (e) follows from
\cref{eq:low-good-baseline}.  This completes the proof of \Cref{lem:carrier-value}.
\end{proof}

\paragraph{Feasibility issues of the \lowGoodFractionalBundlesName.}
Further analysis based on \Cref{lem:carrier-value} shows that
the \lowGoodFractionalBundleName alone is already worth a constant fraction
of the TPS.

\begin{lemma}
\label{lem:truncated-low-good-constant-value}
For every deficient agent $\AgentIndex\in\DeficientAgents$, her
\lowGoodFractionalBundleName $\LowGoodBundle_\AgentIndex$ and TPS value $\TPS_\AgentIndex$
satisfy
$\AgentValuation_\AgentIndex(\LowGoodBundle_\AgentIndex)
\geq\frac57\cdot\TPS_\AgentIndex$.
\end{lemma}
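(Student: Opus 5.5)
The plan is to combine \Cref{lem:carrier-value} with an upper bound on the unallocated low-good \amountName $\UnheldLowGoodBundle_\AgentIndex(\LowSet_\AgentIndex)$. Write $k\deq\AgentCount-|\HighSet_\AgentIndex|$, which is a positive integer by \Cref{lem:deficient-structure}. I aim to show $\UnheldLowGoodBundle_\AgentIndex(\LowSet_\AgentIndex)\leq 2k-1$. \Cref{lem:carrier-value} then gives $\AgentValuation_\AgentIndex(\LowGoodBundle_\AgentIndex)\geq\bigl(k-\tfrac{2k-1}{7}\bigr)\TPS_\AgentIndex=\tfrac{5k+1}{7}\TPS_\AgentIndex\geq\tfrac67\TPS_\AgentIndex$, which is more than the claimed $\tfrac57\TPS_\AgentIndex$.

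To bound the unallocated amount, I discard the first candidate in \Cref{def:truncated-low-good-bundle}: since the bundle takes a maximum, this only weakens the bound. So $\LowGoodBundle_{\AgentIndex\GoodIndex}\geq\min\{\AgentCount\TruthfulMarginal_{\AgentIndex\GoodIndex},1\}$ for every low good. By \cref{eq:marginal-cases}, every $\GoodIndex\in\TopSet_\AgentIndex$ has $\TruthfulMarginal_{\AgentIndex\GoodIndex}\geq1/\AgentCount$. Hence every low good in the top set is held in full and contributes nothing to $\UnheldLowGoodBundle_\AgentIndex(\LowSet_\AgentIndex)$. By \Cref{lem:deficient-structure}, $\HighSet_\AgentIndex\subseteq\TopSet_\AgentIndex$, so every good outside $\TopSet_\AgentIndex$ is low.

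Now fix $\GoodIndex\notin\TopSet_\AgentIndex$ and let $c_\GoodIndex\deq\AgentCount-\OmissionCount_\GoodIndex$ be the number of other agents whose top sets contain $\GoodIndex$. By \cref{eq:marginal-cases},
\begin{align*}
1-\AgentCount\TruthfulMarginal_{\AgentIndex\GoodIndex}=1-\frac{\OmissionCount_\GoodIndex-1}{\AgentCount-1}=\frac{c_\GoodIndex}{\AgentCount-1}.
\end{align*}
Therefore $\UnheldLowGoodBundle_\AgentIndex(\LowSet_\AgentIndex)\leq\frac1{\AgentCount-1}\sum_{\GoodIndex\notin\TopSet_\AgentIndex}c_\GoodIndex$. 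The other $\AgentCount-1$ agents have $(\AgentCount-1)^2$ top incidences in total, so
\begin{align*}
\sum_{\GoodIndex\notin\TopSet_\AgentIndex}c_\GoodIndex\leq(\AgentCount-1)^2-\sum_{\GoodIndex\in\HighSet_\AgentIndex}c_\GoodIndex.
\end{align*}
For $\GoodIndex\in\HighSet_\AgentIndex\subseteq\TopSet_\AgentIndex$, agent $\AgentIndex$ herself lists $\GoodIndex$, so $c_\GoodIndex=\AgentCount-1-\OmissionCount_\GoodIndex$.

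The step that needs care is seeing that this crude count is strong enough. A bound of $\AgentCount-1$ units alone would fail for large $\AgentCount$; the fix is to charge the incidences on $\HighSet_\AgentIndex$ against deficiency via \cref{eq:missing-high-marginal}. Substituting the expression for $c_\GoodIndex$ gives
\begin{align*}
\UnheldLowGoodBundle_\AgentIndex(\LowSet_\AgentIndex)\leq(\AgentCount-1)-|\HighSet_\AgentIndex|+\frac1{\AgentCount-1}\sum_{\GoodIndex\in\HighSet_\AgentIndex}\OmissionCount_\GoodIndex.
\end{align*}
By \cref{eq:missing-high-marginal}, the last term equals $k-\AgentCount(1-\TruthfulMarginal_\AgentIndex(\HighSet_\AgentIndex))<k$, because the agent is deficient. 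Hence $\UnheldLowGoodBundle_\AgentIndex(\LowSet_\AgentIndex)<(k-1)+k=2k-1$. Plugging this into \Cref{lem:carrier-value} finishes the proof.
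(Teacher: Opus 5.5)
Your proof is correct and follows essentially the same route as the paper. Both combine \Cref{lem:carrier-value} with an upper bound on $\UnheldLowGoodBundle_\AgentIndex(\LowSet_\AgentIndex)$, obtained from three facts: low top goods are held in full, $1-\LowGoodBundle_{\AgentIndex\GoodIndex}\leq1-\AgentCount\cdot\TruthfulMarginal_{\AgentIndex\GoodIndex}$ outside the top set, and the remainder is charged against deficiency on $\HighSet_\AgentIndex$. Your count of the other agents' $(\AgentCount-1)^2$ top incidences is the unpacked form of the paper's row sum $\AgentCount\cdot\TruthfulMarginal_\AgentIndex(\GoodSet)=\GoodCount$ together with $\AgentCount\cdot\TruthfulMarginal_{\AgentIndex\GoodIndex}\leq2$ on $\TopSet_\AgentIndex\setminus\HighSet_\AgentIndex$. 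Your bookkeeping is slightly tighter: you keep $|\TopSet_\AgentIndex\setminus\HighSet_\AgentIndex|=k-1$ and the strict inequality, so you get $\UnheldLowGoodBundle_\AgentIndex(\LowSet_\AgentIndex)<2k-1$ and $\frac67\cdot\TPS_\AgentIndex$. The paper settles for $2\cdot(\AgentCount-|\HighSet_\AgentIndex|)$ and $\frac57$, so the improvement is harmless but not needed.
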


\noindent The proof is deferred to
\Cref{app:proof-truncated-low-good-constant-value}.
% the lemma is not used inany subsequent proof.

\biaoshuai{I would put the lemma statement here.}
Thus, we may hope to give each deficient agent $\AgentIndex$
her low-good fractional bundle $\LowGoodBundle_\AgentIndex$ in every color in
which she receives her dummy.  However, there are two feasibility
issues with the \lowGoodFractionalBundlesName $\LowGoodBundle$.

First, some low-good \portionsName are blocked: a low good of a deficient
agent may already be reserved for another agent in a given color.
Suppose that a deficient agent $\AgentIndex$ receives her dummy and gets zero value in color $\ColorIndex$.
If the entire fractional bundle $\LowGoodBundle_\AgentIndex$ were available to her, she would obtain the value guaranteed in \Cref{lem:carrier-value}.
However, a good $\GoodIndex$ with $\LowGoodBundle_{\AgentIndex\GoodIndex} > 0$ may be reserved for another agent in color $\ColorIndex$; in that case, agent $\AgentIndex$ receives no part of good $\GoodIndex$.
We then say that agent $\AgentIndex$
is \emph{blocked} at good $\GoodIndex$ in color $\ColorIndex$, and her blocked \portionName is
$\LowGoodBundle_{\AgentIndex\GoodIndex}$.
In the \lowGoodFractionalBundlesName of
\Cref{eg:truncated-low-good}, agent $\AgentIndex_3$'s \portionName of good
$\GoodIndex_2$ is $\LowGoodBundle_{\AgentIndex_3\GoodIndex_2}=1$.
In color $4$ of \Cref{fig:high-good-edge-coloring},
agent $\AgentIndex_3$ receives her dummy, while good $\GoodIndex_2$
is reserved for agent $\AgentIndex_1$.
Thus agent $\AgentIndex_3$ receives no part of good $\GoodIndex_2$
in that color and her blocked \portionName there is $1$.
In color $3$ of the same figure, agent $\AgentIndex_3$ also receives
her dummy, but good $\GoodIndex_2$ is unreserved, so she is not blocked
at that good.
Recall that $\ReservedSet^\ColorIndex$ is the set of goods reserved for color $\ColorIndex$.
Thus agent $\AgentIndex$'s blocked \portionsName in color $\ColorIndex$
sum to $\LowGoodBundle_\AgentIndex(\ReservedSet^\ColorIndex)$.

Second, if every deficient agent who receives her dummy in a color is given
her entire \lowGoodFractionalBundleName $\LowGoodBundle_\AgentIndex$, the
total \amountName of a good in
that color may exceed one.  By \Cref{def:truncated-low-good-bundle}, each
agent's \portionName of the good is at most one, but the sum of these
\portionsName may exceed one.
For example, in color 3 of \Cref{fig:high-good-edge-coloring}, agents
$\AgentIndex_2$ and $\AgentIndex_3$ both receive their dummies, and good
$\GoodIndex_3$ is unallocated.  Their \portionsName of this good in the
\lowGoodFractionalBundlesName are
$\LowGoodBundle_{\AgentIndex_2\GoodIndex_3}
=\LowGoodBundle_{\AgentIndex_3\GoodIndex_3}=1$ by
\Cref{eg:truncated-low-good}.  Giving both agents these \portionsName would allocate
two units of good $\GoodIndex_3$.
In this case, we say that good $\GoodIndex_3$ is \emph{over-allocated}.

To resolve blocking, we assign no \portionName of a good reserved in a color
to the agents receiving their dummies in that color;
\Cref{subsec:common-set-losses} bounds this sum and hence the
resulting value loss.  To resolve over-allocation, we scale down each deficient
agent's unblocked low-good \portionsName and balance the reserved-good
counts and the dummy loads.  \Cref{subsec:proxy-definition}
chooses the scaling factors and proves that these steps make the total \amountName
of every good at most one in every color while retaining enough value for
each deficient agent who receives her dummy.

\subsection{Bounding unblocked value}
\label{subsec:common-set-losses}

For a proper $\GridSize$-edge-coloring of the \highGoodDummyGraphName
$\highGoodDummyGraph$ and a color $\ColorIndex\in\ColorSet$, let
$\ReservedSet^\ColorIndex$ be the set of goods reserved in that color.  For a
deficient agent $\AgentIndex\in\DeficientAgents$, her \emph{blocked
\amountName} in color $\ColorIndex$ is the total \amountName of reserved goods
in her \lowGoodFractionalBundleName,
$\LowGoodBundle_\AgentIndex(\ReservedSet^\ColorIndex)$.  Her
\emph{unblocked value} in color $\ColorIndex$ is her value from the unreserved
goods in her \lowGoodFractionalBundleName,
$\sum_{\GoodIndex\in\LowSet_\AgentIndex\setminus\ReservedSet^\ColorIndex}
\LowGoodBundle_{\AgentIndex\GoodIndex}\cdot\AgentValuation_\AgentIndex(\GoodIndex)$.

To bound each deficient agent's blocked \amountName in each color, we use
one fixed comparison set of goods.  Using the same set for
every agent and color separates the blocking bound into a count that depends
only on the color and a bound on an \amountName that depends only on the agent.

\begin{definition}
\label{def:common-set}
Given goods $\GoodSet$, agents $\AgentSet$, and valuation profile
$\ValuationProfile$, define the \emph{\commonSetName} $\CommonSet$ by
\begin{align*}
\CommonSet&\deq\ArgMin_{\substack{\DesignatedGoods\subseteq\GoodSet\\
|\DesignatedGoods|=\max_{\AgentIndex\in\DeficientAgents}|\HighSet_\AgentIndex|}}
\ \sum_{\GoodIndex\in\DesignatedGoods}\OmissionCount_\GoodIndex,
\end{align*}
with ties broken by the public order.
\end{definition}

The set is called popular because its goods belong to the largest numbers of agents’ top sets.
It need not be contained in every agent's
top set, and its goods can still be reserved and unavailable in a color.

For a color $\ColorIndex$, recall that $\ReservedSet^\ColorIndex$ is its
set of reserved goods.  The number of goods reserved in this color outside the \commonSetName
is $|\ReservedSet^\ColorIndex\setminus\CommonSet|$, and each of them
contributes at most one unit of blocked \amountName for any deficient
agent.
Inside the \commonSetName, the blocked \amountName is at most the agent's total
low-good \amountName there.  This upper bound is independent of the color,
although the actual blocked \amountName can vary between colors.

The following decomposition holds for any fixed comparison set; the
prescribed size and minimum total \nonTopCountName in \Cref{def:common-set}
play no role in it.  They are what later let us compare the \commonSetName
with a deficient agent's high-good set.

\begin{lemma}
\label{lem:blocked-low-weight}
For every proper $\GridSize$-edge-coloring of the \highGoodDummyGraphName
$\highGoodDummyGraph$, every deficient agent
$\AgentIndex\in\DeficientAgents$, and every color
$\ColorIndex\in\ColorSet$, the blocked \amountName
$\LowGoodBundle_\AgentIndex(\ReservedSet^\ColorIndex)$ satisfies
\begin{align*}
\LowGoodBundle_\AgentIndex(\ReservedSet^\ColorIndex)
&\leq|\ReservedSet^\ColorIndex\setminus\CommonSet|
+\LowGoodBundle_\AgentIndex(\CommonSet).
\end{align*}
\end{lemma}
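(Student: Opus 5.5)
The plan is to split the reserved set $\ReservedSet^\ColorIndex$ along the \commonSetName $\CommonSet$ and bound each piece separately, using only that $\LowGoodBundle_\AgentIndex$ is a feasible fractional bundle (every coordinate in $[0,1]$, as noted after \Cref{def:truncated-low-good-bundle}) and has nonnegative entries. Since the amount $\LowGoodBundle_\AgentIndex(\cdot)$ is additive over disjoint sets of goods, we will write
\begin{align*}
\LowGoodBundle_\AgentIndex(\ReservedSet^\ColorIndex)
=\LowGoodBundle_\AgentIndex(\ReservedSet^\ColorIndex\setminus\CommonSet)
+\LowGoodBundle_\AgentIndex(\ReservedSet^\ColorIndex\cap\CommonSet).
\end{align*}

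For the first term, each coordinate satisfies $\LowGoodBundle_{\AgentIndex\GoodIndex}\leq1$, so summing over the goods of $\ReservedSet^\ColorIndex\setminus\CommonSet$ gives at most $|\ReservedSet^\ColorIndex\setminus\CommonSet|$. For the second term, $\ReservedSet^\ColorIndex\cap\CommonSet\subseteq\CommonSet$ and all coordinates are nonnegative, so monotonicity of the sum gives $\LowGoodBundle_\AgentIndex(\ReservedSet^\ColorIndex\cap\CommonSet)\leq\LowGoodBundle_\AgentIndex(\CommonSet)$. Adding the two bounds yields the claimed inequality.

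Nothing about the coloring, the properness of the matching, or the defining minimization of $\CommonSet$ in \Cref{def:common-set} is needed; the argument works for any fixed set in place of $\CommonSet$, as the text preceding the statement anticipates. There is no real obstacle. The only points to check explicitly are that the entries of $\LowGoodBundle_\AgentIndex$ lie in $[0,1]$, which holds because of the outer $\min\{\cdot,1\}$ and the nonnegativity of $\TruthfulMarginal$ from \Cref{lem:marginal-feasibility}, and that $\ReservedSet^\ColorIndex$ is a set of distinct goods, so its size counts goods without multiplicity. The latter holds because each color is a matching.
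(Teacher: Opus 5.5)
Your proof is correct and is essentially the paper's own argument: split $\ReservedSet^\ColorIndex$ at $\CommonSet$, bound $\LowGoodBundle_\AgentIndex(\ReservedSet^\ColorIndex\setminus\CommonSet)$ by $|\ReservedSet^\ColorIndex\setminus\CommonSet|$ using $\LowGoodBundle_{\AgentIndex\GoodIndex}\leq1$, and bound $\LowGoodBundle_\AgentIndex(\ReservedSet^\ColorIndex\cap\CommonSet)$ by $\LowGoodBundle_\AgentIndex(\CommonSet)$ using nonnegativity. Your observation that neither the coloring nor the minimality of $\CommonSet$ is used also matches the paper's remark before the lemma.
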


\begin{proof}
Fix a deficient agent $\AgentIndex$ and a color $\ColorIndex$.
Splitting the reserved goods at the \commonSetName gives
\begin{align*}
\LowGoodBundle_\AgentIndex(\ReservedSet^\ColorIndex)
&=\LowGoodBundle_\AgentIndex(\ReservedSet^\ColorIndex\setminus\CommonSet)
+\LowGoodBundle_\AgentIndex(\ReservedSet^\ColorIndex\cap\CommonSet)
\leq\LowGoodBundle_\AgentIndex(\ReservedSet^\ColorIndex\setminus\CommonSet)
+\LowGoodBundle_\AgentIndex(\CommonSet)
\leq|\ReservedSet^\ColorIndex\setminus\CommonSet|
+\LowGoodBundle_\AgentIndex(\CommonSet).
\end{align*}
\end{proof}

For a deficient agent $\AgentIndex$, we now bound her unblocked value in
terms of her unallocated low-good \amountName $\UnheldLowGoodBundle_\AgentIndex(\LowSet_\AgentIndex)$,
her \amountName $\LowGoodBundle_\AgentIndex(\CommonSet)$ in the
\commonSetName, and the number $|\ReservedSet^\ColorIndex\setminus\CommonSet|$
of reserved goods outside the \commonSetName in that color.

\begin{lemma}
\label{lem:unreserved-value-per-color}
For any proper $\GridSize$-edge-coloring of the \highGoodDummyGraphName
$\highGoodDummyGraph$, every deficient
agent $\AgentIndex\in\DeficientAgents$ and color $\ColorIndex\in\ColorSet$
satisfy the following bound on her unblocked value in color $\ColorIndex$:
\begin{align*}
\sum_{\GoodIndex\in\LowSet_\AgentIndex\setminus\ReservedSet^\ColorIndex}
\LowGoodBundle_{\AgentIndex\GoodIndex}\cdot\AgentValuation_\AgentIndex(\GoodIndex)
&\geq\Bigl(7\cdot(\AgentCount-|\HighSet_\AgentIndex|)
-|\ReservedSet^\ColorIndex\setminus\CommonSet|
-\UnheldLowGoodBundle_\AgentIndex(\LowSet_\AgentIndex)-\LowGoodBundle_\AgentIndex(\CommonSet)\Bigr)
\cdot\UniversalFactorValue\cdot\TPS_\AgentIndex.
\end{align*}
\end{lemma}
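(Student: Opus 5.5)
The plan is to split the low-good bundle value into its unblocked part and its blocked part, bound the full value below by \Cref{lem:carrier-value}, and bound the blocked part above using \Cref{lem:blocked-low-weight}. Concretely, since $\LowGoodBundle_\AgentIndex$ vanishes on $\HighSet_\AgentIndex$, we have
\begin{align*}
\sum_{\GoodIndex\in\LowSet_\AgentIndex\setminus\ReservedSet^\ColorIndex}
\LowGoodBundle_{\AgentIndex\GoodIndex}\cdot\AgentValuation_\AgentIndex(\GoodIndex)
=\AgentValuation_\AgentIndex(\LowGoodBundle_\AgentIndex)
-\sum_{\GoodIndex\in\LowSet_\AgentIndex\cap\ReservedSet^\ColorIndex}
\LowGoodBundle_{\AgentIndex\GoodIndex}\cdot\AgentValuation_\AgentIndex(\GoodIndex).
\end{align*}

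For the first term, I would apply \Cref{lem:carrier-value} and rewrite it as
\[
\bigl(7\cdot(\AgentCount-|\HighSet_\AgentIndex|)-\UnheldLowGoodBundle_\AgentIndex(\LowSet_\AgentIndex)\bigr)\cdot\UniversalFactorValue\cdot\TPS_\AgentIndex .
\]

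For the subtracted term, every good in the sum is low for agent $\AgentIndex$. Hence $\AgentValuation_\AgentIndex(\GoodIndex)<\UniversalFactorValue\cdot\TPS_\AgentIndex$, and since the portions are nonnegative, the term is at most
\[
\UniversalFactorValue\cdot\TPS_\AgentIndex\cdot\LowGoodBundle_\AgentIndex(\ReservedSet^\ColorIndex\cap\LowSet_\AgentIndex)=\UniversalFactorValue\cdot\TPS_\AgentIndex\cdot\LowGoodBundle_\AgentIndex(\ReservedSet^\ColorIndex),
\]
again using that $\LowGoodBundle_\AgentIndex$ is zero on high goods. By \Cref{lem:blocked-low-weight}, this is at most
\[
\bigl(|\ReservedSet^\ColorIndex\setminus\CommonSet|+\LowGoodBundle_\AgentIndex(\CommonSet)\bigr)\cdot\UniversalFactorValue\cdot\TPS_\AgentIndex .
\]
Subtracting this from the first bound gives exactly the claimed inequality.

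There is no real obstacle here; the argument is routine bookkeeping. The only points needing care are to replace each good's value by the low-good threshold before summing portions, which is valid because the portions are nonnegative, and to use the zero entries of $\LowGoodBundle_\AgentIndex$ on $\HighSet_\AgentIndex$ so that the sum over $\LowSet_\AgentIndex\cap\ReservedSet^\ColorIndex$ matches the blocked amount in \Cref{lem:blocked-low-weight}.
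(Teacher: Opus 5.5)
Your proposal is correct and follows the paper's proof essentially step for step. Like the paper, you subtract the blocked part from $\AgentValuation_\AgentIndex(\LowGoodBundle_\AgentIndex)$, apply \Cref{lem:carrier-value}, bound each blocked low good's value by $\UniversalFactorValue\cdot\TPS_\AgentIndex$, and finish with \Cref{lem:blocked-low-weight}.
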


\begin{proof}
Fix a proper $\GridSize$-edge-coloring, a deficient agent $\AgentIndex$, and
a color $\ColorIndex$.
Subtracting the value of her blocked \portionsName from her bundle's value and using that each
low good is worth less than
$\UniversalFactorValue\cdot\TPS_\AgentIndex$ gives
\begin{align*}
\sum_{\GoodIndex\in\LowSet_\AgentIndex\setminus\ReservedSet^\ColorIndex}
\LowGoodBundle_{\AgentIndex\GoodIndex}\cdot\AgentValuation_\AgentIndex(\GoodIndex)
&=\sum_{\GoodIndex\in\LowSet_\AgentIndex}
\LowGoodBundle_{\AgentIndex\GoodIndex}\cdot\AgentValuation_\AgentIndex(\GoodIndex)
-\sum_{\GoodIndex\in\LowSet_\AgentIndex\cap\ReservedSet^\ColorIndex}
\LowGoodBundle_{\AgentIndex\GoodIndex}\cdot\AgentValuation_\AgentIndex(\GoodIndex)
\\
&\overset{(a)}{=}\AgentValuation_\AgentIndex(\LowGoodBundle_\AgentIndex)
-\sum_{\GoodIndex\in\LowSet_\AgentIndex\cap\ReservedSet^\ColorIndex}
\LowGoodBundle_{\AgentIndex\GoodIndex}\cdot\AgentValuation_\AgentIndex(\GoodIndex)
\\
&\overset{(b)}{\geq}
\left(\AgentCount-|\HighSet_\AgentIndex|
-\UniversalFactorValue\cdot\UnheldLowGoodBundle_\AgentIndex(\LowSet_\AgentIndex)\right)
\cdot\TPS_\AgentIndex
-\sum_{\GoodIndex\in\LowSet_\AgentIndex\cap\ReservedSet^\ColorIndex}
\LowGoodBundle_{\AgentIndex\GoodIndex}\cdot\AgentValuation_\AgentIndex(\GoodIndex)
\\
&\overset{(c)}{\geq}
\left(\AgentCount-|\HighSet_\AgentIndex|
-\UniversalFactorValue\cdot(\UnheldLowGoodBundle_\AgentIndex(\LowSet_\AgentIndex)+\LowGoodBundle_\AgentIndex(\ReservedSet^\ColorIndex))\right)
\cdot\TPS_\AgentIndex
\\
&\overset{(d)}{\geq}\Bigl(7\cdot(\AgentCount-|\HighSet_\AgentIndex|)
-|\ReservedSet^\ColorIndex\setminus\CommonSet|
-\UnheldLowGoodBundle_\AgentIndex(\LowSet_\AgentIndex)-\LowGoodBundle_\AgentIndex(\CommonSet)\Bigr)
\cdot\UniversalFactorValue\cdot\TPS_\AgentIndex,
\end{align*}
where equality (a) holds because
$\AgentValuation_\AgentIndex(\LowGoodBundle_\AgentIndex)
=\sum_{\GoodIndex\in\GoodSet}\LowGoodBundle_{\AgentIndex\GoodIndex}
\cdot\AgentValuation_\AgentIndex(\GoodIndex)$
and $\LowGoodBundle_{\AgentIndex\GoodIndex}=0$ for every high good
$\GoodIndex\in\HighSet_\AgentIndex$, by \Cref{def:truncated-low-good-bundle};
inequality (b) applies the value bound of \Cref{lem:carrier-value};
inequality (c) holds because every good
$\GoodIndex\in\LowSet_\AgentIndex$ is low, so
$\AgentValuation_\AgentIndex(\GoodIndex)<\UniversalFactorValue\cdot\TPS_\AgentIndex$
by \Cref{def:high-low-goods};
inequality (d) applies the bound on the blocked \amountName of
\Cref{lem:blocked-low-weight}.
\end{proof}

Recall the \commonSetName $\CommonSet$ and, for each color $\ColorIndex$,
its reserved-good set $\ReservedSet^\ColorIndex$.
The bound in \Cref{lem:unreserved-value-per-color} depends on the count
$|\ReservedSet^\ColorIndex\setminus\CommonSet|$, which can vary widely
from color to color.  This variation complicates choosing a single scaling
factor for the unblocked \portionsName of each deficient agent
$\AgentIndex$'s \lowGoodFractionalBundleName $\LowGoodBundle_\AgentIndex$
that works in every color where she receives her dummy.
Fortunately, starting from the current proper edge-coloring of the
\highGoodDummyGraphName $\highGoodDummyGraph$, we can balance these counts
by recoloring edges while preserving properness, so that they differ by at
most one between any two colors.  Every count is then at most the ceiling
of their average, which we express next and which does not depend on the
coloring.

For a proper
$\GridSize$-edge-coloring of the \highGoodDummyGraphName
$\highGoodDummyGraph$, write
$\ColorAverage_\ColorIndex|\ReservedSet^\ColorIndex\setminus\CommonSet|$ for
the average number of reserved goods outside the \commonSetName over the
$\GridSize$ colors.  Each color reserves each good for at most one agent,
agent $\AgentIndex$ is matched to good $\GoodIndex$ in exactly
$\GridSize\cdot\TruthfulMarginalHigh_{\AgentIndex\GoodIndex}$ colors, and
$\TruthfulMarginalHigh_\AgentIndex$ vanishes outside $\HighSet_\AgentIndex$.
Hence
\begin{align}
\ColorAverage_\ColorIndex|\ReservedSet^\ColorIndex\setminus\CommonSet|
&=\sum_{\AgentIndex\in\AgentSet}
\TruthfulMarginalHigh_\AgentIndex(\HighSet_\AgentIndex\setminus\CommonSet).
\label{eq:mean-reserved-count}
\end{align}

The right-hand side depends only on the fixed high-good marginal, so this
average is the same for every proper $\GridSize$-edge-coloring of
$\highGoodDummyGraph$.

We call a proper
$\GridSize$-edge-coloring of $\highGoodDummyGraph$
\emph{\reservationBalancedName} if these counts differ by at most one
between any two colors, that is,
$|\ReservedSet^\ColorIndex\setminus\CommonSet|
-|\ReservedSet^\SecondColorIndex\setminus\CommonSet|\in\{-1,0,1\}$ for all
colors $\ColorIndex,\SecondColorIndex\in\ColorSet$.

\begin{restatable}{lemma}{reservationbalancing}
\label{lem:reservation-balancing}
Given any proper $\GridSize$-edge-coloring $\Coloring$ of the
\highGoodDummyGraphName $\highGoodDummyGraph$, one can construct in
polynomial time a \reservationBalancedName coloring
$\ReservationBalancedColoring$ of $\highGoodDummyGraph$, i.e., for any two
colors $\ColorIndex,\SecondColorIndex\in\ColorSet$ in
$\ReservationBalancedColoring$, their reserved-good sets
$\ReservedSet^\ColorIndex$ and $\ReservedSet^\SecondColorIndex$ have counts
outside the \commonSetName $\CommonSet$ satisfying
$|\ReservedSet^\ColorIndex\setminus\CommonSet|
-|\ReservedSet^\SecondColorIndex\setminus\CommonSet|\in\{-1,0,1\}$.
\end{restatable}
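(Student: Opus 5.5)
We repeatedly apply the alternating-path exchange of \Cref{lem:alternating-path-exchange} to a pair of colors whose counts are furthest apart, and track a potential function. Let $N_\ColorIndex\deq|\ReservedSet^\ColorIndex\setminus\CommonSet|$, the number of edges of color $\ColorIndex$ whose right endpoint is a good outside $\CommonSet$. While the coloring is not \reservationBalancedName, pick colors $\ColorIndex$ and $\SecondColorIndex$ with $N_\ColorIndex\geq N_\SecondColorIndex+2$. We will find a path component of $\highGoodDummyGraph(\ColorIndex,\SecondColorIndex)$ whose exchange lowers $N_\ColorIndex$ by exactly one and raises $N_\SecondColorIndex$ by exactly one, leaving all other colors untouched. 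By \Cref{lem:alternating-path-exchange}(ii), properness is preserved, so the high-good marginals and the dummy counts are unchanged.

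The key counting step is as follows. Every left vertex has degree $\GridSize$ in $\highGoodDummyGraph$, so \Cref{lem:alternating-path-exchange}(i) applies. Every component is an alternating cycle or a path whose two endpoints are right vertices incident to edges of different colors. Call a right vertex \emph{outside} if it is a good not in $\CommonSet$. A right vertex of degree two in the component contributes to both $N_\ColorIndex$ and $N_\SecondColorIndex$ equally, or to neither. Hence $N_\ColorIndex-N_\SecondColorIndex$ equals the sum, over path components, of the number of endpoints that are outside and incident to color $\ColorIndex$, minus those outside and incident to color $\SecondColorIndex$. Since each path has exactly one $\ColorIndex$-endpoint and one $\SecondColorIndex$-endpoint, each path contributes $-1$, $0$, or $+1$. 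Because the total is at least $2>0$, some path $P$ contributes $+1$: its $\ColorIndex$-endpoint is an outside good, and its $\SecondColorIndex$-endpoint is in $\CommonSet$ or is a dummy. Swapping colors on $P$ moves the outside endpoint's edge to color $\SecondColorIndex$ and the other endpoint's edge to color $\ColorIndex$. Internal vertices keep one edge of each color. So $N_\ColorIndex$ decreases by one and $N_\SecondColorIndex$ increases by one.

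For termination, use the potential $\Phi\deq\sum_{\ColorIndex}N_\ColorIndex^2$. A move with $N_\ColorIndex\geq N_\SecondColorIndex+2$ changes $\Phi$ by $-2N_\ColorIndex+2N_\SecondColorIndex+2\leq-2$. Since $0\leq\Phi\leq\GridSize\cdot\GoodCount^2$, there are at most polynomially many steps. Each step needs only component computation in $\highGoodDummyGraph(\ColorIndex,\SecondColorIndex)$, which has polynomial size because the multigraph has at most $\AgentCount\cdot\GridSize$ edges. When no pair differs by two or more, the coloring is \reservationBalancedName.

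The main point to get right is the parity and endpoint argument: it must guarantee that a path with contribution $+1$ exists, rather than merely a positive surplus spread across cycles. That is settled by the observation that cycles and internal vertices contribute equally to both counts. I would state this carefully, including the two-parallel-edge cycles, which contribute equally as well.
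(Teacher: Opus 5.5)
Your proof is correct and follows the paper's argument. It uses the same alternating-path exchange between two colors whose counts differ by at least two, and the same endpoint count: cycles and internal right vertices contribute equally to both counts, so some path component must contribute $+1$.

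The only difference is the termination argument. You use $\Phi=\sum_c N_c^2$, which drops by at least $2$ per exchange and lets you exchange along any pair of colors with gap at least two. The paper instead always picks a maximum-count and a minimum-count color. It then uses the total excess above the ceiling of the mean plus the shortfall below the floor of the mean, which gives at most $K\cdot n$ exchanges. Your potential yields $O(Kn^2)$ exchanges, which is still polynomial. Note that $N_c\le n$, since each color has exactly $n$ edges, so this bound is tighter than your $Km^2$.
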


\begin{proof}[Proof sketch]
We balance the reserved-good counts outside the \commonSetName by
exchanging colors along alternating paths.  For the given coloring
$\Coloring$, choose a color
$\LargestCountColor$ with the largest count
$|\ReservedSet^\LargestCountColor\setminus\CommonSet|$ and a color
$\SmallestCountColor$ with the smallest count
$|\ReservedSet^\SmallestCountColor\setminus\CommonSet|$.
If these counts differ by at least two, consider the subgraph
$\highGoodDummyGraph(\LargestCountColor,\SmallestCountColor)$ of the
$\highGoodDummyGraph$ consisting of the edges of
these two colors and their incident vertices.  By
\Cref{lem:alternating-path-exchange} and the choice of colors, this
subgraph contains an alternating path component with one endpoint in
$\ReservedSet^\LargestCountColor\setminus\CommonSet$ incident to an edge
of color $\LargestCountColor$, and the other endpoint a dummy or a good
in $\CommonSet$ incident to an edge of color $\SmallestCountColor$.
Exchange the two colors
along this path.  This preserves properness, decreases the count
$|\ReservedSet^\LargestCountColor\setminus\CommonSet|$ by one, and
increases the count
$|\ReservedSet^\SmallestCountColor\setminus\CommonSet|$ by one.

We iteratively choose colors $\LargestCountColor$ and
$\SmallestCountColor$ as above and exchange the two colors along such a
path, until the counts of any two colors differ by at most one.
Since the number of reserved goods in each color is at most the number
of agents, the number of exchanges is at most the number of agents times
the number of colors.  The full proof is given in
\Cref{app:proof-reservation-balancing}.
\end{proof}

Since the reserved-good counts outside the \commonSetName are integers,
every \reservationBalancedName coloring satisfies
$|\ReservedSet^\ColorIndex\setminus\CommonSet|
\leq\lrceiling{\ColorAverage_\ColorIndex|\ReservedSet^\ColorIndex\setminus\CommonSet|}$
for every color $\ColorIndex\in\ColorSet$.
By \cref{eq:mean-reserved-count} and \Cref{lem:reservation-balancing}, we
immediately obtain the following value bound in terms of the fixed
\highGoodFractionalSuballocationName.

\begin{proposition}
\label{lem:unreserved-value-bound}
For any \reservationBalancedName coloring of the \highGoodDummyGraphName
$\highGoodDummyGraph$, every deficient
agent $\AgentIndex\in\DeficientAgents$ and color $\ColorIndex\in\ColorSet$
satisfy the following bound on her unblocked value in color $\ColorIndex$:
\begin{align*}
\sum_{\GoodIndex\in\LowSet_\AgentIndex\setminus\ReservedSet^\ColorIndex}
\LowGoodBundle_{\AgentIndex\GoodIndex}\cdot\AgentValuation_\AgentIndex(\GoodIndex)
&\geq\Bigl(7\cdot(\AgentCount-|\HighSet_\AgentIndex|)
-\lrceiling{\sum\nolimits_{\OtherAgentIndex\in\AgentSet}
\TruthfulMarginalHigh_\OtherAgentIndex(\HighSet_\OtherAgentIndex\setminus\CommonSet)}
-\UnheldLowGoodBundle_\AgentIndex(\LowSet_\AgentIndex)-\LowGoodBundle_\AgentIndex(\CommonSet)\Bigr)
\cdot\UniversalFactorValue\cdot\TPS_\AgentIndex.
\end{align*}
\end{proposition}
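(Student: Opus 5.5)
The bound follows by combining \Cref{lem:unreserved-value-per-color} with the balancing property, and no new estimate is needed. Fix a \reservationBalancedName coloring, a deficient agent $\AgentIndex\in\DeficientAgents$, and a color $\ColorIndex\in\ColorSet$. \Cref{lem:unreserved-value-per-color} holds for every proper $\GridSize$-edge-coloring, so in particular for this one. It already gives her unblocked value in color $\ColorIndex$ the lower bound
\[
\Bigl(7\cdot(\AgentCount-|\HighSet_\AgentIndex|)
-|\ReservedSet^\ColorIndex\setminus\CommonSet|
-\UnheldLowGoodBundle_\AgentIndex(\LowSet_\AgentIndex)-\LowGoodBundle_\AgentIndex(\CommonSet)\Bigr)
\cdot\UniversalFactorValue\cdot\TPS_\AgentIndex .
\]
The only color-dependent term in this bound is $|\ReservedSet^\ColorIndex\setminus\CommonSet|$. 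It enters with a negative sign and is multiplied by the nonnegative factor $\UniversalFactorValue\cdot\TPS_\AgentIndex$. So it suffices to show that
\[
|\ReservedSet^\ColorIndex\setminus\CommonSet|\leq\lrceiling{\sum\nolimits_{\OtherAgentIndex\in\AgentSet}\TruthfulMarginalHigh_\OtherAgentIndex(\HighSet_\OtherAgentIndex\setminus\CommonSet)}
\]
and then substitute this into the bound.

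To prove this inequality, write $a_\SecondColorIndex\deq|\ReservedSet^\SecondColorIndex\setminus\CommonSet|$ for each color $\SecondColorIndex$. These are integers, and by reservation balance any two of them differ by at most one. Let $\mu$ be their average over the $\GridSize$ colors, and let $a_{\max}$ be the largest of them.

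Suppose for contradiction that $a_{\max}>\lceil\mu\rceil$. Since $a_{\max}$ is an integer, this gives $a_{\max}\geq\lceil\mu\rceil+1$. Reservation balance then forces every $a_\SecondColorIndex\geq a_{\max}-1\geq\lceil\mu\rceil\geq\mu$, with strict inequality $a_{\max}>\mu$ at the maximizing color. Averaging over colors gives $\mu>\mu$, a contradiction. Hence every count satisfies $a_\ColorIndex\leq\lceil\mu\rceil$, which is the remark made just before the statement.

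Finally, \cref{eq:mean-reserved-count} identifies $\mu$ with $\sum_{\OtherAgentIndex\in\AgentSet}\TruthfulMarginalHigh_\OtherAgentIndex(\HighSet_\OtherAgentIndex\setminus\CommonSet)$. Substituting this into the bound above yields the claim.

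The argument has no real obstacle. The only points to check are that the integrality and averaging argument gives the ceiling of the average rather than the floor, and that the coefficient multiplying the count is nonnegative, so that replacing the count by an upper bound preserves the direction of the inequality. Both hold, since $\TPS_\AgentIndex\geq0$.
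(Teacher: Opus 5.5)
Your proposal is correct and follows the paper's route. You apply \Cref{lem:unreserved-value-per-color}, bound the integer count $|\ReservedSet^\ColorIndex\setminus\CommonSet|$ by the ceiling of its coloring-independent average using reservation balance, and substitute \cref{eq:mean-reserved-count}; your contradiction argument just spells out the one-line integrality remark that the paper states immediately before the proposition.
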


\subsection{Scaling the unblocked low-good \portionsName}
\label{subsec:proxy-definition}
\label{subsec:capacity-proof}

For each deficient agent $\AgentIndex\in\DeficientAgents$ and every
\reservationBalancedName coloring, \Cref{lem:unreserved-value-bound} bounds
her unblocked value in each color from below by a coefficient times
$\UniversalFactorValue\cdot\TPS_\AgentIndex$, and this coefficient does not
depend on the coloring.  We define her scaling factor $\Proxy_\AgentIndex$
in \Cref{def:scaling-factor} as two divided by this coefficient.
\Cref{lem:scaling-factor-bounds} below establishes that this coefficient is
positive and that each deficient agent $\AgentIndex$'s scaling factor
$\Proxy_\AgentIndex$ satisfies $0<\Proxy_\AgentIndex\leq1/2$.  Scaling her
unblocked \portionsName in her \lowGoodFractionalBundleName
$\LowGoodBundle_\AgentIndex$ by $\Proxy_\AgentIndex$ therefore gives her a value of
at least
$\frac27\cdot\TPS_\AgentIndex$, twice her required
value, in every color where she receives her dummy.

\begin{definition}[Scaling factors for deficient agents]
\label{def:scaling-factor}
For every deficient agent $\AgentIndex\in\DeficientAgents$, define her
\emph{scaling factor} $\Proxy_\AgentIndex$ by
\begin{align*}
\Proxy_\AgentIndex
&\deq\frac{2}{7\cdot(\AgentCount-|\HighSet_\AgentIndex|)
-\lrceiling{\sum\nolimits_{\OtherAgentIndex\in\AgentSet}
\TruthfulMarginalHigh_\OtherAgentIndex(\HighSet_\OtherAgentIndex\setminus\CommonSet)}
-\UnheldLowGoodBundle_\AgentIndex(\LowSet_\AgentIndex)-\LowGoodBundle_\AgentIndex(\CommonSet)}.
\end{align*}
\end{definition}

Each term in \Cref{def:scaling-factor} does not depend on the coloring, so
neither does each scaling factor $\Proxy_\AgentIndex$.
Next, we formalize the scaled \lowGoodFractionalSuballocationName of each
color by the following definition.
Recall that $\DummyAgents^\ColorIndex$ denotes the set of dummy agents of
color $\ColorIndex$.

\begin{definition}[Scaled \lowGoodFractionalSuballocationName and dummy load]
\label{def:scaled-low-good-suballocation}
For a \reservationBalancedName coloring of the \highGoodDummyGraphName
$\highGoodDummyGraph$ and each color $\ColorIndex\in\ColorSet$, define the
\emph{scaled \lowGoodFractionalSuballocationName}
$\ColorCarrier$ of color $\ColorIndex$ by setting, for every agent
$\AgentIndex\in\AgentSet$ and good $\GoodIndex\in\GoodSet$,
\begin{align*}
\ColorCarrier_{\AgentIndex\GoodIndex}
&\deq\begin{cases}
\Proxy_\AgentIndex\cdot\LowGoodBundle_{\AgentIndex\GoodIndex},
&\AgentIndex\in\DummyAgents^\ColorIndex
\text{ and }\GoodIndex\notin\ReservedSet^\ColorIndex,\\
0,&\text{otherwise}.
\end{cases}
\end{align*}
Every agent $\AgentIndex\in\AgentSet$ receives the \emph{scaled
\lowGoodFractionalBundleName}
$\ColorCarrier_\AgentIndex=(\ColorCarrier_{\AgentIndex\GoodIndex})_{\GoodIndex\in\GoodSet}$.
The \emph{dummy load} of color $\ColorIndex$, denoted by
$\Proxy(\DummyAgents^\ColorIndex)$, is the sum of the scaling factors over
its dummy agents:
$\Proxy(\DummyAgents^\ColorIndex)\deq\sum_{\AgentIndex\in\DummyAgents^\ColorIndex}\Proxy_\AgentIndex$.
\end{definition}

Unlike the scaling factors, the scaled
\lowGoodFractionalSuballocationName $\ColorCarrier$ depends on the specific
coloring, and it varies from color to color, since both the set
$\DummyAgents^\ColorIndex$ of dummy agents and the reserved-good set
$\ReservedSet^\ColorIndex$ of color $\ColorIndex$ are determined by the
coloring.

\paragraph{Bounds on the scaling factors.}
Next, we bound the scaling factors in \Cref{def:scaling-factor}.  The
denominator of every scaling factor subtracts the same term
$\lrceiling{\sum\nolimits_{\OtherAgentIndex\in\AgentSet}
\TruthfulMarginalHigh_\OtherAgentIndex(\HighSet_\OtherAgentIndex\setminus\CommonSet)}$.
We first bound this term from above.

\begin{restatable}{lemma}{reservationcountbound}
\label{lem:reserved-count-ceiling}
For the \highGoodFractionalSuballocationName $\TruthfulMarginalHigh$ and
the \commonSetName $\CommonSet$, the ceiling of the total \amountName
assigned outside $\CommonSet$ satisfies
\begin{align*}
\lrceiling{\sum\nolimits_{\OtherAgentIndex\in\AgentSet}
\TruthfulMarginalHigh_\OtherAgentIndex(\HighSet_\OtherAgentIndex\setminus\CommonSet)}
&\leq4\cdot(\AgentCount-|\CommonSet|)-2.
\end{align*}
\end{restatable}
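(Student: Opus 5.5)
The plan is to bound the sum inside the ceiling by a quantity strictly below $4(\AgentCount-|\CommonSet|)-2$ plus something less than one. Since the right-hand side is an integer, the ceiling is then at most $4(\AgentCount-|\CommonSet|)-2$. Write $s\deq|\CommonSet|=\max_{\AgentIndex\in\DeficientAgents}|\HighSet_\AgentIndex|$, and fix a deficient agent $\AgentIndex^*$ attaining this maximum.

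\emph{Step 1: the \commonSetName is popular.} By \Cref{lem:deficient-structure}, $\HighSet_{\AgentIndex^*}\subseteq\TopSet_{\AgentIndex^*}$. Deficiency together with \cref{eq:missing-high-marginal} gives $\sum_{\GoodIndex\in\HighSet_{\AgentIndex^*}}\OmissionCount_\GoodIndex<(\AgentCount-1)(\AgentCount-s)$. The set $\HighSet_{\AgentIndex^*}$ has size $s$ and is therefore a candidate in \Cref{def:common-set}, so the minimality of $\CommonSet$ yields $\sum_{\GoodIndex\in\CommonSet}\OmissionCount_\GoodIndex<(\AgentCount-1)(\AgentCount-s)$. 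The number of top-set incidences inside $\CommonSet$ is $\AgentCount s-\sum_{\CommonSet}\OmissionCount_\GoodIndex$. The total number of incidences is $\AgentCount(\AgentCount-1)$. Hence the number of top-set incidences on goods outside $\CommonSet$ is less than
\begin{align*}
\AgentCount(\AgentCount-1)-\AgentCount s+(\AgentCount-1)(\AgentCount-s)=2\AgentCount(\AgentCount-s)-2\AgentCount+s.
\end{align*}

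\emph{Step 2: split each agent's reserved mass.} Split $\TruthfulMarginalHigh_\OtherAgentIndex(\HighSet_\OtherAgentIndex\setminus\CommonSet)$ into its part on $\TopSet_\OtherAgentIndex\setminus\CommonSet$ and its part on $\HighSet_\OtherAgentIndex\setminus(\TopSet_\OtherAgentIndex\cup\CommonSet)$. For the top part, each incidence $(\OtherAgentIndex,\GoodIndex)$ with $\GoodIndex\in\TopSet_\OtherAgentIndex$ contributes at most $\TruthfulMarginal_{\OtherAgentIndex\GoodIndex}=\frac1\AgentCount+\frac{\OmissionCount_\GoodIndex}{\AgentCount(\AgentCount-1)}\le\frac2\AgentCount$ by \cref{eq:marginal-cases}. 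Multiplying by the incidence count from Step 1 gives a total top part below $4(\AgentCount-s)-4+2s/\AgentCount$.

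\emph{Step 3: the non-top part.} By \Cref{lem:deficient-structure}, deficient agents have no high goods outside their top sets. So only nondeficient agents contribute here. By the ordering in \Cref{def:high-good-fractional-suballocation}, such an agent takes mass outside her top set only when $\TopSet_\OtherAgentIndex\subseteq\HighSet_\OtherAgentIndex$ and $\TruthfulMarginal_\OtherAgentIndex(\TopSet_\OtherAgentIndex)<1$. Since $\TruthfulMarginal_\OtherAgentIndex(\TopSet_\OtherAgentIndex)\ge\frac{\AgentCount-1}\AgentCount$, she needs at most $\frac1\AgentCount-\sum_{\GoodIndex\in\TopSet_\OtherAgentIndex}\frac{\OmissionCount_\GoodIndex}{\AgentCount(\AgentCount-1)}\le\frac1\AgentCount$ more. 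These extras sum to at most $1$ over all agents.

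With this crude estimate the total is only below $4(\AgentCount-s)-3+2s/\AgentCount$. Since $s\le\AgentCount-1$, the term $2s/\AgentCount$ is less than $2$, so the estimate does not yet reach the target.

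\emph{Main obstacle: tightening the accounting.} Step 2 charged every incidence outside $\CommonSet$ the maximal rate $2/\AgentCount$. That rate is attained only when $\OmissionCount_\GoodIndex=\AgentCount-1$, that is, when the good lies in a single top set. To gain slack, I would carry the exact weight through: the top mass on good $\GoodIndex$ is at most $(\AgentCount-\OmissionCount_\GoodIndex)\bigl(\frac1\AgentCount+\frac{\OmissionCount_\GoodIndex}{\AgentCount(\AgentCount-1)}\bigr)$, which is also at most $1$ per good. I would then redo Step 1 with $\sum_{\GoodIndex\notin\CommonSet}(\AgentCount-\OmissionCount_\GoodIndex)$ and $\sum_{\GoodIndex\notin\CommonSet}(\AgentCount-\OmissionCount_\GoodIndex)\OmissionCount_\GoodIndex$ bounded jointly. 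On the non-top side, I would charge the extras of Step 3 against the penalty terms they cause (an agent with $\TopSet_\OtherAgentIndex\subseteq\HighSet_\OtherAgentIndex$ fully uses top goods). I would also use that mass already counted in a good's capacity of one cannot be counted twice. I expect this refinement to push the bound below $4(\AgentCount-s)-2$, which finishes the proof via the integrality argument in the first paragraph. The edge case $s=\AgentCount-1$, where the target is $2$, would be checked directly.
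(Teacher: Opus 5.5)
Your Steps 1--3 are correct as far as they go. As you observe yourself, though, they only bound the sum by $4(\AgentCount-s)-3+2s/\AgentCount$. For $s=\AgentCount-1$ this gives a total below $1+2(\AgentCount-1)/\AgentCount$ against a target of $2$, which does not force the ceiling down to $2$ once $\AgentCount\ge3$. Your final paragraph is a plan rather than an argument, so there is a genuine gap.

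The diagnosis in that paragraph also points the wrong way. The paper keeps exactly your rate $2/\AgentCount$ per top incidence. Finer per-good weights cannot by themselves supply the slack, since that rate is exact on goods lying in a single top set. The slack is lost in two other places:
\begin{itemize}
\item you substituted $\sum_{\GoodIndex\in\CommonSet}\OmissionCount_\GoodIndex<(\AgentCount-1)(\AgentCount-s)$ into the top part immediately;
\item you bounded each agent's non-top extra by the crude $1/\AgentCount$.
\end{itemize}
These two terms trade off against each other through $\sum_{\GoodIndex\in\CommonSet}\OmissionCount_\GoodIndex$. Decoupling them throws away exactly the amount you need.

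The missing idea is a second use of the minimality of $\CommonSet$, applied to every agent's top set rather than to one deficient agent's high set.
\begin{itemize}
\item Since $|\CommonSet|\le\AgentCount-1=|\TopSet_\OtherAgentIndex|$, any $|\CommonSet|$-subset of $\TopSet_\OtherAgentIndex$ is a candidate in \Cref{def:common-set}. The counts are nonnegative, so $\sum_{\GoodIndex\in\TopSet_\OtherAgentIndex}\OmissionCount_\GoodIndex\ge\sum_{\GoodIndex\in\CommonSet}\OmissionCount_\GoodIndex$.
\item Write $\tau\deq\frac{1}{\AgentCount-1}\sum_{\GoodIndex\in\CommonSet}\OmissionCount_\GoodIndex$. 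By your Step 1, $0\le\tau<\AgentCount-s$.
\item Your Step 3 then gives each agent an extra of at most $\frac{1}{\AgentCount}\max\{0,1-\tau\}$, hence at most $\max\{0,1-\tau\}$ in total.
\item Your Step 2, without the early substitution, gives a top part of at most $\frac{2}{\AgentCount}\bigl(\AgentCount(\AgentCount-1-s)+(\AgentCount-1)\tau\bigr)\le 2(\AgentCount-s-1)+2\tau$.
\item Because $\tau<\AgentCount-s$ and $\AgentCount-s\ge1$, we have $\max\{0,1-\tau\}\le(\AgentCount-s)-\tau$.
\end{itemize}
So the sum is at most $3(\AgentCount-s)-2+\tau<4(\AgentCount-s)-2$, and your integrality argument finishes the proof. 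No separate treatment of $s=\AgentCount-1$ is needed. This is precisely the paper's route via \Cref{lem:reservation-estimates}.
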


By \Cref{lem:reserved-count-ceiling}, the difference between the
right-hand side and the left-hand side of its inequality is a nonnegative
integer.  We single it out and define
\begin{align}
\ReservationCountSlack
&\deq4\cdot(\AgentCount-|\CommonSet|)-2
-\lrceiling{\sum\nolimits_{\OtherAgentIndex\in\AgentSet}
\TruthfulMarginalHigh_\OtherAgentIndex(\HighSet_\OtherAgentIndex\setminus\CommonSet)}.
\label{eq:reservation-count-slack}
\end{align}
It depends only on the fixed \highGoodFractionalSuballocationName and
\commonSetName, and thus is invariant across all proper
$\GridSize$-edge-colorings of the \highGoodDummyGraphName
$\highGoodDummyGraph$.
This difference $\ReservationCountSlack$ helps us derive the bounds on the
scaling factors and the average dummy load in
\Cref{lem:scaling-factor-bounds,lem:uniform-scaling-bound}.  The
proof of \Cref{lem:reserved-count-ceiling} is given in
\Cref{app:proof-reserved-count-ceiling}.

The next lemma establishes positivity and bounds each scaling factor.
Its proof is given in \Cref{app:proof-scaling-factor-bounds}.

\begin{restatable}{lemma}{scalingfactorbounds}
\label{lem:scaling-factor-bounds}
For every deficient agent $\AgentIndex\in\DeficientAgents$, the scaling
factor $\Proxy_\AgentIndex$ satisfies
\begin{align*}
0<\Proxy_\AgentIndex
&\leq\frac{2}{\ReservationCountSlack+4}\leq\frac12.
\end{align*}
\end{restatable}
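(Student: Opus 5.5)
The plan is to show that the denominator of $\Proxy_\AgentIndex$ in \Cref{def:scaling-factor} is at least $\ReservationCountSlack+4$. Since $\ReservationCountSlack\geq0$ by \Cref{lem:reserved-count-ceiling}, this gives positivity and both upper bounds at once.

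Fix a deficient agent $\AgentIndex$ and write $k\deq\AgentCount-|\HighSet_\AgentIndex|\geq1$. Substituting \cref{eq:reservation-count-slack} for the ceiling term, the denominator becomes
\begin{align*}
7k-4\cdot(\AgentCount-|\CommonSet|)+2+\ReservationCountSlack-\UnheldLowGoodBundle_\AgentIndex(\LowSet_\AgentIndex)-\LowGoodBundle_\AgentIndex(\CommonSet).
\end{align*}
So it suffices to prove
\begin{align*}
\UnheldLowGoodBundle_\AgentIndex(\LowSet_\AgentIndex)+\LowGoodBundle_\AgentIndex(\CommonSet)\leq7k-4\cdot(\AgentCount-|\CommonSet|)-2 .
\end{align*}
By \Cref{def:common-set}, $|\CommonSet|=\max_{\OtherAgentIndex\in\DeficientAgents}|\HighSet_\OtherAgentIndex|\geq|\HighSet_\AgentIndex|$, so $\AgentCount-|\CommonSet|\leq k$. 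The right-hand side is therefore at least $3k-2$. I will prove the sharper form whenever $|\CommonSet|>|\HighSet_\AgentIndex|$ leaves extra room.

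For the two terms, I would use incidence counting through \cref{eq:marginal-cases}.

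\emph{The term $\UnheldLowGoodBundle_\AgentIndex(\LowSet_\AgentIndex)$.} Consider first a low good $\GoodIndex\in\TopSet_\AgentIndex$. There $\TruthfulMarginal_{\AgentIndex\GoodIndex}\geq1/\AgentCount$, so $\LowGoodBundle_{\AgentIndex\GoodIndex}=1$ and the good contributes nothing. Now consider a low good $\GoodIndex\notin\TopSet_\AgentIndex$. There $\LowGoodBundle_{\AgentIndex\GoodIndex}\geq\min\{\AgentCount\TruthfulMarginal_{\AgentIndex\GoodIndex},1\}$, so
\begin{align*}
1-\LowGoodBundle_{\AgentIndex\GoodIndex}\leq\frac{\AgentCount-\OmissionCount_\GoodIndex}{\AgentCount-1},
\end{align*}
which is the number of other agents ranking $\GoodIndex$ top, divided by $\AgentCount-1$. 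Summing over goods outside $\TopSet_\AgentIndex$, and using that the other agents have $(\AgentCount-1)^2$ top incidences in total, gives
\begin{align*}
\UnheldLowGoodBundle_\AgentIndex(\LowSet_\AgentIndex)\leq\frac{I_{\mathrm{out}}}{\AgentCount-1}\leq\AgentCount-1 ,
\end{align*}
where $I_{\mathrm{out}}$ is the number of those incidences landing outside $\TopSet_\AgentIndex$.

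\emph{The term $\LowGoodBundle_\AgentIndex(\CommonSet)$.} This is at most $|\CommonSet\cap\LowSet_\AgentIndex|$.

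These crude bounds are too weak when $k$ is small: for $k=1$ we need the total to be at most $1$. So the two terms must be traded off against each other and against deficiency.

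The main obstacle is this coupled estimate, and I would attack it as follows. Deficiency, via \cref{eq:missing-high-marginal}, says
\begin{align*}
\sum_{\GoodIndex\in\HighSet_\AgentIndex}\OmissionCount_\GoodIndex<(\AgentCount-1)k .
\end{align*}
Minimality of $\CommonSet$ against any set of size $|\CommonSet|$ containing $\HighSet_\AgentIndex$ then bounds $\sum_{\GoodIndex\in\CommonSet}\OmissionCount_\GoodIndex$ by the same quantity plus the non-top counts of the padding goods. So the goods of $\CommonSet$ carry many top incidences, roughly $(\AgentCount-1)|\CommonSet|-(\AgentCount-1)k$ of them beyond what a set of size $|\CommonSet|$ must carry. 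These incidences are then unavailable to $I_{\mathrm{out}}$ except on goods of $\CommonSet$ themselves.

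I would therefore split $I_{\mathrm{out}}$ into incidences on $\CommonSet\setminus\TopSet_\AgentIndex$ and incidences elsewhere. For a low good of $\CommonSet$ outside $\TopSet_\AgentIndex$, the quantities $1-\LowGoodBundle_{\AgentIndex\GoodIndex}$ and $\LowGoodBundle_{\AgentIndex\GoodIndex}$ sum to $1$. Such goods thus contribute at most one unit per good to the combined sum, rather than being double-counted. What remains is to bound the number of goods of $\CommonSet$ lying in $\LowSet_\AgentIndex$, and the leftover incidences, linearly in $k$ and $\AgentCount-|\CommonSet|$. This should reach the target $7k-4(\AgentCount-|\CommonSet|)-2$, with the constant $7$ absorbing the slack; if it does not, I would also use the first candidate in \Cref{def:truncated-low-good-bundle}, namely division by $1-\TruthfulMarginal_\AgentIndex(\HighSet_\AgentIndex)$.

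Once this inequality holds, the denominator is at least $\ReservationCountSlack+4\geq4$. Hence $\Proxy_\AgentIndex>0$ and $\Proxy_\AgentIndex\leq2/(\ReservationCountSlack+4)\leq1/2$.
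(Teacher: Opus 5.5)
\textbf{There is a genuine gap: the key inequality is never proved.} Your reduction is correct and is the paper's. After substituting $\sigma$, everything rests on
\[
\bar x^{\mathrm L}_i(L_i)+x^{\mathrm L}_i(S)\le 7k-4(n-|S|)-2 .
\]
The paper gets this from the sharper bound $\bar x^{\mathrm L}_i(L_i)+x^{\mathrm L}_i(S)\le 2k+(n-|S|)-2$, which is \cref{eq:integral-kept-bound} of \Cref{lem:scaling-denominator-estimates}. Your last paragraph only says the bookkeeping ``should'' reach the target.

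The hope that ``the constant $7$ absorbs the slack'' is false. Some deficient agent has $|H_i|=|S|$, and for her $n-|S|=k$. Her target is then exactly $3k-2$, the paper's sharp bound, with no room to spare.

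The second-candidate estimate $x^{\mathrm L}_{ig}\ge\min\{nx_{ig},1\}$ that your sketch rests on cannot reach this target. Take the following instance:
\begin{itemize}
\item $n=6$, with goods $c_1,\dots,c_4$ in every top set;
\item $T_i=\{c_1,\dots,c_4,a\}$, where $a$ is in one other top set;
\item $b$ is in three other top sets, and $d$ is in the remaining one;
\item padding goods are in no top set, and agent $i$'s high goods are exactly $T_i$.
\end{itemize}
Then $k=1$, $t(H_i)=4<5$, $q_i\deq n(1-x_i(H_i))=1/5$, and $S=\{c_1,\dots,c_4,b\}$. Replacing $x^{\mathrm L}_i$ by $\min\{nx_{ig},1\}$ makes $\bar x^{\mathrm L}_i(L_i)+x^{\mathrm L}_i(S)=4/5+2/5=6/5>1$. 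Only the first candidate, $x_{id}/(1-x_i(H_i))=4$, brings the true sum down to exactly $1$. So here $\ell_i=2/(\sigma+4)$ holds with equality, and no loose bookkeeping can work.

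\textbf{What is missing.} You were heading toward the paper's identity
\[
\bar x^{\mathrm L}_i(L_i)+x^{\mathrm L}_i(S)=|S\setminus H_i|+\sum_{g\notin T_i\cup S}\bigl(1-x^{\mathrm L}_{ig}\bigr),
\]
which is the right start. From it the paper argues in two steps.

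First, it combines the following facts:
\begin{itemize}
\item $1-nx_{ig}=(n-t_g)/(n-1)$ off $T_i$, hence $\sum_{g\notin T_i}(1-nx_{ig})=t(T_i)/(n-1)$;
\item non-top counts at most $n-1$ on $T_i\setminus H_i$;
\item $q_i=k-t(H_i)/(n-1)$ from \cref{eq:missing-high-marginal};
\item $t(S)/(n-1)<n-|S|$ from \cref{eq:omission-parameter-range}.
\end{itemize}
These give the strict bound
\[
|S\setminus H_i|+\sum_{g\notin T_i\cup S}\bigl(1-nx_{ig}\bigr)<2k+(n-|S|)-1-q_i ,
\]
which settles the case $q_i\ge 1$.

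Second, for $q_i<1$ it uses integrality. Set $R\deq 2k+(n-|S|)-2$. Every term of the identity lies in $[0,1]$. Every good $g\notin T_i\cup S$ with $x^{\mathrm L}_{ig}<1$ has $nx_{ig}=q_i\,x^{\mathrm L}_{ig}$, so $1-nx_{ig}=(1-q_i)+q_i(1-x^{\mathrm L}_{ig})$; the same holds trivially for the unit terms on $S\setminus H_i$. If the sum exceeded $R$, at least $R+1$ terms would be positive. Summing the relation would then push the left side of the strict bound above $(1-q_i)(R+1)+q_iR=R+1-q_i$, a contradiction.

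This integrality step is how the first candidate is actually exploited, and it is the idea your proposal lacks. With it, $n-|S|\le k$ gives $R\le 7k-4(n-|S|)-2$, and your final paragraph goes through.
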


\paragraph{Making $\ColorCarrier$ feasible by balancing dummy loads.}
By \Cref{lem:scaling-factor-bounds} and
$\LowGoodBundle_{\AgentIndex\GoodIndex}\leq1$ from
\Cref{def:truncated-low-good-bundle}, every entry of each scaled
\lowGoodFractionalBundleName $\ColorCarrier_\AgentIndex$ lies in
$[0,1/2]$, so $\ColorCarrier_\AgentIndex$ is indeed a feasible fractional
bundle.  However, for a proper $\GridSize$-edge-coloring $\Coloring$ and a
color $\ColorIndex\in\ColorSet$, the scaled
\lowGoodFractionalSuballocationName $\ColorCarrier$ need not be feasible: a good $\GoodIndex$ may be
over-allocated, with total \amountName
$\sum_{\AgentIndex\in\DummyAgents^\ColorIndex}
\ColorCarrier_{\AgentIndex\GoodIndex}>1$.
By \Cref{def:scaled-low-good-suballocation}, each color's dummy load bounds
the total \amountName of every good in that color: for every color
$\ColorIndex$ and good $\GoodIndex$,
\begin{align*}
\sum_{\AgentIndex\in\AgentSet}\ColorCarrier_{\AgentIndex\GoodIndex}
&\leq\Proxy(\DummyAgents^\ColorIndex).
\end{align*}
Thus, making each color's dummy load less than one suffices for feasibility.

We call a \reservationBalancedName coloring of $\highGoodDummyGraph$
\emph{\loadBalancedName} if every color has dummy load at most one, that is,
$\Proxy(\DummyAgents^\ColorIndex)\leq1$ for
every color $\ColorIndex\in\ColorSet$.
Given the agents' scaling factors in \Cref{def:scaling-factor}, the dummy
load of a color depends only on which agents receive their dummies in that
color, that is, on the set $\DummyAgents^\ColorIndex$.

Recall that each deficient agent
$\AgentIndex\in\DeficientAgents$ receives her dummy in a fraction
$1-\TruthfulMarginal_\AgentIndex(\HighSet_\AgentIndex)$ of the colors.
Averaging uniformly over colors gives
\begin{align}
\ColorAverage_\ColorIndex
\Proxy(\DummyAgents^\ColorIndex)
&=\sum_{\AgentIndex\in\DeficientAgents}
(1-\TruthfulMarginal_\AgentIndex(\HighSet_\AgentIndex))
\cdot\Proxy_\AgentIndex.
\label{eq:mean-dummy-load}
\end{align}
Thus, the average dummy load does not depend on the coloring.
The next lemma bounds the sum of the average dummy load and the largest
scaling factor.  Its proof is given in
\Cref{app:proof-uniform-scaling-bound}.

\begin{restatable}{lemma}{uniformscalingbound}
\label{lem:uniform-scaling-bound}
If $\DeficientAgents$ is nonempty, then every proper
$\GridSize$-edge-coloring of the \highGoodDummyGraphName
$\highGoodDummyGraph$ satisfies
\begin{align*}
\ColorAverage_\ColorIndex\Proxy(\DummyAgents^\ColorIndex)
+\max_{\AgentIndex\in\DeficientAgents}\Proxy_\AgentIndex
&<\frac{34}{35}.
\end{align*}
\end{restatable}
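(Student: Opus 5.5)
The plan is to bound each deficient agent's contribution to the average dummy load \cref{eq:mean-dummy-load} by a quantity proportional to $1/\AgentCount$. Summing over at most $\AgentCount$ deficient agents then gives an absolute constant, and adding $\max_\AgentIndex\Proxy_\AgentIndex\leq 2/(\ReservationCountSlack+4)$ from \Cref{lem:scaling-factor-bounds} should land below $34/35$.

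Write $d_\AgentIndex\deq\AgentCount\cdot(1-\TruthfulMarginal_\AgentIndex(\HighSet_\AgentIndex))$. Then \cref{eq:mean-dummy-load} reads $\ColorAverage_\ColorIndex\Proxy(\DummyAgents^\ColorIndex)=\frac1\AgentCount\sum_{\AgentIndex\in\DeficientAgents}d_\AgentIndex\Proxy_\AgentIndex$. By \cref{eq:missing-high-marginal}, $d_\AgentIndex\leq\AgentCount-|\HighSet_\AgentIndex|$.

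For the denominator of $\Proxy_\AgentIndex$, substitute \cref{eq:reservation-count-slack}: the ceiling equals $4(\AgentCount-|\CommonSet|)-2-\ReservationCountSlack$. Since $|\CommonSet|=\max_{\OtherAgentIndex\in\DeficientAgents}|\HighSet_\OtherAgentIndex|\geq|\HighSet_\AgentIndex|$, we have $\AgentCount-|\CommonSet|\leq\AgentCount-|\HighSet_\AgentIndex|$. The denominator is therefore at least
\begin{align*}
3(\AgentCount-|\HighSet_\AgentIndex|)+2+\ReservationCountSlack-\UnheldLowGoodBundle_\AgentIndex(\LowSet_\AgentIndex)-\LowGoodBundle_\AgentIndex(\CommonSet).
\end{align*}

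The key intermediate bound I would aim for is
\begin{align*}
\UnheldLowGoodBundle_\AgentIndex(\LowSet_\AgentIndex)+\LowGoodBundle_\AgentIndex(\CommonSet)\leq(\AgentCount-|\HighSet_\AgentIndex|)-d_\AgentIndex+\text{(small constant)},
\end{align*}
or some comparable bound that leaves the denominator at least roughly $c\cdot d_\AgentIndex+(\text{const})+\ReservationCountSlack$ with $c$ comfortably above $4$.

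For $\UnheldLowGoodBundle_\AgentIndex$, I would use that a low good is omitted only when $\TruthfulMarginal_{\AgentIndex\GoodIndex}<1/\AgentCount$ and the first candidate is also below one. By \cref{eq:marginal-cases} such goods have small \nonTopCountName. Summing the deficit $1-\AgentCount\TruthfulMarginal_{\AgentIndex\GoodIndex}$ over them, and using that the total number of top incidences is $\AgentCount(\AgentCount-1)$, should bound it by roughly $\AgentCount-|\HighSet_\AgentIndex|$ minus a term involving the high goods' \nonTopCountName, and hence by $\AgentCount-|\HighSet_\AgentIndex|-d_\AgentIndex$ up to $O(1)$.

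For $\LowGoodBundle_\AgentIndex(\CommonSet)$, I would use the minimality of $\sum_{\GoodIndex\in\CommonSet}\OmissionCount_\GoodIndex$ in \Cref{def:common-set}. Goods of $\CommonSet$ are in many top sets, so their marginals for $\AgentIndex$, and thus her \portionsName of them, are small. Comparing with $\HighSet_\AgentIndex$, where \cref{eq:missing-high-marginal} ties $\sum_{\HighSet_\AgentIndex}\OmissionCount_\GoodIndex$ to $d_\AgentIndex$, should give the bound.

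With such a denominator bound, each term satisfies $d_\AgentIndex\Proxy_\AgentIndex\leq 2d_\AgentIndex/(c\,d_\AgentIndex+\ReservationCountSlack+\kappa)<2/c$. The average load is then below $2/c$, and the total is below $2/c+2/(\ReservationCountSlack+4)$. In the worst case $\ReservationCountSlack=0$ this is $2/c+1/2$, so we need $c>\approx 4.2$.

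If this is too crude, I would split into cases. When $\ReservationCountSlack\geq1$, the max term drops to at most $2/5$. When $\ReservationCountSlack=0$, the ceiling is large, which forces many non-popular reservations and should make the deficient agents few or their $d_\AgentIndex$ small. The value $34/35$ suggests one case is tight with an explicit pair of fractions, e.g.\ $1/2+16/35$ or $2/5+4/7$, so I would calibrate each case to one of these.

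The main obstacle is the intermediate bound on $\UnheldLowGoodBundle_\AgentIndex(\LowSet_\AgentIndex)+\LowGoodBundle_\AgentIndex(\CommonSet)$ in terms of $d_\AgentIndex$ and $\AgentCount-|\HighSet_\AgentIndex|$. It requires a careful accounting of \nonTopCountName across the omitted low goods, $\CommonSet$, and $\HighSet_\AgentIndex$, probably reusing the estimates behind \Cref{lem:reserved-count-ceiling} and \Cref{lem:scaling-factor-bounds}. I would first try charging each unit of omission to a missing top incidence, using that the total number of top incidences is $\AgentCount(\AgentCount-1)$.
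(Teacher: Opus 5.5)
\textbf{There is a genuine gap.} The frame is right: reduce to $\frac1n\sum_{i\in\mathcal D}d_i\ell_i$ via \cref{eq:mean-dummy-load}, and use $\max_i\ell_i\le 2/(\sigma+4)$. The core step is not. You need a uniform per-agent bound $d_i\ell_i<2/c$ with $c\ge 140/33$, i.e.\ $d_i\ell_i<33/70$ whenever $\sigma=0$. That bound is false, so no choice of intermediate estimate can rescue this route.

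\emph{Your intermediate bound fails and would not suffice.} The bound $\bar x^{\mathrm L}_i(L_i)+x^{\mathrm L}_i(S)\le (n-|H_i|)-d_i+O(1)$ is false. Take $H_i=\{a\}$ with $a$ in every top set, and the rest of $T_i$ in no other top set. Then $d_i=n-1$ but $\bar x^{\mathrm L}_i(L_i)=\sum_{g\notin T_i}(n-t_g)/(n-1)=n-2$. Even granting the bound, your denominator estimate becomes $2(n-|H_i|)+d_i+\sigma+O(1)\ge 3d_i+\dots$, so $c=3$. The loss comes from replacing $n-|S|$ by $n-|H_i|$, which discards $4(|S|-|H_i|)$.

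\emph{What the correct per-agent bound gives, and why it is sharp.} Keep that term and combine it with $d_i\le 2(|S|-|H_i|)+u_0$. Here $u_0=(n-|S|)-t(S)/(n-1)\in(0,(\sigma+2)/2)$, and the inequality comes from minimality of $t(S)$ applied to $H_i$ extended inside $T_i$. This yields only $d_i\ell_i<4d_i/(7d_i+2)<4/7$, and that is essentially sharp. Consider the following instance, with $n$ large:
\begin{itemize}
\item a deficient agent $i$ with $|H_i|=n-5$ and $T_i=H_i\cup\{e_1,\dots,e_4\}$;
\item a deficient agent $j^*$ with $T_{j^*}=H_i\cup\{f_1,\dots,f_4\}$ and $H_{j^*}=H_i\cup\{f_1,f_2\}$;
\item all other agents nondeficient, with $H_j=T_j=(H_i\setminus Q_j)\cup U_j$, $|Q_j|\le1$, $|U_j|=4+|Q_j|$, and $\sum_j|Q_j|=p(n-1)$ for some $1/2<p<0.68$.
\end{itemize}
The $e$'s, $f$'s and $U_j$'s each lie in a single top set and precede $H_i$ in the public order. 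Valuations put huge values exactly on the intended high goods, and there are many tiny filler goods in no top set. Then:
\begin{itemize}
\item $|S|=n-3$ and $t(S)=(2+p)(n-1)$;
\item $\sum_j x^{\mathrm H}_j(H_j\setminus S)=8+2p-O(1/n)\in(9,10)$, so $\sigma=0$;
\item $d_i=5-p$, $\bar x^{\mathrm L}_i(L_i)=4+p$ and $x^{\mathrm L}_i(S)=2-O(1/n)$;
\item hence the denominator of $\ell_i$ is $19-p+O(1/n)$, and $d_i\ell_i\approx 2(5-p)/(19-p)>33/70$.
\end{itemize}

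\textbf{Where your case split works and where averaging is required.} For $\sigma\ge1$ your split does work, since $4/7+2/5=34/35$. For $\sigma=0$, the bound $\frac1n\sum_{i\in\mathcal D}d_i\ell_i<33/70$ must genuinely come from averaging over agents, and this is the missing idea. The paper (\Cref{lem:scaled-missing-factor-bounds}(ii)) proceeds in three steps.
\begin{enumerate}
\item It attaches to every agent her reservation saving $D_j=2|T_j\setminus S|-n\,x^{\mathrm H}_j(T_j\setminus S)\ge 0$. For a deficient agent, $D_i\ge d_i-(|S|-|H_i|)-1$ by \cref{eq:agent-saving-bound}. This upgrades the per-agent estimate to the affine bound $d_i\ell_i\le\frac37+\frac{3}{10(\sigma+7)}(D_i+u_0)$ in \cref{eq:affine-scaling-product}.
\item \Cref{lem:reservation-estimates} gives $\sum_j x^{\mathrm H}_j(H_j\setminus S)\le 4(n-|S|)-2-\bigl(\frac1n\sum_j D_j+u_0\bigr)$. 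Since $\sigma$ is the integer slack of the ceiling of this sum, the savings of all agents obey the global budget $\frac1n\sum_{j}D_j+u_0<\sigma+1$, which is \cref{eq:mean-reservation-budget}.
\item Averaging the affine bound with weight $1/n$ gives $\frac1n\sum_{i\in\mathcal D}d_i\ell_i<\frac37+\frac{3(\sigma+1)}{10(\sigma+7)}$, which equals $33/70$ at $\sigma=0$. Adding $2/(\sigma+4)$ gives $\frac{34}{35}-\frac{\sigma(17\sigma+173)}{70(\sigma+7)(\sigma+4)}\le\frac{34}{35}$.
\end{enumerate}
In the instance above, the one large product is paid for by $D_i=O(1)$, which costs only $O(1/n)$ of the budget. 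Your remark that $\sigma=0$ should make deficient agents few or their $d_i$ small is exactly this budget. It has to be made quantitative through the savings $D_j$, and the proposal contains no argument for it.
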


The above lemma lets us convert a \reservationBalancedName coloring into a
\loadBalancedName one.  The key idea is to move dummies from a color of
maximum load to a color of minimum load by exchanging colors along
alternating paths, as in the proof of \Cref{lem:reservation-balancing}; by
the above lemma, a color of minimum load stays below one after receiving
any single dummy.

\begin{restatable}[Balancing dummy loads]{proposition}{loadbalancing}
\label{prop:load-balancing}
Given any \reservationBalancedName coloring $\ReservationBalancedColoring$
of the \highGoodDummyGraphName $\highGoodDummyGraph$, one can construct in
polynomial time a \loadBalancedName coloring $\LoadBalancedColoring$ of
$\highGoodDummyGraph$, i.e., a \reservationBalancedName coloring in which
every color $\ColorIndex\in\ColorSet$ has dummy load
$\Proxy(\DummyAgents^\ColorIndex)\leq1$.
Consider the scaled \lowGoodFractionalSuballocationName $\ColorCarrier$
constructed in \Cref{def:scaled-low-good-suballocation} for the coloring
$\LoadBalancedColoring$.  Then $\ColorCarrier$ is feasible for every color
$\ColorIndex\in\ColorSet$ in $\LoadBalancedColoring$, i.e., for every good
$\GoodIndex\in\GoodSet$, its total \amountName assigned in color
$\ColorIndex$ satisfies
\begin{align*}
\sum_{\AgentIndex\in\AgentSet}
\ColorCarrier_{\AgentIndex\GoodIndex}
&\leq1.
\end{align*}
\end{restatable}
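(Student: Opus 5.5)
We will transform $\ReservationBalancedColoring$ by a sequence of alternating-path exchanges, each of which moves exactly one dummy edge from a color of maximum dummy load $\OverloadedColor$ to a color of minimum dummy load $\MinimumLoadColor$. Each exchange must preserve properness and reservation balance. The potential we track is the vector of dummy loads: every step strictly decreases the maximum load, or else decreases the number of colors attaining it. The final feasibility claim is then immediate. Once $\Proxy(\DummyAgents^\ColorIndex)\leq1$ holds in every color, the displayed bound $\sum_{\AgentIndex}\ColorCarrier_{\AgentIndex\GoodIndex}\leq\Proxy(\DummyAgents^\ColorIndex)$, obtained from \Cref{def:scaled-low-good-suballocation} and $\LowGoodBundle_{\AgentIndex\GoodIndex}\leq1$, gives $\sum_{\AgentIndex}\ColorCarrier_{\AgentIndex\GoodIndex}\leq1$.

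\textbf{The exchange step.} Suppose some color has load $>1$, and let $\OverloadedColor$ be a color of maximum load. By \Cref{lem:uniform-scaling-bound}, the minimum-load color $\MinimumLoadColor$ satisfies $\Proxy(\DummyAgents^{\MinimumLoadColor})\leq\ColorAverage_\ColorIndex\Proxy(\DummyAgents^\ColorIndex)<34/35-\max_{\AgentIndex}\Proxy_\AgentIndex$. Since $\Proxy(\DummyAgents^{\OverloadedColor})>1$, some dummy agent $\AgentIndex\in\DummyAgents^{\OverloadedColor}\setminus\DummyAgents^{\MinimumLoadColor}$ exists. Its dummy vertex $\DummyVertex_\AgentIndex$ has degree one in $\highGoodDummyGraph(\OverloadedColor,\MinimumLoadColor)$, so by \Cref{lem:alternating-path-exchange} it is the endpoint of a path component. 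The other endpoint is a right vertex carrying a $\MinimumLoadColor$-edge. Exchanging the colors along this path moves $\AgentIndex$'s dummy into $\MinimumLoadColor$. The new load of $\MinimumLoadColor$ is then at most the old load plus $\Proxy_\AgentIndex$, which is below $34/35<1$. The path may also carry a second dummy (the far endpoint, if it is a dummy $\DummyVertex_\OtherAgentIndex$ with its $\MinimumLoadColor$-edge) from $\MinimumLoadColor$ to $\OverloadedColor$. Internal vertices keep their incidence in both colors, so they change nothing.

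\textbf{Main obstacle: reservation balance and progress.} Reserved goods change only at the two endpoints. The $\DummyVertex_\AgentIndex$ end contributes no good. If the far end is a good $\GoodIndex$, then $\GoodIndex$ moves from $\ReservedSet^{\MinimumLoadColor}$ into $\ReservedSet^{\OverloadedColor}$, which may raise $|\ReservedSet^{\OverloadedColor}\setminus\CommonSet|$ by one and lower that of $\MinimumLoadColor$ by one, breaking balance. If the far end is a dummy, the load of $\OverloadedColor$ might not drop below the old maximum.

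I would handle these by grouping paths, as hinted in the overview. Among the path components of $\highGoodDummyGraph(\OverloadedColor,\MinimumLoadColor)$, classify each endpoint of each path into one of four types: the color of its end edge, and whether it is a dummy, a good in $\CommonSet$, or a good outside $\CommonSet$. If the counts of $\OverloadedColor$ and $\MinimumLoadColor$ outside $\CommonSet$ are equal or differ favorably, a single path moving one dummy and one good outside $\CommonSet$ in compensating directions keeps the difference in $\{-1,0,1\}$. Otherwise I would pair the dummy-moving path with a second path that moves a non-$\CommonSet$ good back, where "back" means into whichever of the two colors keeps the counts within one of each other. A counting argument on the path endpoints guarantees that such a partner path exists. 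This argument uses that the counts in the two colors differ by at most one, and that each path has one endpoint of each color.

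Where the far endpoint is a dummy $\DummyVertex_\OtherAgentIndex$, choose $\AgentIndex$ with $\Proxy_\AgentIndex$ maximal among the dummies in $\DummyAgents^{\OverloadedColor}\setminus\DummyAgents^{\MinimumLoadColor}$. Even so, a swap may not reduce the load of $\OverloadedColor$, so I would instead select a path whose far end is not a dummy of $\MinimumLoadColor$. Such a path exists by counting: $\OverloadedColor$ has strictly more dummy load, hence at least as many path endpoints that are dummies of $\OverloadedColor$. Failing that, I would use a potential such as $\sum_\ColorIndex\Proxy(\DummyAgents^\ColorIndex)^2$, which strictly decreases whenever load moves from a larger color to a smaller color without overshooting.

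\textbf{Running time.} For the polynomial bound, I would instead use a rational-weight-free potential. One option is to sort the colors by load and count the number of steps. A cleaner option is to argue that each dummy edge moves only from colors of load $>1$ to colors of load $<34/35$, so a color never becomes overloaded once fixed. The number of exchanges is then at most $\GridSize\cdot|\DeficientAgents|\cdot\GridSize$.
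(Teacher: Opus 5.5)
Your plan has the right ingredients: alternating paths in $\highGoodDummyGraph(\bar c,\underline c)$, pairing a $+1$ path with a $-1$ path to protect reservation balance, and the bound $\Proxy(\DummyAgents^{\underline c})+\max_{\AgentIndex}\Proxy_\AgentIndex<34/35$. The progress step, however, has a genuine gap: you never show that some \emph{balance-preserving} exchange strictly lowers the load of $\bar c$, and your selection rule does not guarantee it.

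\begin{itemize}
\item The $-1$ partner that reservation balance forces on you may end at a dummy of $\underline c$ whose factor exceeds $\Proxy_\AgentIndex$. The pair then moves net load \emph{into} $\bar c$. Choosing $\AgentIndex$ with maximal factor in $\DummyAgents^{\bar c}\setminus\DummyAgents^{\underline c}$ does not help, because factors of dummies of $\underline c$ are not bounded by those.
\item Your claim that some path from a dummy of $\bar c$ ends at a non-dummy is false. Larger load says nothing about the number of dummies. It can happen that every dummy of $\DummyAgents^{\bar c}\setminus\DummyAgents^{\underline c}$ is joined by its path to a dummy of $\DummyAgents^{\underline c}\setminus\DummyAgents^{\bar c}$, say four factors $0.3$ in $\bar c$ against four factors $0.1$ in $\underline c$, along paths $\DummyVertex_\AgentIndex,\AgentIndex,\GoodIndex,\OtherAgentIndex,\DummyVertex_\OtherAgentIndex$. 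Then \emph{every} exchange moves a dummy into the overloaded color. So your framing that each exchange moves exactly one dummy from $\bar c$ to $\underline c$ cannot be maintained.
\item The potential $\sum_\ColorIndex\Proxy(\DummyAgents^\ColorIndex)^2$ only helps once a positive net transfer is known to exist.
\item A smaller slip: when the two counts are equal, a lone count-changing path already breaks balance, since it makes them differ by two. A lone $+1$ path is admissible only when $|\ReservedSet^{\underline c}\setminus\CommonSet|=|\ReservedSet^{\bar c}\setminus\CommonSet|+1$.
\end{itemize}

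The missing idea is a global averaging argument. Partition \emph{all} path components of $\highGoodDummyGraph(\bar c,\underline c)$ into groups: single zero-contribution paths, $(+1,-1)$ pairs, and at most one leftover path, whose exchange swaps two counts differing by one. Each group exchange preserves reservation balance. Exchanging every group swaps the two colors entirely, so the groups' net transfers sum to $\Proxy(\DummyAgents^{\bar c})-\Proxy(\DummyAgents^{\underline c})>1/35$. The group with the largest transfer therefore has:
\begin{itemize}
\item positive transfer;
\item transfer at most $\max_\AgentIndex\Proxy_\AgentIndex$, since a group moves at most one dummy out of $\bar c$;
\item transfer more than $1/(35\AgentCount)$, since there are at most $\AgentCount$ groups.
\end{itemize}

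That quantitative lower bound is also what your running-time argument lacks. Since dummies must sometimes flow into overloaded colors, your count $\GridSize\cdot|\DeficientAgents|\cdot\GridSize$, which assumes dummy edges move only one way, is unjustified. The lexicographic potential gives only finite termination. With the $1/(35\AgentCount)$ bound, note that a color once below one is never overloaded again. Hence the total load ever sent is below the initial total load, which is less than $\GridSize$, and fewer than $35\AgentCount\GridSize$ exchanges occur.
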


\begin{proof}[Proof sketch]
The idea is to transfer dummy load from more heavily loaded colors to
more lightly loaded ones by exchanging colors along alternating paths,
while preserving reservation balance, until every dummy load is below one.

\emph{Preserving reservation balance.}
Suppose some color has dummy load at least one.  Choose a color
$\OverloadedColor$ of maximum dummy load and a color $\MinimumLoadColor$
of minimum dummy load.
In the \highGoodDummyGraphName $\highGoodDummyGraph$, consider the
subgraph $\highGoodDummyGraph(\OverloadedColor,\MinimumLoadColor)$ formed by
the two selected colors.  Exchanging the two colors along one of its
alternating path components switches only the path's two endpoints between
the colors.  Call the resulting change in the count
$|\ReservedSet^\OverloadedColor\setminus\CommonSet|$ the path's
\emph{contribution} to the count; it is $-1$, $0$, or $+1$.
Exchanging all path components would swap the two counts, so the
contributions of all paths sum to
$|\ReservedSet^\MinimumLoadColor\setminus\CommonSet|
-|\ReservedSet^\OverloadedColor\setminus\CommonSet|\in\{-1,0,1\}$.
Partition the path components into \emph{groups} of three kinds:
(i)~a single path with contribution $0$;
(ii)~a pair of paths with contributions $+1$ and $-1$; and
(iii)~the at most one path with nonzero contribution that remains
unpaired.
Every group exchange preserves properness and reservation balance, and
switches at most one dummy from each of the two colors to the other.

\emph{Balancing dummy loads.}
Exchanging all groups would swap the dummy-agent sets of the two colors.
Since different groups have disjoint dummy endpoints, their decreases in
the load of color $\OverloadedColor$ add up to
$\Proxy(\DummyAgents^\OverloadedColor)-\Proxy(\DummyAgents^\MinimumLoadColor)>0$,
so we exchange only a group with the largest decrease.
This decrease is at most
$\max_{\AgentIndex\in\DeficientAgents}\Proxy_\AgentIndex$, since a single group exchange switches at most one dummy from color
$\OverloadedColor$ to color $\MinimumLoadColor$.
The load of color $\MinimumLoadColor$ increases by the same amount, so
its new load is at most
\begin{align*}
\Proxy(\DummyAgents^\MinimumLoadColor)
+\max_{\AgentIndex\in\DeficientAgents}\Proxy_\AgentIndex
\overset{(a)}{\leq}
\ColorAverage_\ColorIndex\Proxy(\DummyAgents^\ColorIndex)
+\max_{\AgentIndex\in\DeficientAgents}\Proxy_\AgentIndex
\overset{(b)}{<}\frac{34}{35},
\end{align*}
where inequality (a) holds because color $\MinimumLoadColor$ has minimum
load, and inequality (b) is \Cref{lem:uniform-scaling-bound}.
Thus the exchange lowers the load of color $\OverloadedColor$ without
overloading color $\MinimumLoadColor$, and we repeat until every color's
dummy load is below one.

\emph{Polynomial running time.}
There are at most $\AgentCount$ groups, and their decreases add up to
$\Proxy(\DummyAgents^\OverloadedColor)-\Proxy(\DummyAgents^\MinimumLoadColor)>1-34/35=1/35$,
using $\Proxy(\DummyAgents^\OverloadedColor)\geq1$ by the choice of
$\OverloadedColor$.  Hence each exchange decreases the dummy load of color
$\OverloadedColor$ by more than $1/(35\cdot\AgentCount)$.
A color's load cannot increase while it is at least one, and once below
one it stays below one.  Hence the sum, over all exchanges, of the decrease in the dummy load of
the maximum-load color at that exchange is at most the
total load
$\GridSize\cdot\ColorAverage_\ColorIndex\Proxy(\DummyAgents^\ColorIndex)$,
which is less than the number
$\GridSize=\AgentCount\cdot(\AgentCount-1)$ of colors by
\Cref{lem:uniform-scaling-bound}.
Thus fewer than $35\cdot\AgentCount\cdot\GridSize$ exchanges occur, each
taking polynomial time.
The full proof is given in \Cref{app:proof-load-balancing}.
\end{proof}

\paragraph{Average-marginal guarantee.}
In what follows, we use the scaled \lowGoodFractionalSuballocationsName
$\ColorCarrier$ constructed in \Cref{prop:load-balancing}.
With feasibility settled by \Cref{prop:load-balancing}, we still need to
verify that the average of the scaled
\lowGoodFractionalSuballocationsName over colors,
$\ColorAverage_\ColorIndex\ColorCarrier$, does not exceed the unused
\truthfulFractionalAllocationName
$\TruthfulMarginal-\TruthfulMarginalHigh$.

\begin{proposition}
\label{prop:capacity-certificate}
Consider any \loadBalancedName coloring $\LoadBalancedColoring$ and the
corresponding scaled \lowGoodFractionalSuballocationsName $\ColorCarrier$
constructed in \Cref{def:scaled-low-good-suballocation}.  Then for every
agent $\AgentIndex\in\AgentSet$ and good $\GoodIndex\in\GoodSet$, we have
\begin{align*}
\ColorAverage_\ColorIndex
\ColorCarrier_{\AgentIndex\GoodIndex}
&\leq\TruthfulMarginal_{\AgentIndex\GoodIndex}
-\TruthfulMarginalHigh_{\AgentIndex\GoodIndex}.
\end{align*}
\end{proposition}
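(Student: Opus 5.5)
The plan is to verify the inequality entry by entry, after computing the color average of $\ColorCarrier_{\AgentIndex\GoodIndex}$ explicitly. By \Cref{def:scaled-low-good-suballocation}, $\ColorCarrier_{\AgentIndex\GoodIndex}$ is nonzero only when $\AgentIndex\in\DummyAgents^\ColorIndex$, and it is then at most $\Proxy_\AgentIndex\cdot\LowGoodBundle_{\AgentIndex\GoodIndex}$; I will simply drop the condition $\GoodIndex\notin\ReservedSet^\ColorIndex$, which only decreases the entry. A deficient agent receives her dummy in exactly $\GridSize\cdot(1-\TruthfulMarginal_\AgentIndex(\HighSet_\AgentIndex))$ colors, and this count is preserved by both recolorings. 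Hence
\[
\ColorAverage_\ColorIndex\ColorCarrier_{\AgentIndex\GoodIndex}\leq(1-\TruthfulMarginal_\AgentIndex(\HighSet_\AgentIndex))\cdot\Proxy_\AgentIndex\cdot\LowGoodBundle_{\AgentIndex\GoodIndex}.
\]
The claim therefore reduces to comparing this product with $\TruthfulMarginal_{\AgentIndex\GoodIndex}-\TruthfulMarginalHigh_{\AgentIndex\GoodIndex}$, and I split into cases by agent type and good type.

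The easy cases come first. For a nondeficient agent, $\ColorCarrier_\AgentIndex\equiv0$, and $\TruthfulMarginalHigh\leq\TruthfulMarginal$ coordinatewise gives the claim. For a deficient agent and a high good $\GoodIndex\in\HighSet_\AgentIndex$, we have $\LowGoodBundle_{\AgentIndex\GoodIndex}=0$, and \Cref{def:high-good-fractional-suballocation} gives $\TruthfulMarginalHigh_{\AgentIndex\GoodIndex}=\TruthfulMarginal_{\AgentIndex\GoodIndex}$, so both sides are zero. It remains to treat a deficient agent and a low good $\GoodIndex\in\LowSet_\AgentIndex$. There $\TruthfulMarginalHigh_{\AgentIndex\GoodIndex}=0$, so the target is $(1-\TruthfulMarginal_\AgentIndex(\HighSet_\AgentIndex))\,\Proxy_\AgentIndex\,\LowGoodBundle_{\AgentIndex\GoodIndex}\leq\TruthfulMarginal_{\AgentIndex\GoodIndex}$. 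Since $\LowGoodBundle_{\AgentIndex\GoodIndex}$ is at most the larger of the two candidates of \Cref{def:truncated-low-good-bundle}, I split further according to which candidate is larger.

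If the first candidate is the larger one, then $(1-\TruthfulMarginal_\AgentIndex(\HighSet_\AgentIndex))\LowGoodBundle_{\AgentIndex\GoodIndex}\leq\TruthfulMarginal_{\AgentIndex\GoodIndex}$ already, and \Cref{lem:scaling-factor-bounds} ($\Proxy_\AgentIndex\leq1/2$) finishes this case. If the second candidate is the larger one, the bound becomes $\AgentCount(1-\TruthfulMarginal_\AgentIndex(\HighSet_\AgentIndex))\,\Proxy_\AgentIndex\,\TruthfulMarginal_{\AgentIndex\GoodIndex}$, so it suffices to show $\Proxy_\AgentIndex\cdot\AgentCount(1-\TruthfulMarginal_\AgentIndex(\HighSet_\AgentIndex))\leq1$. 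By \cref{eq:missing-high-marginal}, $\AgentCount(1-\TruthfulMarginal_\AgentIndex(\HighSet_\AgentIndex))\leq\AgentCount-|\HighSet_\AgentIndex|$, so it is enough that the denominator in \Cref{def:scaling-factor} is at least $2(\AgentCount-|\HighSet_\AgentIndex|)$.

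That denominator inequality is the main obstacle. I would bound the three subtracted terms one at a time. For the ceiling term, \Cref{lem:reserved-count-ceiling} together with $|\CommonSet|\geq|\HighSet_\AgentIndex|$ gives at most $4(\AgentCount-|\HighSet_\AgentIndex|)-2$. I would then try to show $\UnheldLowGoodBundle_\AgentIndex(\LowSet_\AgentIndex)+\LowGoodBundle_\AgentIndex(\CommonSet)\leq\AgentCount-|\HighSet_\AgentIndex|+2$, leaving exactly the needed $2(\AgentCount-|\HighSet_\AgentIndex|)$. For $\LowGoodBundle_\AgentIndex(\CommonSet)$, only goods of $\CommonSet\setminus\HighSet_\AgentIndex$ contribute, each at most one, and I would compare their \nonTopCountName with that of $\HighSet_\AgentIndex$ using the minimality in \Cref{def:common-set}. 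For the omitted \amountName, goods outside her top set give $\AgentCount\TruthfulMarginal_{\AgentIndex\GoodIndex}=(\OmissionCount_\GoodIndex-1)/(\AgentCount-1)$, and I would total these deficits against the top-incidence count. If this direct route is too lossy, I would fall back to the finer accounting with the slack $\ReservationCountSlack$, reusing the inequalities from the appendix proof of \Cref{lem:scaling-factor-bounds}. A further fallback is to use the cap $\LowGoodBundle_{\AgentIndex\GoodIndex}\leq1$ for goods with large $\TruthfulMarginal_{\AgentIndex\GoodIndex}$.
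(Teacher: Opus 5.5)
Your Cases 1 and 2, the first subcase of Case 3, and the reduction of the remaining subcase to $\Proxy_\AgentIndex\cdot\AgentCount(1-\TruthfulMarginal_\AgentIndex(\HighSet_\AgentIndex))\le1$ are correct and match the paper.

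The gap is the next step. You bound $\AgentCount(1-\TruthfulMarginal_\AgentIndex(\HighSet_\AgentIndex))$ by $\AgentCount-|\HighSet_\AgentIndex|$, and then you need the denominator of \Cref{def:scaling-factor} to be at least $2(\AgentCount-|\HighSet_\AgentIndex|)$. That inequality is false, even when the second candidate is strictly larger, so no finer accounting with $\ReservationCountSlack$ can establish it. Here is a counterexample.
\begin{itemize}
\item Take $\AgentCount=40$ and a set $\CommonSet$ of $30$ goods. Agent $\AgentIndex$ has top set $\CommonSet\cup P_\AgentIndex$, where the $9$ goods of $P_\AgentIndex$ lie in no other top set. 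She values $\CommonSet$ at $100$, $P_\AgentIndex$ at $37/9$, and every other good at $1$, so $\TPS_\AgentIndex=70$ and $\HighSet_\AgentIndex=\CommonSet$.
\item Every other agent $\OtherAgentIndex$ has top set $(\CommonSet\setminus M_\OtherAgentIndex)\cup P_\OtherAgentIndex$. Here $|M_\OtherAgentIndex|=8$, spread so that every good of $\CommonSet$ has \nonTopCountName $10$ or $11$, and $P_\OtherAgentIndex$ consists of $17$ goods private to her. She values her top goods at $100$ and all others at $0.1$. The public order lists private goods before $\CommonSet$.
\item Then $\AgentIndex$ is the only deficient agent, $\CommonSet$ is indeed the \commonSetName, and $\AgentCount(1-\TruthfulMarginal_\AgentIndex(\HighSet_\AgentIndex))=10-\frac{312}{39}=2$.
\item Each $\OtherAgentIndex\neq\AgentIndex$ reserves $17\cdot\frac{2}{40}=0.85$ outside $\CommonSet$, so the ceiling term is $\lceil 39\cdot 0.85\rceil=34$.
\item $\LowGoodBundle_\AgentIndex(\CommonSet)=0$. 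Each of the $663$ goods in the sets $P_\OtherAgentIndex$ has $\LowGoodBundle_{\AgentIndex\GoodIndex}=\frac{38}{39}$, so $\UnheldLowGoodBundle_\AgentIndex(\LowSet_\AgentIndex)=17$. This already exceeds the bound $(\AgentCount-|\HighSet_\AgentIndex|)+2=12$ that your direct route needs.
\item The denominator is therefore $70-34-17=19<20=2(\AgentCount-|\HighSet_\AgentIndex|)$.
\end{itemize}

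The missing idea is to compare the denominator with $\AgentCount(1-\TruthfulMarginal_\AgentIndex(\HighSet_\AgentIndex))$ itself, not with its crude upper bound; in the example the actual product is $\frac{4}{19}$. This comparison is exactly \Cref{lem:scaled-missing-factor-bounds}(i), which the paper invokes in step (c). The second denominator estimate \cref{eq:scalar-kept-bound} keeps a $+\AgentCount(1-\TruthfulMarginal_\AgentIndex(\HighSet_\AgentIndex))$ term. Minimality of $\CommonSet$ gives \cref{eq:common-set-comparison}, which bounds that scaled missing marginal by $2(|\CommonSet|-|\HighSet_\AgentIndex|)+(\AgentCount-|\CommonSet|)-\OmissionCount(\CommonSet)/(\AgentCount-1)$. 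Together with \cref{eq:common-set-slack-range}, these force the product below $\frac47$. Citing that lemma, together with $\Proxy_\AgentIndex\le\frac12$ for your first subcase, closes the proof.

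Your last fallback, the cap $\LowGoodBundle_{\AgentIndex\GoodIndex}\le1$, does not help. Goods outside her top set with $\AgentCount\TruthfulMarginal_{\AgentIndex\GoodIndex}<1$ still require the same product bound.
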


\begin{proof}
Fix an agent $\AgentIndex\in\AgentSet$ and a good $\GoodIndex\in\GoodSet$.
We distinguish three cases.

\emph{Case 1: agent $\AgentIndex$ is nondeficient.}
The left side is zero because $\AgentIndex\notin\DummyAgents^\ColorIndex$
for every color $\ColorIndex$, so
$\ColorCarrier_{\AgentIndex\GoodIndex}=0$ by
\Cref{def:scaled-low-good-suballocation}.  The right side is nonnegative
because $\TruthfulMarginalHigh\leq\TruthfulMarginal$ coordinatewise.

\emph{Case 2: agent $\AgentIndex$ is deficient and
$\GoodIndex\in\HighSet_\AgentIndex$.}
Both sides are zero: $\LowGoodBundle_{\AgentIndex\GoodIndex}=0$ by
\Cref{def:truncated-low-good-bundle}, and
$\TruthfulMarginalHigh_{\AgentIndex\GoodIndex}=\TruthfulMarginal_{\AgentIndex\GoodIndex}$
by \Cref{def:high-good-fractional-suballocation}.

\emph{Case 3: agent $\AgentIndex$ is deficient and
$\GoodIndex\in\LowSet_\AgentIndex$.}
We have
\begin{align*}
\ColorAverage_\ColorIndex
\ColorCarrier_{\AgentIndex\GoodIndex}
&\overset{(a)}{\leq}
(1-\TruthfulMarginal_\AgentIndex(\HighSet_\AgentIndex))\cdot\Proxy_\AgentIndex\cdot
\LowGoodBundle_{\AgentIndex\GoodIndex}
\overset{(b)}{\leq}\max\{\AgentCount\cdot(1-\TruthfulMarginal_\AgentIndex(\HighSet_\AgentIndex)),1\}\cdot
\Proxy_\AgentIndex\cdot\TruthfulMarginal_{\AgentIndex\GoodIndex}
\overset{(c)}{\leq}\TruthfulMarginal_{\AgentIndex\GoodIndex}
-\TruthfulMarginalHigh_{\AgentIndex\GoodIndex},
\end{align*}
where inequality (a) holds because, by
\Cref{def:scaled-low-good-suballocation},
$\ColorCarrier_{\AgentIndex\GoodIndex}$ is at most
$\Proxy_\AgentIndex\cdot\LowGoodBundle_{\AgentIndex\GoodIndex}$ in every
color where agent $\AgentIndex$ receives her dummy and is zero in every
other color, and the fraction of colors in which agent $\AgentIndex$
receives her dummy is $1-\TruthfulMarginal_\AgentIndex(\HighSet_\AgentIndex)$.
Inequality (b) uses \Cref{def:truncated-low-good-bundle}, which gives
$\LowGoodBundle_{\AgentIndex\GoodIndex}\leq
\max\{\TruthfulMarginal_{\AgentIndex\GoodIndex}/
(1-\TruthfulMarginal_\AgentIndex(\HighSet_\AgentIndex)),
\AgentCount\cdot\TruthfulMarginal_{\AgentIndex\GoodIndex}\}$; multiplying
by $1-\TruthfulMarginal_\AgentIndex(\HighSet_\AgentIndex)$ yields
$(1-\TruthfulMarginal_\AgentIndex(\HighSet_\AgentIndex))\cdot
\LowGoodBundle_{\AgentIndex\GoodIndex}\leq
\max\{\AgentCount\cdot(1-\TruthfulMarginal_\AgentIndex(\HighSet_\AgentIndex)),1\}\cdot
\TruthfulMarginal_{\AgentIndex\GoodIndex}$.
Inequality (c) holds for two reasons.  First,
\begin{align*}
\max\{\AgentCount\cdot(1-\TruthfulMarginal_\AgentIndex(\HighSet_\AgentIndex)),1\}\cdot\Proxy_\AgentIndex
&=\max\{\AgentCount\cdot(1-\TruthfulMarginal_\AgentIndex(\HighSet_\AgentIndex))\cdot\Proxy_\AgentIndex,
\Proxy_\AgentIndex\}<1,
\end{align*}
because
$\AgentCount\cdot(1-\TruthfulMarginal_\AgentIndex(\HighSet_\AgentIndex))\cdot\Proxy_\AgentIndex<4/7$
by \Cref{lem:scaled-missing-factor-bounds}\textup{(i)} and
$\Proxy_\AgentIndex\leq1/2$ by \Cref{lem:scaling-factor-bounds}.  Second,
$\TruthfulMarginalHigh_{\AgentIndex\GoodIndex}=0$ for the low good
$\GoodIndex\in\LowSet_\AgentIndex$ by
\Cref{def:high-good-fractional-suballocation}.
\end{proof}

Finally, we state the value that the scaled \lowGoodFractionalBundlesName
give the dummy agents.

\begin{proposition}
\label{prop:scaled-value}
Consider any \loadBalancedName coloring $\LoadBalancedColoring$ and the
corresponding scaled \lowGoodFractionalSuballocationsName $\ColorCarrier$
constructed in \Cref{def:scaled-low-good-suballocation}.  Then for every
color $\ColorIndex\in\ColorSet$ and dummy agent
$\AgentIndex\in\DummyAgents^\ColorIndex$, we have
\begin{align*}
\AgentValuation_\AgentIndex(\ColorCarrier_\AgentIndex)
&\geq\frac27\cdot\TPS_\AgentIndex.
\end{align*}
\end{proposition}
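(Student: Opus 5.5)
This is a direct computation from the definitions. Fix a color $\ColorIndex$ and a dummy agent $\AgentIndex\in\DummyAgents^\ColorIndex$. Since $\AgentIndex$ receives her dummy in color $\ColorIndex$, \Cref{def:scaled-low-good-suballocation} gives $\ColorCarrier_{\AgentIndex\GoodIndex}=\Proxy_\AgentIndex\cdot\LowGoodBundle_{\AgentIndex\GoodIndex}$ for every $\GoodIndex\notin\ReservedSet^\ColorIndex$, and zero otherwise. By \Cref{def:truncated-low-good-bundle}, $\LowGoodBundle_{\AgentIndex\GoodIndex}=0$ on $\HighSet_\AgentIndex$. Hence
\begin{align*}
\AgentValuation_\AgentIndex(\ColorCarrier_\AgentIndex)
=\Proxy_\AgentIndex\cdot\sum_{\GoodIndex\in\LowSet_\AgentIndex\setminus\ReservedSet^\ColorIndex}
\LowGoodBundle_{\AgentIndex\GoodIndex}\cdot\AgentValuation_\AgentIndex(\GoodIndex),
\end{align*}
which is exactly $\Proxy_\AgentIndex$ times her unblocked value in color $\ColorIndex$.

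A \loadBalancedName coloring is by definition \reservationBalancedName, so \Cref{lem:unreserved-value-bound} applies. It bounds the unblocked value from below by $D_\AgentIndex\cdot\UniversalFactorValue\cdot\TPS_\AgentIndex$, where $D_\AgentIndex$ is precisely the denominator in \Cref{def:scaling-factor}. Thus $\Proxy_\AgentIndex=2/D_\AgentIndex$.

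By \Cref{lem:scaling-factor-bounds}, $\Proxy_\AgentIndex>0$, so $D_\AgentIndex>0$. Multiplying the lower bound by the positive number $\Proxy_\AgentIndex$ preserves the inequality, and gives
\begin{align*}
\AgentValuation_\AgentIndex(\ColorCarrier_\AgentIndex)\geq\frac{2}{D_\AgentIndex}\cdot D_\AgentIndex\cdot\frac17\cdot\TPS_\AgentIndex=\frac27\cdot\TPS_\AgentIndex.
\end{align*}

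The only delicate point is the positivity of $D_\AgentIndex$, which is needed both for $\Proxy_\AgentIndex$ to be well defined and for the multiplication to preserve the direction of the inequality. Since this is supplied by \Cref{lem:scaling-factor-bounds}, which I take as given, there is no real obstacle. The rest is checking that the ceiling term in \Cref{lem:unreserved-value-bound} matches the one in \Cref{def:scaling-factor} verbatim.
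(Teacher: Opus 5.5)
Your proof is correct and follows the paper's own argument. You write $\AgentValuation_\AgentIndex(\ColorCarrier_\AgentIndex)$ as $\Proxy_\AgentIndex$ times the unblocked value, then invoke \Cref{lem:unreserved-value-bound}, which applies because load-balanced colorings are reservation-balanced. Finally you multiply by the scaling factor, which is positive by \Cref{lem:scaling-factor-bounds}, so that it cancels the denominator in \Cref{def:scaling-factor}.
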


\begin{proof}
By \Cref{def:scaled-low-good-suballocation},
$\AgentValuation_\AgentIndex(\ColorCarrier_\AgentIndex)=\Proxy_\AgentIndex\cdot
\sum_{\GoodIndex\in\LowSet_\AgentIndex\setminus\ReservedSet^\ColorIndex}
\LowGoodBundle_{\AgentIndex\GoodIndex}\cdot\AgentValuation_\AgentIndex(\GoodIndex)$.
The coloring $\LoadBalancedColoring$ is \reservationBalancedName, so
\Cref{lem:unreserved-value-bound} applies.  Multiplying its bound by the
scaling factor $\Proxy_\AgentIndex$, which is positive by
\Cref{lem:scaling-factor-bounds}, and using \Cref{def:scaling-factor}
gives the claim.
\end{proof}

Together, \Cref{prop:load-balancing,prop:capacity-certificate,prop:scaled-value}
provide the three guarantees needed in \Cref{sec:implementation}: in the
\loadBalancedName coloring $\LoadBalancedColoring$, the scaled
\lowGoodFractionalSuballocationName $\ColorCarrier$ is feasible in every
color $\ColorIndex$, has its average over colors dominated by the unused
truthful marginal, and gives every dummy agent at least twice her required value.
Consequently, in every color, the \highGoodIntegralSuballocationName and
the scaled \lowGoodFractionalSuballocationName together assign at most one
unit of each good, and averaged over the colors they give no agent more of
any good than $\TruthfulMarginal$ does.  \Cref{subsec:exact-completion}
therefore adds \portionsName to every color so that the colors average
exactly to $\TruthfulMarginal$.

\section{Assembling and Implementing Fractional Allocations}
\label{sec:implementation}
After \Cref{sec:capacity}, each color $\ColorIndex$ of the \loadBalancedName
coloring $\LoadBalancedColoring$ of \Cref{prop:load-balancing} contains the
\highGoodIntegralSuballocationName $\TruthfulMarginalHighColor$ of
\Cref{def:high-good-integral-suballocation} and the scaled
\lowGoodFractionalSuballocationName $\ColorCarrier$ of
\Cref{def:scaled-low-good-suballocation}, which may leave an unassigned
\amountName of some goods.  It remains to allocate these \amountsName and to
turn the colors into the mechanism's output: a distribution over integral
allocations.  We do so in three steps, which together prove \Cref{thm:main}.
\begin{enumerate}
\item \Cref{subsec:exact-completion} allocates these unassigned \amountsName
  through the \residualFractionalSuballocationName
  $\CompletionAmount^\ColorIndex$ of color $\ColorIndex$ in
  \Cref{def:color-residual-fractional-suballocation}.  Added to $\TruthfulMarginalHighColor$
  and $\ColorCarrier$, it gives the \assembledFractionalAllocationName
  $\FractionalAllocation^\ColorIndex$ of
  \Cref{def:assembled-fractional-allocation}.  Each
  $\FractionalAllocation^\ColorIndex$ is a feasible fractional allocation,
  and their average over the colors recovers the
  \truthfulFractionalAllocationName $\TruthfulMarginal$ exactly.
\item \Cref{subsec:assembly} applies the faithful implementation in
  \Cref{lem:faithful-rounding} to each $\FractionalAllocation^\ColorIndex$,
  obtaining a distribution over integral allocations for each color.
\item The mechanism mixes these distributions uniformly over the colors and
  reduces the support size to at most $\AgentCount\cdot\GoodCount$ by
  \Cref{lem:support-reduction} to get the final distribution over integral
  allocations.
\end{enumerate}

\subsection{Allocating leftovers and assembling fractional allocations}
\label{subsec:exact-completion}

We formalize what is left of the \truthfulFractionalAllocationName
$\TruthfulMarginal$ and what is left of each good in each color by the
following definition.  All averages below are uniform over the colors.

\begin{definition}[\ResidualFractionalSuballocationName and unassigned \amountsName]
\label{def:residual-fractional-suballocation}
The \emph{\residualFractionalSuballocationName} $\Residual$ is defined by
setting, for every agent $\AgentIndex\in\AgentSet$ and good
$\GoodIndex\in\GoodSet$,
\begin{align*}
\Residual_{\AgentIndex\GoodIndex}
&\deq\TruthfulMarginal_{\AgentIndex\GoodIndex}
-\TruthfulMarginalHigh_{\AgentIndex\GoodIndex}
-\ColorAverage_\ColorIndex\ColorCarrier_{\AgentIndex\GoodIndex}.
\end{align*}
For every color $\ColorIndex\in\ColorSet$ and good $\GoodIndex\in\GoodSet$,
the \emph{unassigned \amountName} of good $\GoodIndex$ in color $\ColorIndex$ is
\begin{align*}
\UnusedCapacity_\GoodIndex^\ColorIndex
&\deq1-\sum_{\AgentIndex\in\AgentSet}
(\TruthfulMarginalHighColor_{\AgentIndex\GoodIndex}
+\ColorCarrier_{\AgentIndex\GoodIndex}).
\end{align*}
\end{definition}

The next lemma shows that, for every good, the total
\residualFractionalSuballocationName over agents equals its average
unassigned \amountName over colors.

\begin{lemma}
\label{lem:residual-balance}
The \residualFractionalSuballocationName $\Residual$ and the unassigned
\amountsName $\UnusedCapacity_\GoodIndex^\ColorIndex$ of
\Cref{def:residual-fractional-suballocation} are nonnegative, and they
balance: every good $\GoodIndex\in\GoodSet$ satisfies
$\sum_{\AgentIndex\in\AgentSet}\Residual_{\AgentIndex\GoodIndex}
=\ColorAverage_\ColorIndex\UnusedCapacity_\GoodIndex^\ColorIndex$.
\end{lemma}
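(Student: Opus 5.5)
Three facts need to be established: $\Residual\geq0$, $\UnusedCapacity^\ColorIndex_\GoodIndex\geq0$, and the balance identity. Each follows directly from results already proved, so the plan is to invoke them in turn.

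\emph{Nonnegativity of $\Residual$.} Fix an agent $\AgentIndex$ and a good $\GoodIndex$. \Cref{prop:capacity-certificate} gives $\ColorAverage_\ColorIndex\ColorCarrier_{\AgentIndex\GoodIndex}\leq\TruthfulMarginal_{\AgentIndex\GoodIndex}-\TruthfulMarginalHigh_{\AgentIndex\GoodIndex}$. Rearranging this inequality yields exactly $\Residual_{\AgentIndex\GoodIndex}\geq0$.

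\emph{Nonnegativity of $\UnusedCapacity^\ColorIndex_\GoodIndex$.} Fix a color $\ColorIndex$ and a good $\GoodIndex$, and split into two cases.
\begin{itemize}
\item If $\GoodIndex\in\ReservedSet^\ColorIndex$, exactly one agent has $\TruthfulMarginalHighColor_{\AgentIndex\GoodIndex}=1$, because each color is a matching. Moreover, $\ColorCarrier_{\AgentIndex\GoodIndex}=0$ for every agent by \Cref{def:scaled-low-good-suballocation}. Hence the total assigned amount is $1$ and $\UnusedCapacity^\ColorIndex_\GoodIndex=0$.
\item If $\GoodIndex\notin\ReservedSet^\ColorIndex$, then every $\TruthfulMarginalHighColor_{\AgentIndex\GoodIndex}$ is zero. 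By \Cref{prop:load-balancing}, $\sum_\AgentIndex\ColorCarrier_{\AgentIndex\GoodIndex}\leq1$, so $\UnusedCapacity^\ColorIndex_\GoodIndex\geq0$.
\end{itemize}
The one point to check is that the scaled low-good suballocation never puts mass on a reserved good. This is built into \Cref{def:scaled-low-good-suballocation}, so no real obstacle arises here.

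\emph{Balance identity.} This is pure linearity of averaging. Summing the definition of $\Residual$ over agents gives
\[
\sum_\AgentIndex\Residual_{\AgentIndex\GoodIndex}=\sum_\AgentIndex\TruthfulMarginal_{\AgentIndex\GoodIndex}-\sum_\AgentIndex\TruthfulMarginalHigh_{\AgentIndex\GoodIndex}-\ColorAverage_\ColorIndex\sum_\AgentIndex\ColorCarrier_{\AgentIndex\GoodIndex}.
\]
The three terms on the right are handled as follows.
\begin{itemize}
\item The first term equals $1$ by \Cref{lem:marginal-feasibility}.
\item For the second term, use $\ColorAverage_\ColorIndex\TruthfulMarginalHighColor=\TruthfulMarginalHigh$, recorded after \Cref{def:high-good-integral-suballocation}. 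This rewrites $\sum_\AgentIndex\TruthfulMarginalHigh_{\AgentIndex\GoodIndex}$ as $\ColorAverage_\ColorIndex\sum_\AgentIndex\TruthfulMarginalHighColor_{\AgentIndex\GoodIndex}$.
\item The third term is already a color average.
\end{itemize}
Combining the terms gives $\ColorAverage_\ColorIndex\bigl(1-\sum_\AgentIndex(\TruthfulMarginalHighColor_{\AgentIndex\GoodIndex}+\ColorCarrier_{\AgentIndex\GoodIndex})\bigr)=\ColorAverage_\ColorIndex\UnusedCapacity^\ColorIndex_\GoodIndex$, which is the claimed identity.

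One caveat: the identity $\ColorAverage_\ColorIndex\TruthfulMarginalHighColor=\TruthfulMarginalHigh$ must hold for the final \loadBalancedName coloring $\LoadBalancedColoring$, not only for the initial coloring. It does, because alternating-path exchanges preserve the multigraph and hence the number of colors on each agent–good edge class.
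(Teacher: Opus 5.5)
Your proposal is correct and follows the paper's proof essentially step for step. It derives nonnegativity of $\Residual$ from \Cref{prop:capacity-certificate}, splits $\UnusedCapacity_\GoodIndex^\ColorIndex$ into the reserved and unreserved cases via \Cref{def:scaled-low-good-suballocation} and \Cref{prop:load-balancing}, and obtains the balance identity from \Cref{lem:marginal-feasibility} and $\ColorAverage_\ColorIndex\TruthfulMarginalHighColor=\TruthfulMarginalHigh$. Your caveat about the final coloring is also sound: that identity in fact holds for every proper $\GridSize$-edge-coloring of $\highGoodDummyGraph$, because parallel edges must receive distinct colors.
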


\begin{proof}
The \residualFractionalSuballocationName $\Residual$ is nonnegative by
\Cref{prop:capacity-certificate}.  For the unassigned \amountName
$\UnusedCapacity_\GoodIndex^\ColorIndex$ of a good $\GoodIndex\in\GoodSet$
in a color $\ColorIndex\in\ColorSet$, if $\GoodIndex$ is reserved in color
$\ColorIndex$, exactly one agent receives it in
$\TruthfulMarginalHighColor$, and $\ColorCarrier_{\AgentIndex\GoodIndex}=0$
for every agent $\AgentIndex\in\AgentSet$ by
\Cref{def:scaled-low-good-suballocation}, so
$\UnusedCapacity_\GoodIndex^\ColorIndex=0$.  Otherwise, no agent receives
$\GoodIndex$ in $\TruthfulMarginalHighColor$, and
$\sum_{\AgentIndex\in\AgentSet}\ColorCarrier_{\AgentIndex\GoodIndex}\leq1$
by \Cref{prop:load-balancing}, so
$\UnusedCapacity_\GoodIndex^\ColorIndex\geq0$.

Then, every good is fully allocated in
$\TruthfulMarginal$ by \Cref{lem:marginal-feasibility}, and
$\ColorAverage_\ColorIndex\TruthfulMarginalHighColor=\TruthfulMarginalHigh$
by \Cref{subsec:high-good-implementation}.  Summing the
\residualFractionalSuballocationName over agents therefore gives, for every
good $\GoodIndex\in\GoodSet$,
\begin{align*}
\sum_{\AgentIndex\in\AgentSet}\Residual_{\AgentIndex\GoodIndex}
&=1-\ColorAverage_\ColorIndex\sum_{\AgentIndex\in\AgentSet}
(\TruthfulMarginalHighColor_{\AgentIndex\GoodIndex}
+\ColorCarrier_{\AgentIndex\GoodIndex})
=\ColorAverage_\ColorIndex\UnusedCapacity_\GoodIndex^\ColorIndex.
\end{align*}
\end{proof}

Then, for each color $\ColorIndex\in\ColorSet$, we define the
\residualFractionalSuballocationName $\CompletionAmount^\ColorIndex$ of
color $\ColorIndex$ in the following definition, by dividing the unassigned
\amountName $\UnusedCapacity_\GoodIndex^\ColorIndex$ of each good
$\GoodIndex$ among the agents $\AgentIndex\in\AgentSet$ in proportion to
$\Residual_{\AgentIndex\GoodIndex}$.  This gives an
explicit solution to the completion problem used by
Babaioff et~al.\ \citep[Claim~4.8 and Lemma~4.9]{bfmm2026tie}.

\begin{definition}[\ResidualFractionalSuballocationName of a color]
\label{def:color-residual-fractional-suballocation}
For every color $\ColorIndex\in\ColorSet$, the
\emph{\residualFractionalSuballocationName} $\CompletionAmount^\ColorIndex$
\emph{of color $\ColorIndex$} is defined by setting, for every agent
$\AgentIndex\in\AgentSet$ and good $\GoodIndex\in\GoodSet$,
\begin{align*}
\CompletionAmount_{\AgentIndex\GoodIndex}^\ColorIndex
&\deq
\begin{cases}
\dfrac{\Residual_{\AgentIndex\GoodIndex}\cdot\UnusedCapacity_\GoodIndex^\ColorIndex}
{\sum_{\OtherAgentIndex\in\AgentSet}\Residual_{\OtherAgentIndex\GoodIndex}},
&\sum_{\OtherAgentIndex\in\AgentSet}\Residual_{\OtherAgentIndex\GoodIndex}>0,\\
0,&\text{otherwise}.
\end{cases}
\end{align*}
\end{definition}

The next lemma shows that each $\CompletionAmount^\ColorIndex$ exactly
fills the unassigned \amountsName of its color and that the colors average
to $\Residual$.

\begin{lemma}
\label{lem:proportional-completion}
The \residualFractionalSuballocationsName $\CompletionAmount^\ColorIndex$ of
\Cref{def:color-residual-fractional-suballocation} are nonnegative, and
every agent $\AgentIndex\in\AgentSet$, good $\GoodIndex\in\GoodSet$, and
color $\ColorIndex\in\ColorSet$ satisfy
\begin{align*}
\ColorAverage_\ColorIndex \CompletionAmount_{\AgentIndex\GoodIndex}^\ColorIndex=\Residual_{\AgentIndex\GoodIndex},
\qquad
\sum_\AgentIndex \CompletionAmount_{\AgentIndex\GoodIndex}^\ColorIndex=\UnusedCapacity_\GoodIndex^\ColorIndex.
\end{align*}
\end{lemma}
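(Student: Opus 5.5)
The plan is to fix a good $\GoodIndex\in\GoodSet$ and split according to the two cases of \Cref{def:color-residual-fractional-suballocation}, namely whether the total residual $\sum_{\OtherAgentIndex\in\AgentSet}\Residual_{\OtherAgentIndex\GoodIndex}$ is positive or zero. All three claims (nonnegativity, color average, column sum) are then direct consequences of \Cref{lem:residual-balance}, which gives nonnegativity of $\Residual$ and of $\UnusedCapacity_\GoodIndex^\ColorIndex$ together with the balance identity $\sum_{\AgentIndex\in\AgentSet}\Residual_{\AgentIndex\GoodIndex}=\ColorAverage_\ColorIndex\UnusedCapacity_\GoodIndex^\ColorIndex$.

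\emph{Positive case.} In this case $\CompletionAmount^\ColorIndex_{\AgentIndex\GoodIndex}$ is a product of nonnegative quantities divided by a positive number, so it is nonnegative. Summing over agents $\AgentIndex$ cancels the denominator and gives $\sum_{\AgentIndex}\CompletionAmount^\ColorIndex_{\AgentIndex\GoodIndex}=\UnusedCapacity_\GoodIndex^\ColorIndex$. Averaging over colors instead, only $\UnusedCapacity_\GoodIndex^\ColorIndex$ depends on $\ColorIndex$, so
$\ColorAverage_\ColorIndex\CompletionAmount^\ColorIndex_{\AgentIndex\GoodIndex}
=\Residual_{\AgentIndex\GoodIndex}\cdot\ColorAverage_\ColorIndex\UnusedCapacity_\GoodIndex^\ColorIndex/\sum_{\OtherAgentIndex}\Residual_{\OtherAgentIndex\GoodIndex}$.
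By the balance identity of \Cref{lem:residual-balance}, this equals $\Residual_{\AgentIndex\GoodIndex}$.

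\emph{Zero case.} This is the only step requiring care. Since every $\Residual_{\OtherAgentIndex\GoodIndex}\ge0$ and their sum is zero, every $\Residual_{\AgentIndex\GoodIndex}=0$. The average identity then holds trivially, since both sides are $0$. For the column sum, the balance identity gives $\ColorAverage_\ColorIndex\UnusedCapacity_\GoodIndex^\ColorIndex=0$. Each $\UnusedCapacity_\GoodIndex^\ColorIndex$ is nonnegative by \Cref{lem:residual-balance}, so all of them vanish, and hence $\sum_\AgentIndex\CompletionAmount^\ColorIndex_{\AgentIndex\GoodIndex}=0=\UnusedCapacity_\GoodIndex^\ColorIndex$ for every color. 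Nonnegativity in this case is immediate from the definition.

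I expect no real obstacle. The only point needing attention is the degenerate zero case: there the defined value $0$ matches $\UnusedCapacity_\GoodIndex^\ColorIndex$ only because a nonnegative family with average zero vanishes identically.
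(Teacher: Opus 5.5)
Your proposal is correct and follows the paper's proof essentially verbatim. Like the paper, you fix a good and handle the zero-denominator case using nonnegativity of $\Residual$ and of $\UnusedCapacity_\GoodIndex^\ColorIndex$ together with the balance identity of \Cref{lem:residual-balance}; in the positive case you cancel the denominator for the column sum and apply the balance identity for the color average, and you also spell out the step the paper leaves implicit, that a nonnegative family of unassigned amounts with zero average must vanish.
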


\begin{proof}
Fix a good $\GoodIndex\in\GoodSet$.  If the sum
$\sum_{\OtherAgentIndex\in\AgentSet}\Residual_{\OtherAgentIndex\GoodIndex}$
is zero, nonnegativity forces $\Residual_{\AgentIndex\GoodIndex}=0$ for
every agent $\AgentIndex\in\AgentSet$.  The
balance identity in \Cref{lem:residual-balance} also forces every
unassigned \amountName $\UnusedCapacity_\GoodIndex^\ColorIndex$ to be zero.  Thus
the zero entries $\CompletionAmount_{\AgentIndex\GoodIndex}^\ColorIndex=0$
satisfy the two identities in \Cref{lem:proportional-completion}.

Otherwise the denominator in
\Cref{def:color-residual-fractional-suballocation} is positive.
Summing over agents and averaging over colors, respectively, gives
\begin{align*}
\sum_{\AgentIndex\in\AgentSet}
\CompletionAmount_{\AgentIndex\GoodIndex}^\ColorIndex
&=\frac{\sum_{\AgentIndex\in\AgentSet}\Residual_{\AgentIndex\GoodIndex}}
{\sum_{\OtherAgentIndex\in\AgentSet}\Residual_{\OtherAgentIndex\GoodIndex}}
\cdot\UnusedCapacity_\GoodIndex^\ColorIndex
=\UnusedCapacity_\GoodIndex^\ColorIndex,\\
\ColorAverage_\ColorIndex
\CompletionAmount_{\AgentIndex\GoodIndex}^\ColorIndex
&=\frac{\Residual_{\AgentIndex\GoodIndex}}
{\sum_{\OtherAgentIndex\in\AgentSet}\Residual_{\OtherAgentIndex\GoodIndex}}
\cdot\ColorAverage_\ColorIndex\UnusedCapacity_\GoodIndex^\ColorIndex
\overset{(a)}{=}\Residual_{\AgentIndex\GoodIndex},
\end{align*}
where equality (a) is the balance identity in \Cref{lem:residual-balance}.
The entries are nonnegative because $\Residual_{\AgentIndex\GoodIndex}$ and
$\UnusedCapacity_\GoodIndex^\ColorIndex$ are.
\end{proof}

Adding the \residualFractionalSuballocationName of each color to what the
color already holds gives its fractional allocation.

\begin{definition}[\AssembledFractionalAllocationName]
\label{def:assembled-fractional-allocation}
For each color $\ColorIndex\in\ColorSet$ of $\LoadBalancedColoring$, the
\emph{\assembledFractionalAllocationName} $\FractionalAllocation^\ColorIndex$
of color $\ColorIndex$ is the sum of the
\highGoodIntegralSuballocationName $\TruthfulMarginalHighColor$ of
\Cref{def:high-good-integral-suballocation}, the scaled
\lowGoodFractionalSuballocationName $\ColorCarrier$ of
\Cref{def:scaled-low-good-suballocation}, and the
\residualFractionalSuballocationName $\CompletionAmount^\ColorIndex$ of
color $\ColorIndex$ in \Cref{def:color-residual-fractional-suballocation}:
for every agent
$\AgentIndex\in\AgentSet$ and good $\GoodIndex\in\GoodSet$,
\begin{align*}
\FractionalAllocation_{\AgentIndex\GoodIndex}^\ColorIndex
&\deq\TruthfulMarginalHighColor_{\AgentIndex\GoodIndex}
+\ColorCarrier_{\AgentIndex\GoodIndex}
+\CompletionAmount_{\AgentIndex\GoodIndex}^\ColorIndex.
\end{align*}
The fractional bundle of agent $\AgentIndex$ in color $\ColorIndex$ is
$\FractionalAllocation_\AgentIndex^\ColorIndex\deq(\FractionalAllocation_{\AgentIndex\GoodIndex}^\ColorIndex)_{\GoodIndex\in\GoodSet}$.
\end{definition}

We convert each \assembledFractionalAllocationName
$\FractionalAllocation^\ColorIndex$ into a distribution over integral
allocations by the faithful implementation of \Cref{lem:faithful-rounding};
the uniform mixture of these distributions then implements the
\truthfulFractionMechName.  The next lemma records the four properties of
$\FractionalAllocation^\ColorIndex$ used in the proof of \Cref{thm:main}.

\begin{lemma}
\label{lem:assembled-fractional-allocation}
The \assembledFractionalAllocationsName $\FractionalAllocation^\ColorIndex$ of
\Cref{def:assembled-fractional-allocation} satisfy the following.
\begin{enumerate}[label=\textup{(\roman*)}]
\item\label{lem:assembled-feasible}
For every color $\ColorIndex\in\ColorSet$, $\FractionalAllocation^\ColorIndex$
is nonnegative and satisfies
$\sum_{\AgentIndex\in\AgentSet}\FractionalAllocation_{\AgentIndex\GoodIndex}^\ColorIndex=1$
for every good $\GoodIndex\in\GoodSet$; in particular, it is feasible.
\item\label{lem:assembled-average}
The colors average exactly to the \truthfulFractionalAllocationName: every
agent $\AgentIndex\in\AgentSet$ and good $\GoodIndex\in\GoodSet$ satisfy
$\ColorAverage_\ColorIndex\FractionalAllocation_{\AgentIndex\GoodIndex}^\ColorIndex=\TruthfulMarginal_{\AgentIndex\GoodIndex}$.
\item\label{lem:assembled-high-goods}
In every color $\ColorIndex\in\ColorSet$, each nondeficient agent, and each
deficient agent who does not receive her dummy, receives at least one of her
high goods in full: every agent
$\AgentIndex\in\AgentSet\setminus\DummyAgents^\ColorIndex$ has some high
good $\GoodIndex\in\HighSet_\AgentIndex$ with
$\FractionalAllocation_{\AgentIndex\GoodIndex}^\ColorIndex=1$.
\item\label{lem:assembled-dummy-agents}
In every color $\ColorIndex\in\ColorSet$, each deficient agent who receives
her dummy receives a fractional bundle of value at least $\frac27$ of her
TPS, in which every strictly fractional good is a low good: every agent
$\AgentIndex\in\DummyAgents^\ColorIndex$ satisfies
$\AgentValuation_\AgentIndex(\FractionalAllocation_\AgentIndex^\ColorIndex)\geq\frac27\cdot\TPS_\AgentIndex$,
and every good $\GoodIndex\in\GoodSet$ with
$0<\FractionalAllocation_{\AgentIndex\GoodIndex}^\ColorIndex<1$ satisfies
$\GoodIndex\in\LowSet_\AgentIndex$.
\end{enumerate}
\end{lemma}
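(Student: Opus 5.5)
The plan is to verify the four parts directly from the three summands in \Cref{def:assembled-fractional-allocation}, relying on \Cref{lem:residual-balance,lem:proportional-completion} for completion and on \Cref{prop:scaled-value} for value.

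\emph{Parts (i) and (ii).} Nonnegativity is immediate. The high-good integral suballocation $\TruthfulMarginalHighColor$ is a $0/1$ matrix. The scaled suballocation $\ColorCarrier$ is nonnegative because $\Proxy_\AgentIndex>0$ and $\LowGoodBundle\geq0$. The color residual $\CompletionAmount^\ColorIndex$ is nonnegative by \Cref{lem:proportional-completion}. For the column sums, fix a good $\GoodIndex$ and a color $\ColorIndex$. By \Cref{lem:proportional-completion}, $\sum_\AgentIndex\CompletionAmount^\ColorIndex_{\AgentIndex\GoodIndex}=\UnusedCapacity^\ColorIndex_\GoodIndex$, and adding this to the other two summands gives exactly $1$ by the definition of $\UnusedCapacity^\ColorIndex_\GoodIndex$. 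Feasibility follows, since nonnegative entries with column sum one all lie in $[0,1]$. For (ii), average the three summands over colors. We use $\ColorAverage_\ColorIndex\TruthfulMarginalHighColor=\TruthfulMarginalHigh$ from \Cref{subsec:high-good-implementation} and $\ColorAverage_\ColorIndex\CompletionAmount^\ColorIndex=\Residual$ from \Cref{lem:proportional-completion}. With the definition of $\Residual$ in \Cref{def:residual-fractional-suballocation}, the sum telescopes to $\TruthfulMarginal$.

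\emph{Part (iii).} Take $\AgentIndex\notin\DummyAgents^\ColorIndex$. Color $\ColorIndex$ contains exactly one edge at $\AgentIndex$, and this edge goes to a good $\GoodIndex$. That good lies in $\HighSet_\AgentIndex$, because $\TruthfulMarginalHigh$ is supported on high goods. Hence $\TruthfulMarginalHighColor_{\AgentIndex\GoodIndex}=1$. Nonnegativity of all summands gives $\FractionalAllocation^\ColorIndex_{\AgentIndex\GoodIndex}\geq1$, and the column sum from (i) forces equality.

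\emph{Part (iv).} This is the only part needing care. Take $\AgentIndex\in\DummyAgents^\ColorIndex$. In color $\ColorIndex$ she has no high-good edge, so $\TruthfulMarginalHighColor_\AgentIndex=0$. Since $\CompletionAmount^\ColorIndex\geq0$, we get $\AgentValuation_\AgentIndex(\FractionalAllocation^\ColorIndex_\AgentIndex)\geq\AgentValuation_\AgentIndex(\ColorCarrier_\AgentIndex)\geq\frac27\TPS_\AgentIndex$ by \Cref{prop:scaled-value}.

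For the claim about strictly fractional goods, suppose $\GoodIndex\in\HighSet_\AgentIndex$; we show $\FractionalAllocation^\ColorIndex_{\AgentIndex\GoodIndex}=0$. First, $\ColorCarrier_{\AgentIndex\GoodIndex}=0$ because $\LowGoodBundle_{\AgentIndex\GoodIndex}=0$ on high goods. Second, $\CompletionAmount^\ColorIndex_{\AgentIndex\GoodIndex}$ is proportional to $\Residual_{\AgentIndex\GoodIndex}$. Case~2 of \Cref{prop:capacity-certificate} gives $\TruthfulMarginalHigh_{\AgentIndex\GoodIndex}=\TruthfulMarginal_{\AgentIndex\GoodIndex}$ and $\ColorAverage_\ColorIndex\ColorCarrier_{\AgentIndex\GoodIndex}=0$ for a deficient agent, so $\Residual_{\AgentIndex\GoodIndex}=0$. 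Hence no high good appears fractionally in her bundle, and every good with $0<\FractionalAllocation^\ColorIndex_{\AgentIndex\GoodIndex}<1$ lies in $\LowSet_\AgentIndex$.

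The main subtlety is this last step: proportional completion must never give a deficient agent a share of her own high goods. The identity $\Residual_{\AgentIndex\GoodIndex}=0$ on $\HighSet_\AgentIndex$ for deficient agents is exactly what rules this out.
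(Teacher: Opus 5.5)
Your proof is correct and follows the paper's argument essentially step for step. Parts (i) and (ii) come from the completion identities of \Cref{lem:proportional-completion}. Part (iii) comes from the unique color-$\ColorIndex$ edge at the agent together with the column sum, and part (iv) from \Cref{prop:scaled-value} plus the vanishing of $\Residual_{\AgentIndex\GoodIndex}$ on a deficient agent's high goods. The only cosmetic difference is that you cite Case~2 of \Cref{prop:capacity-certificate} for $\Residual_{\AgentIndex\GoodIndex}=0$, whereas the paper reads it off the definitions directly.
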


\begin{proof}
For \Cref{lem:assembled-feasible}, the three parts in
\Cref{def:assembled-fractional-allocation} are nonnegative, the third by
\Cref{lem:proportional-completion}.  The second identity in
\Cref{lem:proportional-completion} together with the definition of
$\UnusedCapacity_\GoodIndex^\ColorIndex$ in
\Cref{def:residual-fractional-suballocation} gives
$\sum_{\AgentIndex\in\AgentSet}\FractionalAllocation_{\AgentIndex\GoodIndex}^\ColorIndex=1$.

For \Cref{lem:assembled-average}, fix an agent $\AgentIndex\in\AgentSet$
and a good $\GoodIndex\in\GoodSet$.  The identity
$\ColorAverage_\ColorIndex\TruthfulMarginalHighColor=\TruthfulMarginalHigh$
of \Cref{subsec:high-good-implementation}, the first identity in
\Cref{lem:proportional-completion}, and the definition of the
\residualFractionalSuballocationName $\Residual$ in
\Cref{def:residual-fractional-suballocation} give
\begin{align*}
\ColorAverage_\ColorIndex\FractionalAllocation_{\AgentIndex\GoodIndex}^\ColorIndex
=\TruthfulMarginalHigh_{\AgentIndex\GoodIndex}
+\ColorAverage_\ColorIndex\ColorCarrier_{\AgentIndex\GoodIndex}
+\Residual_{\AgentIndex\GoodIndex}
=\TruthfulMarginal_{\AgentIndex\GoodIndex}.
\end{align*}

For \Cref{lem:assembled-high-goods}, fix a color $\ColorIndex\in\ColorSet$
and an agent $\AgentIndex\in\AgentSet\setminus\DummyAgents^\ColorIndex$;
every nondeficient agent qualifies, because
$\DummyAgents^\ColorIndex\subseteq\DeficientAgents$.  The coloring
$\LoadBalancedColoring$ is a proper $\GridSize$-edge-coloring of
$\highGoodDummyGraph$, so by \Cref{subsec:high-good-implementation}, color
$\ColorIndex$ contains exactly one edge of $\highGoodDummyGraph$ incident to
agent $\AgentIndex$.  Agent
$\AgentIndex$ does not receive her dummy in color $\ColorIndex$, so this
edge joins her to a good $\GoodIndex\in\GoodSet$, and
$\TruthfulMarginalHighColor_{\AgentIndex\GoodIndex}=1$ by
\Cref{def:high-good-integral-suballocation}.  The edge exists only if
$\TruthfulMarginalHigh_{\AgentIndex\GoodIndex}>0$ by
\Cref{def:high-good-dummy-graph}, and $\TruthfulMarginalHigh$ vanishes on
her low goods by \Cref{def:high-good-fractional-suballocation}, so
$\GoodIndex\in\HighSet_\AgentIndex$.  The other two parts in
\Cref{def:assembled-fractional-allocation} are nonnegative, so
$\FractionalAllocation_{\AgentIndex\GoodIndex}^\ColorIndex\geq1$.  By
\Cref{lem:assembled-feasible},
$\sum_{\OtherAgentIndex\in\AgentSet}\FractionalAllocation_{\OtherAgentIndex\GoodIndex}^\ColorIndex=1$
with nonnegative terms, so
$\FractionalAllocation_{\AgentIndex\GoodIndex}^\ColorIndex=1$.

For \Cref{lem:assembled-dummy-agents}, fix a color $\ColorIndex\in\ColorSet$
and an agent $\AgentIndex\in\DummyAgents^\ColorIndex$, who is deficient.
The three parts in
\Cref{def:assembled-fractional-allocation} are nonnegative and her valuation
is nonnegative and additive, so \Cref{prop:scaled-value} gives
$\AgentValuation_\AgentIndex(\FractionalAllocation_\AgentIndex^\ColorIndex)\geq\AgentValuation_\AgentIndex(\ColorCarrier_\AgentIndex)\geq\frac27\cdot\TPS_\AgentIndex$.
Now fix a high good $\GoodIndex\in\HighSet_\AgentIndex$.  The
\highGoodFractionalSuballocationName already equals her marginal there,
$\TruthfulMarginalHigh_{\AgentIndex\GoodIndex}=\TruthfulMarginal_{\AgentIndex\GoodIndex}$
by \Cref{def:high-good-fractional-suballocation}, and the scaled
\lowGoodFractionalBundlesName vanish on $\HighSet_\AgentIndex$ by
\Cref{def:truncated-low-good-bundle,def:scaled-low-good-suballocation}, so
$\ColorCarrier_{\AgentIndex\GoodIndex}=0$ in every color and
$\Residual_{\AgentIndex\GoodIndex}=0$ by
\Cref{def:residual-fractional-suballocation}.
\Cref{def:color-residual-fractional-suballocation} then gives
$\CompletionAmount_{\AgentIndex\GoodIndex}^\ColorIndex=0$.  Finally,
$\TruthfulMarginalHighColor_{\AgentIndex\GoodIndex}=0$ by
\Cref{def:high-good-integral-suballocation}, because color $\ColorIndex$
matches agent $\AgentIndex$ to her dummy rather than to a good.  Hence
$\FractionalAllocation_{\AgentIndex\GoodIndex}^\ColorIndex=0$ for every
$\GoodIndex\in\HighSet_\AgentIndex$, so every strictly fractional good in her
bundle lies in $\LowSet_\AgentIndex=\GoodSet\setminus\HighSet_\AgentIndex$.
\end{proof}

\subsection{Proof of
  \texorpdfstring{\Cref*{thm:main}}
                 {Theorem \ref*{thm:main}}}
\label{subsec:assembly}

In the last step of the proof, we replace the distribution by one with small
support without changing any agent--good marginal probability, using the
following lemma.

\begin{lemma}[{\citep[Lemma~4.15]{bfmm2026tie}}]
\label{lem:support-reduction}
There is a polynomial-time algorithm that, given an explicit distribution
over integral allocations of goods $\GoodSet$ among agents $\AgentSet$,
computes a distribution over integral allocations that gives every agent
$\AgentIndex\in\AgentSet$ every good $\GoodIndex\in\GoodSet$ with the same
probability as the given distribution, and whose support is a subset of the
given support with at most $\AgentCount\cdot\GoodCount$ allocations.
\end{lemma}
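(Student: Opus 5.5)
The plan is to treat the given distribution as a convex combination $\sum_{k} p_k \Indicator_{\Allocation^{(k)}}$ of integral allocations, each viewed as a $0/1$ vector in $\RealNumbers^{\AgentSet\times\GoodSet}$, and to apply a Carathéodory-style reduction. The marginal matrix $\TruthfulMarginal=\sum_k p_k \Allocation^{(k)}$ is a point in the convex hull of the supported allocations. Every integral allocation satisfies the $\GoodCount$ linear equalities $\sum_{\AgentIndex}\Allocation_{\AgentIndex\GoodIndex}=1$. Hence the affine hull of the supported vectors has dimension at most $\AgentCount\GoodCount-\GoodCount$. Carathéodory's theorem then gives a representation using at most $\AgentCount\GoodCount-\GoodCount+1\leq\AgentCount\GoodCount$ of the given allocations, since $\GoodCount\geq1$.

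To make this algorithmic, I will run the standard elimination loop. While the current support $\{\Allocation^{(1)},\dots,\Allocation^{(s)}\}$ has more vectors than the affine dimension plus one, solve a linear system to find a nonzero $\lambda\in\RealNumbers^s$ with $\sum_k\lambda_k\Allocation^{(k)}=0$ and $\sum_k\lambda_k=0$. Such a $\lambda$ exists because the vectors are affinely dependent, and it can be found by Gaussian elimination. Then set $p_k\leftarrow p_k-t\lambda_k$, where $t=\min\{p_k/\lambda_k:\lambda_k>0\}$ is positive. The new weights are nonnegative and still sum to one, every marginal $\sum_k p_k\Allocation^{(k)}_{\AgentIndex\GoodIndex}$ is unchanged, and at least one weight drops to zero, so I delete that allocation.

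Each iteration removes one allocation and uses only rational linear algebra of polynomial size in the input. The number of iterations is at most the input support size, so the whole procedure runs in polynomial time. The final support is contained in the original one.

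The main point needing care is the bookkeeping of bit sizes. The weights stay rational with polynomially bounded encoding because each update solves a system whose entries are $0/1$ plus the current weights. To keep this clean, I would recompute the weights at the end as a basic feasible solution of the LP $\{p\geq0:\sum_k p_k\Allocation^{(k)}=\TruthfulMarginal,\ \sum_k p_k=1\}$ restricted to the surviving support. This is equivalent to the vertex argument: a vertex of that polytope has at most (number of independent equality constraints) $\leq\AgentCount\GoodCount$ nonzero coordinates, which gives the same support bound directly.
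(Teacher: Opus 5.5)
Your proof is correct. Note that the paper does not prove this lemma: it imports it from \citep[Lemma~4.15]{bfmm2026tie}. Your Carath\'eodory elimination, equivalently taking a basic feasible solution of $\{p\geq0:\sum_k p_k\Allocation^{(k)}=x,\ \sum_k p_k=1\}$ over the given support, is the standard argument for such a statement. You also correctly use the $\GoodCount$ identities $\sum_{\AgentIndex}\Allocation_{\AgentIndex\GoodIndex}=1$ to make the normalization row redundant. This gives the bound $\AgentCount\GoodCount-\GoodCount+1\leq\AgentCount\GoodCount$ rather than the naive $\AgentCount\GoodCount+1$.

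One small point concerns the bit sizes. Recomputing the weights at the end does not control the intermediate weights the loop uses to compute $t$. No fix is needed, however, for the following reason. The vector $\lambda$ depends only on the $0/1$ vectors and the all-ones row, not on the current weights, so Cramer's rule lets you take it integral with polynomially many bits. Each update $p_k\leftarrow p_k-(p_j/\lambda_j)\lambda_k$ then enlarges the common denominator only by the factor $\lambda_j$. There are at most as many iterations as the input support size, so all numbers stay polynomially sized. Alternatively, compute a vertex of the LP directly.
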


\begin{proof}[Proof of \Cref{thm:main}]
Apply \Cref{lem:faithful-rounding} separately to every
$\FractionalAllocation^\ColorIndex$, which is feasible by
\Cref{lem:assembled-feasible} of \Cref{lem:assembled-fractional-allocation}.
Fix a color $\ColorIndex\in\ColorSet$, an allocation
$\Allocation=(\Allocation_\AgentIndex)_{\AgentIndex\in\AgentSet}$ supported by
the faithful implementation of $\FractionalAllocation^\ColorIndex$, and an
agent $\AgentIndex\in\AgentSet$.

\begin{itemize}
\tightlist
\item
  If $\AgentIndex\notin\DummyAgents^\ColorIndex$,
  \Cref{lem:assembled-high-goods} of \Cref{lem:assembled-fractional-allocation}
  gives a high good $\GoodIndex\in\HighSet_\AgentIndex$ with
  $\FractionalAllocation_{\AgentIndex\GoodIndex}^\ColorIndex=1$.  By
  \Cref{lem:faithful-rounding}, agent $\AgentIndex$ receives $\GoodIndex$ in
  every supported allocation, so
  $\AgentValuation_\AgentIndex(\Allocation_\AgentIndex)\geq\AgentValuation_\AgentIndex(\GoodIndex)\geq\UniversalFactorValue\cdot\TPS_\AgentIndex$.
\item
  If $\AgentIndex\in\DummyAgents^\ColorIndex$,
  \Cref{lem:assembled-dummy-agents} of \Cref{lem:assembled-fractional-allocation}
  gives $\AgentValuation_\AgentIndex(\FractionalAllocation_\AgentIndex^\ColorIndex)\geq\frac27\cdot\TPS_\AgentIndex$,
  and every strictly fractional good in her fractional bundle lies in
  $\LowSet_\AgentIndex$ and so has value
  $<\UniversalFactorValue\cdot\TPS_\AgentIndex$.  Hence the maximum
  subtracted in \Cref{lem:faithful-rounding} is at most
  $\UniversalFactorValue\cdot\TPS_\AgentIndex$, and
  $\AgentValuation_\AgentIndex(\Allocation_\AgentIndex)\geq\frac27\cdot\TPS_\AgentIndex-\UniversalFactorValue\cdot\TPS_\AgentIndex=\UniversalFactorValue\cdot\TPS_\AgentIndex$.
\end{itemize}

Thus every supported allocation is $\UniversalFactorValue$-TPS and hence,
by \Cref{lem:tps-dominates-mms}, $\UniversalFactorValue$-MMS.

The faithful implementation of each $\FractionalAllocation^\ColorIndex$ has
exact marginal $\FractionalAllocation^\ColorIndex$, so by
\Cref{lem:assembled-average} of \Cref{lem:assembled-fractional-allocation},
the uniform mixture of the colors' faithful implementations has exact
agent--good marginal $\TruthfulMarginal$.  This holds at every reported
profile, so the mechanism that outputs this mixture implements the
\truthfulFractionMechName.  It is therefore truthful-in-expectation by
\Cref{lem:marginal-truthfulness} and ex-ante envy-free by
\Cref{prop:ex-ante-envy-free}.

For the algorithmic claim, suppose every singleton value
$\AgentValuation_\AgentIndex(\GoodIndex)$ is a nonnegative rational number
encoded in binary.  Every step of the construction runs in polynomial time.
TPS is computed by sorting the singleton values and finding the largest
solution of the piecewise-linear equation in \Cref{def:TPS}.  The coloring
steps run in polynomial time by
\Cref{lem:bipartite-decomposition,lem:reservation-balancing,prop:load-balancing},
and computing the \residualFractionalSuballocationsName
$\CompletionAmount^\ColorIndex$ by
\Cref{def:color-residual-fractional-suballocation} takes
polynomially many arithmetic operations.  In each
of the $\GridSize$ colors, the decomposition in
\Cref{app:proof-faithful-rounding} works on a slot graph with
$\AsymptoticO(\GoodCount+\AgentCount)$ vertices on each side, uses at most
as many matchings as that graph has positive edges, and only subtracts
edge weights, which keeps them over a common denominator.  Hence the
faithful implementation runs in polynomial time and yields polynomially many allocations.

Finally, discard the zero-valued padding goods, which changes no agent's
value, and apply \Cref{lem:support-reduction} to the original goods.  This
gives in polynomial time a distribution with the same agent--good marginal
probabilities, supported on at most $\AgentCount\cdot\GoodCount$ of the
previous allocations; the mechanism outputs it.  Its allocations remain
$\UniversalFactorValue$-MMS because its support is a subset of the previous
support, and since its marginals are unchanged,
\Cref{lem:marginal-truthfulness,prop:ex-ante-envy-free} still give
truthfulness-in-expectation and ex-ante envy-freeness.  This finishes the
proof of \Cref{thm:main}.
\end{proof}

\section{Discussion}
\label{sec:discussion}
The main result is a truthful-in-expectation randomized mechanism
that, at every truthful profile, gives each agent at least the universal
factor $\UniversalFactorValue$ of her truncated proportional share and
hence of her maximin share in every realized allocation.  The same mechanism is ex-ante envy-free, and
therefore ex-ante proportional.  Its fractional rule is ordinal while its
implementation is cardinal, and the use of cardinal information is necessary: an ordinal algorithm cannot guarantee more than
$1/\HarmonicNumber_\AgentCount$ of the maximin share, so no ordinal mechanism
attains a constant guarantee.  The construction is finite for arbitrary real
valuations and yields an explicit polynomial-time, polynomial-support
distribution.  The factor is independent of both the number of goods
and the number of agents. This distinguishes the result from the deterministic
setting: by Amanatidis et~al.\ \citep[Application~4.7]{abcm2017truthful}, even for two agents the
best maximin-share factor achievable without payments by a deterministic
truthful mechanism is $1/\lfloor\max\{2,\GoodCount\}/2\rfloor$.

The capacity analysis uses the \commonSetName to control both blocking and the
savings in the reserved \amountName.  Unallocated low-good \amountsName and \amountsName in the \commonSetName determine
one scaling factor for each deficient agent.  An integral bound on their sum
caps every factor at one half, and balancing keeps every dummy load below
one.  These estimates make the scaled \lowGoodFractionalSuballocationName
feasible in every color while preserving the average
marginal needed for exact
completion.

Two questions remain open.  The first is the greatest
truncated-proportional-share factor compatible with the same truthful
top-$(\AgentCount-1)$ marginal; the second is the greatest factor compatible
with truthfulness-in-expectation by any mechanism.  We do not claim that the
constant $\UniversalFactorValue$ established here is optimal for either
question.  Indeed, letting deficient agents share dummy vertices and
tightening the capacity analysis raises the factor to $2/13$.  Since this improvement
is modest and comes at the cost of a considerably longer case analysis, we
keep the $\UniversalFactorValue$ bound in this paper.  More broadly, a
substantially larger constant may require a different truthful fractional
rule or a new way of decomposing its marginals into integral allocations; how
close such mechanisms can come to the best MMS guarantees known without
incentive constraints is an interesting direction.

\section*{AI Methodologies Statement}
\addcontentsline{toc}{section}{AI Methodologies Statement}
We used generative AI tools substantively in preparing this paper:
OpenAI's GPT-6 Astra and GPT-5.6 Sol, and Anthropic's Claude.
The OpenAI models helped explore candidate constructions and draft proofs,
both for the main result (\Cref{thm:main})
% [\emph{e.g., which sections or lemmas}] 
and for the refined $2/13$ analysis mentioned in
\Cref{sec:discussion}.  All three tools helped draft and revise parts of the
text%
% and of the \LaTeX{} source
, search for related work, and check
citations.  We checked every argument by hand, simplified and
rewrote the proofs that appear here, edited all AI-assisted text, and
verified every reference against the original source.  All authors are
human, and we take full responsibility for the correctness, originality, and
integrity of this paper, including all content produced with AI assistance.

\appendix

\section{Missing proofs in \texorpdfstring{\Cref{sec:preliminaries}}{Section~\ref*{sec:preliminaries}}}
\label{app:faithful-rounding}
\subsection{Proof of
  \texorpdfstring{\Cref*{lem:faithful-rounding}}
                 {Lemma \ref*{lem:faithful-rounding}}}
\label{app:proof-faithful-rounding}

\faithfulrounding*

The proof of \Cref{lem:faithful-rounding} packs each agent's \portionsName into
slots in nonincreasing value order and decomposes the resulting fractional
matching into integral matchings.  This preserves the marginal, while the
ordering bounds each agent's value loss by the value of one strictly
fractional good.

\begin{proof}
Let $\FractionalAllocation$ be the fractional allocation in
\Cref{lem:faithful-rounding}.
First fix every coordinate $\FractionalAllocation_{\AgentIndex\GoodIndex}=1$, assign that
good to agent $\AgentIndex$, and remove the good. For every agent $\AgentIndex\in \AgentSet$, let
$\SlotCount_\AgentIndex\deq\lrceiling{\sum\nolimits_{\GoodIndex:\,0<\FractionalAllocation_{\AgentIndex\GoodIndex}<1}\FractionalAllocation_{\AgentIndex\GoodIndex}}$ be her number
of remaining slots, sort her remaining goods with positive $\FractionalAllocation_{\AgentIndex\GoodIndex}$ by
nonincreasing $\AgentValuation_\AgentIndex(\GoodIndex)$, and pack their \portionsName consecutively into
those unit-capacity slots.
The total unused slot capacity is an integer: it is the number of slots
minus the number of remaining goods. Add that many whole
zero-valued auxiliary goods and distribute their \portionsName only
among the final partial slots. The result is a fractional perfect
matching between the goods, including the auxiliary goods, and the agents' slots.  Its support
contains a perfect matching by the fractional form of Hall's condition.
Starting from this fractional perfect matching, while the common total at each slot and
good in the residual fractional matching is positive, its positive support
contains a perfect matching by the same Hall argument.  Let $\MatchingWeight$ be the
smallest residual edge weight on that matching, retain the matching with weight
$\MatchingWeight$, and subtract $\MatchingWeight$ from each of its edges.  All slot and good
totals decrease by the same amount, and at least one positive edge disappears.
Thus the process stops after at most the initial number of positive edges;
the matching weights sum to one and give a finite convex decomposition.
Delete the auxiliary goods from every resulting allocation.

Every integral matching assigns each good exactly once, and the
distribution over these matchings has exact marginal $\FractionalAllocation$.
Fix an agent $\AgentIndex\in \AgentSet$ and abbreviate her number of
remaining slots by $\LocalSlotCount\deq \SlotCount_\AgentIndex$. If $\LocalSlotCount=0$, there are no remaining \portionsName to consider.  Otherwise, only the agent's final slot can
receive an auxiliary good.  Index her slots in packing order.  For every slot index
$\SlotIndex$ satisfying $1\leq\SlotIndex<\LocalSlotCount$, a good matched to slot
$\SlotIndex$ has value at least that of every good \portionName packed in slot $\SlotIndex+1$,
because the \portionsName were packed in nonincreasing value order.  Thus the
realized goods matched to slots $1,\ldots,\LocalSlotCount-1$ dominate the fractional value
packed in slots $2,\ldots,\LocalSlotCount$.  The last realized good has nonnegative value.
The only fractional value left uncovered is that of the first slot, whose total \amountName of goods is at
most one and whose value is at most that of the largest strictly fractional good.
This proves the desired inequality for the remaining fractional bundle.  Adding back the
values of the fixed coordinates proves the value bound in
\Cref{lem:faithful-rounding} for the original fractional bundle.  Those fixed coordinates are preserved in every allocation.
\end{proof}

\section{Missing proofs in \texorpdfstring{\Cref{subsec:truncated-low-good}}{Section~\ref*{subsec:truncated-low-good}}}
\label{app:truncated-low-good-proofs}
\subsection{Proof of
  \texorpdfstring{\Cref*{lem:truncated-low-good-constant-value}}
                 {Lemma \ref*{lem:truncated-low-good-constant-value}}}
\label{app:proof-truncated-low-good-constant-value}

\Cref{lem:truncated-low-good-constant-value} is included only to
provide intuition for the \lowGoodFractionalBundleName: its value is already
at least a constant fraction of the deficient agent's TPS.  It is not used in
any subsequent proof.  We restate it for convenience.

\begin{quote}
\textbf{\Cref{lem:truncated-low-good-constant-value} (restated).}
\itshape For every deficient agent $\AgentIndex\in\DeficientAgents$, her
\lowGoodFractionalBundleName $\LowGoodBundle_\AgentIndex$ and TPS value $\TPS_\AgentIndex$
satisfy
$\AgentValuation_\AgentIndex(\LowGoodBundle_\AgentIndex)
\geq\frac57\cdot\TPS_\AgentIndex$.
\end{quote}
% Moved to Section 5 (the numbered lemma and its label now live there):
% The following lemma is included only to provide intuition for the
% \lowGoodFractionalBundleName: its value is already at least a constant fraction of the
% deficient agent's TPS.  It is not used in any subsequent proof.

\begin{proof}
Fix a deficient agent $\AgentIndex$.  Recall the number of agents
$\AgentCount$, the good set $\GoodSet$, her top set
$\TopSet_\AgentIndex$, her high-good set $\HighSet_\AgentIndex$,
and her low-good set $\LowSet_\AgentIndex$.
For a set of goods $\DesignatedGoods\subseteq\GoodSet$, we use
$\TruthfulMarginal_\AgentIndex(\DesignatedGoods)
=\sum_{\GoodIndex\in\DesignatedGoods}\TruthfulMarginal_{\AgentIndex\GoodIndex}$
as in \Cref{sec:preliminaries}.

The unallocated \amountName satisfies
\begin{align}
\UnheldLowGoodBundle_\AgentIndex(\LowSet_\AgentIndex)
&\overset{(a)}{=}
\sum_{\GoodIndex\in\LowSet_\AgentIndex}
(1-\LowGoodBundle_{\AgentIndex\GoodIndex})\notag\\
&\overset{(b)}{=}
\sum_{\GoodIndex\in\LowSet_\AgentIndex\cap\TopSet_\AgentIndex}
(1-\LowGoodBundle_{\AgentIndex\GoodIndex})
+\sum_{\GoodIndex\in\LowSet_\AgentIndex\setminus\TopSet_\AgentIndex}
(1-\LowGoodBundle_{\AgentIndex\GoodIndex})\notag\\
&\overset{(c)}{=}
\sum_{\GoodIndex\in\LowSet_\AgentIndex\setminus\TopSet_\AgentIndex}
(1-\LowGoodBundle_{\AgentIndex\GoodIndex})\notag\\
&\overset{(d)}{\leq}
|\LowSet_\AgentIndex\setminus\TopSet_\AgentIndex|
-\AgentCount\cdot\TruthfulMarginal_\AgentIndex
(\LowSet_\AgentIndex\setminus\TopSet_\AgentIndex)\notag\\
&\overset{(e)}{=}
\AgentCount\cdot\TruthfulMarginal_\AgentIndex(\TopSet_\AgentIndex)
-|\TopSet_\AgentIndex|\notag\\
&\overset{(f)}{=}
\AgentCount\cdot\TruthfulMarginal_\AgentIndex(\HighSet_\AgentIndex)
-|\HighSet_\AgentIndex|
+\AgentCount\cdot\TruthfulMarginal_\AgentIndex
(\TopSet_\AgentIndex\setminus\HighSet_\AgentIndex)
-|\TopSet_\AgentIndex\setminus\HighSet_\AgentIndex|\notag\\
&\overset{(g)}{\leq}
(\AgentCount-|\HighSet_\AgentIndex|)
+|\TopSet_\AgentIndex\setminus\HighSet_\AgentIndex|\notag\\
&\overset{(h)}{\leq}2\cdot(\AgentCount-|\HighSet_\AgentIndex|).
\label{eq:truncated-low-good-unallocated-weight-bound}
\end{align}
Step (a) is the definition of her unallocated low-good \amountName
$\UnheldLowGoodBundle_\AgentIndex(\LowSet_\AgentIndex)$ in \Cref{def:truncated-low-good-bundle}.
Step (b) splits her low-good set into
$\LowSet_\AgentIndex\cap\TopSet_\AgentIndex$ and
$\LowSet_\AgentIndex\setminus\TopSet_\AgentIndex$.
Step (c) uses $\LowGoodBundle_{\AgentIndex\GoodIndex}=1$ for every low
top good $\GoodIndex\in\LowSet_\AgentIndex\cap\TopSet_\AgentIndex$.
Indeed, \cref{eq:marginal-cases} gives
$\AgentCount\cdot\TruthfulMarginal_{\AgentIndex\GoodIndex}\geq1$
on a top good, so the cap in \Cref{def:truncated-low-good-bundle} gives
$\LowGoodBundle_{\AgentIndex\GoodIndex}=1$ when that good is low.
For step (d), every good
$\GoodIndex\in\LowSet_\AgentIndex\setminus\TopSet_\AgentIndex$ satisfies
$0\leq\AgentCount\cdot\TruthfulMarginal_{\AgentIndex\GoodIndex}\leq1$
by \cref{eq:marginal-cases}.
\Cref{def:truncated-low-good-bundle} gives
$\LowGoodBundle_{\AgentIndex\GoodIndex}
\geq\min\{\AgentCount\cdot\TruthfulMarginal_{\AgentIndex\GoodIndex},1\}
=\AgentCount\cdot\TruthfulMarginal_{\AgentIndex\GoodIndex}$.
Hence $1-\LowGoodBundle_{\AgentIndex\GoodIndex}
\leq1-\AgentCount\cdot\TruthfulMarginal_{\AgentIndex\GoodIndex}$,
which proves step (d) after summing over these goods.
Step (e) uses
$\LowSet_\AgentIndex\setminus\TopSet_\AgentIndex
=\GoodSet\setminus\TopSet_\AgentIndex$ from
\Cref{lem:deficient-structure-containment} of \Cref{lem:deficient-structure}
and $\AgentCount\cdot\TruthfulMarginal_\AgentIndex(\GoodSet)=|\GoodSet|$
from \Cref{lem:marginal-feasibility}.  Indeed,
\begin{align*}
|\LowSet_\AgentIndex\setminus\TopSet_\AgentIndex|
-\AgentCount\cdot\TruthfulMarginal_\AgentIndex
(\LowSet_\AgentIndex\setminus\TopSet_\AgentIndex)
&=(|\GoodSet|-|\TopSet_\AgentIndex|)
-\AgentCount\cdot\bigl(\TruthfulMarginal_\AgentIndex(\GoodSet)
-\TruthfulMarginal_\AgentIndex(\TopSet_\AgentIndex)\bigr)\\
&=\AgentCount\cdot\TruthfulMarginal_\AgentIndex(\TopSet_\AgentIndex)
-|\TopSet_\AgentIndex|.
\end{align*}
Step (f) uses $\HighSet_\AgentIndex\subseteq\TopSet_\AgentIndex$ from
\Cref{lem:deficient-structure-containment} of \Cref{lem:deficient-structure}.
Step (g) uses the inequality
$\TruthfulMarginal_\AgentIndex(\HighSet_\AgentIndex)<1$,
and the bound
$\AgentCount\cdot\TruthfulMarginal_\AgentIndex
(\TopSet_\AgentIndex\setminus\HighSet_\AgentIndex)
\leq2\cdot|\TopSet_\AgentIndex\setminus\HighSet_\AgentIndex|$,
obtained from the bound
$\TruthfulMarginal_{\AgentIndex\GoodIndex}\leq2/\AgentCount$.
Step (h) uses $|\TopSet_\AgentIndex|=\AgentCount-1$.

For her valuation $\AgentValuation_\AgentIndex$ and TPS value
$\TPS_\AgentIndex$, we now obtain
\begin{align*}
\AgentValuation_\AgentIndex(\LowGoodBundle_\AgentIndex)
&\overset{(a)}{\geq}
\left(\AgentCount-|\HighSet_\AgentIndex|
-\frac{\UnheldLowGoodBundle_\AgentIndex(\LowSet_\AgentIndex)}{7}\right)\cdot\TPS_\AgentIndex\\
&\overset{(b)}{\geq}
\frac57\cdot(\AgentCount-|\HighSet_\AgentIndex|)\cdot\TPS_\AgentIndex\\
&\overset{(c)}{\geq}\frac57\cdot\TPS_\AgentIndex,
\end{align*}
where (a) holds by \Cref{lem:carrier-value},
(b) holds by \cref{eq:truncated-low-good-unallocated-weight-bound},
and (c) holds because $\AgentCount-|\HighSet_\AgentIndex|\geq1$ by
\Cref{lem:deficient-structure-containment} of \Cref{lem:deficient-structure} and the TPS value $\TPS_\AgentIndex$ is nonnegative.
\end{proof}

\section{Missing proofs in \texorpdfstring{\Cref{subsec:proxy-definition}}{Section~\ref*{subsec:proxy-definition}}}
\label{app:proxy-inequalities}
We first prove the reservation estimates behind
\Cref{lem:reserved-count-ceiling}.

\subsection{Proof of
  \texorpdfstring{\Cref*{lem:common-set-count-range}}
                 {Lemma \ref*{lem:common-set-count-range}}}
\label{app:proof-common-set-count-range}

For each good $\GoodIndex$ in the good set $\GoodSet$, recall that the
count $\OmissionCount_\GoodIndex$ is the number of agents whose top sets
exclude it.
Write $\OmissionCount(\DesignatedGoods)\deq
\sum_{\GoodIndex\in\DesignatedGoods}\OmissionCount_\GoodIndex$ for the
total \nonTopCountName of a set of goods $\DesignatedGoods\subseteq\GoodSet$.

\begin{lemma}
\label{lem:common-set-count-range}
With $\AgentCount$ agents, the \commonSetName $\CommonSet$ in
\Cref{def:common-set} satisfies
\begin{align}
0\leq\frac{\OmissionCount(\CommonSet)}{\AgentCount-1}
&<\AgentCount-|\CommonSet|.
\label{eq:omission-parameter-range}
\end{align}
\end{lemma}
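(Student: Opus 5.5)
The lower bound is immediate, since every non-top count $\OmissionCount_\GoodIndex$ is nonnegative and $\AgentCount\geq2$. For the strict upper bound, we compare the popular set $\CommonSet$ with the high-good set of a deficient agent of largest high-good set. Fix $\AgentIndex^\ast\in\DeficientAgents$ with $|\HighSet_{\AgentIndex^\ast}|=\max_{\AgentIndex\in\DeficientAgents}|\HighSet_\AgentIndex|=|\CommonSet|$. This agent exists because $\DeficientAgents$ is assumed nonempty throughout \Cref{sec:capacity}. The set $\HighSet_{\AgentIndex^\ast}$ is then a feasible candidate in the minimization of \Cref{def:common-set}, so
\begin{align*}
\OmissionCount(\CommonSet)\leq\OmissionCount(\HighSet_{\AgentIndex^\ast}).
\end{align*}

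Next, we invoke the identity \cref{eq:missing-high-marginal} of \Cref{lem:deficient-structure} for agent $\AgentIndex^\ast$ and rearrange it as
\begin{align*}
\frac{\OmissionCount(\HighSet_{\AgentIndex^\ast})}{\AgentCount-1}
=\AgentCount-|\HighSet_{\AgentIndex^\ast}|
-\AgentCount\cdot\bigl(1-\TruthfulMarginal_{\AgentIndex^\ast}(\HighSet_{\AgentIndex^\ast})\bigr).
\end{align*}
Because $\AgentIndex^\ast$ is deficient, the last term is strictly positive. Hence
\begin{align*}
\frac{\OmissionCount(\HighSet_{\AgentIndex^\ast})}{\AgentCount-1}<\AgentCount-|\HighSet_{\AgentIndex^\ast}|=\AgentCount-|\CommonSet|.
\end{align*}
Combining this with the first display gives the strict upper bound in \cref{eq:omission-parameter-range}.

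The only point requiring care is that $\CommonSet$ is well-defined with the prescribed cardinality. By \Cref{lem:deficient-structure-containment} of \Cref{lem:deficient-structure}, the sets $\HighSet_\AgentIndex$ for $\AgentIndex\in\DeficientAgents$ have size at most $\AgentCount-1\leq\GoodCount$, so subsets of that size exist. Beyond this check, no real obstacle arises: the whole argument is the choice of $\HighSet_{\AgentIndex^\ast}$ as a competitor in the argmin of \Cref{def:common-set}.
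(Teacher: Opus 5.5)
Your proposal is correct and follows essentially the same route as the paper: the paper's proof also takes a deficient agent with a largest high-good set as a competitor in the minimization defining $\CommonSet$, and uses \cref{eq:missing-high-marginal} with her strictly positive missing marginal to get the strict upper bound. The lower bound comes from nonnegativity of the non-top counts in both arguments.
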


\begin{proof}
Let $\DeficientAgents$ be the nonempty set of deficient agents, and let
$\HighSet_\AgentIndex$ denote the high-good set of agent
$\AgentIndex\in\DeficientAgents$.
By \Cref{def:common-set}, $\AgentCount-|\CommonSet|=
\min_{\AgentIndex\in\DeficientAgents}(\AgentCount-|\HighSet_\AgentIndex|)$.
Choose a deficient agent with a largest high-good set.  Her high-good set has
$|\CommonSet|$ goods and, because her scaled missing marginal is
positive, a total \nonTopCountName strictly below
$(\AgentCount-1)\cdot(\AgentCount-|\CommonSet|)$ by \cref{eq:missing-high-marginal}.
Minimality of the \commonSetName{}'s total \nonTopCountName and nonnegativity of the counts
therefore give the claimed bounds.
\end{proof}

\subsection{Proof of
  \texorpdfstring{\Cref*{lem:reservation-estimates}}
                 {Lemma \ref*{lem:reservation-estimates}}}
\label{app:proof-reservation-estimates}

\begin{lemma}
\label{lem:reservation-estimates}
The total \amountName that the \highGoodFractionalSuballocationName
$\TruthfulMarginalHigh$ assigns outside the \commonSetName $\CommonSet$
satisfies
\begin{align}
\sum_{\AgentIndex\in\AgentSet}
\TruthfulMarginalHigh_\AgentIndex(\HighSet_\AgentIndex\setminus\CommonSet)
&\leq2\cdot(\AgentCount-|\CommonSet|-1)
+\frac{2\cdot\OmissionCount(\CommonSet)}{\AgentCount}
+\max\left\{0,1-\frac{\OmissionCount(\CommonSet)}{\AgentCount-1}\right\}
\notag\\
&\qquad-\frac1{\AgentCount}\cdot\sum_{\AgentIndex\in\AgentSet}
\left(2\cdot|\TopSet_\AgentIndex\setminus\CommonSet|
-\AgentCount\cdot\TruthfulMarginalHigh_\AgentIndex
(\TopSet_\AgentIndex\setminus\CommonSet)\right).
\label{eq:reservation-estimate}
\end{align}
For every agent $\AgentIndex\in\AgentSet$, the difference
$2\cdot|\TopSet_\AgentIndex\setminus\CommonSet|
-\AgentCount\cdot\TruthfulMarginalHigh_\AgentIndex
(\TopSet_\AgentIndex\setminus\CommonSet)$ is nonnegative.
For every deficient agent $\AgentIndex\in\DeficientAgents$, it satisfies
the stronger bound
\begin{align}
2\cdot|\TopSet_\AgentIndex\setminus\CommonSet|
-\AgentCount\cdot\TruthfulMarginalHigh_\AgentIndex
(\TopSet_\AgentIndex\setminus\CommonSet)
&\geq\max\left\{0,
\AgentCount\cdot(1-\TruthfulMarginal_\AgentIndex(\HighSet_\AgentIndex))
-(|\CommonSet|-|\HighSet_\AgentIndex|)-1\right\}.
\label{eq:agent-saving-bound}
\end{align}
\end{lemma}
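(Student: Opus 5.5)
The plan is to rewrite each agent's reserved amount outside the \commonSetName as a ``top-set part'' plus a ``non-top part''. The top-set parts will sum to the first two terms of \cref{eq:reservation-estimate} minus the correction sum. The non-top parts will be absorbed by the $\max\{0,\cdot\}$ term.

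\emph{Step 1 (the top-set parts).} For every agent $\AgentIndex$, split
\[
\TruthfulMarginalHigh_\AgentIndex(\HighSet_\AgentIndex\setminus\CommonSet)
=\TruthfulMarginalHigh_\AgentIndex(\TopSet_\AgentIndex\setminus\CommonSet)
+\TruthfulMarginalHigh_\AgentIndex(\HighSet_\AgentIndex\setminus(\TopSet_\AgentIndex\cup\CommonSet)),
\]
using that $\TruthfulMarginalHigh_\AgentIndex$ vanishes outside $\HighSet_\AgentIndex$. Then write the first summand as
\[
\frac2\AgentCount|\TopSet_\AgentIndex\setminus\CommonSet|
-\frac1\AgentCount\bigl(2|\TopSet_\AgentIndex\setminus\CommonSet|-\AgentCount\cdot\TruthfulMarginalHigh_\AgentIndex(\TopSet_\AgentIndex\setminus\CommonSet)\bigr).
\]
Counting top incidences gives $\sum_{\AgentIndex}|\TopSet_\AgentIndex\cap\CommonSet|=\AgentCount|\CommonSet|-\OmissionCount(\CommonSet)$. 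Hence
\[
\sum_\AgentIndex|\TopSet_\AgentIndex\setminus\CommonSet|=\AgentCount(\AgentCount-1-|\CommonSet|)+\OmissionCount(\CommonSet),
\]
and multiplying by $2/\AgentCount$ yields exactly $2(\AgentCount-|\CommonSet|-1)+2\OmissionCount(\CommonSet)/\AgentCount$. This reduces \cref{eq:reservation-estimate} to the claim
\[
\sum_\AgentIndex\TruthfulMarginalHigh_\AgentIndex(\HighSet_\AgentIndex\setminus\TopSet_\AgentIndex)\le\max\{0,1-\OmissionCount(\CommonSet)/(\AgentCount-1)\}.
\]

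\emph{Step 2 (the non-top parts), which I expect to be the main obstacle.} Suppose an agent has a positive reserved amount on a high good outside her top set. She is then nondeficient, by \Cref{lem:deficient-structure}. Since \Cref{def:high-good-fractional-suballocation} processes top goods first, two things follow. First, every good of $\TopSet_\AgentIndex$ is high: otherwise a high good outside the top set would be worth at least as much as a low top good, contradicting the construction of the top set as in the proof of \Cref{lem:deficient-structure}. Second, her non-top reserved amount is at most $1-\TruthfulMarginal_\AgentIndex(\TopSet_\AgentIndex)$. By \cref{eq:marginal-cases} this equals $\frac1\AgentCount\bigl(1-\OmissionCount(\TopSet_\AgentIndex)/(\AgentCount-1)\bigr)$. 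Now $|\TopSet_\AgentIndex|=\AgentCount-1\ge|\CommonSet|$ by \Cref{lem:deficient-structure}, and counts are nonnegative, so minimality of $\CommonSet$ gives $\OmissionCount(\TopSet_\AgentIndex)\ge\OmissionCount(\CommonSet)$. Each agent therefore contributes at most $\frac1\AgentCount\max\{0,1-\OmissionCount(\CommonSet)/(\AgentCount-1)\}$. Summing over at most $\AgentCount$ agents closes Step 1. The delicate point is the top-first ordering, which forces the whole top set to be high and consumed before any non-top good is used.

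\emph{Step 3 (the per-agent bounds).} Nonnegativity holds because $\TruthfulMarginalHigh_{\AgentIndex\GoodIndex}\le\TruthfulMarginal_{\AgentIndex\GoodIndex}\le2/\AgentCount$ on top goods. For \cref{eq:agent-saving-bound}, fix a deficient agent and set $\MissingExcess\deq\AgentCount(1-\TruthfulMarginal_\AgentIndex(\HighSet_\AgentIndex))$. Since $\HighSet_\AgentIndex\subseteq\TopSet_\AgentIndex$, we have $\TruthfulMarginalHigh_\AgentIndex(\TopSet_\AgentIndex\setminus\CommonSet)=\TruthfulMarginal_\AgentIndex(\HighSet_\AgentIndex\setminus\CommonSet)$. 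Top-set marginals are at least $1/\AgentCount$, so
\[
\AgentCount\cdot\TruthfulMarginal_\AgentIndex(\HighSet_\AgentIndex\setminus\CommonSet)\le\AgentCount-\MissingExcess-|\HighSet_\AgentIndex\cap\CommonSet|.
\]
Bound $2|\TopSet_\AgentIndex\setminus\CommonSet|\ge(\AgentCount-1-|\CommonSet|)+(|\HighSet_\AgentIndex|-|\HighSet_\AgentIndex\cap\CommonSet|)$, using $|\TopSet_\AgentIndex\setminus\CommonSet|\ge\AgentCount-1-|\CommonSet|$ for one copy and $\TopSet_\AgentIndex\setminus\CommonSet\supseteq\HighSet_\AgentIndex\setminus\CommonSet$ for the other. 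Subtracting the two bounds, the $|\HighSet_\AgentIndex\cap\CommonSet|$ terms cancel, leaving $\MissingExcess-(|\CommonSet|-|\HighSet_\AgentIndex|)-1$. Combined with nonnegativity, this gives the maximum in \cref{eq:agent-saving-bound}.
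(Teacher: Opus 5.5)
Your proof is correct and follows essentially the same route as the paper. You use the same split of each agent's reserved amount into top and non-top parts, the same incidence count giving $2(\AgentCount-|\CommonSet|-1)+2\OmissionCount(\CommonSet)/\AgentCount$, the same top-first reservation argument with $\OmissionCount(\TopSet_\AgentIndex)\geq\OmissionCount(\CommonSet)$ for the non-top part, and the same $2/\AgentCount$ cap for nonnegativity. For \cref{eq:agent-saving-bound} you use $\AgentCount\cdot\TruthfulMarginal_{\AgentIndex\GoodIndex}\geq1$ on $\HighSet_\AgentIndex\cap\CommonSet$ and two cardinality bounds on $|\TopSet_\AgentIndex\setminus\CommonSet|$, rather than the paper's good-by-good accounting via \cref{eq:missing-high-marginal}. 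Both discard the same nonnegative terms and reach the same inequality.
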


By \cref{eq:mean-reserved-count}, the left-hand side of
\cref{eq:reservation-estimate} also equals the average number of reserved
goods outside the \commonSetName over the colors of any proper
$\GridSize$-edge-coloring of the \highGoodDummyGraphName
$\highGoodDummyGraph$.

\begin{proof}
We split the \amountName that $\TruthfulMarginalHigh$ assigns outside the
\commonSetName according to whether each good lies in the receiving
agent's top set.  For top goods, we keep track of the
difference between a uniform upper bound and the actual reserved
\amountName; for goods outside the top set, we use the fact that the
reservation rule processes top goods first.  Finally, we examine the
difference good by good for a deficient agent.

Recall that $\AgentSet$ is the set of $\AgentCount$ agents and $\GoodSet$ is
the set of goods.  For each agent $\AgentIndex\in\AgentSet$, her top set
$\TopSet_\AgentIndex$ contains $\AgentCount-1$ goods, her high-good set is
$\HighSet_\AgentIndex$, and $\CommonSet$ is the
\commonSetName.  For each good $\GoodIndex\in\GoodSet$, the
\portionName{} $\TruthfulMarginal_{\AgentIndex\GoodIndex}$ is her truthful
marginal, and $\TruthfulMarginalHigh_{\AgentIndex\GoodIndex}$ is the
\portionName{} retained for high-good reservations.  Applied to a set of
goods, $\TruthfulMarginal_\AgentIndex$ and
$\TruthfulMarginalHigh_\AgentIndex$ denote the sums of these \portionsName.
Then
\begin{align}
\sum_{\AgentIndex\in\AgentSet}
\TruthfulMarginalHigh_\AgentIndex(\HighSet_\AgentIndex\setminus\CommonSet)
&\overset{(a)}{=}\sum_{\AgentIndex\in\AgentSet}
\TruthfulMarginalHigh_\AgentIndex(\GoodSet\setminus\CommonSet)\notag\\
&\overset{(b)}{=}\sum_{\AgentIndex\in\AgentSet}
\TruthfulMarginalHigh_\AgentIndex(\TopSet_\AgentIndex\setminus\CommonSet)
+\sum_{\AgentIndex\in\AgentSet}
\TruthfulMarginalHigh_\AgentIndex
\bigl((\GoodSet\setminus\TopSet_\AgentIndex)\setminus\CommonSet\bigr).
\label{eq:reservation-top-split}
\end{align}
Step (a) holds because $\TruthfulMarginalHigh_{\AgentIndex\GoodIndex}=0$
for every low good $\GoodIndex\in\LowSet_\AgentIndex
=\GoodSet\setminus\HighSet_\AgentIndex$ by
\Cref{def:high-good-fractional-suballocation}.  Step (b) separates the goods
according to whether they belong to the receiving agent's top set.

To bound the first sum in \cref{eq:reservation-top-split}, recall that
the count $\OmissionCount_\GoodIndex$
is the number of agents whose top sets exclude good $\GoodIndex$.
For a top good $\GoodIndex\in\TopSet_\AgentIndex$, this count is at most
$\AgentCount-1$.  Hence \cref{eq:marginal-cases} and the fact that
reservations retain at most the truthful marginal give
\begin{align}
\AgentCount\cdot\TruthfulMarginalHigh_{\AgentIndex\GoodIndex}
&\overset{(a)}{\leq}
\AgentCount\cdot\TruthfulMarginal_{\AgentIndex\GoodIndex}
=1+\frac{\OmissionCount_\GoodIndex}{\AgentCount-1}\notag\\
&\overset{(b)}{\leq}2.
\label{eq:reservation-top-good-bound}
\end{align}
Step (a) uses the retention rule and the top-good marginal formula;
step (b) uses the bound on the count.
Thus $2/\AgentCount$ is an upper bound on the reserved \portionName{}
of each top good.  Summing over the top goods outside the \commonSetName
shows that the difference
$2\cdot|\TopSet_\AgentIndex\setminus\CommonSet|
-\AgentCount\cdot\TruthfulMarginalHigh_\AgentIndex
(\TopSet_\AgentIndex\setminus\CommonSet)$ is nonnegative.
Dividing this difference by $\AgentCount$ gives exactly the \amountName{}
that must be subtracted from the upper bound
$(2/\AgentCount)\cdot|\TopSet_\AgentIndex\setminus\CommonSet|$
to recover the actual reserved \amountName.

We next count how many top goods lie outside the \commonSetName, counting
a good once for each agent who ranks it top.  For a set of goods
$\DesignatedGoods\subseteq\GoodSet$, recall that
$\OmissionCount(\DesignatedGoods)\deq
\sum_{\GoodIndex\in\DesignatedGoods}\OmissionCount_\GoodIndex$ is its
total \nonTopCountName.  For each agent $\AgentIndex\in\AgentSet$,
$|\TopSet_\AgentIndex\setminus\CommonSet|
=\AgentCount-1-|\CommonSet|+|\CommonSet\setminus\TopSet_\AgentIndex|$:
subtracting all goods of the \commonSetName from the top-set size must
be corrected by adding back those that are not top goods for this agent.
Moreover,
$\sum_{\AgentIndex\in\AgentSet}|\CommonSet\setminus\TopSet_\AgentIndex|
=\sum_{\GoodIndex\in\CommonSet}\OmissionCount_\GoodIndex
=\OmissionCount(\CommonSet)$, since both sums count the pairs consisting
of a good in the \commonSetName and an agent who does not rank it top.
Consequently,
\begin{align*}
\frac{2}{\AgentCount}\cdot\sum_{\AgentIndex\in\AgentSet}
|\TopSet_\AgentIndex\setminus\CommonSet|
&\overset{(a)}{=}\frac{2}{\AgentCount}\cdot\sum_{\AgentIndex\in\AgentSet}
\bigl(\AgentCount-1-|\CommonSet|
+|\CommonSet\setminus\TopSet_\AgentIndex|\bigr)\\
&\overset{(b)}{=}2\cdot(\AgentCount-|\CommonSet|-1)
+\frac{2\cdot\OmissionCount(\CommonSet)}{\AgentCount}.
\end{align*}
Step (a) uses the fixed-agent identity; step (b) uses the count of pairs.
Subtracting the differences described above therefore gives the exact
top-set contribution
\begin{align}
\sum_{\AgentIndex\in\AgentSet}
\TruthfulMarginalHigh_\AgentIndex(\TopSet_\AgentIndex\setminus\CommonSet)
&=2\cdot(\AgentCount-|\CommonSet|-1)
+\frac{2\cdot\OmissionCount(\CommonSet)}{\AgentCount}\notag\\
&\quad-\frac1{\AgentCount}\cdot\sum_{\AgentIndex\in\AgentSet}
\left(2\cdot|\TopSet_\AgentIndex\setminus\CommonSet|
-\AgentCount\cdot\TruthfulMarginalHigh_\AgentIndex
(\TopSet_\AgentIndex\setminus\CommonSet)\right).
\label{eq:reservation-top-contribution}
\end{align}

It remains to bound reservations outside the receiving agent's top set.
The \commonSetName has at most $\AgentCount-1$ goods by
\Cref{lem:deficient-structure-containment} of \Cref{lem:deficient-structure}
and \Cref{def:common-set}.  For each agent
$\AgentIndex\in\AgentSet$, choose any $|\CommonSet|$ goods from her top
set $\TopSet_\AgentIndex$.  Their total \nonTopCountName is at least
$\OmissionCount(\CommonSet)$ by the choice of the \commonSetName, and
adding the remaining top goods cannot decrease that total.  Thus
$\OmissionCount(\TopSet_\AgentIndex)\geq\OmissionCount(\CommonSet)$.
Summing the top-good marginal formula gives
$\AgentCount\cdot\TruthfulMarginal_\AgentIndex(\TopSet_\AgentIndex)
=\AgentCount-1+\OmissionCount(\TopSet_\AgentIndex)/(\AgentCount-1)$.
A deficient agent reserves nothing outside her top set by
\Cref{lem:deficient-structure-containment} of \Cref{lem:deficient-structure}.  If a nondeficient agent reserves a
positive \amountName{} outside her top set, some high good lies outside
that set, so all of her top goods are high.  The reservation rule then
retains every top-good marginal before proceeding to goods outside the
top set, and stops when her total reserved \amountName{} reaches one.
If she reserves nothing outside her top set, the following bound is
immediate.  In either case,
\begin{align}
\TruthfulMarginalHigh_\AgentIndex(\GoodSet\setminus\TopSet_\AgentIndex)
&\overset{(a)}{\leq}
\max\{0,1-\TruthfulMarginal_\AgentIndex(\TopSet_\AgentIndex)\}\notag\\
&\overset{(b)}{\leq}\frac1{\AgentCount}\cdot
\max\left\{0,1-\frac{\OmissionCount(\CommonSet)}{\AgentCount-1}\right\}.
\label{eq:reservation-outside-top-bound}
\end{align}
Step (a) uses the reservation rule just described; step (b) uses the
top-set marginal formula and the bound on its total \nonTopCountName.
Restricting these reservations further to goods outside the
\commonSetName can only decrease their \amountName.  Summing
\cref{eq:reservation-outside-top-bound} over all $\AgentCount$ agents
bounds the second sum in \cref{eq:reservation-top-split} by
$\max\{0,1-\OmissionCount(\CommonSet)/(\AgentCount-1)\}$.
Substituting this bound and the exact top-set contribution in
\cref{eq:reservation-top-contribution} into
\cref{eq:reservation-top-split} proves \cref{eq:reservation-estimate}.

Finally, fix an agent $\AgentIndex$ in the set $\DeficientAgents$ of
deficient agents, and let $\HighSet_\AgentIndex$ be her set of high goods.
Her deficiency means that
$\TruthfulMarginal_\AgentIndex(\HighSet_\AgentIndex)<1$.
By \Cref{lem:deficient-structure-containment} of
\Cref{lem:deficient-structure} and
\Cref{def:high-good-fractional-suballocation},
all of her high goods lie in her top set, and she retains their full
truthful marginals.  For a top good $\GoodIndex$ outside the
\commonSetName, its contribution to the difference is
$2-\AgentCount\cdot\TruthfulMarginalHigh_{\AgentIndex\GoodIndex}$.
A low good contributes two because its reserved \portionName{} is
zero; a high good contributes
$1-\OmissionCount_\GoodIndex/(\AgentCount-1)$ by the marginal formula.
Giving each of these top goods a contribution of one, and subtracting
the count term only for the high goods, therefore gives a lower bound:
\begin{align*}
2\cdot|\TopSet_\AgentIndex\setminus\CommonSet|
-\AgentCount\cdot\TruthfulMarginalHigh_\AgentIndex
(\TopSet_\AgentIndex\setminus\CommonSet)
&\overset{(a)}{\geq}|\TopSet_\AgentIndex\setminus\CommonSet|
-\frac{\OmissionCount(\HighSet_\AgentIndex\setminus\CommonSet)}{\AgentCount-1}\\
&\overset{(b)}{\geq}\AgentCount-1-|\CommonSet|
-\frac{\OmissionCount(\HighSet_\AgentIndex)}{\AgentCount-1}\\
&\overset{(c)}{=}\AgentCount\cdot
(1-\TruthfulMarginal_\AgentIndex(\HighSet_\AgentIndex))
-(|\CommonSet|-|\HighSet_\AgentIndex|)-1.
\end{align*}
Step (a) sums the good-by-good lower bounds.  Step (b) uses
$|\TopSet_\AgentIndex\setminus\CommonSet|\geq\AgentCount-1-|\CommonSet|$
and nonnegativity of the counts.  Step (c) is
\cref{eq:missing-high-marginal}.  The difference is also nonnegative by
the top-good bound in \cref{eq:reservation-top-good-bound}, so taking the
maximum of zero and this lower bound proves \cref{eq:agent-saving-bound}.
\end{proof}

\subsection{Proof of
  \texorpdfstring{\Cref*{lem:reserved-count-ceiling}}
                 {Lemma \ref*{lem:reserved-count-ceiling}}}
\label{app:proof-reserved-count-ceiling}

\reservationcountbound*

\begin{proof}
We bound the sum in the statement and then take its ceiling:
\begin{align*}
\sum_{\OtherAgentIndex\in\AgentSet}
\TruthfulMarginalHigh_\OtherAgentIndex(\HighSet_\OtherAgentIndex\setminus\CommonSet)
&\overset{(a)}{\leq}2\cdot(\AgentCount-|\CommonSet|-1)
+\frac{2\cdot\OmissionCount(\CommonSet)}{\AgentCount}
+\max\left\{0,1-\frac{\OmissionCount(\CommonSet)}{\AgentCount-1}\right\}\\
&\overset{(b)}{\leq}2\cdot(\AgentCount-|\CommonSet|-1)
+\frac{2\cdot\OmissionCount(\CommonSet)}{\AgentCount-1}
+(\AgentCount-|\CommonSet|)
-\frac{\OmissionCount(\CommonSet)}{\AgentCount-1}\\
&=3\cdot(\AgentCount-|\CommonSet|)-2
+\frac{\OmissionCount(\CommonSet)}{\AgentCount-1}\\
&\overset{(c)}{<}4\cdot(\AgentCount-|\CommonSet|)-2,
\end{align*}
where inequality (a) follows from \Cref{lem:reservation-estimates} after
dropping its nonpositive sum;
inequality (b) uses
$\OmissionCount(\CommonSet)/\AgentCount
\leq\OmissionCount(\CommonSet)/(\AgentCount-1)$ and
$\max\{0,1-\OmissionCount(\CommonSet)/(\AgentCount-1)\}
\leq(\AgentCount-|\CommonSet|)
-\OmissionCount(\CommonSet)/(\AgentCount-1)$,
the latter following from \cref{eq:omission-parameter-range} and
$\AgentCount-|\CommonSet|\geq1$;
and inequality (c) follows from the strict upper bound in
\cref{eq:omission-parameter-range}.
Since $4\cdot(\AgentCount-|\CommonSet|)-2$ is an integer, the ceiling of
the sum is at most this quantity, as claimed.
\end{proof}

We next bound the scaling denominators, prove the numerical lemmas, and
give the count-balancing and dummy-load-balancing proofs for
\Cref{lem:reservation-balancing,prop:load-balancing}.

\subsection{Proof of
  \texorpdfstring{\Cref*{lem:scaling-denominator-estimates}}
                 {Lemma \ref*{lem:scaling-denominator-estimates}}}
\label{app:proof-scaling-denominator-estimates}

\begin{lemma}[Scaling denominator estimates]
\label{lem:scaling-denominator-estimates}
For every deficient agent $\AgentIndex\in\DeficientAgents$, the denominator
in the definition of her scaling factor satisfies both bounds
\begin{align}
&7\cdot(\AgentCount-|\HighSet_\AgentIndex|)
-4\cdot(\AgentCount-|\CommonSet|)
+\ReservationCountSlack+2
-\UnheldLowGoodBundle_\AgentIndex(\LowSet_\AgentIndex)-\LowGoodBundle_\AgentIndex(\CommonSet)
\geq5\cdot(|\CommonSet|-|\HighSet_\AgentIndex|)
+\ReservationCountSlack+4,
\label{eq:integral-kept-bound}
\\[1mm]
&7\cdot(\AgentCount-|\HighSet_\AgentIndex|)
-4\cdot(\AgentCount-|\CommonSet|)
+\ReservationCountSlack+2
-\UnheldLowGoodBundle_\AgentIndex(\LowSet_\AgentIndex)-\LowGoodBundle_\AgentIndex(\CommonSet)
\notag\\
&\quad\geq5\cdot(|\CommonSet|-|\HighSet_\AgentIndex|)
+\AgentCount\cdot(1-\TruthfulMarginal_\AgentIndex(\HighSet_\AgentIndex))
+\ReservationCountSlack+3
+(\AgentCount-|\CommonSet|)
-\frac{\OmissionCount(\CommonSet)}{\AgentCount-1}.
\label{eq:scalar-kept-bound}
\end{align}
\end{lemma}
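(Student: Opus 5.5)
Both inequalities are upper bounds on the agent-dependent quantity $\UnheldLowGoodBundle_\AgentIndex(\LowSet_\AgentIndex)+\LowGoodBundle_\AgentIndex(\CommonSet)$. The left-hand sides are exactly the denominator of $\Proxy_\AgentIndex$, because \cref{eq:reservation-count-slack} turns the ceiling term into $4(\AgentCount-|\CommonSet|)-2-\ReservationCountSlack$. Rearranging, \cref{eq:integral-kept-bound} is equivalent to
\begin{align*}
\UnheldLowGoodBundle_\AgentIndex(\LowSet_\AgentIndex)+\LowGoodBundle_\AgentIndex(\CommonSet)
\leq 2(\AgentCount-|\HighSet_\AgentIndex|)+(\AgentCount-|\CommonSet|)-2,
\end{align*}
and \cref{eq:scalar-kept-bound} is a sharper bound of the same form. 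It additionally subtracts the scaled missing marginal $\AgentCount(1-\TruthfulMarginal_\AgentIndex(\HighSet_\AgentIndex))$ and carries the term $\OmissionCount(\CommonSet)/(\AgentCount-1)$. So the plan is to prove one good-by-good upper bound on $\UnheldLowGoodBundle_\AgentIndex(\LowSet_\AgentIndex)+\LowGoodBundle_\AgentIndex(\CommonSet)$ and then derive both displays from it.

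The main step reuses the chain \cref{eq:truncated-low-good-unallocated-weight-bound}, but keeps track of $\CommonSet$. Split the goods into $\TopSet_\AgentIndex\cap\LowSet_\AgentIndex$, $\HighSet_\AgentIndex$, and $\GoodSet\setminus\TopSet_\AgentIndex$; the last equals $\LowSet_\AgentIndex\setminus\TopSet_\AgentIndex$ by \Cref{lem:deficient-structure}.
\begin{itemize}
\item Low top goods have $\LowGoodBundle_{\AgentIndex\GoodIndex}=1$. So they contribute nothing to $\UnheldLowGoodBundle_\AgentIndex$, and each one in $\CommonSet$ contributes exactly $1$ to $\LowGoodBundle_\AgentIndex(\CommonSet)$.
\item High goods contribute nothing to either term.
\item A non-top good $\GoodIndex\in\CommonSet$ contributes $(1-\LowGoodBundle_{\AgentIndex\GoodIndex})+\LowGoodBundle_{\AgentIndex\GoodIndex}=1$.
\item A non-top good outside $\CommonSet$ contributes at most $1-\AgentCount\TruthfulMarginal_{\AgentIndex\GoodIndex}$.
\end{itemize}
Adding these up and using $\AgentCount\TruthfulMarginal_\AgentIndex(\GoodSet)=\GoodCount$ (\Cref{lem:marginal-feasibility}) together with \cref{eq:marginal-cases} gives
\begin{align*}
\UnheldLowGoodBundle_\AgentIndex(\LowSet_\AgentIndex)+\LowGoodBundle_\AgentIndex(\CommonSet)
\leq|\CommonSet\setminus\HighSet_\AgentIndex|
+\frac{\OmissionCount(\TopSet_\AgentIndex)}{\AgentCount-1}
-\sum_{\GoodIndex\in\CommonSet\setminus\TopSet_\AgentIndex}\bigl(1-\AgentCount\TruthfulMarginal_{\AgentIndex\GoodIndex}\bigr).
\end{align*}
The last sum is nonnegative, so I drop it. In the resulting bound, split $\OmissionCount(\TopSet_\AgentIndex)=\OmissionCount(\HighSet_\AgentIndex)+\OmissionCount(\TopSet_\AgentIndex\setminus\HighSet_\AgentIndex)$. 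Replace $\OmissionCount(\HighSet_\AgentIndex)/(\AgentCount-1)$ by $\AgentCount-|\HighSet_\AgentIndex|-\AgentCount(1-\TruthfulMarginal_\AgentIndex(\HighSet_\AgentIndex))$ using \cref{eq:missing-high-marginal}; this is where the missing-marginal term of \cref{eq:scalar-kept-bound} enters.

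The remaining difficulty is the pair $|\CommonSet\setminus\HighSet_\AgentIndex|+\OmissionCount(\TopSet_\AgentIndex\setminus\HighSet_\AgentIndex)/(\AgentCount-1)$. Bounding each summand crudely by $1$ per good only gives $(\AgentCount-1-|\HighSet_\AgentIndex|)+(|\CommonSet|-|\HighSet_\AgentIndex|)$, which is too weak by about $2(|\CommonSet|-|\HighSet_\AgentIndex|)$. I expect this to be the main obstacle. To close the gap I would use the minimality of $\CommonSet$ in \Cref{def:common-set}: swapping goods of $\TopSet_\AgentIndex$ into $\CommonSet$ cannot lower the total $\OmissionCount(\CommonSet)$, so the top goods of agent $\AgentIndex$ are, in aggregate, omitted at least as often as the goods of $\CommonSet$. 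This bounds $\OmissionCount(\TopSet_\AgentIndex\setminus\HighSet_\AgentIndex)$ against $\OmissionCount(\CommonSet)$. That is the origin of the $(\AgentCount-|\CommonSet|)-\OmissionCount(\CommonSet)/(\AgentCount-1)$ term in \cref{eq:scalar-kept-bound}, which is positive by \Cref{lem:common-set-count-range}.

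Finally, \cref{eq:integral-kept-bound} should follow from \cref{eq:scalar-kept-bound}. The extra terms there satisfy $\AgentCount(1-\TruthfulMarginal_\AgentIndex(\HighSet_\AgentIndex))>0$ and $(\AgentCount-|\CommonSet|)-\OmissionCount(\CommonSet)/(\AgentCount-1)>0$ by \Cref{lem:common-set-count-range}. The left-hand side minus $\ReservationCountSlack$ is an integer minus $\UnheldLowGoodBundle_\AgentIndex(\LowSet_\AgentIndex)+\LowGoodBundle_\AgentIndex(\CommonSet)$, so I would try to sharpen the strict positivity to the missing unit $+1$ via integrality. If that integrality route fails, I would instead prove \cref{eq:integral-kept-bound} directly: in the good-by-good bound, control $\OmissionCount(\TopSet_\AgentIndex)/(\AgentCount-1)$ by $2(\AgentCount-|\HighSet_\AgentIndex|)-1$, as in step (h) of \cref{eq:truncated-low-good-unallocated-weight-bound}, and charge the goods of $\CommonSet\setminus\HighSet_\AgentIndex$ against $\AgentCount-|\CommonSet|$.
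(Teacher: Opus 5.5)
Your set-up matches the paper's and is correct: the identity $\bar x^{\mathrm L}_i(L_i)+x^{\mathrm L}_i(S)=|S\setminus H_i|+\sum_{g\notin T_i\cup S}(1-x^{\mathrm L}_{ig})$, the bound $x^{\mathrm L}_{ig}\ge nx_{ig}$ off $T_i$, and $\sum_{g\notin T_i}(1-nx_{ig})=t(T_i)/(n-1)$. The gap is that you then drop $\sum_{g\in S\setminus T_i}(1-nx_{ig})$. By \cref{eq:marginal-cases}, each of its terms is $(n-t_g)/(n-1)\ge 1-t_g/(n-1)$. So this sum is exactly what cancels the $|S\setminus T_i|$ part of $|S\setminus H_i|$, and without it the inequality you are left with is false.

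For example, let the other $n-1$ agents share one top set disjoint from $T_i$, let $S$ lie inside it, and take $|H_i|<n/2$ (so $i$ is deficient) and $|S|\ge1$. Then $t_g=1$ on $S$ and $t_g=n-1$ on $T_i$, so $q_i=n(1-x_i(H_i))=n-2|H_i|$ and the dropped sum equals $|S|$. Your bound $|S\setminus H_i|+t(T_i)/(n-1)=|S|+n-1$ then exceeds the rearranged right-hand side $2(n-|H_i|)-1-q_i+t(S)/(n-1)=n-1+|S|/(n-1)$ of \cref{eq:scalar-kept-bound} by $|S|(n-2)/(n-1)$.

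So no further estimate can rescue that step. The minimality of $S$ points the wrong way in any case. It makes $t(S)$ small, which makes the target harder, and it yields inequalities $t(S)\le t(B)$, i.e.\ lower bounds on top-set counts, whereas you need an upper bound on $t(T_i\setminus H_i)$. In fact \cref{eq:scalar-kept-bound} holds for every set $S$ and needs no minimality. Keeping the sum reduces your bound to $|S\cap(T_i\setminus H_i)|+t(S\setminus T_i)/(n-1)+t(T_i)/(n-1)$. Then $t_g\le n-1$ on $(T_i\setminus H_i)\setminus S$, \cref{eq:missing-high-marginal} on $H_i$, and $t(S\setminus H_i)\le t(S)$ give exactly $2(n-|H_i|)-1-q_i+t(S)/(n-1)$.

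Your route to \cref{eq:integral-kept-bound} is also missing an idea. The quantity $\bar x^{\mathrm L}_i(L_i)+x^{\mathrm L}_i(S)$ is not an integer, so integrality cannot supply the missing unit. Nor is \cref{eq:scalar-kept-bound} sharper than \cref{eq:integral-kept-bound}: with $N=2(n-|H_i|)+(n-|S|)-2$, its right-hand side exceeds $N$ by $1-q_i-\bigl((n-|S|)-t(S)/(n-1)\bigr)$. This is positive when both positive quantities are small, so \cref{eq:scalar-kept-bound} only gives a value below $N+1$.

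The paper extracts the unit from the form of $x^{\mathrm L}$, in two cases.
\begin{enumerate}
\item Keeping the sum and using $t(S)/(n-1)<n-|S|$ from \cref{eq:omission-parameter-range}, it proves the strict unscaled bound $|S\setminus H_i|+\sum_{g\notin T_i\cup S}(1-nx_{ig})<N+1-q_i$. This settles the case $q_i\ge1$.
\item If $q_i<1$, every $g\notin T_i\cup S$ with $x^{\mathrm L}_{ig}<1$ has $x^{\mathrm L}_{ig}=nx_{ig}/q_i$, so $1-nx_{ig}=(1-q_i)+q_i(1-x^{\mathrm L}_{ig})$; the same relation holds for the unit terms of $S\setminus H_i$. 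If the bound failed, at least $N+1$ terms of the identity, each in $[0,1]$, would be positive. Summing the relation over them would make the unscaled sum exceed $(1-q_i)(N+1)+q_iN=N+1-q_i$, a contradiction.
\end{enumerate}
Your fallback, which controls $t(T_i)/(n-1)$ and charges $S\setminus H_i$ against $n-|S|$, contains neither this case split nor the counting. With the sum dropped it fails too: in the instance above with $n=5$, $|H_i|=2$, $|S|=3$, your bound is $7>6=N$.
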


\begin{proof}
Fix a deficient agent $\AgentIndex\in\DeficientAgents$.
We bound the sum of her unallocated low-good \amountName and the \amountName of low goods she holds
in the \commonSetName.  One bound retains her scaled missing marginal;
an integer bound on the loss on goods outside her top set gives the other.

Her unallocated low-good \amountName $\UnheldLowGoodBundle_\AgentIndex(\LowSet_\AgentIndex)$ in
\Cref{def:truncated-low-good-bundle} equals
$|\LowSet_\AgentIndex|-\LowGoodBundle_\AgentIndex(\LowSet_\AgentIndex)$,
and her \amountName in the \commonSetName is carried by
$\CommonSet\cap\LowSet_\AgentIndex$ alone, so
\begin{align*}
\UnheldLowGoodBundle_\AgentIndex(\LowSet_\AgentIndex)+\LowGoodBundle_\AgentIndex(\CommonSet)
&\overset{(a)}{=}|\LowSet_\AgentIndex|
-\LowGoodBundle_\AgentIndex(\LowSet_\AgentIndex)
+\LowGoodBundle_\AgentIndex(\LowSet_\AgentIndex\cap\CommonSet)\\
&\overset{(b)}{=}|\LowSet_\AgentIndex|
-\LowGoodBundle_\AgentIndex(\LowSet_\AgentIndex\setminus\CommonSet)\\
&\overset{(c)}{=}|\CommonSet\setminus\HighSet_\AgentIndex|
+\sum_{\GoodIndex\in\LowSet_\AgentIndex\setminus\CommonSet}
(1-\LowGoodBundle_{\AgentIndex\GoodIndex}),
\end{align*}
where equality (a) uses that $\LowGoodBundle_\AgentIndex$ vanishes on
$\HighSet_\AgentIndex$, equality (b) removes the \amountName on
$\LowSet_\AgentIndex\cap\CommonSet$, and equality (c) splits
$|\LowSet_\AgentIndex|$ at the \commonSetName and uses
$\LowSet_\AgentIndex\cap\CommonSet=
\CommonSet\setminus\HighSet_\AgentIndex$.  Every good of
$\TopSet_\AgentIndex\setminus\HighSet_\AgentIndex$ is held in full and drops
out of the last sum, so
\begin{align}
\UnheldLowGoodBundle_\AgentIndex(\LowSet_\AgentIndex)+\LowGoodBundle_\AgentIndex(\CommonSet)
&=|\CommonSet\setminus\HighSet_\AgentIndex|
+\sum_{\GoodIndex\notin\TopSet_\AgentIndex\cup\CommonSet}
(1-\LowGoodBundle_{\AgentIndex\GoodIndex}).
\label{eq:joint-loss-identity}
\end{align}
Since $\HighSet_\AgentIndex\subseteq\TopSet_\AgentIndex$, the set
difference splits as
$\CommonSet\setminus\HighSet_\AgentIndex=
(\CommonSet\setminus\TopSet_\AgentIndex)\cup
(\CommonSet\cap(\TopSet_\AgentIndex\setminus\HighSet_\AgentIndex))$, so
$|\CommonSet\setminus\HighSet_\AgentIndex|=
|\CommonSet\setminus\TopSet_\AgentIndex|+|\CommonSet\cap(\TopSet_\AgentIndex\setminus\HighSet_\AgentIndex)|$.

Outside her top set the \portionName she holds of each good is at least
$\AgentCount\cdot\TruthfulMarginal_{\AgentIndex\GoodIndex}$.
We first bound the sum obtained by replacing her actual \portionsName with these
lower bounds; equality holds when her scaled missing marginal
$\AgentCount\cdot(1-\TruthfulMarginal_\AgentIndex(\HighSet_\AgentIndex))\geq1$.  By \cref{eq:marginal-cases}, such a good has
$1-\AgentCount\cdot\TruthfulMarginal_{\AgentIndex\GoodIndex}
=(\AgentCount-\OmissionCount_\GoodIndex)/(\AgentCount-1)$, and every other
agent has as many top goods outside $\TopSet_\AgentIndex$ as she omits inside
it; summing that equality over the other agents gives
$\sum_{\GoodIndex\notin\TopSet_\AgentIndex}
(1-\AgentCount\cdot\TruthfulMarginal_{\AgentIndex\GoodIndex})
=\OmissionCount(\TopSet_\AgentIndex)/(\AgentCount-1)$.
Each good of $\CommonSet\setminus\TopSet_\AgentIndex$ contributes
$1+1/(\AgentCount-1)-\OmissionCount_\GoodIndex/(\AgentCount-1)$ to that sum, and the
$\AgentCount-|\HighSet_\AgentIndex|-1
-|\CommonSet\cap(\TopSet_\AgentIndex\setminus\HighSet_\AgentIndex)|$ goods of
$(\TopSet_\AgentIndex\setminus\HighSet_\AgentIndex)\setminus\CommonSet$ each have
\nonTopCountName at most $\AgentCount-1$.  Subtracting the former and bounding the
latter,
\begin{align*}
\sum_{\GoodIndex\notin\TopSet_\AgentIndex\cup\CommonSet}
(1-\AgentCount\cdot\TruthfulMarginal_{\AgentIndex\GoodIndex})
&\overset{(a)}{\leq}
((\AgentCount-|\HighSet_\AgentIndex|)-\AgentCount\cdot(1-\TruthfulMarginal_\AgentIndex(\HighSet_\AgentIndex)))
+\OmissionCount(\CommonSet\setminus\HighSet_\AgentIndex)/(\AgentCount-1)
+(\AgentCount-|\HighSet_\AgentIndex|)-1
-|\CommonSet\setminus\HighSet_\AgentIndex|\\
&\overset{(b)}{=}\OmissionCount(\CommonSet)/(\AgentCount-1)+\OmissionCount(\HighSet_\AgentIndex\setminus\CommonSet)/(\AgentCount-1)
+(\AgentCount-|\HighSet_\AgentIndex|)-1
-|\CommonSet\setminus\HighSet_\AgentIndex|\\
&\overset{(c)}{\leq}\OmissionCount(\CommonSet)/(\AgentCount-1)+2\cdot(\AgentCount-|\HighSet_\AgentIndex|)
-\AgentCount\cdot(1-\TruthfulMarginal_\AgentIndex(\HighSet_\AgentIndex))-1
-|\CommonSet\setminus\HighSet_\AgentIndex|.
\end{align*}
Step (a) also drops the nonpositive term
$-|\CommonSet\setminus\TopSet_\AgentIndex|/(\AgentCount-1)$.
Step (b) is the inclusion--exclusion identity
$((\AgentCount-|\HighSet_\AgentIndex|)-\AgentCount\cdot(1-\TruthfulMarginal_\AgentIndex(\HighSet_\AgentIndex)))+
\OmissionCount(\CommonSet\setminus\HighSet_\AgentIndex)/(\AgentCount-1)
=\OmissionCount(\CommonSet)/(\AgentCount-1)+
\OmissionCount(\HighSet_\AgentIndex\setminus\CommonSet)/(\AgentCount-1)$.
Step (c) uses $\OmissionCount(\HighSet_\AgentIndex\setminus\CommonSet)/(\AgentCount-1)\leq
((\AgentCount-|\HighSet_\AgentIndex|)-\AgentCount\cdot(1-\TruthfulMarginal_\AgentIndex(\HighSet_\AgentIndex)))=(\AgentCount-|\HighSet_\AgentIndex|)-
\AgentCount\cdot(1-\TruthfulMarginal_\AgentIndex(\HighSet_\AgentIndex))$.
Substituting the chain into \cref{eq:joint-loss-identity} gives
\begin{align}
\UnheldLowGoodBundle_\AgentIndex(\LowSet_\AgentIndex)+\LowGoodBundle_\AgentIndex(\CommonSet)
&\leq\OmissionCount(\CommonSet)/(\AgentCount-1)+2\cdot(\AgentCount-|\HighSet_\AgentIndex|)
-\AgentCount\cdot(1-\TruthfulMarginal_\AgentIndex(\HighSet_\AgentIndex))-1.
\label{eq:joint-low-good-loss}
\end{align}

For the integer bound, add $|\CommonSet\setminus\HighSet_\AgentIndex|$ to
both sides of the chain preceding \cref{eq:joint-low-good-loss}.  By
\cref{eq:omission-parameter-range}, the normalized total
\nonTopCountName satisfies
$\OmissionCount(\CommonSet)/(\AgentCount-1)<\AgentCount-|\CommonSet|$, so
\begin{align}
|\CommonSet\setminus\HighSet_\AgentIndex|
+\sum_{\GoodIndex\notin\TopSet_\AgentIndex\cup\CommonSet}
(1-\AgentCount\cdot\TruthfulMarginal_{\AgentIndex\GoodIndex})
&<2\cdot(\AgentCount-|\HighSet_\AgentIndex|)+(\AgentCount-|\CommonSet|)-1
-\AgentCount\cdot(1-\TruthfulMarginal_\AgentIndex(\HighSet_\AgentIndex)).
\label{eq:unscaled-joint-loss}
\end{align}
We show that
\begin{align}
\UnheldLowGoodBundle_\AgentIndex(\LowSet_\AgentIndex)+\LowGoodBundle_\AgentIndex(\CommonSet)
&\leq2\cdot(\AgentCount-|\HighSet_\AgentIndex|)+(\AgentCount-|\CommonSet|)-2.
\label{eq:integral-joint-loss}
\end{align}
If her scaled missing marginal
$\AgentCount\cdot(1-\TruthfulMarginal_\AgentIndex(\HighSet_\AgentIndex))\geq1$,
then \cref{eq:integral-joint-loss} follows from
\cref{eq:joint-loss-identity}, the lower bound
$\LowGoodBundle_{\AgentIndex\GoodIndex}\geq
\AgentCount\cdot\TruthfulMarginal_{\AgentIndex\GoodIndex}$ outside her top
set, and \cref{eq:unscaled-joint-loss}.

Otherwise
$\AgentCount\cdot(1-\TruthfulMarginal_\AgentIndex(\HighSet_\AgentIndex))<1$,
and \Cref{def:truncated-low-good-bundle} reads
$\LowGoodBundle_{\AgentIndex\GoodIndex}=
\min\{\TruthfulMarginal_{\AgentIndex\GoodIndex}/
(1-\TruthfulMarginal_\AgentIndex(\HighSet_\AgentIndex)),1\}$ at a low good.
By \cref{eq:joint-loss-identity}, the left side of
\cref{eq:integral-joint-loss} is a sum of terms in $[0,1]$: a term equal to
one for each good of $\CommonSet\setminus\HighSet_\AgentIndex$, and the term
$1-\LowGoodBundle_{\AgentIndex\GoodIndex}$ for each good
$\GoodIndex\notin\TopSet_\AgentIndex\cup\CommonSet$.  The left side of
\cref{eq:unscaled-joint-loss} has the same terms, except that the
nonnegative term $1-\AgentCount\cdot\TruthfulMarginal_{\AgentIndex\GoodIndex}$
replaces $1-\LowGoodBundle_{\AgentIndex\GoodIndex}$.  A good
$\GoodIndex\notin\TopSet_\AgentIndex\cup\CommonSet$ with
$\LowGoodBundle_{\AgentIndex\GoodIndex}<1$ has
$\AgentCount\cdot\TruthfulMarginal_{\AgentIndex\GoodIndex}=
\AgentCount\cdot(1-\TruthfulMarginal_\AgentIndex(\HighSet_\AgentIndex))\cdot
\LowGoodBundle_{\AgentIndex\GoodIndex}$, so its two terms satisfy
\begin{align*}
1-\AgentCount\cdot\TruthfulMarginal_{\AgentIndex\GoodIndex}
&=(1-\AgentCount\cdot(1-\TruthfulMarginal_\AgentIndex(\HighSet_\AgentIndex)))
+\AgentCount\cdot(1-\TruthfulMarginal_\AgentIndex(\HighSet_\AgentIndex))
\cdot(1-\LowGoodBundle_{\AgentIndex\GoodIndex}).
\end{align*}
The same identity holds with both terms equal to one for a good of
$\CommonSet\setminus\HighSet_\AgentIndex$.

Suppose, for a contradiction, that \cref{eq:integral-joint-loss} fails.
Its right side is an integer and every term on its left side is at most
one, so at least
$2\cdot(\AgentCount-|\HighSet_\AgentIndex|)+(\AgentCount-|\CommonSet|)-1$
of these terms are positive.  Summing the identity over the goods with
positive terms and dropping the remaining nonnegative terms of
\cref{eq:unscaled-joint-loss},
\begin{align*}
&|\CommonSet\setminus\HighSet_\AgentIndex|
+\sum_{\GoodIndex\notin\TopSet_\AgentIndex\cup\CommonSet}
(1-\AgentCount\cdot\TruthfulMarginal_{\AgentIndex\GoodIndex})\\
&\quad\overset{(a)}{\geq}
(1-\AgentCount\cdot(1-\TruthfulMarginal_\AgentIndex(\HighSet_\AgentIndex)))
\cdot(2\cdot(\AgentCount-|\HighSet_\AgentIndex|)+(\AgentCount-|\CommonSet|)-1)\\
&\qquad+\AgentCount\cdot(1-\TruthfulMarginal_\AgentIndex(\HighSet_\AgentIndex))
\cdot(\UnheldLowGoodBundle_\AgentIndex(\LowSet_\AgentIndex)+\LowGoodBundle_\AgentIndex(\CommonSet))\\
&\quad\overset{(b)}{>}
2\cdot(\AgentCount-|\HighSet_\AgentIndex|)+(\AgentCount-|\CommonSet|)-1
-\AgentCount\cdot(1-\TruthfulMarginal_\AgentIndex(\HighSet_\AgentIndex)),
\end{align*}
which contradicts \cref{eq:unscaled-joint-loss}.  Step (a) uses
$1-\AgentCount\cdot(1-\TruthfulMarginal_\AgentIndex(\HighSet_\AgentIndex))>0$
and the count of positive terms.  Step (b) uses the failure of
\cref{eq:integral-joint-loss} and the positivity of her scaled missing
marginal.

Substituting \cref{eq:integral-joint-loss} into the defining denominator
gives \cref{eq:integral-kept-bound}.
Substituting \cref{eq:joint-low-good-loss} instead gives
\cref{eq:scalar-kept-bound}.  Both substitutions use
$\AgentCount-|\HighSet_\AgentIndex|=
(|\CommonSet|-|\HighSet_\AgentIndex|)+(\AgentCount-|\CommonSet|)$.
\end{proof}

\subsection{Proof of
  \texorpdfstring{\Cref*{lem:scaling-factor-bounds}}
                 {Lemma \ref*{lem:scaling-factor-bounds}}}
\label{app:proof-scaling-factor-bounds}

\scalingfactorbounds*

\begin{proof}
By \Cref{lem:scaling-denominator-estimates}, the defining denominator is
at least
$5\cdot(|\CommonSet|-|\HighSet_\AgentIndex|)+\ReservationCountSlack+4$.
The \commonSetName has at least as many goods as the agent's high-good set,
and $\ReservationCountSlack\geq0$ by \Cref{lem:reserved-count-ceiling}.
Thus the denominator is positive, and
$0<\Proxy_\AgentIndex\leq2/(\ReservationCountSlack+4)\leq1/2$.
\end{proof}

\subsection{Proof of
  \texorpdfstring{\Cref*{lem:scaled-missing-factor-bounds}}
                 {Lemma \ref*{lem:scaled-missing-factor-bounds}}}
\label{app:proof-scaled-missing-factor-bounds}

The proofs of \Cref{prop:capacity-certificate,lem:uniform-scaling-bound}
use the following bounds.

\begin{lemma}
\label{lem:scaled-missing-factor-bounds}
The products
$\AgentCount\cdot(1-\TruthfulMarginal_\AgentIndex(\HighSet_\AgentIndex))\cdot
\Proxy_\AgentIndex$ of the deficient agents
$\AgentIndex\in\DeficientAgents$ satisfy the following bounds.
\begin{enumerate}[label=\textup{(\roman*)}]
\item Each product satisfies
\begin{align*}
\AgentCount\cdot
(1-\TruthfulMarginal_\AgentIndex(\HighSet_\AgentIndex))\cdot
\Proxy_\AgentIndex
&<\frac47.
\end{align*}

\item Their sum divided by $\AgentCount$ satisfies
\begin{align*}
\sum_{\AgentIndex\in\DeficientAgents}
(1-\TruthfulMarginal_\AgentIndex(\HighSet_\AgentIndex))\cdot
\Proxy_\AgentIndex
&<\frac37+
\frac{3\cdot(\ReservationCountSlack+1)}{10\cdot(\ReservationCountSlack+7)}.
\end{align*}
\end{enumerate}
\end{lemma}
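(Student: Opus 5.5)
The plan is to reduce both bounds to the two estimates of \Cref{lem:scaling-denominator-estimates}. Fix a deficient agent $\AgentIndex$ and abbreviate three quantities: her scaled missing marginal $m_\AgentIndex\deq\AgentCount\cdot(1-\TruthfulMarginal_\AgentIndex(\HighSet_\AgentIndex))$, the gap $d_\AgentIndex\deq|\CommonSet|-|\HighSet_\AgentIndex|\geq0$, and the positive quantity $\beta\deq(\AgentCount-|\CommonSet|)-\OmissionCount(\CommonSet)/(\AgentCount-1)$, which is positive by \cref{eq:omission-parameter-range}. Write $D_\AgentIndex$ for the denominator in \Cref{def:scaling-factor}, so that $\Proxy_\AgentIndex=2/D_\AgentIndex$. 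The two estimates then read
\[
D_\AgentIndex\geq 5d_\AgentIndex+\ReservationCountSlack+4
\quad\text{and}\quad
D_\AgentIndex\geq 5d_\AgentIndex+m_\AgentIndex+\ReservationCountSlack+3+\beta .
\]
The target quantity is $m_\AgentIndex\Proxy_\AgentIndex=2m_\AgentIndex/D_\AgentIndex$.

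For (i), I need $7m_\AgentIndex<2D_\AgentIndex$. When $m_\AgentIndex$ is small, the first estimate suffices: $2D_\AgentIndex\geq 8>7m_\AgentIndex$ whenever $m_\AgentIndex<8/7$. When $m_\AgentIndex$ is large, I will use the upper bound from \cref{eq:missing-high-marginal},
\[
m_\AgentIndex\leq \AgentCount-|\HighSet_\AgentIndex|-\OmissionCount(\HighSet_\AgentIndex)/(\AgentCount-1).
\]
I will combine it with the defining minimality of $\CommonSet$: any $|\CommonSet|$-subset of $\TopSet_\AgentIndex$ containing $\HighSet_\AgentIndex$ has total \nonTopCountName at least $\OmissionCount(\CommonSet)$. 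This should give roughly $m_\AgentIndex\leq d_\AgentIndex+\beta+(\text{a term coming from }\TopSet_\AgentIndex\setminus\HighSet_\AgentIndex)$. Plugging that into the second estimate, I expect $D_\AgentIndex\geq m_\AgentIndex+4d_\AgentIndex+\ReservationCountSlack+3+\dots$, and the ratio $2m/(m+c)$ must then be pushed below $4/7$ by the leftover $d$ and $\beta$ terms.

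I also expect to need the mixture $\tfrac{a}{1}\cdot(\text{first estimate})+(1-a)\cdot(\text{second estimate})$, with $a$ chosen depending on $m_\AgentIndex$. I anticipate this is where (i) gets delicate: when $\AgentCount-|\CommonSet|$ is large, the naive bound on $m_\AgentIndex$ is too weak. In that regime the slack $\ReservationCountSlack$ must absorb it, via \cref{eq:agent-saving-bound}. That bound says a deficient agent with $m_\AgentIndex>d_\AgentIndex+1$ saves at least $(m_\AgentIndex-d_\AgentIndex-1)/\AgentCount$ in the reserved count of \cref{eq:reservation-estimate}, which lowers the ceiling and hence raises $\ReservationCountSlack$.

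For (ii), I will sum $(1-\TruthfulMarginal_\AgentIndex(\HighSet_\AgentIndex))\Proxy_\AgentIndex=\frac1\AgentCount\cdot 2m_\AgentIndex/D_\AgentIndex$ over $\DeficientAgents$. The idea is to split each $m_\AgentIndex$ as $\min\{m_\AgentIndex,d_\AgentIndex+1\}+(m_\AgentIndex-d_\AgentIndex-1)^+$. For the first part, the bounded numerator against $D_\AgentIndex\geq5d_\AgentIndex+\ReservationCountSlack+4$ gives a per-agent contribution of order $2/(\ReservationCountSlack+4)$ times at most $1/\AgentCount$ of an agent's share. Since there are at most $\AgentCount$ agents, this sums to a constant, and I aim to make it $3/7$.

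For the excess parts, I will re-run the chain of \Cref{lem:reserved-count-ceiling}, this time keeping the subtracted sum from \cref{eq:reservation-estimate} together with \cref{eq:agent-saving-bound}. This should yield
\[
\ReservationCountSlack+1\geq\frac1\AgentCount\sum_{\AgentIndex\in\DeficientAgents}(m_\AgentIndex-d_\AgentIndex-1)^+,
\]
up to the ceiling, possibly with a $\beta$ correction. The excess contribution is then at most $\frac{2}{\ReservationCountSlack+7}\cdot\frac1\AgentCount\sum(\cdot)^+$ once the second estimate's $m_\AgentIndex$ term is used in $D_\AgentIndex$. Here the constant $7$ comes from $m_\AgentIndex\geq d_\AgentIndex+1$ plus $\ReservationCountSlack+3+\dots$. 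That bounds the excess by something of the form $c(\ReservationCountSlack+1)/(\ReservationCountSlack+7)$, and the remaining step is to tune the split to obtain exactly the constant $3/10$.

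The main obstacle is precisely this bookkeeping: relating the aggregate excess missing marginal to $\ReservationCountSlack$ with the right constants, and making the per-agent bound in (i) uniform in $\AgentCount-|\CommonSet|$. I would first try to prove the inequality $\ReservationCountSlack\gtrsim\frac1\AgentCount\sum(m_\AgentIndex-d_\AgentIndex-1)^+$ cleanly, since both parts hinge on it.
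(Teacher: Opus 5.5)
Your ingredients are the right ones: both estimates of \Cref{lem:scaling-denominator-estimates}, the comparison with $\CommonSet$, and \cref{eq:agent-saving-bound}. But the inequality that drives part (i) is missing, and you assign its role to a tool that cannot play it. Keep the subtracted sum in \cref{eq:reservation-estimate}, use $\lceil y\rceil<y+1$, and apply the definition \cref{eq:reservation-count-slack}. Writing $w_j\deq2|\TopSet_j\setminus\CommonSet|-\AgentCount\,\TruthfulMarginalHigh_j(\TopSet_j\setminus\CommonSet)\geq0$, this gives
\[
\frac1\AgentCount\sum_{j\in\AgentSet}w_j+2\beta-\max\Bigl\{0,1-\frac{\OmissionCount(\CommonSet)}{\AgentCount-1}\Bigr\}<\ReservationCountSlack+1 .
\]
The positive part is at most both $1$ and $\beta$, so $2\beta<\ReservationCountSlack+2$ and $\frac1\AgentCount\sum_jw_j+\beta<\ReservationCountSlack+1$.

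Thus $\ReservationCountSlack$ absorbs a large $\beta$ because of the factor $4(\AgentCount-|\CommonSet|)$ in $\ReservationCountSlack$ against roughly $2(\AgentCount-|\CommonSet|)$ in the reservation total. The per-agent savings play no part in this. Your route through \cref{eq:agent-saving-bound} fails for (i): agent $i$'s own saving enters $\ReservationCountSlack$ with weight $1/\AgentCount$, so it yields at most $\ReservationCountSlack+1>(m_i-d_i-1)/\AgentCount$.

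Some relation between $\beta$ and $\ReservationCountSlack$ is indispensable. Your comparison step, in exact form, is $m_i\le2d_i+\beta$, since each of the $d_i$ added top goods has normalized non-top count at most one. Your two estimates plus this bound are consistent with $d_i=\ReservationCountSlack=0$, $m_i=\beta$ large and $D_i=2m_i+3$, i.e.\ $m_i\Proxy_i\to1$. With $2\beta<\ReservationCountSlack+2$, part (i) needs no case split or mixture:
\[
D_i\ge5d_i+m_i+\ReservationCountSlack+3+\beta>5d_i+m_i+1+3\beta\ge\tfrac72m_i+1,
\]
so $m_i\Proxy_i<4m_i/(7m_i+2)<4/7$.

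For (ii), bounding the two pieces of your split separately cannot reach the stated constant; the deferred ``tuning'' is the actual content. For the first piece, $D_i\ge5d_i+\ReservationCountSlack+4$ alone gives $2(d_i+1)/(5d_i+\ReservationCountSlack+4)=1/2$ at $d_i=\ReservationCountSlack=0$. That exceeds the target $33/70$ at $\ReservationCountSlack=0$. You need the second estimate with $4m_i\le5d_i+3+\beta$, which gives $D_i\ge5m_i+\ReservationCountSlack$ and so $m_i\Proxy_i\le2/5$ when $m_i\le d_i+1$.

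For the excess, even your slope $2/(\ReservationCountSlack+7)$ is $20/3$ times the allowed $3/(10(\ReservationCountSlack+7))$. A slope that small appears only because $D_i$ grows with the excess, both directly and through $d_i$. The paper therefore bounds the whole product at once. With $u\deq m_i-d_i-1+\beta>0$ it uses
\[
m_i\Proxy_i\le\frac{\min\{2d_i+2+2u,\,3d_i+1+u\}}{6d_i+4+\ReservationCountSlack+u},
\]
together with the constraint $d_i>u-\ReservationCountSlack-1$, which again comes from $2\beta<\ReservationCountSlack+2$ and $m_i\le2d_i+\beta$. Maximizing over $d_i$ and two quadratic positivity checks then give $m_i\Proxy_i\le3/7+3u/(10(\ReservationCountSlack+7))$.

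Only at this point does \cref{eq:agent-saving-bound} enter, as $u\le w_i+\beta$. Averaging with weight $1/\AgentCount$ against $\frac1\AgentCount\sum_jw_j+\beta<\ReservationCountSlack+1$ finishes; $\beta$ is charged only once because $|\DeficientAgents|\le\AgentCount$. Your proposal supplies neither this per-agent affine bound nor the constraint it relies on.
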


\begin{proof}
The denominator estimates give an individual product bound.  For the
mean bound, we first retain the reservation differences from
\Cref{lem:reservation-estimates}, then bound each agent's product by an
affine expression in her difference.

Fix a deficient agent $\AgentIndex\in\DeficientAgents$.
Write
$\LocalScaledMissing_\AgentIndex\deq
\AgentCount\cdot(1-\TruthfulMarginal_\AgentIndex(\HighSet_\AgentIndex))$
for her positive scaled missing marginal and
$\DeficitExcess\deq|\CommonSet|-|\HighSet_\AgentIndex|$
for the nonnegative difference between the size of the \commonSetName
and the size of her high-good set.
Let $\KeptUnits_\AgentIndex$ denote the denominator in the definition of
her scaling factor:
\begin{align*}
\KeptUnits_\AgentIndex
&\deq7\cdot(\AgentCount-|\HighSet_\AgentIndex|)
-4\cdot(\AgentCount-|\CommonSet|)+\ReservationCountSlack+2\\
&\quad-\UnheldLowGoodBundle_\AgentIndex(\LowSet_\AgentIndex)-\LowGoodBundle_\AgentIndex(\CommonSet).
\end{align*}
By \Cref{lem:scaling-factor-bounds}, this denominator is positive and
$\Proxy_\AgentIndex=2/\KeptUnits_\AgentIndex$.

\paragraph{Consequences of the reservation estimate.}
The ceiling of the sum
$\sum_{\OtherAgentIndex\in\AgentSet}
\TruthfulMarginalHigh_\OtherAgentIndex(\HighSet_\OtherAgentIndex\setminus\CommonSet)$
is strictly less than the sum plus one.
Apply \Cref{lem:reservation-estimates} to this sum and substitute the
definition of $\ReservationCountSlack$ from
\cref{eq:reservation-count-slack}.
This gives
\begin{align*}
&\frac1{\AgentCount}\cdot\sum_{\OtherAgentIndex\in\AgentSet}
\Bigl(2\cdot|\TopSet_\OtherAgentIndex\setminus\CommonSet|
-\AgentCount\cdot\TruthfulMarginalHigh_\OtherAgentIndex
(\TopSet_\OtherAgentIndex\setminus\CommonSet)\Bigr)\\
&\quad+2\cdot(\AgentCount-|\CommonSet|)
-\frac{2\cdot\OmissionCount(\CommonSet)}{\AgentCount-1}
-\max\left\{0,1-\frac{\OmissionCount(\CommonSet)}{\AgentCount-1}\right\}\\
&<\ReservationCountSlack+1
-\frac{2\cdot\OmissionCount(\CommonSet)}
{\AgentCount\cdot(\AgentCount-1)}
\leq\ReservationCountSlack+1.
\end{align*}
The positive-part term is at most both one and
$(\AgentCount-|\CommonSet|)-\OmissionCount(\CommonSet)/(\AgentCount-1)$.
Each summand is nonnegative by \Cref{lem:reservation-estimates}.
Together with \cref{eq:omission-parameter-range}, these observations yield
the two bounds needed below:
\begin{align}
&\frac1{\AgentCount}\cdot\sum_{\OtherAgentIndex\in\AgentSet}
\Bigl(2\cdot|\TopSet_\OtherAgentIndex\setminus\CommonSet|
-\AgentCount\cdot\TruthfulMarginalHigh_\OtherAgentIndex
(\TopSet_\OtherAgentIndex\setminus\CommonSet)\Bigr)
\notag\\
&\quad+(\AgentCount-|\CommonSet|)
-\frac{\OmissionCount(\CommonSet)}{\AgentCount-1}
<\ReservationCountSlack+1,
\label{eq:mean-reservation-budget}
\\
&0<(\AgentCount-|\CommonSet|)
-\frac{\OmissionCount(\CommonSet)}{\AgentCount-1}
<\frac{\ReservationCountSlack+2}{2}.
\label{eq:common-set-slack-range}
\end{align}

\paragraph{Comparison with the \commonSetName.}
The scaled missing marginal satisfies
\begin{align}
\LocalScaledMissing_\AgentIndex
&\leq\min\left\{\AgentCount-|\HighSet_\AgentIndex|,
2\cdot\DeficitExcess+(\AgentCount-|\CommonSet|)
-\frac{\OmissionCount(\CommonSet)}{\AgentCount-1}\right\}.
\label{eq:common-set-comparison}
\end{align}
The first bound is \cref{eq:missing-high-marginal}.
For the second, extend the agent's high-good set by
$\DeficitExcess$ goods from
$\TopSet_\AgentIndex\setminus\HighSet_\AgentIndex$.
There are enough such goods because this set has
$\AgentCount-|\HighSet_\AgentIndex|-1$ members and
$\AgentCount-|\CommonSet|\geq1$.
Each added good has normalized \nonTopCountName at most one.
The extension has the same size as the \commonSetName, so its total
\nonTopCountName is at least that of the \commonSetName.  Consequently,
$\OmissionCount(\CommonSet)/(\AgentCount-1)
\leq(\AgentCount-|\HighSet_\AgentIndex|)
-\LocalScaledMissing_\AgentIndex+\DeficitExcess$.
Here \cref{eq:missing-high-marginal} gives the contribution of the original
high-good set.  Rearranging proves \cref{eq:common-set-comparison}.

\paragraph{Individual product bound.}
Use the second denominator bound in
\Cref{lem:scaling-denominator-estimates}, followed by
\cref{eq:common-set-slack-range,eq:common-set-comparison}:
\begin{align*}
\KeptUnits_\AgentIndex
&\overset{(a)}{\geq}5\cdot\DeficitExcess
+\LocalScaledMissing_\AgentIndex+\ReservationCountSlack+3
+(\AgentCount-|\CommonSet|)
-\frac{\OmissionCount(\CommonSet)}{\AgentCount-1}\\
&\overset{(b)}{>}5\cdot\DeficitExcess
+\LocalScaledMissing_\AgentIndex+1
+3\cdot\left((\AgentCount-|\CommonSet|)
-\frac{\OmissionCount(\CommonSet)}{\AgentCount-1}\right)\\
&\overset{(c)}{\geq}\frac72\cdot\LocalScaledMissing_\AgentIndex+1
+\frac12\cdot\left((\AgentCount-|\CommonSet|)
-\frac{\OmissionCount(\CommonSet)}{\AgentCount-1}\right).
\end{align*}
Step (a) is \cref{eq:scalar-kept-bound} in
\Cref{lem:scaling-denominator-estimates}.
Step (b) uses the strict upper bound in
\cref{eq:common-set-slack-range}, and step (c) uses the second bound in
\cref{eq:common-set-comparison}.
Since the final difference is positive,
\begin{align*}
\LocalScaledMissing_\AgentIndex\cdot\Proxy_\AgentIndex
&<\frac{4\cdot\LocalScaledMissing_\AgentIndex}
{7\cdot\LocalScaledMissing_\AgentIndex+2}
\leq\frac{4\cdot\AgentCount}{7\cdot\AgentCount+2}<\frac47.
\end{align*}
The middle step uses
$\LocalScaledMissing_\AgentIndex\leq\AgentCount$ and monotonicity of the
displayed ratio.  This proves part~\textup{(i)}.

\paragraph{An affine bound for each product.}
We next bound the product in terms of the agent's reservation difference.
If $\LocalScaledMissing_\AgentIndex\leq\DeficitExcess+1$, combine three
copies of this inequality with the second bound in
\cref{eq:common-set-comparison} to obtain
$4\cdot\LocalScaledMissing_\AgentIndex
\leq5\cdot\DeficitExcess+3+(\AgentCount-|\CommonSet|)
-\OmissionCount(\CommonSet)/(\AgentCount-1)$.
The second denominator bound in
\Cref{lem:scaling-denominator-estimates} therefore gives
$\KeptUnits_\AgentIndex\geq
5\cdot\LocalScaledMissing_\AgentIndex+\ReservationCountSlack$.
Thus $\LocalScaledMissing_\AgentIndex\cdot\Proxy_\AgentIndex\leq2/5$.

Suppose instead that
$\LocalScaledMissing_\AgentIndex>\DeficitExcess+1$.
Define the positive quantity
$\MissingExcess
\deq\LocalScaledMissing_\AgentIndex-\DeficitExcess-1
+(\AgentCount-|\CommonSet|)
-\OmissionCount(\CommonSet)/(\AgentCount-1)$.
The positivity of the last difference and
\cref{eq:common-set-comparison} give, respectively,
$2\cdot\LocalScaledMissing_\AgentIndex
\leq\min\{2\cdot\DeficitExcess+2+2\cdot\MissingExcess,\,
3\cdot\DeficitExcess+1+\MissingExcess\}$.
The denominator bound in \Cref{lem:scaling-denominator-estimates}
simplifies to
$\KeptUnits_\AgentIndex\geq
6\cdot\DeficitExcess+4+\ReservationCountSlack+\MissingExcess$.
Moreover, \cref{eq:common-set-comparison,eq:common-set-slack-range}
imply $\DeficitExcess>\MissingExcess-\ReservationCountSlack-1$.
It follows that
\begin{align*}
\LocalScaledMissing_\AgentIndex\cdot\Proxy_\AgentIndex
&\leq
\frac{\min\{2\cdot\DeficitExcess+2+2\cdot\MissingExcess,\,
3\cdot\DeficitExcess+1+\MissingExcess\}}
{6\cdot\DeficitExcess+4+\ReservationCountSlack+\MissingExcess}.
\end{align*}

For fixed $\MissingExcess$, this is a ratio of linear expressions in
$\DeficitExcess$ on each side of
$\DeficitExcess=\MissingExcess+1$.
For $\DeficitExcess\leq\MissingExcess+1$, the numerator is
$3\cdot\DeficitExcess+1+\MissingExcess$, and the ratio is nondecreasing in
$\DeficitExcess$ when $\MissingExcess\leq\ReservationCountSlack+2$
and nonincreasing when $\MissingExcess\geq\ReservationCountSlack+2$.
For $\DeficitExcess\geq\MissingExcess+1$, the maximum is bounded by its
value at the endpoint or its limit $1/3$.
Using the lower bound
$\DeficitExcess>\MissingExcess-\ReservationCountSlack-1$ in the decreasing
case therefore gives
\begin{align*}
\LocalScaledMissing_\AgentIndex\cdot\Proxy_\AgentIndex
&\leq
\begin{cases}
\displaystyle\max\left\{\frac13,
\frac{4\cdot(1+\MissingExcess)}
{\ReservationCountSlack+10+7\cdot\MissingExcess}\right\},
&\MissingExcess\leq\ReservationCountSlack+2,\\[3mm]
\displaystyle\max\left\{\frac13,
\frac{4\cdot\MissingExcess-3\cdot\ReservationCountSlack-2}
{7\cdot\MissingExcess-5\cdot\ReservationCountSlack-2}\right\},
&\MissingExcess\geq\ReservationCountSlack+2.
\end{cases}
\end{align*}
In the second case, the denominator is at least
$2\cdot\ReservationCountSlack+12>0$.

Both upper bounds are at most
$3/7+3\cdot\MissingExcess/(10\cdot(\ReservationCountSlack+7))$.
This expression exceeds $1/3$.
For the first fraction, multiply this expression minus the fraction by
the positive denominator
$70\cdot(\ReservationCountSlack+7)\cdot
(\ReservationCountSlack+10+7\cdot\MissingExcess)$.
The resulting numerator is positive because
\begin{align*}
&147\cdot\MissingExcess^2
-(49\cdot\ReservationCountSlack+280)\cdot\MissingExcess
+30\cdot\ReservationCountSlack^2+230\cdot\ReservationCountSlack+140\\
&\quad=
147\cdot\left(\MissingExcess-\frac{\ReservationCountSlack}{6}
-\frac{20}{21}\right)^2
+\frac{311}{12}\cdot\ReservationCountSlack^2
+\frac{550}{3}\cdot\ReservationCountSlack+\frac{20}{3}>0.
\end{align*}
For the second fraction, the positive multiplier is
$70\cdot(\ReservationCountSlack+7)\cdot
(7\cdot\MissingExcess-5\cdot\ReservationCountSlack-2)$.
Multiplying the affine expression minus the second fraction by this
quantity gives a positive numerator on
$\MissingExcess\geq\ReservationCountSlack+2$, since
\begin{align*}
&147\cdot\MissingExcess^2
-(175\cdot\ReservationCountSlack+532)\cdot\MissingExcess
+60\cdot\ReservationCountSlack^2+500\cdot\ReservationCountSlack+560\\
&\quad=
147\cdot(\MissingExcess-\ReservationCountSlack-2)^2\\
&\qquad+(119\cdot\ReservationCountSlack+56)
\cdot(\MissingExcess-\ReservationCountSlack-2)
+32\cdot\ReservationCountSlack^2+206\cdot\ReservationCountSlack+84>0.
\end{align*}

Finally, the deficient-agent bound in
\Cref{lem:reservation-estimates} implies
\begin{align*}
\MissingExcess
&\leq2\cdot|\TopSet_\AgentIndex\setminus\CommonSet|
-\AgentCount\cdot\TruthfulMarginalHigh_\AgentIndex
(\TopSet_\AgentIndex\setminus\CommonSet)\\
&\quad+(\AgentCount-|\CommonSet|)
-\frac{\OmissionCount(\CommonSet)}{\AgentCount-1}.
\end{align*}
Combining the two cases, with $2/5<3/7$ in the first case, proves
\begin{align}
&\LocalScaledMissing_\AgentIndex\cdot\Proxy_\AgentIndex
\notag\\
&\quad\leq\frac37+\frac{3}{10\cdot(\ReservationCountSlack+7)}
\biggl[2\cdot|\TopSet_\AgentIndex\setminus\CommonSet|
-\AgentCount\cdot\TruthfulMarginalHigh_\AgentIndex
(\TopSet_\AgentIndex\setminus\CommonSet)
\notag\\
&\qquad\qquad+(\AgentCount-|\CommonSet|)
-\frac{\OmissionCount(\CommonSet)}{\AgentCount-1}\biggr].
\label{eq:affine-scaling-product}
\end{align}

\paragraph{Mean product bound.}
Average \cref{eq:affine-scaling-product} over deficient agents with weight
$1/\AgentCount$.  Each reservation difference is nonnegative by
\Cref{lem:reservation-estimates}; the difference for the \commonSetName is positive
by \cref{eq:omission-parameter-range}.
Extending the corresponding sums to all agents can therefore only
increase the upper bound.  We obtain
\begin{align*}
&\sum_{\AgentIndex\in\DeficientAgents}
(1-\TruthfulMarginal_\AgentIndex(\HighSet_\AgentIndex))\cdot\Proxy_\AgentIndex\\
&\quad\leq\frac37+\frac{3}{10\cdot(\ReservationCountSlack+7)}
\biggl[\frac1{\AgentCount}\cdot\sum_{\AgentIndex\in\AgentSet}
\Bigl(2\cdot|\TopSet_\AgentIndex\setminus\CommonSet|
-\AgentCount\cdot\TruthfulMarginalHigh_\AgentIndex
(\TopSet_\AgentIndex\setminus\CommonSet)\Bigr)\\
&\qquad\qquad+(\AgentCount-|\CommonSet|)
-\frac{\OmissionCount(\CommonSet)}{\AgentCount-1}\biggr]\\
&\quad<\frac37+
\frac{3\cdot(\ReservationCountSlack+1)}
{10\cdot(\ReservationCountSlack+7)}.
\end{align*}
The last step is \cref{eq:mean-reservation-budget}.
This proves part~\textup{(ii)}.
\end{proof}

\subsection{Proof of
  \texorpdfstring{\Cref*{lem:uniform-scaling-bound}}
                 {Lemma \ref*{lem:uniform-scaling-bound}}}
\label{app:proof-uniform-scaling-bound}

\uniformscalingbound*

\begin{proof}
Fix any proper $\GridSize$-edge-coloring of the \highGoodDummyGraphName
$\highGoodDummyGraph$.  The average dummy-load identity
\cref{eq:mean-dummy-load}, the mean product bound in
\Cref{lem:scaled-missing-factor-bounds}\textup{(ii)}, and the bound on
each scaling factor in \Cref{lem:scaling-factor-bounds} give
\begin{align*}
&\ColorAverage_\ColorIndex\Proxy(\DummyAgents^\ColorIndex)
+\max_{\AgentIndex\in\DeficientAgents}\Proxy_\AgentIndex\\
&\quad=
\sum_{\AgentIndex\in\DeficientAgents}
(1-\TruthfulMarginal_\AgentIndex(\HighSet_\AgentIndex))\cdot\Proxy_\AgentIndex
+\max_{\AgentIndex\in\DeficientAgents}\Proxy_\AgentIndex\\
&\quad<
\frac37+\frac{3\cdot(\ReservationCountSlack+1)}
{10\cdot(\ReservationCountSlack+7)}
+\frac2{\ReservationCountSlack+4}\\
&\quad=
\frac{34}{35}
-\frac{\ReservationCountSlack\cdot(17\cdot\ReservationCountSlack+173)}
{70\cdot(\ReservationCountSlack+7)\cdot(\ReservationCountSlack+4)}
\leq\frac{34}{35}.
\end{align*}
The final inequality uses $\ReservationCountSlack\geq0$.
\end{proof}

\subsection{Proof of
  \texorpdfstring{\Cref*{lem:reservation-balancing}}
                 {Lemma \ref*{lem:reservation-balancing}}}
\label{app:proof-reservation-balancing}

\reservationbalancing*

\begin{proof}
The idea is to transfer one reserved good outside the \commonSetName from a
color with the largest count to a color with the smallest count, until all counts
differ by at most one.  Exchanging the two colors along a suitable
alternating path makes this transfer while preserving a valid suballocation
in every color.  We first describe the graph and the counts, then give the
exchange procedure and prove that it terminates in polynomial time.

Recall the number of agents $\AgentCount$, the number of colors
$\GridSize=\AgentCount\cdot(\AgentCount-1)$, and the set of goods
$\GoodSet$.  Let $\DummyGoodSet$ denote the set of private dummy vertices
introduced in \Cref{subsec:high-goods}, and let $\Multigraph$ denote the
\highGoodDummyGraphName constructed there.  Its left vertex set is the
agent set $\AgentSet$, and its right vertex set is
$\GoodSet\cup\DummyGoodSet$.  An edge
to a good reserves that good for its incident agent, for whom it is a
high good and thus meets her required value; an edge to a dummy means that
its agent receives no reserved high good.  The dummy edges ensure that every
agent has degree exactly $\GridSize$.

Start from the given proper $\GridSize$-edge-coloring $\Coloring$ of this
multigraph.
Because an agent's $\GridSize$ incident edges have distinct colors and there
are exactly $\GridSize$ colors, she has exactly one edge of each color.
Consequently, every color is a matching with exactly $\AgentCount$ edges:
each agent receives one good or her dummy, and no good is assigned
to two agents.

Recall that the \commonSetName $\CommonSet\subseteq\GoodSet$ is the fixed
comparison set of goods from \Cref{subsec:common-set-losses}. For each
color $\ColorIndex\in\ColorSet$, the reserved-good set
$\ReservedSet^\ColorIndex$ is the set of goods matched in that color in the
current coloring, initially $\Coloring$.  During the procedure, these sets
refer to the current coloring after each exchange; at termination, they
refer to the constructed coloring $\ReservationBalancedColoring$.
Call $|\ReservedSet^\ColorIndex\setminus\CommonSet|$ the \emph{count} of color $\ColorIndex$, which is the number of right vertices in $\GoodSet\setminus\CommonSet$ matched in that color.

% The uniform mean of these counts is
% $\ColorAverage_\ColorIndex|\ReservedSet^\ColorIndex\setminus\CommonSet|$.
% Every edge has exactly one color, so the sum of the counts is the number of
% edges incident to goods outside the common set, counting parallel edges
% separately.  Dividing this number by $\GridSize$ gives the mean count.
% Thus the mean depends only on the multigraph and stays fixed whenever we
% recolor edges.  The mean lies between the smallest and largest counts.  If
% it equals either extreme, all counts equal the mean: the differences of
% the counts from their mean then have a common sign and sum to zero.

Starting from the coloring $\Coloring$, repeat the following steps until the
procedure stops.
\begin{enumerate}
\tightlist
\item
  For the current coloring, choose a color $\LargestCountColor$ with the largest count
  $|\ReservedSet^\LargestCountColor\setminus\CommonSet|=\max_{\ColorIndex\in\ColorSet}|\ReservedSet^\ColorIndex\setminus\CommonSet|$ and a color
  $\SmallestCountColor$ with the smallest count
  $|\ReservedSet^\SmallestCountColor\setminus\CommonSet|=\min_{\ColorIndex\in\ColorSet}|\ReservedSet^\ColorIndex\setminus\CommonSet|$.
\item
  If these two counts differ by at most one, stop.
\item
  Otherwise, consider the bipartite multigraph consisting of the edges of colors $\LargestCountColor$ and
  $\SmallestCountColor$ and their incident vertices, denoted by $\Multigraph(\LargestCountColor,\SmallestCountColor)$.
  Among the connected components of $\Multigraph(\LargestCountColor,\SmallestCountColor)$,
  choose a path component with one endpoint in $\GoodSet\setminus\CommonSet$ whose
  unique incident path edge has color $\LargestCountColor$, and the other
  endpoint in $\CommonSet\cup\DummyGoodSet$ whose unique incident path edge
  has color $\SmallestCountColor$.
\item
  For each edge of the selected path, change its color from $\LargestCountColor$ to $\SmallestCountColor$ or from $\SmallestCountColor$ to $\LargestCountColor$, as appropriate. Then return to step $1$.
\end{enumerate}

\paragraph{Existence of the selected path.}
We first justify the path selection in the third step.  The selected counts
differ by at least two, so the selected colors are distinct.  Every agent
has degree $\GridSize$ in $\Multigraph$.  Applying
\Cref{lem:alternating-path-exchange} to these two colors shows that each
path component of $\Multigraph(\LargestCountColor,\SmallestCountColor)$
has two right endpoints, with one endpoint incident to an edge
of color $\LargestCountColor$ and the other incident to an edge of color
$\SmallestCountColor$.

By the same lemma, internal right vertices and vertices on cycles are
matched in both colors.  A good in $\GoodSet\setminus\CommonSet$ then
contributes one to each count, and a vertex in
$\CommonSet\cup\DummyGoodSet$ contributes zero to both.  Thus the count
difference comes entirely from the path endpoints.

For each path component, its contribution to the count difference $|\ReservedSet^{\LargestCountColor}\setminus\CommonSet|-|\ReservedSet^{\SmallestCountColor}\setminus\CommonSet|$ is determined by whether each endpoint
is in $\GoodSet\setminus\CommonSet$:
The contribution is $+1$ if the endpoint incident to an edge of color $\LargestCountColor$ is a good in $\GoodSet\setminus\CommonSet$ and the other endpoint, incident to an edge of color $\SmallestCountColor$, is not such a good;
the contribution is $-1$ if the endpoint incident to an edge of color $\SmallestCountColor$ is a good in $\GoodSet\setminus\CommonSet$ and the other endpoint, incident to an edge of color $\LargestCountColor$, is not such a good;
and the contribution is zero if both endpoints are such goods or neither endpoint is such a good.
The count difference is the sum of these contributions over all path components.
Since that difference is at least two, a path component with contribution $+1$ must exist.

\paragraph{Validity and progress of the exchange.}
The selected path is an entire component of the subgraph formed by the two
colors.  Hence \Cref{lem:alternating-path-exchange} shows that the color
exchange in step $4$ preserves the properness of the
$\GridSize$-edge-coloring of $\Multigraph$.

After the exchange, every internal right vertex remains matched in both colors, although its
matched agent can change.
This does not change either color's count.
The endpoint in  $\GoodSet\setminus\CommonSet$ ceases to be matched in color $\LargestCountColor$ and becomes matched in color $\SmallestCountColor$.
This decreases the count $|\ReservedSet^\LargestCountColor\setminus\CommonSet|$ by one and increases the count $|\ReservedSet^\SmallestCountColor\setminus\CommonSet|$ by one.
The other endpoint is in $\CommonSet\cup\DummyGoodSet$ and thus contributes zero to both counts regardless of its matched agent.

To bound the number of exchanges, consider the potential obtained by
summing, over all colors, each count's excess above the ceiling of the mean
and its shortfall below the floor of the mean.  An excess or shortfall is
zero when the count does not cross the corresponding bound.  The mean is
fixed, so this potential is a nonnegative integer determined by the current
counts.  Both the counts and the floor and ceiling of their mean lie
between zero and $\AgentCount$.  A count cannot both exceed the ceiling and
fall below the floor.  Its excess is at most $\AgentCount$ minus the
ceiling, and its shortfall is at most the floor; either contribution is
therefore at most $\AgentCount$.  There are $\GridSize$ colors, so the
potential is at most $\GridSize\cdot\AgentCount$.

Suppose an iteration performs an exchange, so the largest and smallest
counts differ by at least two.  The mean is then strictly between these
two counts, because equality with either extreme would make all counts
equal.  Since the counts are integers, the smallest count plus one is at
most the ceiling of the mean, and the largest count minus one is at least
the floor of the mean.  Consequently, raising the smallest count by one
creates no excess above the ceiling, and lowering the largest count by one
creates no shortfall below the floor.

More precisely, lowering the largest count reduces its excess by exactly
one if that count exceeds the ceiling, and by zero otherwise; its shortfall
is zero both before and after the exchange.  Raising the smallest count
reduces its shortfall by exactly one if that count is below the floor, and
by zero otherwise; its excess is zero both before and after the exchange.
All other contributions to the potential stay unchanged.  Because the
floor and ceiling differ by at most one whereas the selected counts differ
by at least two, the largest count must exceed the ceiling or the smallest
count must fall below the floor.  At least one of the two decreases is
therefore one.  Every exchange reduces the potential by at least one,
without creating any new excess or shortfall.  The initial potential is at most
$\GridSize\cdot\AgentCount$ and cannot become negative, so the procedure
performs at most that many exchanges before stopping in the second step.

Denote the coloring obtained at termination by $\ReservationBalancedColoring$.
In this coloring, the largest and smallest counts differ by at most one,
so any two counts differ by at most one, as claimed.

The multigraph has $\AgentCount\cdot\GridSize$ edges, where
$\GridSize=\AgentCount\cdot(\AgentCount-1)$.  Each iteration computes the
$\GridSize$ counts, finds the components containing edges of the two selected
colors, and selects and recolors one path.  These operations take polynomial
time.  With at most $\GridSize\cdot\AgentCount$ exchanges, the whole procedure runs in polynomial time.
\end{proof}

\subsection{Proof of
  \texorpdfstring{\Cref*{prop:load-balancing}}
                 {Proposition \ref*{prop:load-balancing}}}
\label{app:proof-load-balancing}

\loadbalancing*

\begin{proof}
Each good has one unit available, so feasibility requires its total
assigned \amountName in every color to be at most one.  We first reduce
these capacity constraints to a bound on each color's load.  We then
derive color exchanges that reduce the load of an overloaded color while
preserving reservation balance, and bound the number of exchanges by the
total load that can be transferred.

\xhdr{From good capacities to dummy loads.}
Recall the agent set $\AgentSet$, the number of agents $\AgentCount$, and
the \highGoodDummyGraphName $\highGoodDummyGraph$.
The number of colors is $\GridSize=\AgentCount\cdot(\AgentCount-1)$,
and the color set is $\ColorSet$.
For any \reservationBalancedName coloring and each color
$\ColorIndex\in\ColorSet$, write its reserved-good set as
$\ReservedSet^\ColorIndex$ and its set of dummy agents as
$\DummyAgents^\ColorIndex$.
For each such agent $\AgentIndex$, her scaling factor
$\Proxy_\AgentIndex$ is positive by \Cref{lem:scaling-factor-bounds}.
For every good $\GoodIndex$, her unscaled low-good \amountName
$\LowGoodBundle_{\AgentIndex\GoodIndex}$ is at most one by
\Cref{def:truncated-low-good-bundle}.  Her scaled \amountName
$\ColorCarrier_{\AgentIndex\GoodIndex}$ from
\Cref{def:scaled-low-good-suballocation} is
$\Proxy_\AgentIndex\cdot\LowGoodBundle_{\AgentIndex\GoodIndex}$ if the good
is unreserved, and zero otherwise.  Agents not receiving their dummies
have all scaled coordinates zero.  Thus the dummy load
$\Proxy(\DummyAgents^\ColorIndex)$ bounds
the total assigned \amountName of every good in color $\ColorIndex$:
\begin{align*}
\sum_{\AgentIndex\in\AgentSet}\ColorCarrier_{\AgentIndex\GoodIndex}
&\leq\Proxy(\DummyAgents^\ColorIndex).
\end{align*}
It therefore suffices to make every dummy load strictly less than one.

\xhdr{What recoloring must preserve.}
Start from the given \reservationBalancedName coloring
$\ReservationBalancedColoring$.
Recall the \commonSetName $\CommonSet$: the counts
$|\ReservedSet^\ColorIndex\setminus\CommonSet|$ differ by at most one
between any two colors.  We must preserve this property so that the
previous value guarantee remains applicable.

Each dummy contributes its agent's scaling factor to the load of its
matched color; goods contribute nothing.  Recoloring changes which
dummy agents appear together, but preserves every edge multiplicity and
every agent's dummy frequency.
During the procedure, the reserved-good sets and dummy-agent sets always
refer to the current coloring.

Recall the deficient-agent set $\DeficientAgents$.  If it is empty, all
scaled low-good coordinates are zero and the claim follows immediately.
Otherwise, \Cref{lem:uniform-scaling-bound} gives that the average dummy load
plus the largest scaling factor
$\max_{\AgentIndex\in\DeficientAgents}\Proxy_\AgentIndex$ is strictly less
than $34/35$.

\xhdr{Alternating paths give proper color exchanges.}
If every dummy load is below one, stop.  Otherwise choose a color
$\OverloadedColor$ of maximum load, which is at least one, and a color
$\MinimumLoadColor$ of minimum load.  The latter load is at most the
mean, hence below $34/35$, so the selected colors are distinct and their
load difference exceeds $1/35$.

Consider the subgraph
$\highGoodDummyGraph(\OverloadedColor,\MinimumLoadColor)$ consisting of the
edges of these two colors and their incident vertices.  Every agent has
degree $\GridSize$ in the original graph and exactly one edge of each
color in a proper coloring.  Thus every agent has degree two in this
subgraph, while every good or dummy has degree at most two.
As in \Cref{lem:alternating-path-exchange}, each component is an alternating
path or cycle.  A path's endpoints are goods or dummies, and their incident
edges have different colors.  There are at most $\AgentCount$ path
components, since each contains an agent and the components are disjoint.

Exchanging the two colors along a whole path preserves properness.
Every internal vertex still has one edge of each color, while each
endpoint changes its matched color.  Only the endpoints can therefore
change the reserved-good counts or dummy loads.  Vertices on cycles
remain matched in both colors and contribute equally to both quantities.

\xhdr{Grouping paths preserves reservation balance.}
A good outside $\CommonSet$ contributes one to its matched color's
reserved-good count and zero to its load.  A good in $\CommonSet$
contributes zero to both, while a dummy contributes zero to the count and
its agent's scaling factor to the load.  Hence a path exchange changes the count
$|\ReservedSet^\OverloadedColor\setminus\CommonSet|$ by zero, one, or minus one.

Pair each count-increasing path with a count-decreasing path until no such
pair remains.  Exchanging all paths would swap the two colors' counts:
internal right vertices remain matched in both colors, and every endpoint
switches colors.  The sum of all count changes is therefore
$|\ReservedSet^\MinimumLoadColor\setminus\CommonSet|
-|\ReservedSet^\OverloadedColor\setminus\CommonSet|$.
Because the counts differ by at most one, the numbers of count-increasing
and count-decreasing paths differ by at most one.  Thus at most one
count-changing path remains unpaired.

Call an exchange \emph{permissible} if it exchanges colors along one
zero-change path, one cancelling pair of paths, or the possible remaining
count-changing path.  The first two operations leave both counts
unchanged.  The last swaps two counts differing by one.
Consequently, every permissible exchange preserves properness and
reservation balance across all colors.

\xhdr{An improving exchange exists and has bounded transfer.}
For each permissible exchange, define its signed transfer to be the sum
of the scaling factors of the dummies leaving color $\OverloadedColor$ minus
that of the dummies entering it.
This is exactly the decrease in that color's load, and it may be
negative.  The permissible exchanges partition the path components into
disjoint groups.  If all groups were exchanged, the two dummy loads would
swap, so their signed transfers sum to the original load difference
$\Proxy(\DummyAgents^\OverloadedColor)-\Proxy(\DummyAgents^\MinimumLoadColor)$.
This difference is positive, so at least one permissible exchange has
positive transfer.  Choose the exchange with the largest transfer and
perform only that exchange.

Every positive permissible transfer is at most the largest scaling factor.
For either type of single-path exchange, one endpoint leaves color
$\OverloadedColor$ and one enters it, so the transfer is the outgoing
endpoint's contribution minus the incoming endpoint's contribution.  A
positive transfer is therefore at most the outgoing dummy's scaling factor.

For a cancelling pair, one good outside $\CommonSet$ leaves each color;
these two endpoints contribute zero.  The remaining endpoints also move
in opposite directions, one leaving color $\OverloadedColor$ and one entering
it.  Their contribution is again the outgoing endpoint's contribution
minus the incoming endpoint's contribution.  Either endpoint may be a good
in $\CommonSet$, contributing zero.  Thus even a paired exchange has
positive transfer at most one outgoing dummy's scaling factor.

The selected exchange decreases the overloaded color's load and
increases the minimum-load color's load by the same amount.
The receiving color's old load is at most the mean, and the transfer is at
most the largest scaling factor.  By \Cref{lem:uniform-scaling-bound}, its
new load is therefore below $34/35$, hence below one.  All other colors'
loads remain unchanged.  We have reduced the load of an overloaded color without
creating a new overloaded color, while preserving reservation balance.
Repeat whenever a color with load at least one remains.

\xhdr{Polynomial termination.}
There are at most $\AgentCount$ permissible exchanges, and their signed
transfers sum to the selected colors' load difference, which exceeds
$1/35$.  The largest transfer is at least this difference divided by
$\AgentCount$, even if some other transfers are negative.  Thus every
performed exchange transfers more than $1/(35\cdot\AgentCount)$.

A color whose load is at least one cannot be selected as the receiving
color, whose minimum load is below one.  Once a color's load drops
below one, it never becomes overloaded again: every later receipt leaves
its load below one.  It therefore is never selected as the sending color
again.
Consequently, every sending color was overloaded at the start of the
dummy-load-balancing procedure, and all its outgoing transfers occur
before it is ever selected as the receiving color.  The sum of these net
outgoing transfers is at most its initial load.

In any finite sequence of iterations, the total load transferred is
therefore at most the sum of the initially overloaded colors' loads.
This is at most the initial total load, which is strictly less than
$\GridSize$ because the average dummy load is less than one.
Since every exchange transfers more than $1/(35\cdot\AgentCount)$, fewer
than $35\cdot\AgentCount\cdot\GridSize$ exchanges can occur.
The procedure therefore terminates with every dummy load below one.

The graph has $\AgentCount\cdot\GridSize$ edges.  Finding the selected
colors, their alternating paths, the permissible exchanges, and the
largest transfer takes polynomially many arithmetic operations per
iteration.  For rational scaling factors, recompute each load from its
current dummy agents: it is a sum of at most $\AgentCount$ fixed input
factors.  These sums, their differences, and their comparisons have
polynomial encoding length.  Resolve all color choices, path pairings,
and transfer ties by the fixed public order.  Together with the bound
$\GridSize=\AgentCount\cdot(\AgentCount-1)$, this proves the
polynomial-time claim.  The same iteration bound gives finite termination
for arbitrary real scaling factors, without an encoding assumption.

At termination, the coloring is still \reservationBalancedName and
every dummy load is below one, so it is \loadBalancedName; denote it by
$\LoadBalancedColoring$.  In $\LoadBalancedColoring$, every good's total
\amountName in a color is bounded by that color's load, which is below
one.
\end{proof}

\bibliographystyle{alpha}
\bibliography{mybibfile}

\end{document}